\newif\iflncs
\newcommand\numberthis{\addtocounter{equation}{1}\tag{\theequation}}
\renewcommand{\Re}{\operatorname{Re}}
\renewcommand{\Im}{\operatorname{Im}}
\newcommand{\@chapapp}{\relax}%
    \newcommand{\href}[2]{#2}
\newif\ifabstract
\newif\iffull
\newtoks\magicAppendix
\newtoks\magictoks
\newif\iflater
\long\def\later#1{\magictoks={#1}%
  \edef\magictodo{\noexpand\magicAppendix={\the\magicAppendix \par
    \the\magictoks%
  }}
  \magictodo}
\long\def\both#1{\magictoks={#1}%
  \edef\magictodo{\noexpand\magicAppendix={\the\magicAppendix \par
    \noexpand\setcounter{theorem-preserve}{\noexpand\arabic{theorem}}%
    \noexpand\setcounter{theorem}{\arabic{theorem}}%
    \noexpand\setcounter{section-preserve}{\noexpand\arabic{section}}%
    \noexpand\setcounter{section}{\arabic{section}}%
	\noexpand\let\noexpand\oldsection=\noexpand\thesection
	\noexpand\def\noexpand\thesection{\thesection}
	\noexpand\let\noexpand\oldlabel=\noexpand\label
	\noexpand\let\noexpand\label=\noexpand\blank
    \the\magictoks%
    \noexpand\setcounter{theorem}{\noexpand\arabic{theorem-preserve}}%
    \noexpand\setcounter{section}{\noexpand\arabic{section-preserve}}%
	\noexpand\let\noexpand\thesection=\noexpand\oldsection
	\noexpand\let\noexpand\label=\noexpand\oldlabel
  }}
  \magictodo
  \the\magictoks}
\long\def\later#1{#1}
\long\def\both#1{#1}
\long\def\magicappendix{
	\latertrue%
	\the\magicAppendix%
}
\theoremstyle{definition}
\newtheorem{theorem}{Theorem}[section]
\newtheorem{lemma}[theorem]{Lemma}
\newtheorem{definition}[theorem]{Definition}
\newtheorem{observation}[theorem]{Observation}
\title{Computing in continuous space with self-assembling polygonal tiles (extended abstract)}
\author{
    Oscar Gilbert
        \thanks{Department of Mathematical Sciences, University of Arkansas, Fayetteville, AR, USA.
        \protect\url{oogilber@email.uark.edu}.
        This author's research was supported in part by National Science Foundation Grants CCF-1117672 and CCF-1422152.}
\and
    Jacob Hendricks
        \thanks{Department of Computer Science and Computer Engineering, University of Arkansas, Fayetteville, AR, USA.
        \protect\url{jhendric@uark.edu}.
        This author's research was supported in part by National Science Foundation Grants CCF-1117672 and CCF-1422152.}
\and
	Matthew J. Patitz
        \thanks{Department of Computer Science and Computer Engineering, University of Arkansas, Fayetteville, AR, USA.
        \protect\url{mpatitz@self-assembly.net}.
        This author's research was supported in part by National Science Foundation Grants CCF-1117672 and CCF-1422152.}
\and
    Trent A. Rogers
        \thanks{Department of Computer Science and Computer Engineering, University of Arkansas, Fayetteville, AR, USA.
        \protect\url{tar003@uark.edu}.
        This author's research was supported by the National Science Foundation Graduate Research Fellowship Program under Grant No. DGE-1450079, and National Science Foundation grants CCF-1117672 and CCF-1422152.}
}
\date{}
\institute{}
\begin{document}

\maketitle

\vspace{-10pt}
\begin{abstract}
In this paper we investigate the computational power of the polygonal tile assembly model (polygonal TAM) at temperature $1$, i.e. in non-cooperative systems.  The polygonal TAM is an extension of Winfree's abstract tile assembly model (aTAM) which not only allows for square tiles (as in the aTAM) but also allows for tile shapes which are arbitrary polygons. Although a number of self-assembly results have shown computational universality at temperature $1$, these are the first results to do so by fundamentally relying on tile placements in continuous, rather than discrete, space. With the square tiles of the aTAM, it is conjectured that the class of temperature $1$ systems is not computationally universal.  Here we show that for each $n > 6$, the class of systems whose tiles are the shape of the regular polygon $P$ with $n$ sides is computationally universal.  On the other hand, we show that the class of systems whose tiles consist of a regular polygon $P$ with $n \leq 6$ sides cannot compute using any known techniques.  In addition, we show a number of classes of systems whose tiles consist of a non-regular polygon with $n \geq 3$ sides are computationally universal.
\end{abstract}
\vspace{-10pt}

\pagebreak

\vspace{-20pt}
\section{Introduction}
\vspace{-10pt}

Self-assembly is a process by which systems that evolve based only on simple local interactions form.  Studying self-assembling systems can lead to insights into everything from the origin of life~\cite{SchulmanWinfreeEvolution} to new and novel ways to guide atomically precise manufacturing.
Theoretical modeling of self-assembling systems has uncovered important mathematical properties \cite{RotWin00,AdChGoHu01,ACGHKMR02,IUSA,SFTSAFT,AKKR02,2PATS}, and physical realizations have been experimentally verified in the laboratory and used to create intricate nanostructures~\cite{RothTriangles,SchWin07,MaoLabReiSee00,BarSchRotWin09,LaWiRe99,evans2014crystals}.  In order to facilitate the design of these systems, a number of mathematical models have been introduced.  The work presented in this paper examines a model of self-assembly which is an extension of Erik Winfree's abstract Tile Assembly Model (aTAM)~\cite{Winf98}.  In the aTAM, the fundamental components are square ``tiles'' with ``glues'' on their edges.  These tiles can then combine depending on their glues to form surprisingly complex and mathematically interesting structures~\cite{SolWin07,RotWin00,jCCSA,IUSA,jSADSSF}.

A long standing open conjecture in regards to the aTAM is that systems in which tile attachments depend only on one exposed glue (we call such systems \emph{temperature-1} systems) are not computationally universal~\cite{jLSAT1,IUNeedsCoop,ManuchTemp1}.  It may appear clear that this conjecture is certainly true, but the ability of tile assembly systems to place a tile which prevents the attachment of a later tile gives these systems a surprising amount of power~\cite{GeoTiles,Polyominoes,Duples,OneTile} and has made proving such a result elusive.  In fact, the exploitation of this ability has been used to show that temperature-$1$ systems in other models are computationally universal~\cite{SingleNegative,CookFuSch11,GeoTiles,Duples,Polyominoes}.

This paper examines the computational power of a model which is similar to the aTAM with the exception that the shape of the tiles in the systems is relaxed to include any shape which is a polygon.  Unlike all previous work, our model makes no assumption about an underlying lattice and discrete space.  Instead, we must work in the real plane, and fundamentally exploit continuous space to precisely position polygonal tiles.  We call this model the polygonal TAM and show that certain classes of temperature-$1$ systems in the polygonal TAM are computationally universal.  In order to show our results about computational universality, we explicitly construct ``lattices'' for polygons and create geometric ``bit-readers''.  In the case of regular polygons with $n>6$ sides, we exploit the inability of these polygons to tile the plane to read bits. In fact, we show that for regular polygons which do tile the plane, bit-reading gadgets are impossible to construct. Interestingly, our exploits do not work for pentagons. In particular, we show that even though pentagons cannot tile the plane, bit-reading gadgets are impossible to construct with them.

The layout of the paper is as follows.  We first introduce the polygonal TAM.  Next, we introduce our main results which concentrate on the computational power of polygonal TAM systems at temperature $1$.  Our first main result states that for any regular polygon $P$ with $n > 6$ sides, there exists a polygonal TAM system consisting of tiles of shape $P$ which simulates any Turing machine on any input.  We then provide evidence that this computational boundary is tight by showing that the class of polygonal TAM systems composed only of tiles of a single shape which is any regular polygon $P$ with $n \leq 6$ sides cannot compute using any currently known techniques. On the other hand, we show that the class of polygonal TAM systems whose tiles are composed of any two regular polygons is capable of simulating any Turing machine on arbitrary input.  We then show two positive results about computing with systems whose tiles have the shape of non-regular polygons with less than seven sides.  In order to show these results we have two supporting sections.  One shows how we can create a ``lattice'' in the plane out of any regular polygon.  The other uses these ``lattices'' to connect together several components which ``read bits''.

\vspace{-20pt}
\section{Preliminaries}
\vspace{-15pt}
In this section we sketch definitions of the Polygonal Tile Assembly Model (Polygonal TAM) and relevant terminology.\footnote{The Polygonal TAM is simply a case of the polygonal free-body TAM defined in \cite{OneTile} with no rotational restriction and no tile flipping.  We define it here for completeness.}  Please see the Appendix for more detailed definitions.

\vspace{-5pt}
\paragraph{Polygonal Tiles}
A \emph{simple polygon} is a plane geometric figure consisting of straight, non-intersecting line segments or ``sides'' that are joined pair-wise to form a closed path. As is commonly the case, we omit the qualifier ``simple'' and refer to simple polygons as polygons. A polygon encloses a region called its \emph{interior}. The line segments that make-up a polygon meet only at their endpoints. Exactly two edges meet at each vertex.
We define the set of \emph{edges} of a polygon to be the line segments that make-up a polygon.
In our definition we find it useful to give a polygon a default position and rotation. First, we assume that the centroid, $c$ say, of any polygon is at the origin in $\R^2$. Then, for a polygon $P_n$ with $n$ edges, let $v=(x,y)\in \R^2$ be some vertex of $P_n$ such that $v\neq c$. By possibly rotating $P_n$ about $c$, we can ensure that $y = 0$ and $x>0$. For a given polygon $P$ and some vertex $v$ of $P$ that is not equal to the centroid of $P$, we call this position and rotation the \emph{standard position} for $P$ given $v$.

A \emph{polygonal tile} is a polygon with a subset of its edges labeled from some {\em glue} alphabet $\Sigma$, with each glue having an integer \emph{strength} value.  Two tiles are said to be \emph{adjacent} if they are placed so that two edges, one on each tile, intersect completely. Two tiles are said to \emph{bind} when they are placed so that they have non-overlapping interiors and have adjacent edges with complementary glues and matching lengths; each complementary glue pair binds with force equal to its strength value. An \emph{assembly} is any connected set of polygons whose interiors do not overlap such that every tile is adjacent to some other tile.~\footnote{As with the aTAM, the edges of two tiles of an assembly may intersect, but we do not allow for the interiors of two tiles of an assembly to have non-empty intersection.}  Given a positive integer $\tau \in \mathbb{N}$, an assembly is said to be \emph{$\tau$-stable} or (just \emph{stable} if $\tau$ is clear from context), if any partition of the assembly into two non-empty groups (without cutting individual polygons) must separate bound glues whose strengths sum to $\ge \tau$. We say a tile is in \emph{standard position} if the underlying polygon defining its shape is in standard position. We also refer to the centroid of a polygonal tile as the centroid of the underlying polygon defining the shape of the tile.

\vspace{-5pt}
\paragraph{Tile System}
A \emph{tile assembly system} (TAS) is an ordered triple $\calT = (T,\sigma,\tau)$ where $T$ is a set of polygonal tiles, and $\sigma$ is a $\tau$-stable assembly called the \emph{seed}. $\tau$ is the \emph{temperature} of the system, specifying the minimum binding strength necessary for a tile to attach to an assembly.  Throughout this paper, the temperature of all systems is assumed to be $1$, and we therefore frequently omit the temperature from the definition of a system (i.e. $\calT = (T,\sigma)$). If the tiles in $T$ all have the same polygonal shape, $\calT$ is said to be a \emph{single-shape} system; more generally $\calT$ is said to be a $c$-shape system if there are $c$ distinct shapes in $T$.  If not stated otherwise, systems described in this paper should by default be assumed to be single-shape systems.

We define a \emph{configuration} of $\calT$ to be a (possibly empty) arrangement of tiles in $\R^2$ where tiles of this arrangement are translations and/or rotations of copies of tiles in $T$. Formally, we define a configuration of $\calT$ as follows. For a $c$-shaped system $\calT = (T,\sigma,\tau)$, let $P_1$, $P_2$, $\dots$, $P_c$ denote the polygons that make up the shapes of $\calT$. For each $i$ such that $1\leq i \leq c$, assume that each $P_i$ is in standard position given some vertex $v_i$ of $P_i$. Then, a configuration of $\calT$ is a partial function $\pfunc{\alpha}{\R^2\times \left[ 0, 2\pi \right)}{T}$. One should think of this function as mapping centroid locations and an angle of rotation, $(r,\theta)$ say, to a tile in $T$ as follows. Starting from $t$ in standard position, $t$ is rotated counter-clockwise by $\theta$ and translated so that the centroid of $t$ is at $r$. Note that the definition of configuration makes no claim as to whether or not two tiles of a configuration have overlapping interiors or have matching glues. Similarly, we can define an assembly to be a configuration such that every tile is adjacent to some other tile and the intersection of the interiors of any two distinct tiles is empty. Then an assembly $\alpha'$ is a \emph{subassembly} of $\alpha$ if $\dom(\alpha')\subseteq \dom(\alpha)$ and if $(r,\theta)\in \dom(\alpha')$ then $\alpha((r,\theta)) = \alpha'((r,\theta))$.  We define subconfiguration analogously to the way we defined subassembly.

\ifabstract
\later{
\section{Full Description of the Polygonal TAM}

We now give a full description of the Polygonal TAM.

\paragraph{Polygonal Tiles}
A \emph{simple polygon} is a plane geometric figure consisting of straight, non-intersecting line segments or ``sides'' that are joined pair-wise to form a closed path. As is commonly the case, we omit the qualifier ``simple'' and refer to simple polygons as polygons. A polygon encloses a region called its \emph{interior}. The line segments that make-up a polygon meet only at their endpoints. Exactly two edges meet at each vertex.
We define the set of \emph{edges} of a polygon to be the line segments that make-up a polygon.
In our definition we find it useful to give a polygon a default position and rotation. First, we assume that the centroid, $c$ say, of any polygon is at the origin in $\R^2$. Then, for a polygon $P_n$ with $n$ edges, let $v=(x,y)\in \R^2$ be some vertex of $P_n$ such that $v\neq c$. By possibly rotating $P_n$ about $c$, we can ensure that $y = 0$ and $x>0$. For a given polygon $P$ and some vertex $v$ of $P$ that is not equal to the centroid of $P$, we call this position and rotation the \emph{standard position} for $P$ given $v$.

A \emph{polygonal tile} is a polygon with a subset of its edges labeled from some {\em glue} alphabet $\Sigma$, with each glue having an integer \emph{strength} value.  Two tiles are said to be \emph{adjacent} if they are placed so that two edges, one on each tile, intersect completely. Two tiles are said to \emph{bind} when they are placed so that they have non-overlapping interiors and have adjacent edges with matching glues and matching lengths; each matching glue binds with force equal to its strength value. An \emph{assembly} is any connected set of polygons whose interiors do not overlap such that every tile is adjacent to some other tile.~\footnote{As with the aTAM, the edges of two tiles of an assembly may intersect, but we do not allow for the interiors of two tiles of an assembly to have non-empty intersection.}  Given a positive integer $\tau \in \mathbb{N}$, an assembly is said to be \emph{$\tau$-stable} or (just \emph{stable} if $\tau$ is clear from context), if any partition of the assembly into two non-empty groups (without cutting individual polygon) must separate bound glues whose strengths sum to $\ge \tau$. We say that a tile is in \emph{standard position}, if the underlying polygon defining the shape of the tile is in standard position. We also refer to the centroid of a polygonal tile as the centroid of the underlying polygon defining the shape of the tile.

\paragraph{Assembly Process}
Given a tile-assembly system $\calT = (T,\sigma,\tau)$, we now define the set of \emph{producible} assemblies $\prodasm{T}$ that can be derived from $\calT$, as well as the \emph{terminal} assemblies, $\termasm{T}$, which are the producible assemblies to which no additional tiles can attach. The assembly process begins from $\sigma$ and proceeds by single steps in which any single copy of some tile $t \in T$ may be attached to the current assembly $A$, provided that it can be translated and/or rotated so that its placement does not overlap any previously placed tiles and it binds with strength $\ge \tau$. For a system $\calT$ and assembly $A$, if such a $t\in T$ exists, we say $A \rightarrow^\calT_1 A'$ (i.e. $A$ grows to $A'$ via a single tile attachment).  We use the notation $A \rightarrow^\calT A''$, when $A$ grows into $A''$ via $0$ or more steps. Assembly proceeds asynchronously and nondeterministically, attaching one tile at a time, until no further tiles can attach.  An \emph{assembly sequence} in a TAS $\mathcal{T}$ is a (finite or infinite) sequence $\vec{\alpha} = (\alpha_0 = \sigma,\alpha_1,\alpha_2,\ldots)$ of assemblies in which each $\alpha_{i+1}$ is obtained from $\alpha_i$ by the addition of a single tile.
The set of producible assemblies $\prodasm{T}$ is defined to be the set of all assemblies $A$ such that there exists an assembly sequence for $\calT$ ending with $A$ (possibly in the limit).  The set of \emph{terminal} assemblies $\termasm{T} \subseteq \prodasm{T}$ is the set of producible assemblies such that for all $A \in \termasm{T}$ there exists no assembly $B \in \prodasm{T}$ in which $A\rightarrow^\calT_1 B$.  A system $\calT$ is said to be directed if $|\termasm{T}|=1$, i.e., if it has exactly one terminal assembly.
} %
\fi

\vspace{-15pt}
\section{Geometric Bit-reading, Grids, and Turing Machine Simulation}
\vspace{-10pt}
In this section we state our main results and then give a high-level description of the machinery used to prove these results. In particular, we describe bit-reading gadget assemblies and grid assemblies, and briefly show how to simulate a Turing machine using these assemblies. The general strategy that motivates the work in this paper is similar to the the techniques used in~\cite{CookFuSch11, Duples, Polyominoes}. Unlike the techniques used in~\cite{CookFuSch11, Duples, Polyominoes}, we do not have an underlying integer lattice that is being tiled, and therefore, must rely on analysis of polygonal tile assemblies in $\R^2$.

\vspace{-15pt}
\subsection{Main results}
\vspace{-5pt}
We now state our main results.  The first set of results are positive and state that there are a variety of systems with polygons which can simulate any Turing machine.  The last result is a negative result which states that the class of systems whose tiles are composed of regular polygons with less than $7$ sides cannot compute using known techniques in self-assembly.

Informally, our first theorem states that if $P$ is a regular polygon with $\ge 7$ sides, then the class of systems with tiles of shape $P$ is computationally universal.
\vspace{-10pt}
\begin{theorem}\label{thm:regular-CU}
Let $P_n$ be a regular polygon with $n$ sides such that $n \geq 7$. Then for every standard Turing machine $M$ and input $w$, there exists a directed TAS with $\tau=1$ consisting only of tiles of shape $P_n$ that simulates $M$ on $w$.
\end{theorem}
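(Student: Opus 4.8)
The plan is to reduce Theorem~\ref{thm:regular-CU} to the construction, out of copies of $P_n$ alone, of a handful of reusable gadget assemblies: a \emph{grid} that plays the role of the integer lattice underlying the constructions of~\cite{CookFuSch11,Duples,Polyominoes}; straight and turning \emph{wire} segments that route a bit (carried in a glue) from place to place along the grid; \emph{bit-writing} gadgets that, depending on an incoming glue, grow in one of two geometrically distinct ways; and \emph{bit-reading} gadgets that grow past a previously written bit and continue in one of two ways according to its geometry, thereby copying it back into a glue. Given these pieces, a zig-zag simulation completes the argument along the same high-level plan as the cited works: the assembly is built row by row on the grid, the $i$th row encoding the $i$th configuration of $M$ on $w$ as a block of bits together with the head position and state; each row is produced by growing in one horizontal direction (alternating direction between rows), using bit-readers to read the bits of the row below, passing them through a constant-size finite-state computation encoded in the glue alphabet, and using bit-writers to record the updated bits above. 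Since this composition mirrors the ones in the cited works, the substance of the proof is the construction and verification of the grid and, above all, the bit-reading gadget for an arbitrary regular $n$-gon with $n \ge 7$; these are the content of the two supporting sections.

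For the grid, I would fix a canonical edge-to-edge attachment of one copy of $P_n$ to another and note that, since successive such attachments rotate the local frame by the exterior angle $2\pi/n$ (and since for odd $n$ there is no edge opposite a given one), no single repeated attachment travels in a straight line; instead one uses a fixed finite periodic pattern of attachments whose net displacement per period is a horizontal vector, and likewise a pattern with vertical net displacement. Two such transversal families of paths then lay out a periodic family of ``cells'' on which the computation is placed. The delicate point is not the periodicity but \emph{rigidity at temperature~$1$}: since any exposed glue admits \emph{some} tile attachment, to make the grid a well-defined producible assembly --- and eventually to make the whole system directed --- I must show that at every stage each exposed glue admits exactly one placement of $P_n$ whose interior avoids every tile already present, i.e. the intended placement is the unique non-overlapping one. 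This is a finite check over the few rotational placements compatible with a matched glue, resolved by comparing the (algebraic) coordinates of the candidate tiles against the occupied region.

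The heart of the proof is the bit-reading gadget, and this is exactly where the hypothesis $n \ge 7$ is used. If two copies of $P_n$ share an edge, then at each endpoint of that edge their two free edges bound a wedge of angle $2\pi - 2\cdot\tfrac{(n-2)\pi}{n} = \tfrac{4\pi}{n}$, while a third copy of $P_n$ attached edge-to-edge at that vertex would subtend the interior angle $\tfrac{(n-2)\pi}{n}$, which exceeds $\tfrac{4\pi}{n}$ precisely when $n > 6$; hence for $n \ge 7$ that third tile is forced to overlap one of the first two. (This is the same obstruction that prevents regular $n$-gons with $n \ge 7$ from tiling the plane.) I would make this forced overlap the reading mechanism: a bit is encoded by which of two slightly offset wire paths was previously grown beside the reader's track, and the reader is designed so that, upon reaching the encoded region, a tile can continue ``straight'' (reading, say, $0$) in one of the two cases, whereas in the other case that straight tile would lie in such a wedge and overlap, so that the only non-overlapping continuation is a detour carrying the $1$ glue --- or the mirror version, whichever the precise geometry of $P_n$ supports. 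For each of the two bit values this requires \emph{completeness} (a non-overlapping continuation exists, so the reader never deadlocks) and \emph{soundness} (the continuation appropriate to the opposite value is geometrically impossible).

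Soundness is the main obstacle. At temperature~$1$ a tile attaches the instant a single glue matches, so the only way to forbid an unwanted attachment is to show it would force two tile interiors to overlap; hence every spurious continuation of a reader --- indeed every copy of $P_n$ that could attach anywhere along the exposed edges of any gadget --- must be ruled out by an explicit overlap argument, and since the vertices of $P_n$ have coordinates in $\mathbb{Q}\!\left(\cos\tfrac{2\pi}{n},\sin\tfrac{2\pi}{n}\right)$ these are exact computations with algebraic numbers rather than the integer-lattice bookkeeping available in~\cite{CookFuSch11,Duples,Polyominoes}. I expect the gadgets will need to be ``thickened'' by a bounded number of helper tiles so that the obstruction is robust, and that the path offsets and the assignment of glues to edges must be specified as explicit functions of $n$; verifying that one such scheme works simultaneously for all $n \ge 7$, and that the resulting system is directed (every exposed glue always having a unique non-overlapping extension), is the part I expect to be the most technically demanding.
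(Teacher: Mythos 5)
Your high-level architecture coincides with the paper's: a grid built from $P_n$ (the paper's ``junction polyforms'', tiled periodically by two vectors expressed as integer polynomials in $\omega = e^{2\pi i/n}$), bit-writers and bit-readers, and the zig-zag Turing-machine simulation imported from the cited works. The wedge computation you give ($4\pi/n$ versus the interior angle $(n-2)\pi/n$) is also the right intuition for why $n\ge 7$ is the threshold. But the proof of this theorem essentially \emph{is} the construction and verification of the bit-reading gadgets, and that is exactly what your proposal defers. The hard direction is not forcing an overlap; it is blocking the path that would read the wrong bit while leaving a gap through which the path reading the correct bit can still grow, and doing this for both bit values from the same reader tile. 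Your vertex-wedge obstruction shows a third tile at a shared vertex must overlap, but it does not by itself produce a blocker placement that obstructs one designated continuation of the reader while provably admitting the other; in the paper this is done by placing the blocker several tiles away along a path and computing center-to-center distances as polynomials in $\omega$, showing one candidate center lies within distance $1$ of the blocker (overlap, since the apothem is $\tfrac12$) and the other at distance exceeding the circumdiameter $1/\cos(\pi/n)$ (no overlap).

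Second, your hope that ``one such scheme works simultaneously for all $n\ge 7$'' is precisely what the paper could not achieve: its uniform scheme (blocker at $-\omega^{n-1}+\omega^{k+1}$, with estimates such as $d_n^2 = 2+2\sin\left(\tfrac{3\pi}{2n}\right) > 1/\cos^2\left(\tfrac{\pi}{n}\right)$) is only established for $n\ge 15$; the cases $n=10,11,12$ use a different, simpler configuration, and $n=7,8,9,13,14$ each get a bespoke multi-tile gadget verified by an individual root-of-unity computation (e.g.\ the heptagon reader involves a path of roughly twenty tiles with $c_1 = 1+\omega-\omega^2-\omega^3+2\omega^4+\omega^5-2\omega^6$ and the key identity $\Re(\omega^2 c_1)=1$). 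A complete proof along your lines should therefore expect a case split on $n$, plus the ``normalization'' step the paper needs to make each gadget start and end on the grid so that readers and writers can be chained into rows of the simulation.
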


The following theorem states that if we are allowed two different regular polygons as tile shapes, then the class of systems consisting only of these two shapes is computationally universal.
\begin{theorem}\label{thm:pairs-CU}
Let $P_n$ and $Q_m$ be regular polygons with $n$ and $m$ sides of equal length. Then for every $n \geq 3$ and $m \geq 3$ such that $n\neq m$, and every standard Turing machine $M$ with input $w$, there exists a directed 2-shaped system $\mathcal{T}_{n,m} = (T_{n,m}, \sigma_{n,m})$ consisting only of tiles of shape $P_n$ or $Q_m$ that simulates $M$ on $w$.
\end{theorem}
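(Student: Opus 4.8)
The plan is to follow the same high-level route as Theorem~\ref{thm:regular-CU}: reduce the claim to the construction of two kinds of assemblies out of tiles of shapes $P_n$ and $Q_m$ --- \emph{grid assemblies}, which pin down a discrete coordinate system inside $\R^2$, and \emph{bit-reading gadgets}, small assemblies that, when grown against a sub-assembly encoding a bit $b\in\{0,1\}$, reach a terminal form whose exposed glue records $b$. Given these two ingredients, a standard zig-zag simulation (in the style of~\cite{CookFuSch11,Duples,Polyominoes}) lays out successive Turing-machine configurations along the rows of the grid: each row copies the previous tape contents cell by cell via bit-readers, applies the transition rule near the head, and exposes the glues that seed the next row. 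Choosing all glues so that each local choice is forced by geometry (not by competition between strength-$1$ glues) makes the resulting system $\calT_{n,m}=(T_{n,m},\sigma_{n,m})$ directed, so it simulates $M$ on $w$.

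For the grid we use only tiles of shape $P_n$, invoking our construction of a lattice from a single regular polygon. The two shapes are then put to work in the bit-reader. The key resource is that a regular $n$-gon and a regular $m$-gon of the same edge length have different interior angles, $\frac{(n-2)\pi}{n}\neq\frac{(m-2)\pi}{m}$, so a path of tiles that incorporates a tile of the ``other'' shape bends by an amount incommensurate with the angular steps available on the $P_n$-grid. Concretely, a ``wire'' of $P_n$ tiles runs along the grid and, at the cell encoding a bit, either does or does not carry an extra $P_n$ ``blocker'' tile; this presence/absence encodes $b$. From the wire we sprout a short probe arm, and by inserting one (or a fixed small number of) $Q_m$ tiles into the arm we tilt its tip to a point $p$ that is off the lattice; we design the gadget so that $p$ is exactly the location the blocker occupies when $b=1$. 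If the blocker is present the probe tile cannot be placed at $p$ and an alternative tile attaches instead, exposing the ``$1$'' glue; if it is absent the probe attaches at $p$ and growth continues along the ``$0$'' branch. Reversing the roles of $P_n$ and $Q_m$, or iterating the construction, gives the variants needed to wire bit-readers into the grid in both horizontal and vertical directions.

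The main work --- and the main obstacle --- is to make the above robust for every pair $n\neq m$ with $n,m\geq3$. First, one must check that the incommensurate tilt actually lands the probe tip in a usable position: within reach of the wire, cleanly coinciding with the blocker cell when $b=1$, and free of spurious overlaps with the rest of the gadget or the grid in either case. This is a finite but delicate verification --- small values such as $n=3$ may require special handling, and because the model lives in continuous space we must argue that ``blocked'' versus ``not blocked'' is a strict, non-degenerate geometric condition, so that no nondeterminism leaks in and the system stays directed. Second, after the probe has done its job the assembly must get back onto the $P_n$-lattice to connect to the next gadget; since every $Q_m$ tile knocks the path off the grid, we confine all off-lattice growth to a bounded pocket and re-enter the lattice through a single matched glue (equivalently, use the ``other''-shape tiles only in angle-cancelling groups). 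Showing that this re-entry is forced and collision-free in all cases is, I expect, the most technical part of the argument; once it is in place, the grid plus bit-readers plug directly into the zig-zag simulation to finish the proof.
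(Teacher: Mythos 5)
Your overall architecture (grid from a single regular polygon, bit-reading gadgets, and a zig-zag Turing machine simulation in the style of \cite{CookFuSch11,Polyominoes}) matches the paper's proof, and the key insight that the angle mismatch between $P_n$ and $Q_m$ can be exploited to produce geometric blocking is exactly right. Using Lemma~\ref{lem:grids} to build the grid from $P_n$ alone is also consistent with what the paper does, and your remark about confining off-lattice growth and re-entering the grid corresponds to the paper's ``normalized on-grid'' gadgets.

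However, the specific bit-encoding scheme you describe has a gap. You encode $b$ by the \emph{presence or absence} of a $P_n$ blocker tile, and arrange that the blocker (when present) overlaps the off-lattice point $p$ reached by the probe. Consider the $b=0$ case, where the blocker is absent. Once the tile that exposes the ``$1$'' glue is placed, that glue is permanently available, and nothing in your description prevents the ``$1$''-reading tile from attaching. If its footprint is disjoint from the probe's footprint at $p$, then both branches can grow and the gadget reads both $0$ and $1$; if its footprint overlaps $p$, then the two tiles race, the outcome depends on the nondeterministic assembly order, and the system is not directed. Either way, Definition~\ref{def:bit-reader} --- which requires that in the presence of $\alpha_0$ \emph{no} path can place a $T_1$-tile, and symmetrically for $\alpha_1$ --- is violated. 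The paper avoids this by having the blocker \emph{always} present, in one of two distinct positions: the $\alpha_0$ position geometrically blocks the $1$-reading attachment, and the $\alpha_1$ position geometrically blocks the $0$-reading attachment (see the two configurations in Figure~\ref{fig:34BitReader}, where the tile $B$ appears in both). Additionally, in the paper's gadget the two competing reads attach tiles of different shapes (a $P_n$ tile to $g_1$, a $Q_m$ tile to $g_0$) off the same tile $R_1$, so the angular and positional separation needed for clean single-sided blocking comes for free; your design, where both shapes only appear in the probe and only the probe shape changes between the branches, would need a separate argument that one blocker position can discriminate the two. Replacing presence/absence with a two-position blocker (one position per bit value, each provably excluding exactly one of the two read attachments while leaving the other and all downstream tiles unobstructed) would close this gap, at which point the rest of your argument goes through.
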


The next theorem differs from the previous two theorems in that it discusses the computational power of polygons which are not regular.  Roughly, it states that if we relax the condition that the polygon is regular (but still equilateral), then there exist polygons with only four sides which are capable of composing a class of computationally universal single shape systems.  It also implies this for shapes with five and six sides as well.

\begin{theorem}\label{thm:equilateral-CU}
Let $M$ be a standard Turing machine with input $w$. Then for all $n \geq 4$, there exists an equilateral polygon $P_n$ with $n$ sides and a directed single-shaped system $\mathcal{T}_n = (T_n, \sigma_n)$ consisting only of tiles of shape $P_n$ that simulates $M$ on $w$.
\end{theorem}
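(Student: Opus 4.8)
The plan splits into two cases. For $n \geq 7$ there is nothing new to do: a regular $n$-gon is in particular equilateral, so Theorem~\ref{thm:regular-CU} already supplies a directed single-shape system simulating $M$ on $w$. The real content is $n \in \{4,5,6\}$, where we must produce an \emph{honestly non-regular} equilateral $n$-gon. The approach is to exhibit, for each such $n$, one explicit shape $P_n$ — a non-convex equilateral $n$-gon carrying a single ``notch'' (one reflex vertex, with the remaining interior angles chosen so that all angles sum to $(n-2)\pi$ and every side has the same length) — together with a tile set $T_n$ and seed $\sigma_n$, and then to run the same Turing-machine-simulation scaffold used for Theorems~\ref{thm:regular-CU} and~\ref{thm:pairs-CU}: a ``grid'' assembly playing the role of the integer lattice, bit-writing regions, and bit-reading gadgets wired together by the grid. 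I would carry out the full argument for $n=4$; the cases $n=5,6$ are obtained by the analogous construction with one or two extra vertices, the additional degrees of freedom only making the packing easier.

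First I would construct the grid. Using only edge-to-edge attachments with complementary, equal-length glues, I would show that copies of $P_n$ assemble into a periodic, rigid ``lattice'' assembly with distinguished attachment sites, analogous to the regular-polygon lattices of the supporting section (which must be redone here since $P_n$ is not regular). The subtlety special to continuous space is \emph{rigidity}: one must verify that the combinatorial gluing pattern pins down a unique geometric placement of every tile up to a global rigid motion — i.e., that no tile can rotate or slide within the slack allowed by its glues without forcing an interior overlap. This follows from the binding rule (matching glues force matching edge lengths and hence complete edge-to-edge contact) together with the non-overlap constraint, and this is exactly the place where the reflex vertex earns its keep: the notch creates a pocket that admits only one tile placement, which is what removes the ambiguity that a convex equilateral shape (e.g. a rhombus) would leave open.

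Next I would build the bit-reading gadget, with the notch as the exploitable feature. At a designated site of the grid a ``bit-writing'' sub-assembly either does or does not place a tile whose body occupies that pocket; a ``bit-reading'' path of tiles then approaches the site and is either geometrically blocked (routing growth down one branch) or not (routing it down the other), so the presence or absence of the notch-filling tile is faithfully copied into the state of the growing assembly. The hard part — and where I expect essentially all of the work to go — is proving correctness in $\R^2$ rather than on a lattice: one must show the blocking is \emph{tight in both directions} (if the bit is $0$ the reader is forced all the way through and cannot be blocked by anything present, and if the bit is $1$ it cannot squeeze past), and, crucially, that the freedom to place tiles at arbitrary real positions and rotations cannot be abused anywhere — no spurious, slightly rotated or slightly shifted attachment may bridge across the gadget, fill a pocket it should not, or escape the grid. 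This ``no unintended assemblies'' analysis is the main obstacle; I would handle it by a case analysis on which edges of a newly added tile can touch the current assembly, using the common side length and the chosen interior angles to rule out every non-canonical contact.

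Finally, with a correct grid and bit-reader for each $n \in \{4,5,6\}$ in hand, the Turing-machine simulation is identical to that of Theorems~\ref{thm:regular-CU} and~\ref{thm:pairs-CU}: successive rows of the assembly are grid segments that read the previous machine configuration bit by bit through the bit-reading gadgets and write the successor configuration, with $\sigma_n$ encoding $w$; directedness holds because every choice point is resolved deterministically by the geometry. Combined with the $n \geq 7$ case, this gives the statement for all $n \geq 4$.
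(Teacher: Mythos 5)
Your high-level plan matches the paper's: dispatch $n \geq 7$ by appealing to Theorem~\ref{thm:regular-CU} (a regular $n$-gon is equilateral), then handle $n=4,5,6$ directly by exhibiting a non-regular equilateral polygon plus a grid and a normalized bit-reading gadget, and finally hook this into the generic Turing-machine-simulation scaffold. That overall decomposition is exactly what the paper does (Lemma~\ref{lem:equilateral-bit-gadget} reduces the theorem to producing bit-readers for $n \in \{4,5,6\}$).

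However, there is a genuine gap at $n=4$. You propose to use a \emph{non-convex} equilateral quadrilateral ``carrying a single notch (one reflex vertex),'' and you make this reflex pocket the engine of both rigidity and blocking. But no simple equilateral quadrilateral with a reflex vertex exists. If $ABCD$ is a simple quadrilateral with $|AB|=|BC|=|CD|=|DA|$, draw the diagonal $AC$: triangles $ABC$ and $ACD$ are isoceles with the same two leg lengths and share the base $AC$, so they are congruent, and simplicity forces $B$ and $D$ onto opposite sides of line $AC$. Hence $D$ is the reflection of $B$ across $AC$, and $ABCD$ is a rhombus --- convex (or degenerate). So the shape your construction relies on for the $n=4$ base case does not exist, and the ``pocket admits only one tile placement'' mechanism cannot be invoked there. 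The paper's quadrilateral gadget necessarily uses a (convex, non-square) rhombus, and its blocking has to come from the unequal interior angles breaking the square's $4$-fold symmetry --- so that two tiles sharing an edge can sit in more than one relative rotation, producing off-grid overhangs --- rather than from a reflex vertex. Your $n=5$ and $n=6$ cases could in principle use a reflex vertex (non-convex equilateral pentagons and hexagons do exist), but as written you derive them ``by the analogous construction'' from the impossible $n=4$ template, so the whole non-regular branch of the argument needs a different starting shape and a different blocking mechanism.
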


Our final positive result shows that there exists a class of single-shaped systems of obtuse isosceles triangle which is computationally universal.

\begin{theorem}\label{thm:triangular-CU}
Let $M$ be a standard Turing machine with input $w$. Then, there exists an obtuse isosceles triangle $P$ and a directed single-shaped system $\mathcal{T} = (T, \sigma)$ consisting only of tiles of shape $P$ that simulates $M$ on $w$.
\end{theorem}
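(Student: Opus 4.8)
The plan is to obtain the obtuse isosceles triangle essentially as ``half'' of the equilateral quadrilateral used to prove Theorem~\ref{thm:equilateral-CU} for $n=4$, so that most of that construction can be reused. That quadrilateral has all four sides equal, so (being non-square, since square tiles cannot support bit-readers) it may be taken to be a non-square rhombus $Q$ of acute angle $\theta_0$. Let $P$ be the obtuse isosceles triangle whose two base angles equal $\theta_0/2$ and whose apex angle is therefore $\pi-\theta_0>\pi/2$; gluing two congruent copies of $P$ along their common (longest) base reproduces $Q$. If $\theta_0$ is not fully pinned down by that construction I would further require $\theta_0/\pi$ to be irrational, so that no multiset of interior angles of copies of $P$ sums to exactly $2\pi$ except the ``intended'' multisets that occur in a tiling by copies of $Q$. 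Now put a strength-$1$, self-complementary glue $b$ on the base edge of $P$ and nowhere else, and label the two legs of $P$ with the glues carried by the corresponding edges of $Q$ in the $n=4$ construction. The first thing to prove is that, in any producible assembly, a tile attached through a $b$--$b$ bond must sit exactly in the position completing a copy of $Q$; granting this, one gets a bijection between assembly sequences of our system $\mathcal{T}=(T,\sigma)$ and the ``rhombus-respecting'' assembly sequences of the quadrilateral system, whence $\mathcal{T}$ is directed and simulates $M$ on $w$ because the quadrilateral system does.

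Unpacking what this reuse needs, I would carry out the two supporting constructions for $Q$ (equivalently, for glued pairs of copies of $P$), mirroring the ``lattice'' and ``bit-reader'' parts of the paper. For the grid: since any triangle tiles the plane, copies of $Q$ tile the plane, and I would choose the leg-glues so that a rectangular patch of this tiling assembles deterministically, row by row from a single seed row, each new rhombus attaching by one strength-$1$ glue with non-overlap of interiors forcing its location and orientation. For the bit-reader: a bit is encoded along the top boundary of a finished grid row as the presence or absence of a small protrusion built from a few extra copies of $P$; a reader path then grows toward that boundary and, according to whether the protrusion is present, either has room to place a tile in a designated ``$1$-slot'' or is geometrically blocked there and is instead routed into a ``$0$-slot'', the distinct glue exposed in the two cases propagating the bit onward. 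The obtuseness of $P$ (base angles $<45^\circ$) is what makes the protrusion stick out far enough to matter, and the choice of $\theta_0$ is what guarantees the blocking is tight, i.e. that in the ``present'' case no copy of $P$ can be wedged into the remaining gap so as to defeat the block.

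The main obstacle is precisely this tightness claim: because there is no underlying lattice, ruling out spurious tile placements --- inside the bit-reader gap, along grid boundaries, and at the seams between paired triangles --- has to be done by explicit planar geometry in $\R^2$, checking that the interior angles of the triangles incident to each vertex sum to $2\pi$ only in the intended configurations (this is where irrationality of $\theta_0/\pi$, or the algebraic value forced by the $n=4$ construction, kills accidental angle coincidences) and that no reentrant pocket wide enough to admit another copy of $P$ is ever created. I would also verify the harmless-looking point that the reflective symmetry of the isosceles triangle does not enlarge the set of available placements in a damaging way: since $P$ has a line of symmetry, every placement obtainable by a reflection is already obtainable by a rotation, so the ``no tile flipping'' restriction of the Polygonal TAM costs nothing and the placement analysis above already covers all orientations. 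Once these geometric lemmas are in place, stitching bit-readers between consecutive grid rows to obtain the standard zig-zag simulation of $M$ (as in~\cite{CookFuSch11,Duples,Polyominoes}) and checking directedness is routine.
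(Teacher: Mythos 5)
Your approach is genuinely different from the paper's: the paper constructs bit-reading gadgets and a grid directly for a particular obtuse isosceles triangle (Lemma~\ref{lem:triangular-bit-gadget} and Figure~\ref{fig:irregular-triangles}) and plugs them into the generic normalized-gadget machinery, whereas you propose a reduction to the $n=4$ case of Theorem~\ref{thm:equilateral-CU} by cutting the non-square rhombus $Q$ along its long diagonal and reassembling it at the level of glues. That reduction is an appealing idea — it would let you inherit the grid and bit-reader analysis for $Q$ wholesale rather than redoing the planar geometry for triangles — but as stated it has a real gap.

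The gap is that the triangle system is \emph{not} in bijection with the ``rhombus-respecting'' behaviour of the quadrilateral system, because it admits half-rhombi as stable intermediate (and potentially terminal) states. Correctness of the bit-reading gadget for $Q$ is proved by showing that the blocker overlaps the location where the \emph{rhombus} of the wrong branch would go. It does not follow that the blocker overlaps the location of the \emph{first triangle half} of that rhombus: one copy of $P$ may fit in the unobstructed part of the forbidden rhombus slot, attach via a leg glue, expose its $b$ glue, and only then discover that the partner cannot be placed. Such a half-rhombus can (a) be the tile that crosses the line $x=0$, violating the bit-reading condition for the wrong bit value; (b) geometrically obstruct the correct branch; and (c) create a second terminal assembly, destroying directedness. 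None of these failure modes exists in the rhombus system, so they cannot be discharged by appealing to its correctness; they require a fresh case analysis of triangle placements of exactly the sort the paper does directly. You flag ``tightness'' as the main obstacle, but frame it as a matter of choosing $\theta_0$ well and avoiding accidental angle sums, which does not address the structural point that a triangle is a strictly smaller obstacle than a rhombus. A smaller issue: a single self-complementary glue $b$ shared by all tiles lets a half of one rhombus tile type mate with the other half of a different type, scrambling tile identities; you would need a distinct base glue per rhombus tile type (an easy fix, but it should be stated). Until the half-rhombus placements in and around each blocking configuration are explicitly ruled out, the claim that $\mathcal{T}$ ``simulates $M$ on $w$ because the quadrilateral system does'' is not established.
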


We now state the negative result, which is based on the fact that regular polygonal tiles with $\le 6$ sides cannot form paths capable of blocking each other in specific ways allowing important geometric information encoding and decoding.

\begin{theorem}\label{thm:cant-bit-read}
Let $n\in \N$ be such that $3\leq n \leq 6$. Then, there exists no temperature 1 single-shaped polygonal tile assembly system $\mathcal{T} = (T,\sigma,1)$ where for all $t \in T$, $t$ is a regular polygon with $n$ sides, and a bit-reading gadget exists for $\mathcal{T}$.
\end{theorem}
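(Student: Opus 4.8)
The plan is to argue by a geometric case analysis that no single-shape system of regular $n$-gons, $3 \le n \le 6$, can realize the blocking behavior that a bit-reading gadget requires. First I would recall the definition of a bit-reading gadget: there must be two producible assemblies (encoding a $0$ and a $1$) together with a ``reader'' path of tiles that, growing into the region adjacent to the encoding assembly, is forced down one of two mutually exclusive routes depending on which bit is present, with the divergence caused by one tile of the encoding assembly geometrically blocking the placement of a tile of the reader path. The key structural fact to establish is that for $n \in \{3,4,6\}$ the regular $n$-gon tiles the plane, and for $n = 5$, although the pentagon does not tile the plane, any edge-to-edge adjacency of regular pentagons still forces all tile centroids and orientations to lie in a discrete set generated by the two possible edge-gluings. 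In every one of these four cases I would show that the set of positions and orientations reachable by a connected edge-to-edge assembly of copies of $P_n$ is contained in a fixed discrete grid-like set $G_n \subseteq \R^2 \times [0,2\pi)$ determined only by $n$ (for $n=3,4,6$ this is essentially the standard lattice packing; for $n=5$ it is the rigid ``pentagon arrangement'' set). This is the content I would extract from the grid/lattice constructions referenced earlier in the paper, specialized to the $n \le 6$ regime.

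The second step is to use the discreteness of $G_n$ to rule out the blocking phenomenon. In a bit-reading gadget the reader path approaches along one of the grid directions, and the only way the encoding assembly can influence it is by occupying a tile position that the reader would otherwise want to occupy. But because every tile of every producible assembly sits at a position in $G_n$, and because edge-to-edge binding with matching glue lengths means a tile either fits exactly into a grid cell or overlaps an existing tile, I would show that whether a given reader tile can attach depends only on whether that one grid cell is already occupied — and crucially, in each of the four lattices, the geometry is ``homogeneous'' enough that a path encoding a $0$ and a path encoding a $1$ cannot present different occupancy patterns in the relevant cells without themselves being distinguishable one step earlier, pushing the contradiction back along the path. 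Concretely, I would formalize a notion of the reader path being ``confined to a corridor'' of grid cells and argue, by induction on the tiles of the reader path, that at temperature $1$ the path can always route around any single blocking tile (since in these lattices every grid cell has enough free neighbors), so no deterministic fork can be created. The pentagon case needs separate care: here I would show that even though pentagons leave gaps, the gaps are never large enough, nor the orientations flexible enough, for a reader tile to ``wedge'' into a position that distinguishes the two bit-encodings — essentially because the admissible pentagon orientations form a finite set closed under the gluing operation and the gaps between pentagons in any configuration are congruent to a fixed small polygon into which no copy of $P_5$ fits in a bit-dependent way.

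The main obstacle, and the part requiring the most careful argument, is the pentagon: for $n = 3,4,6$ the tiling-the-plane property gives a clean lattice and the impossibility is almost immediate, but for $n = 5$ one must rule out the possibility that the non-periodic, gap-containing structure of pentagon packings could be exploited to build a blocker. The heart of that argument is a finiteness/rigidity lemma — that the group-like structure generated by the two edge-adjacency moves of a regular pentagon acts on centroid-orientation space with discrete orbits, and that the ``holes'' in any pentagon assembly are uniformly bounded away from admitting a pentagon in any new orientation — and I would expect to spend most of the proof establishing exactly this, likely via an explicit computation of the at most a few dozen reachable orientations modulo the lattice of translations, followed by checking that in none of them does adding or removing a single tile change which neighboring cells are blocked for an approaching path. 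Once that rigidity is in hand, the contradiction with the existence of a bit-reading gadget follows in all four cases by the same corridor-induction argument, completing the proof.
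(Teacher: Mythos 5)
You correctly identify that the tessellation property for $n \in \{3,4,6\}$ constrains all tiles to a fixed grid, and that for $n=5$ the reachable orientations are rigidly constrained even without tessellation; this is aligned with the paper's setup. However, your central step---that at temperature $1$ \emph{the path can always route around any single blocking tile}, hence \emph{no deterministic fork can be created}---is both false and pointed in the wrong direction for what the theorem needs. The route-around claim fails as stated: in any of these lattices a path can be permanently blocked (think of a dead-end corridor). And even if it were true, it would not yield the theorem: a bit-reading gadget does not require that some path cannot be blocked, it requires that in the presence of $\alpha_0$ the reader \emph{must} be able to place a tile from $T_0$ past $x<0$ and \emph{must never} be able to place one from $T \setminus T_0$ there, and dually for $\alpha_1$. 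Your homogeneity/``push the contradiction back'' sketch never pins down a mechanism that produces this contradiction.

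The paper's actual proof relies on a topological enclosure argument that is absent from your sketch. It constructs a single \emph{outermost} path $p'$ by always following the clockwise-most branch at each divergence point; WLOG $p'$ places a $T_0$-tile past $x<0$. With $\alpha_1$ present, $p'$ must be blocked by a tile of $\alpha_1$ in order for the gadget to be correct, but because $p'$ is the clockwise extreme, once $p'$ grows as far as its blocker every alternative reader path is confined to the bounded region enclosed by $p'$, $\alpha_1$, the line $y=0$, and the line $x=t_x$, so no path can read a $1$, a contradiction. For $n \in \{3,4,6\}$ the lattice forces the blocker flush against the last placeable tile of $p'$, making the enclosure airtight; for $n = 5$ a sub-tile-sized gap may remain, and the paper does a four-case geometric analysis (using the fact that edge-bound regular pentagons admit exactly two relative orientations) to show no pentagon can pass through it. Your pentagon rigidity lemma gestures at the right flavor of fact, but you never identify the gap between the end of the outermost path and its blocker as the critical obstruction, which is exactly where the argument lives. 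Without the outermost-path construction and the resulting enclosure, the contradiction you need does not materialize.
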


Due to space constraints in this extended abstract, the proofs of most results are relegated to the Appendix.  However, in the main body we now sketch an overview of how the positive results work and a portion of the proof of Theorem~\ref{thm:regular-CU} for $n \ge 15$, which gives the general overall scheme for all of the positive results.

\vspace{-10pt}
\subsection{Bit-Reading Gadgets Overview}\label{sec:bit-gadgets}
\vspace{-10pt}
First, we discuss a primitive tile-assembly component that enables computation by self-assembling systems. This component is called the \emph{bit-reading gadget}, and essentially consists of pre-existing assemblies, \emph{bit writers}, that appropriately encode bit values (i.e., $0$ or $1$) and paths that grow past them and are able to ``read'' the values of the encoded bits; this results in those bits being encoded in the tile types of the paths beyond the encoding assemblies. The notion of bit-reading gadget was defined in~\cite{Polyominoes}. For completeness, we present the definition here and note that the definition applies even to systems of polygonal tiles.
Figure~\ref{fig:bit-gadget-definition} provides an intuitive overview of a temperature-1 system with a bit-reading gadget.  Essentially, depending on which bit is encoded by the assembly to be read, exactly one of two types of paths can complete growth past it, implicitly specifying the bit that was {\em read}.  It is important that the bit reading must be unambiguous, i.e., depending on the bit {\em written} by the pre-existing assembly, exactly one type of path (i.e., the one that denotes the bit that was written) can possibly
complete growth, with all paths not representing that bit being prevented. Furthermore, the correct type of path must always be able to grow.  Therefore, it cannot be the case that either all paths can be blocked from growth, or that any path not of the correct type can complete, regardless of whether a path of the correct type also completes, and these conditions must hold for any valid assembly sequence to guarantee correct computation.

The key to the correct functioning of a bit-reading gadget at temperature-1, where glue cooperation is not available and one source of ``input'' to the growing bit-reader must instead be provided by geometry, in the form of geometric hindrance which prevents exactly one path from continuing growth but allows another to proceed, is the fact that it must work when reading either of two different bit values.  Using Figure~\ref{fig:bit-gadget-definition} as a guide, one can see that it is easy to read the ``1'' bit in this example by blocking the blue path.  However, the difficulty which is encountered is in correctly blocking the yellow path while allowing the blue to continue in order to read a ``0'' bit.  With square tiles (and as we show, several others), this is in fact impossible.  However, with most polygonal tiles this can be accomplished by careful design of paths and blocking assemblies so that a gap remains between the blocked path and the blocking assembly in such a way that the other path can assemble through the gap.  The techniques for accomplishing this will be demonstrated throughout this paper.

\begin{figure*}[h!b]
\begin{center}
\vspace{-15pt}
\includegraphics[width=4.0in]{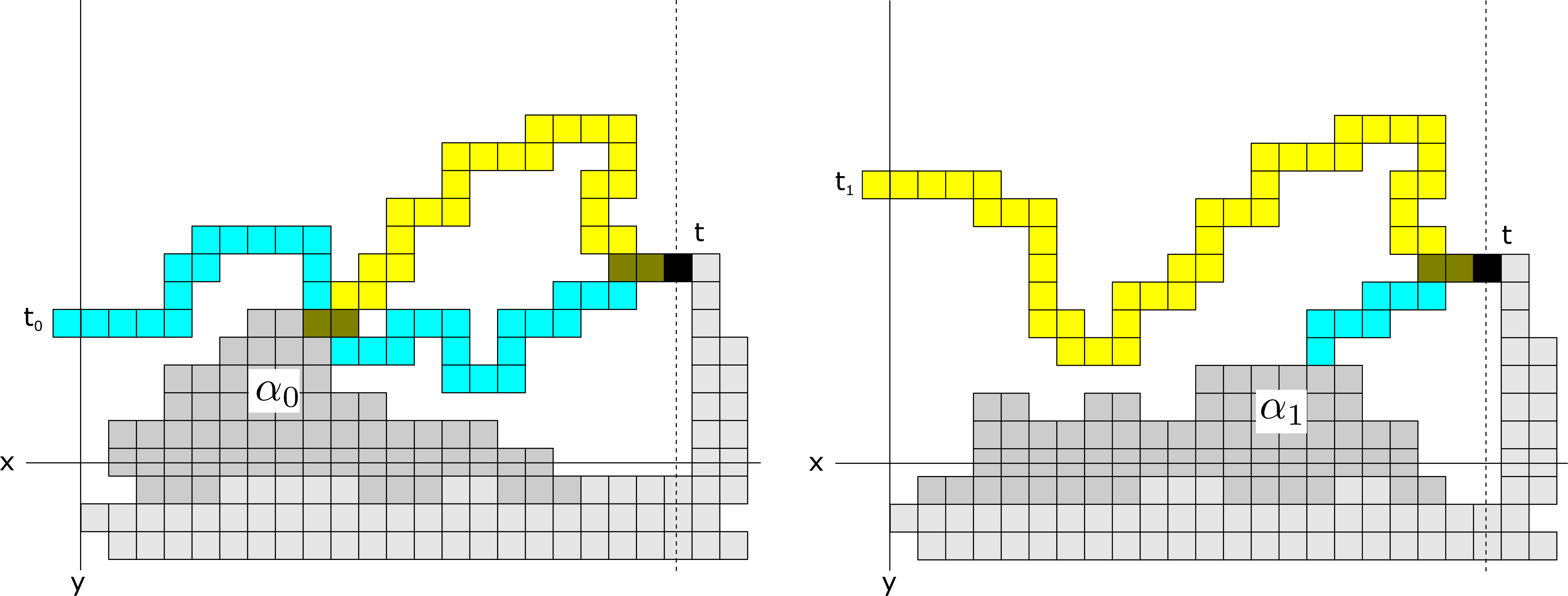}
\caption{Abstract schematic of a bit-reading gadget. (Left) The blue path grown from
$t$ ``reads'' the bit 0 from $\alpha_0$ (by being allowed to grow to $x < 0$ and
placing a tile $t_0 \in T_0$), while the yellow path (which could read a 1
bit) is blocked by $\alpha_0$. (Right) The yellow path grown from $t$ reads the bit 1
from $\alpha_1$, while the blue path that could potentially read a 0 is blocked by
$\alpha_1$.  Clearly, the specific geometry of the used polygonal tiles and
assemblies is important in allowing the yellow path in the left figure to be
blocked without also blocking the blue path.}
\label{fig:bit-gadget-definition}
\vspace{-20pt}
\end{center}
\end{figure*}

\ifabstract
\later{
\section{Formal Definition of Bit-Reading Gadget}

For the following definition is taken from \cite{Polyominoes} and modified slightly to account for the fact that polygonal tiles are placed in continuous, rather than discrete, space.  Here and throughout the paper, if we refer to a tile having an $x$ (or $y$) coordinate $i$, we are referring to its centroid being on the line $x=i$ (or $y=i$) for $i \in \mathbb{R}$.

\begin{definition}\label{def:bit-reader}
We say that a \emph{bit-reading gadget} exists for a tile assembly system
$\mathcal{T} = (T,\sigma,\tau)$, if the following hold.  Let $T_0 \subset T$ and
$T_1 \subset T$, with $T_0 \cap T_1 = \emptyset$, be subsets of tile types
which represent the bits $0$ and $1$, respectively.  For some producible
assembly $\alpha \in \prodasm{T}$, there exist two connected subassemblies,
$\alpha_0,\alpha_1 \sqsubseteq \alpha$ (with $w$ equal to the maximal width of
$\alpha_0$ and $\alpha_1$, i.e., the largest extent in $x$-direction spanned by either subassembly), such that if: \begin{enumerate}
    \item $\alpha$ is translated so that $\alpha_0$ has its minimal $y$-coordinate $\le 0$ and its minimal $x$-coordinate $\ge 0$,
    \item a tile of some type $t \in T$ is placed at $(w+n,h)$, where $n,h \ge 1$, and
    \item the tiles of $\alpha_0$ are the only tiles of $\alpha$ in the first quadrant to the left of $t$,
\end{enumerate}
then at least one path must grow from $t$ (staying strictly above the $x$-axis)
and place a tile of some type $t_0 \in T_0$ as the first tile with
$x$-coordinate $< 0$, while no such path can place a tile of type $t' \in (T
\setminus T_0)$ as the first tile to with $x$-coordinate $< 0$.  (This
constitutes the reading of a $0$ bit.)

Additionally, if $\alpha_1$ is used in place of $\alpha_0$ with the same
constraints on all tile placements, $t$ is placed in the same location as
before, and no other tiles of $\alpha$ are in the first quadrant to the left of
$t$, then at least one path must grow from $t$ and stay strictly above the
$x$-axis and strictly to the left of $t$, eventually placing a tile of some
type $t_1 \in T_1$ as the first tile with $x$-coordinate $< 0$, while no such
path can place a tile of type $t' \in (T \setminus T_1)$ as the first tile with
$x$-coordinate $< 0$.  (Thus constituting the reading of a $1$ bit.)
\end{definition}

We refer to $\alpha_0$ and $\alpha_1$ as the {\em bit writers}, and the paths
which grow from $t$ as the {\em bit readers}.  Also, note that while this
definition is specific to a bit-reader gadget in which the bit readers grow from
right to left, any rotation of a bit reader is valid by suitably rotating the
positions and directions of Definition~\ref{def:bit-reader}.
} %
\fi 

\vspace{-20pt}
\subsection{Grid assemblies}
\vspace{-5pt}
As we will see in Section~\ref{sec:TMsim}, our construction to simulate a Turing machine with a Polygonal TAM system consisting of the polygon $P$ will require us to string together several bit writers which we will then read with a series of bit readers.  In order to ensure that the path which is assembling the bit readers is placing the bit readers at the correct positions, we need to keep track of where the bit writers are located.  We accomplish this by constructing a lattice in the plane with $P$.  We can then place our bit writers at periodic positions in this lattice so that the path which is assembling the bit readers will know where to place the bit readers.
\vspace{-10pt}
\subsection{Turing machine simulation} \label{sec:TMsim}
\vspace{-8pt}

Let $M$ be a Turing machine and let $w$ be some input to $M$.  Figure~\ref{fig:tm_sim} shows a high-level schematic diagram of how a Polygonal TAM system simulates $M$ on input $w$.  The input $w$ is encoded as a sequence of bit writers with spacers placed in between them (shown at the bottom of Figure~\ref{fig:tm_sim} as shaded regions labeled with an ``s'').  These spacers allow for our bit readers to shift back on grid without encroaching on the territory of other bit writers.  As indicated by the arrows in Figure~\ref{fig:tm_sim}, bit readers then ``read'' the bit writers corresponding to the inputs.  Growth proceeds by growing to the north (shown as a dark unlabeled region in Figure~\ref{fig:tm_sim}, and then a bit writer is assembled depending on what was read by the bit reader.  After assembling the bit writer above, growth then continues by growing a path so that the next bit writer can be ``read'' (shown as the lightly shaded region labeled ``wr'' in Figure~\ref{fig:tm_sim}.  This growth continues until the last bit of the row is encountered at which point, the bit reader begins ``reading'' the next row.  Each row of bit writers can be thought of as representing the tape of $M$.  The symbols on the tape and location of the head of $M$ are all encoded in geometry as bit writers. If the head is not located at the set of bit writers the bit reader is currently reading, the symbols represented by the bit writers are simply rewritten as bit writers in the row above.  Otherwise, the transition may be carried out by writing the new symbol on the tape specified by the transition function of $M$ in the row above as a sequence of bit writers.  Also, bit writers are assembled to indicate that the head has moved as specified by the transition function.  See ~\cite{Polyominoes} for most complete exposition on this technique.

\begin{wrapfigure}{r}{0.55\textwidth}
\begin{center}
\vspace{-40pt}
\includegraphics[width=2.7in]{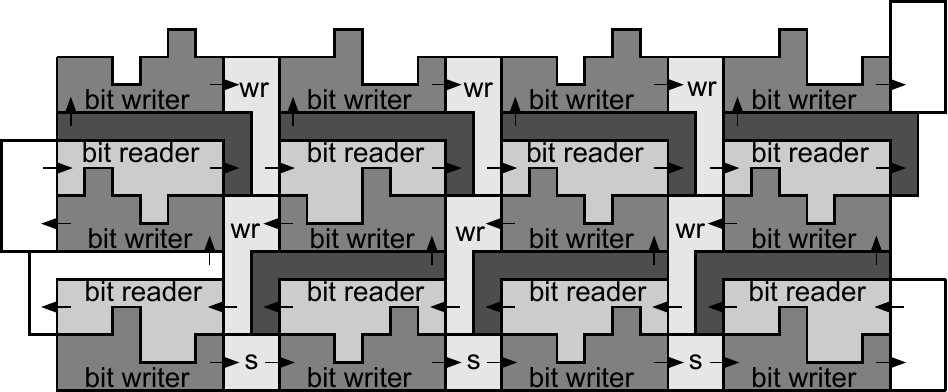}
\vspace{-10pt}
\caption{Schematic of simulating a Turing machine with bit-reading gadgets (from \cite{Polyominoes}).}
\label{fig:tm_sim}
\vspace{-25pt}
\end{center}
\end{wrapfigure}

Thus, to show our positive results, our task has become to 1) show that bit reading gadgets exist for the claimed systems and 2) show that we can string them together.  The first task is accomplished in Section~\ref{sec:bit-readers-main} and the grid which allows us to show the latter is shown in Section~\ref{sec:bit-grid-main}.

Given an $n$-sided regular polygon $P$ where $n>6$, a Turing machine $M$ and an input $w$, Algorithm~\ref{alg:high} shows a high-level schematic view of an algorithm that produces a single shape Polygonal TAM system which simulates the Turing machine $M$ on input $w$ and consists of tiles of shape $P$.  Note that here, we are abstracting the way in which the mathematical structures appearing in the algorithm are represented.  In Section~\ref{sec:bit-grid-main}, we give a construction which implicitly defines an algorithm which we call FORM\_GRID.  This algorithm takes an integer $n$ as input and returns a grid formed by the $n$-sided regular polygon. Given a grid $\mathcal{G}$ and an $n$-sided regular polygon, in Section~\ref{sec:technicalLemmas} our construction implicitly gives an algorithm which we call FORM\_GADGETS, that takes a grid $\mathcal{G}$ and an integer $n$, and produces a normalized bit-reading gadget.  Once we have a normalized bit reading gadget, we can use the algorithm implicitly described in Section 3.2 of \cite{Polyominoes}, which we call INITIALIZE, that produces a system, say $\mathcal{T}=(T, \sigma)$, which grows a geometric representation of the input $w$.  Finally, also in Section 3.2 of \cite{Polyominoes}, an algorithm is implicitly given, which we call TRANSITION\_TILES, that returns a set of tiles which are added to $T$ so that the system $\mathcal{T}$ is able to simulate a transition of the Turing machines $M$.

\begin{algorithm}[H]
 \KwData{$n$, $M$, $w$}
 \KwResult{Tile assembly system $\mathcal{T}$ which simulates $M$ on $w$}
 $\mathcal{G} \leftarrow$ FORM\_GRID(n)\;
 $\mathit{NRG} \leftarrow$ FORM\_GADGETS($\mathcal{G}$, n)\;
 $\mathcal{T} = (T, \sigma) \leftarrow$ INITIALIZE($n$, $M$, $w$, $\mathit{NRG}$)\; %
 $T \leftarrow T$ $\cup$ TRANSITION\_TILES($n$, $M$, $\mathit{NRG}$)\;
 \Return $\mathcal{T}$\;
 \caption{High level algorithm for constructing a system $\mathcal{T}$ which simulates $M$ on $w$.}
 \label{alg:high}
\end{algorithm} 

\vspace{-15pt}
\section{Regular Polygonal Tile Analysis With Complex Roots}\label{sec:roots-of-unity}
\vspace{-10pt}

In order to construct the grid assemblies and to show the correctness of the bit-reading gadgets we must show that the grid configurations and the bit-reading gadget configurations result in a valid assembly. In other words, we must show that the intersection of the interiors of any two distinct polygonal tiles in the configuration is empty. Moreover, in order to show that this assembly is indeed a valid bit-reading gadget we show that in the presence of the bit writer tiles, only one of two bit reading assemblies (representing either a $0$ or a $1$) can assemble depending on the bit writer tiles.

To prove that each bit-reading gadget configuration can be used to obtain a valid assembly, we must compute the distances from the center of a given polygon to the center of another polygon. For convenience, we assume that the length of the apothem (the line segment from the center of a polygon to the midpoint of one of its sides) of all of the regular polygons is $\frac{1}{2}$, so that the distance from the centers of abutting polygons is $1$. Then, let $T$ be a polygonal tile, and let $T'$ be a polygonal tile that abuts $T$. We say that a polygonal tile has the \emph{standard orientation} if after being translated so that it is centered at the origin, it has a side that corresponds to a vertical line $l$ segment with midpoint at $\left(\frac{1}{2}, 0\right)$. See Figure~\ref{fig:genericPolygonA} for a depiction of a polygonal tile with standard orientation that is also centered at the origin.
For a polygonal tile with an odd number of sides, we say that a polygonal tile has \emph{negated orientation} if after being translated so that it is centered at $(0,0)$, it is the reflection of a tile which has standard orientation across the imaginary axis. This is depicted in Figure~\ref{fig:genericPolygonB}.

\begin{wrapfigure}{r}{0.55\textwidth}
\centering
\vspace{-20pt}
  \subfloat[][A polygonal tile with \emph{standard orientation} with center $(0,0)$.]{%
        \label{fig:genericPolygonA}%
        \includegraphics[width=1.0in]{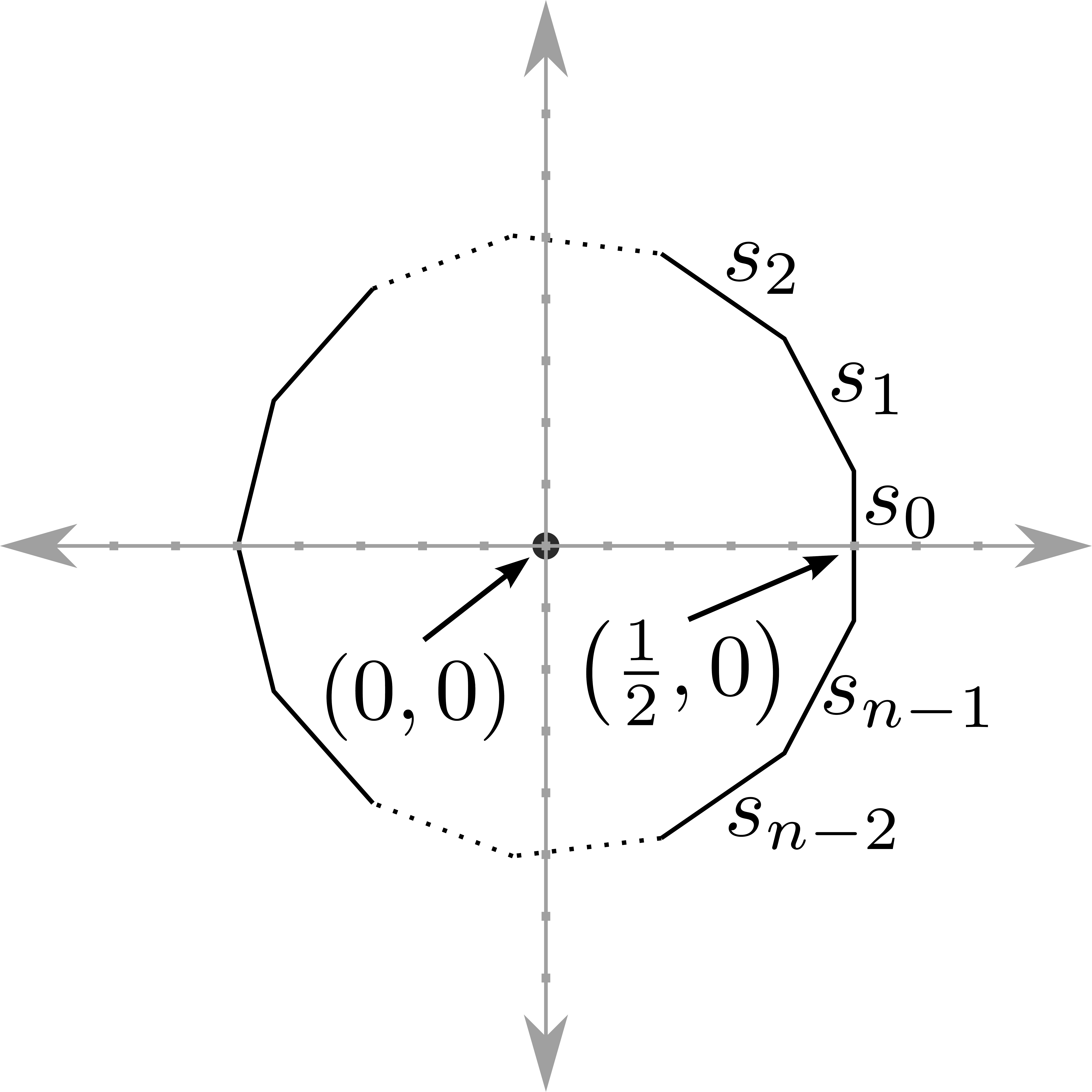}
        \vspace{-15pt}
        }%
        \quad
  \subfloat[][A polygonal tile with \emph{negated orientation} with center $(0,0)$.]{%
        \label{fig:genericPolygonB}%
        \includegraphics[width=1.0in]{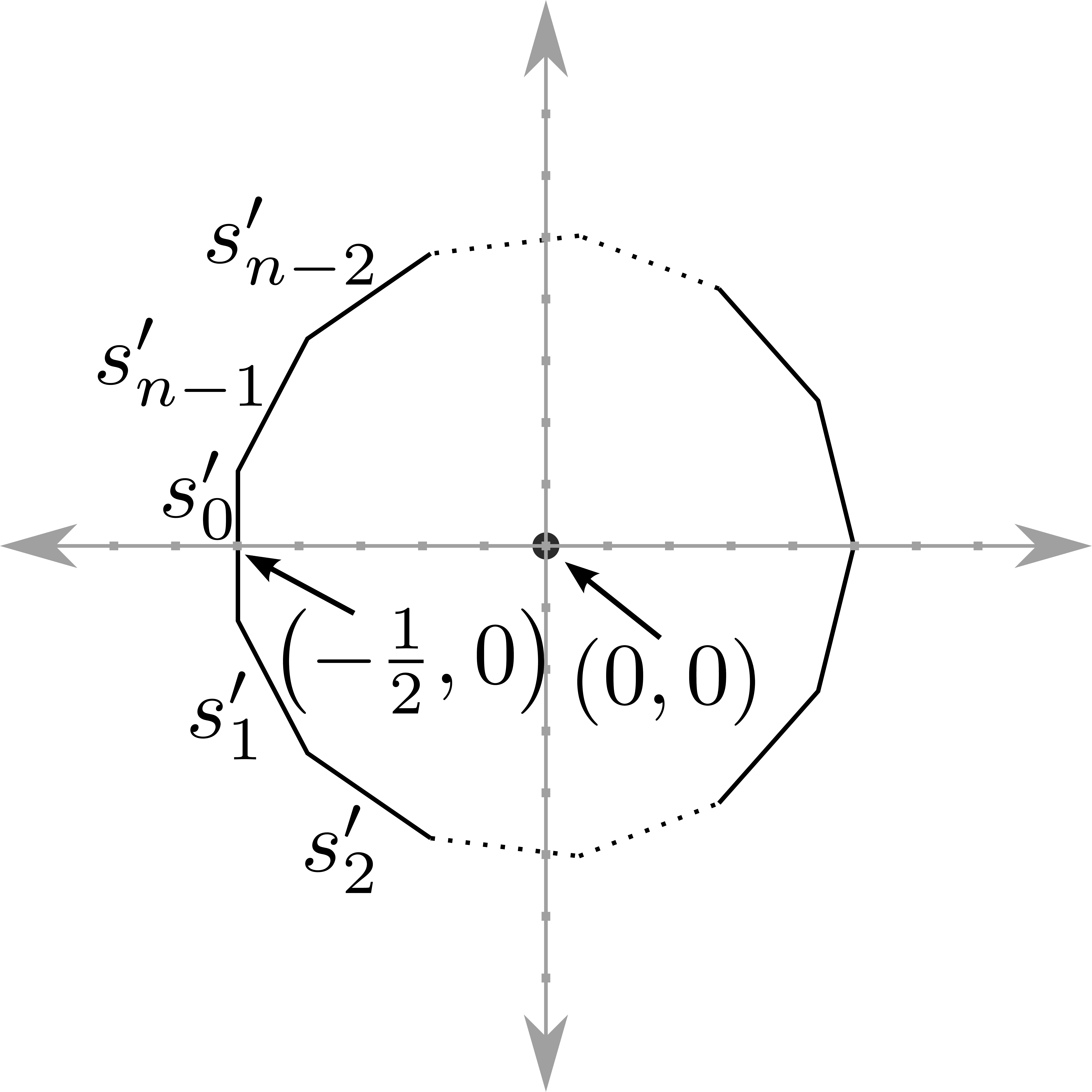}
        \vspace{-15pt}
        }%
  \vspace{-17pt}
  \caption{Regular polygonal tile orientations%
  \vspace{-20pt}}
  \label{fig:genericPolygon}
\end{wrapfigure}

We enumerate the sides of $T$ counter-clockwise starting from the side $s_0$ corresponding to $l$ and ending at $s_{n-1}$ where $n$ is the number of sides of $T$. Similarly, if $T$ has negated orientation, then we enumerate the sides as $\{s'_i\}_{i=0}^{n-1}$ as shown in Figure~\ref{fig:genericPolygonB}.  Then, relative to $T$, if $T'$ abuts $T$ along $s_0$, then the center of $T'$ is~$\left(1, 0\right)$. In general, for $\theta = \frac{2\pi}{n}$, if $T'$ abuts $T$ along $s_m$, then the center of $T'$ is $\left(\cos\left( m\theta \right), \sin\left( m\theta \right)\right)$. For the calculations in the following sections, it is convenient to identify $\R^2$ with the complex plane $\mathbb{C}$ so that $(x,y)$ is identified with $x + iy$. Then according to Euler's formula, $\left(\cos\left( m\theta \right), \sin\left( m\theta \right)\right)\in \R^2$ corresponds to the complex number $e^{mi\theta} = \cos\left( m\theta \right) + i\sin\left( m\theta \right)$. In other words, when $T$ has standard orientation, the centers of abutting polygons correspond to complex $n^{th}$ roots of unity, as the centers correspond to the roots of the complex polynomial $x^n + 1 = 0$ (recall that $n$ is the number of sides of $T$). Now let $\omega = e^{i\theta}$. Then these roots of unity are $\lbrace \omega^i \rbrace_{i=0}^{n-1}$. See Figure~\ref{fig:distances7sides} for an example in the heptagonal tile case. Finally, notice that if $T$ has negated orientation and $T'$ abuts $T$ along $s'_m$, then the center of $T'$ is $\left(-\cos\left( m\theta \right), -\sin\left( m\theta \right)\right)$, and so the center of $T'$ corresponds to $-\omega^m$.

\begin{wrapfigure}{r}{0.47\textwidth}
\centering
\vspace{-20pt}
	\includegraphics[width=1.0in]{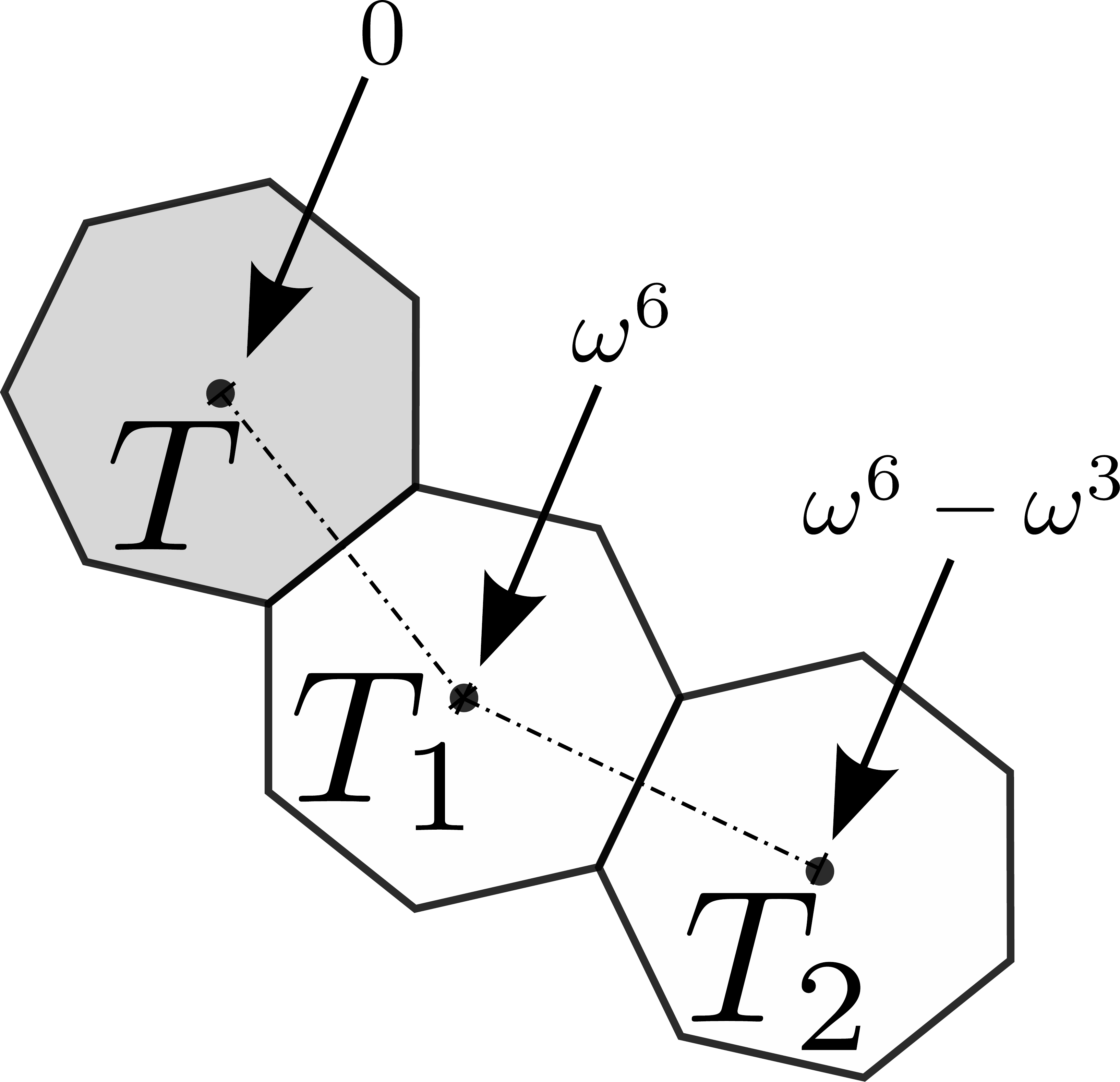}
    \vspace{-10pt}
	\caption{Relative to $T$, the center of $T_1$ corresponds to $\omega^6$ and the center of $T_2$ corresponds to $\omega^6 - \omega^3$.}
    \vspace{-20pt}
	\label{fig:heptagonSum-main}
\end{wrapfigure}

Now let $\calT$ be a TAS with tiles of a single regular polygon shape, and let $\alpha$ be an assembly in $\calT$ such that $\alpha$ contains a tile, $T$, with standard orientation and let $T'$ be any tile in $\alpha$ (including $T$). Then, since addition (respectively, subtraction) of complex numbers corresponds to vector addition (respectively, subtraction) in $\R^2$, the center of $T'$ corresponds to some polynomial in $\omega$ with integer coefficients. See Figure~\ref{fig:heptagonSum-main} for an example of the correspondence to the centers of heptagonal tiles to such polynomials.

\ifabstract
\later{
\section{Complex roots of unity example using heptagonal tiles}

In this section, we give example assemblies using heptagonal tiles by computing the distances of relevant tile centers using $7^{th}$ roots of unity. Let $\omega = e^{\frac{2\pi}{7}}$. For a polygonal tile $T$ with standard orientation, Figure~\ref{fig:distances7sides}(a) depicts the complex roots of unity corresponding to the centers of adjacent tiles. Similarly, for a polygonal tile $T$ with negated orientation, Figure~\ref{fig:distances7sides}(b) depicts the negated complex roots of unity corresponding to the centers of adjacent tiles.

\begin{figure}[htp]
\centering
  \subfloat[][]{%
        \label{fig:distances7sidesA}%
        \includegraphics[width=1.5in]{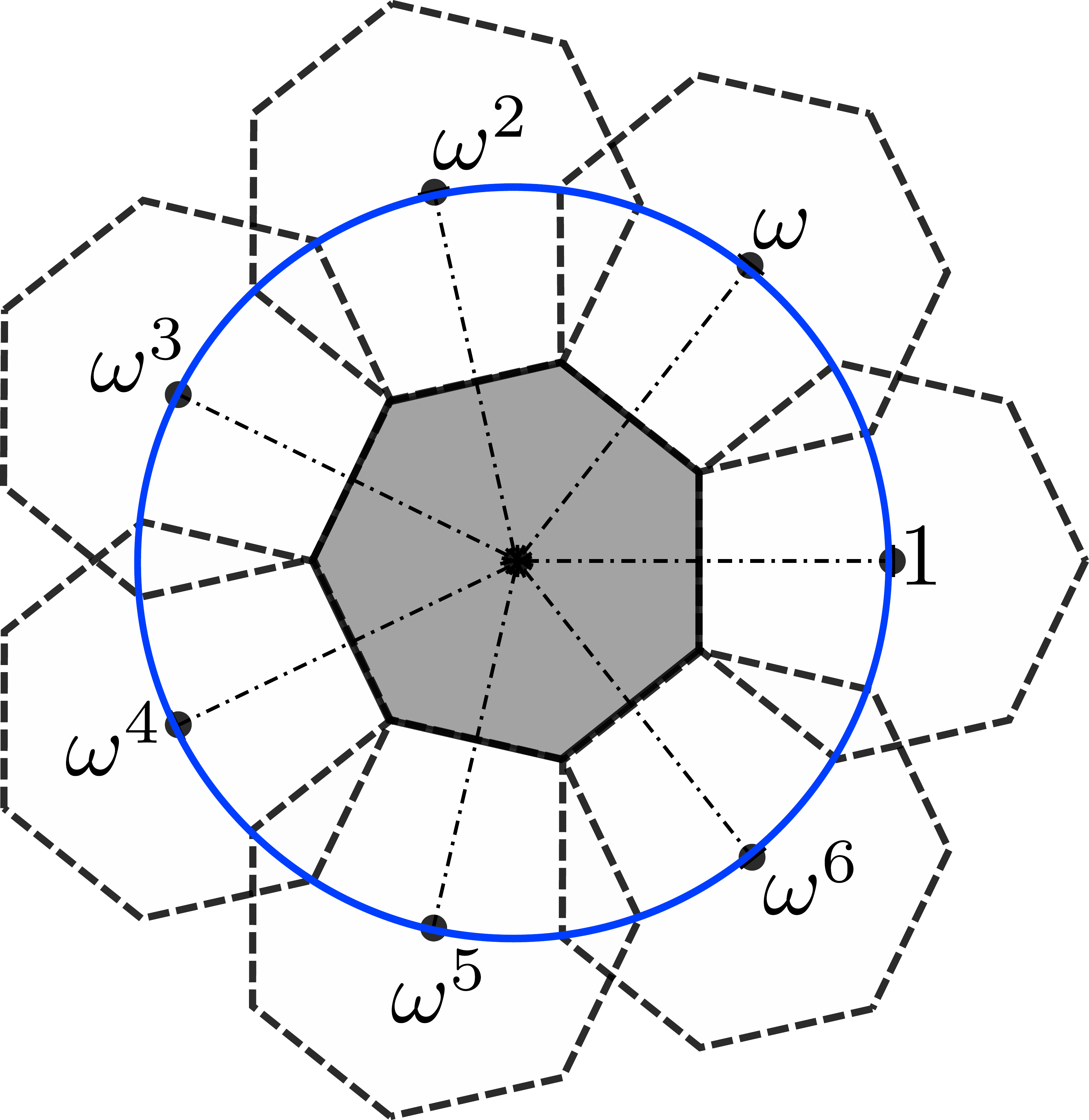}
        }%
        \quad
  \subfloat[][]{%
        \label{fig:distances7sidesB}%
        \includegraphics[width=1.5in]{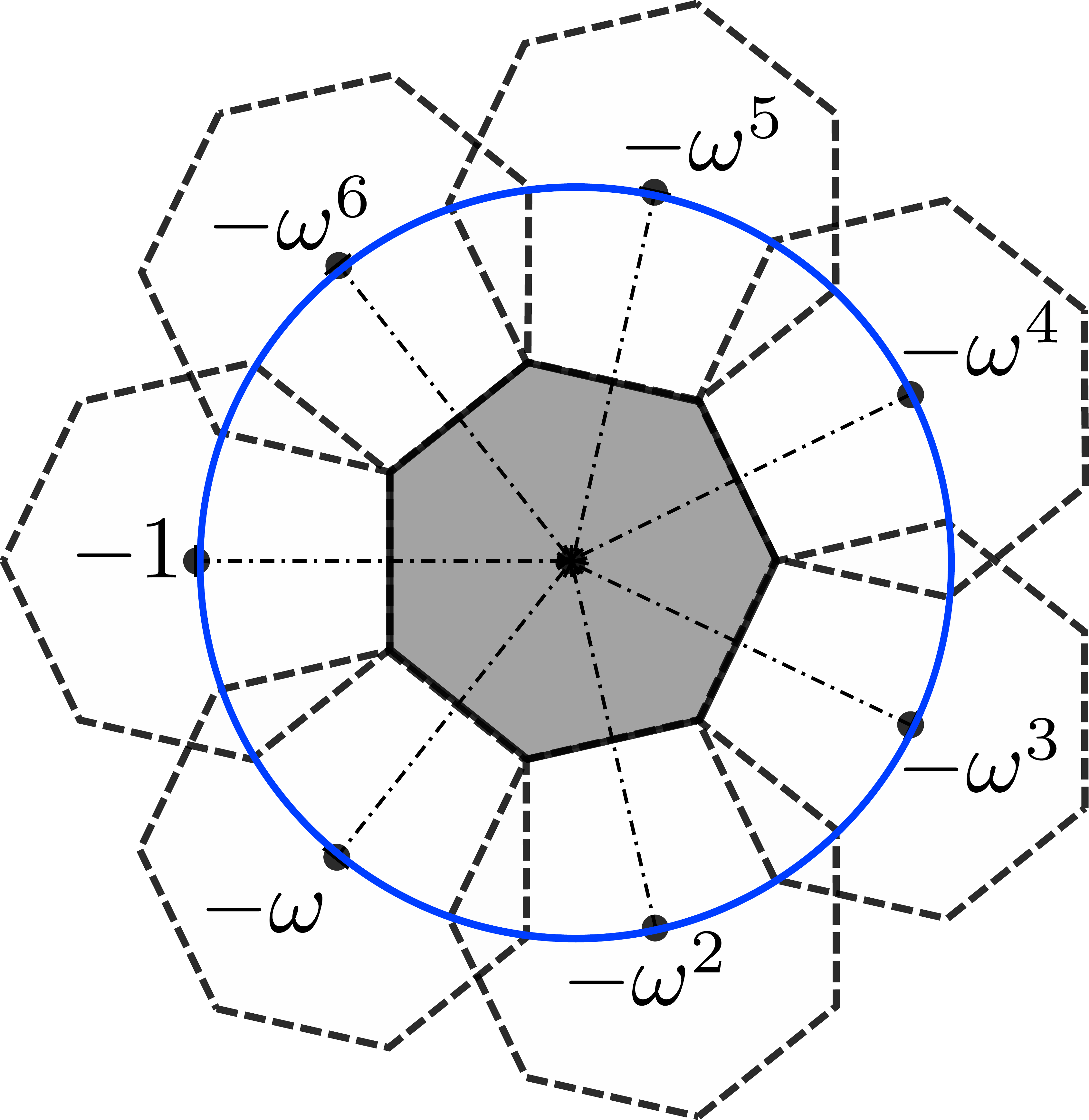}
        }%
  \caption{Representing the vector from the center of a heptagon (gray) to each center of an adjacent heptagon using the $7^{\text{th}}$ roots of unity.}
  \label{fig:distances7sides}
\end{figure}

\begin{wrapfigure}{r}{0.4\textwidth}
\centering
	\includegraphics[width=1.4in]{images/heptagonSum}
	\caption{Relative to $T$, the center of $T_1$ corresponds to $\omega^6$ and the center of $T_2$ corresponds to $\omega^6 - \omega^3$.}
	\label{fig:heptagonSum}
\end{wrapfigure}
If $T$ denotes a polygonal tile with standard orientation (the case of negated orientation is similar) in an assembly $\alpha$ producible in a TAS $\calT$, we can compute the centers of any polygonal tile in $\alpha$ using complex addition and subtraction relative to the center of $T$. Figure~\ref{fig:heptagonSum} shows the complex numbers corresponding to the centers of tiles $T_1$ and $T_2$. First, $\omega^6$ corresponds to the center of $T_1$. Then note that relative to $T_1$, the center of $T_2$
corresponds to $-\omega^3$. Therefore, relative to $T$, the center of $T_2$ corresponds to $\omega^6 - \omega^3$. In a similar fashion, given any two polygonal tiles, $T$ and $T'$, the center of $T'$ relative to $T$ can be represented as a polynomial of $\omega$ with integer coefficients.

\begin{figure}[htp]
\centering
	\includegraphics[width=4in]{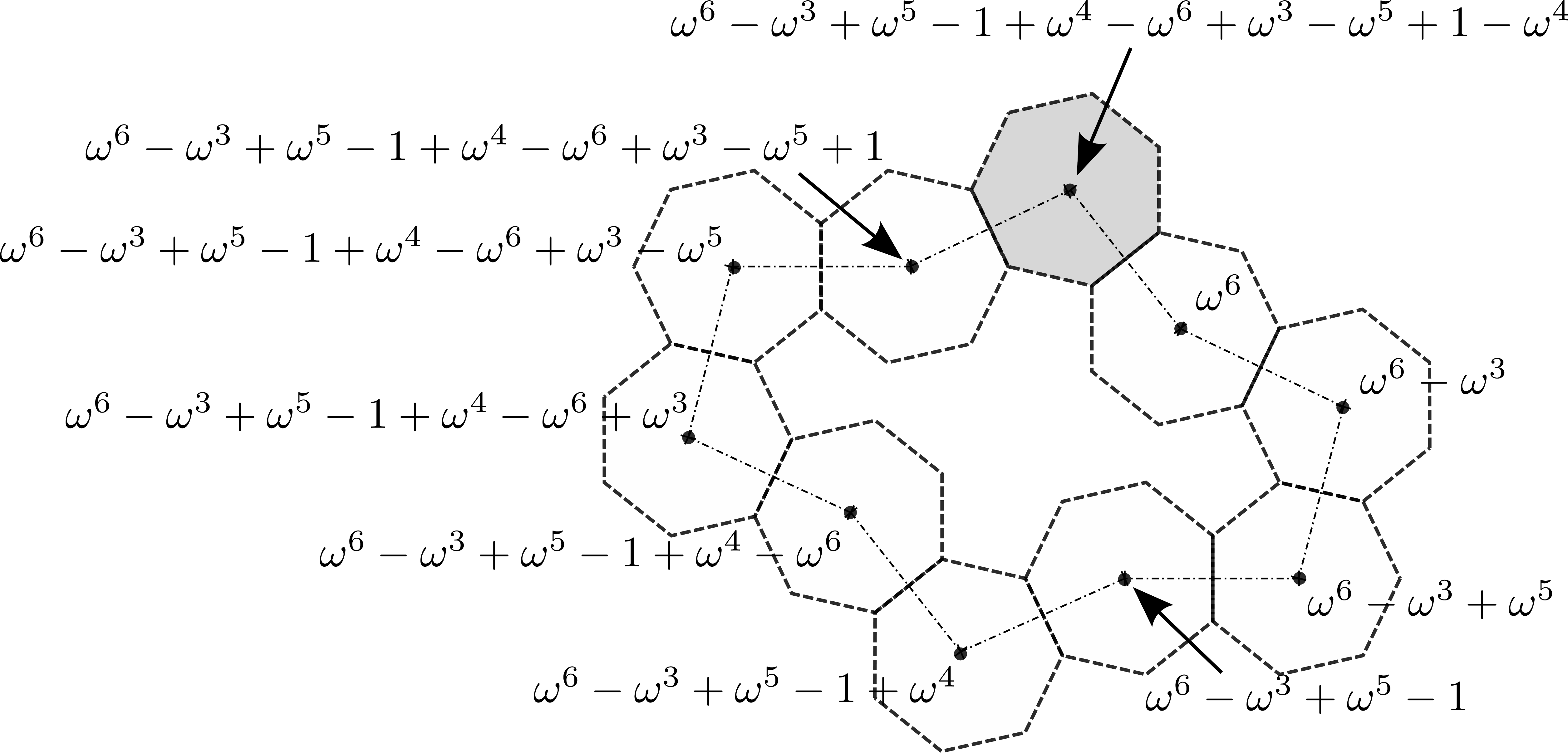}
	\caption{An example of computing the centers of heptagons using polynomials of complex roots of unity. The center of each heptagonal tile is labeled with a corresponding polynomial in $\omega$. Glue labels are not shown.}
	\label{fig:heptagonExample}
\end{figure}

For a more in depth example of computing the centers of heptagonal tiles, consider the following TAS.
Let $\calT$ be the polygonal tile assembly system consisting of $10$ tile types all with shape of a single regular heptagon. Moreover, suppose that each tile type has two edges with strength-1 glues, and that there are $10$ glues appropriately defined so that starting from a single seed tile (the gray tile in Figure~\ref{fig:heptagonExample}), the assemble proceeds until the closed ``loop'' of heptagonal tiles shown in Figure~\ref{fig:heptagonExample} assembles. At this point the assembly is terminal. Call this assembly $\alpha$.
Then let $T$ be the seed tile. Keeping Figure~\ref{fig:distances7sides} in mind, we can compute the centers of each polygonal tile in $\alpha$ relative to $T$. These are shown in Figure~\ref{fig:heptagonExample}. In fact, we can even compute that the center of $T$ to obtain the polynomial $\omega^6 - \omega^3 + \omega^5 -1 +\omega^4-\omega^6+\omega^3-\omega^5+1-\omega^4$, and note that this polynomial is $0$ reflecting the fact that $\alpha$ is a closed ``loop'' of heptagonal tiles.

} %
\fi

\vspace{-10pt}
\section{Overview of Polygonal Grid Construction} \label{sec:bit-grid-main}
\vspace{-10pt}

Given a regular polygon $P$, a \emph{junction polyform} $\mathcal{P}$ is constructed in the following manner.  We begin with a polygon in standard position centered at the origin.  Starting from side $s_0$, we traverse the sides of the polygon counterclockwise until we come across the edge $s_k$ where $k$ is such that $\Re(\omega^k) <= 0$ and $j \geq k$ for all $j \in \mathbb{Z}$ such that $\Re(\omega^j) <= 0$.  We place our next polygons of type $P$ in non-standard positions centered at locations $\omega^k$ and $\overline{\omega^k}$ as shown in Figure~\ref{fig:polyform_half_main}.  Call this shape $X$.  We create a new shape $X'$ by reflecting $X$ across the line $x=\frac{1}{2}$.  We then take the union of the shapes $X$ and $X'$ obtaining our junction polyform shown in Figure~\ref{fig:polyform_full_main}.

\begin{wrapfigure}{r}{3.2in}
\vspace{-35pt}
\centering
  \subfloat[][Left half]{%
        \label{fig:polyform_half_main}%
        \makebox[0.8in]{
        \includegraphics[width=0.6in]{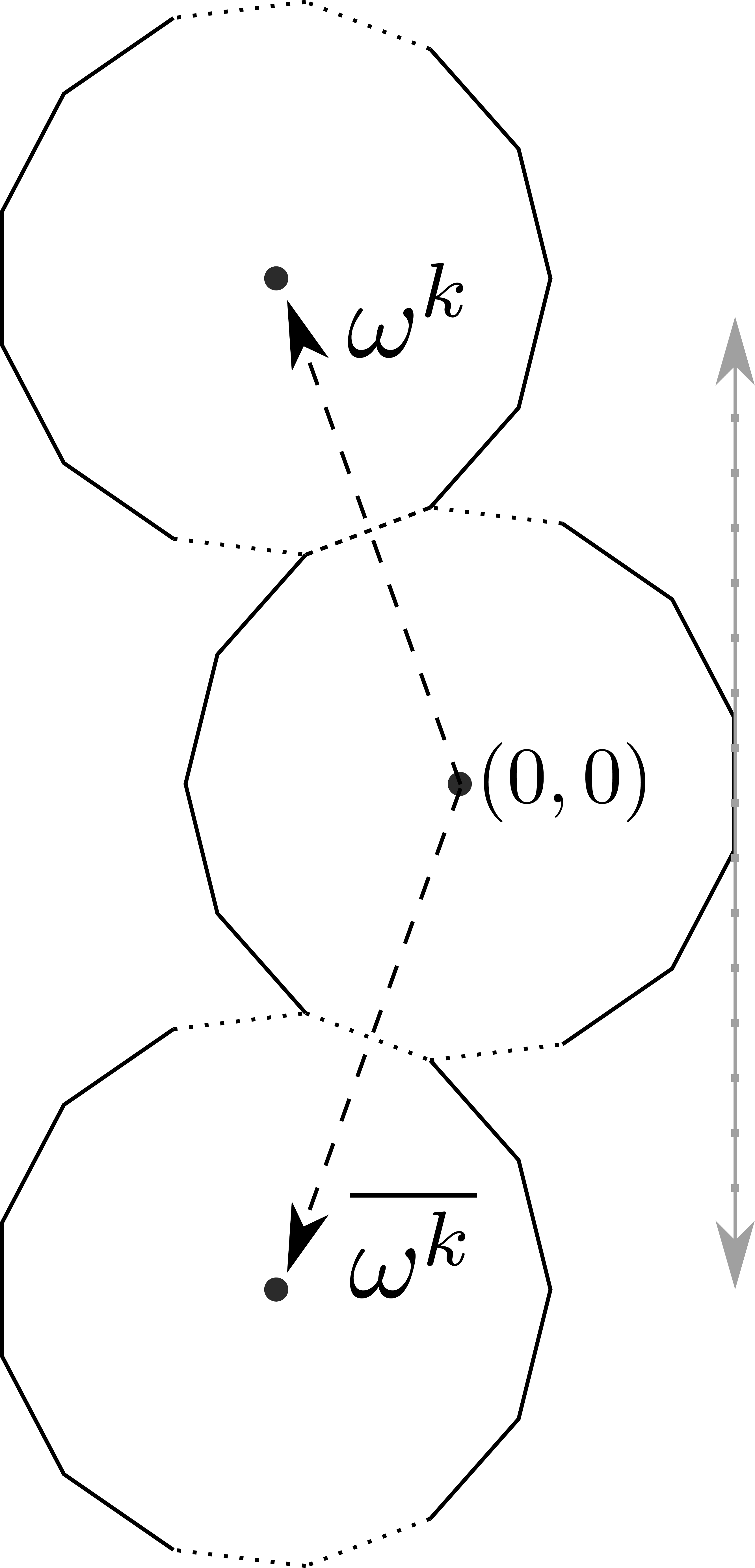}
        }
        }%
        \quad
  \subfloat[][Fully formed]{%
        \label{fig:polyform_full_main}%
        \makebox[1.0in]{
        \includegraphics[width=1.1in]{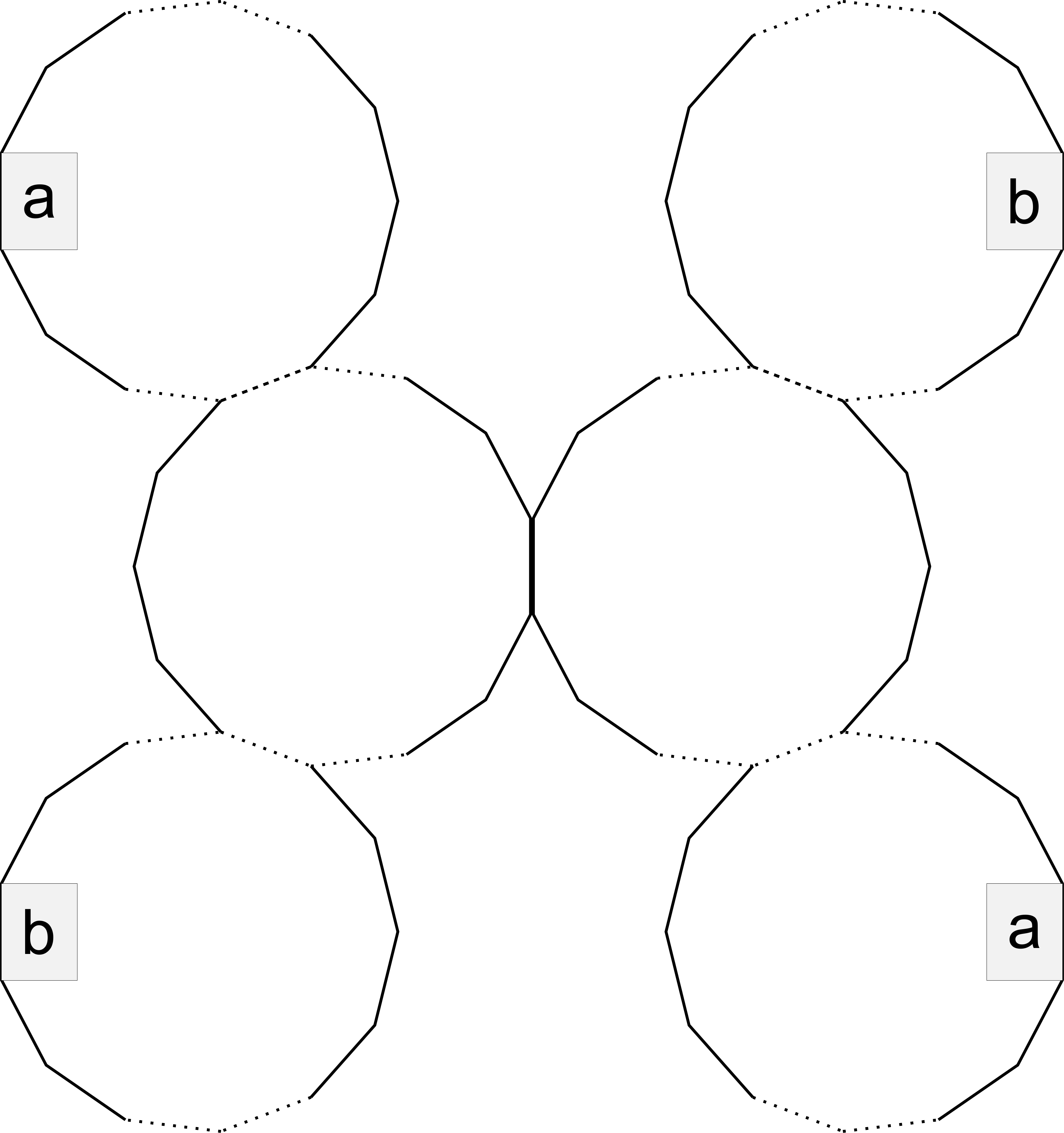}
        }
        }%
  \vspace{-22pt}
  \caption{
  Constructing a junction polyform.
  \vspace{-25pt}}
  \label{fig:polyform_main}
\end{wrapfigure}

We form a ``grid'' of these junction polyforms by attaching an infinite number of them to each other so that the polygons with sides labeled ``a'' are adjacent to each other and the polygons labeled ``b'' are adjacent to each other.

\ifabstract
\later{
\section{Polygonal Grid Construction}
Given a polygon $P$, we now show how to form a lattice consisting of $P$.  This grid will act as a coordinate system for our polygonal TAM systems and allow us to string several bit reading gadgets together so that we may simulate any Turing machine on any input.  In order to do this, we first show that we can construct a single polyform from $P$ which can ``grid'' the plane.  It will then follow that we can form a lattice in the plane with $P$ by placing polygons at the same locations and with same orientations as the polygons composing the grid formed with polyforms.

We begin by describing the construction of the polyform which we will use to construct our grid.  We then show that this is indeed a valid polyform.  Next, we shown that there exists a polygonal system which can tile the grid formed by the polyform.

Before we begin our construction, it is necessary to introduce a couple of definitions.

\begin{definition}
Let $P$ be a regular polygon.  A polyform $\mathcal{P}$ is a connected shape in the plane which is constructed by combining a finite number of copies of $P$ so that the following requirements are met:
\begin{enumerate}
\item the interior points of all instances of $P$ are disjoint
\item every instance of $P$ completely shares a common edge with some other instance of $P$.
\end{enumerate}
\end{definition}

The \emph{bounding rectangle} $B$ around a polyform $\mathcal{P}$ is the rectangle with minimal area that contains the interior points of $\mathcal{P}$.

\subsubsection{Junction Polyforms} \label{sec:polyform}
Given a regular polygon $P$, a \emph{junction polyform} $\mathcal{P}$ is constructed in the following manner.  We begin with a polygon which has standard orientation centered at the origin.  Starting from side $s_0$, we traverse the sides of the polygon counterclockwise until we come across the edge $s_k$ where $k$ is such that $\Re(\omega^k) <= 0$ and $j \geq k$ for all $j \in \mathbb{Z}$ such that $\Re(\omega^j) <= 0$.  We place our next polygons of type $P$ with negated orientations centered at locations $\omega^k$ and $\overline{\omega^k}$ as shown in Figure~\ref{fig:polyform_half}.  Call this shape $X$.  We create a new shape $X'$ by reflecting $X$ across the line $x=\frac{1}{2}$.  We then take the union of the shapes $X$ and $X'$ obtaining our junction polyform shown in Figure~\ref{fig:polyform_full}.  We call $k$ the polyform constant.

\begin{figure}[htp]
\centering
  \subfloat[][The left half of a junction polyform.]{%
        \label{fig:polyform_half}%
        \includegraphics[width=1.0in]{images/polyform_half}
        }%
        \quad
  \subfloat[][The fully formed junction polyform.]{%
        \label{fig:polyform_full}%
        \includegraphics[width=1.0in]{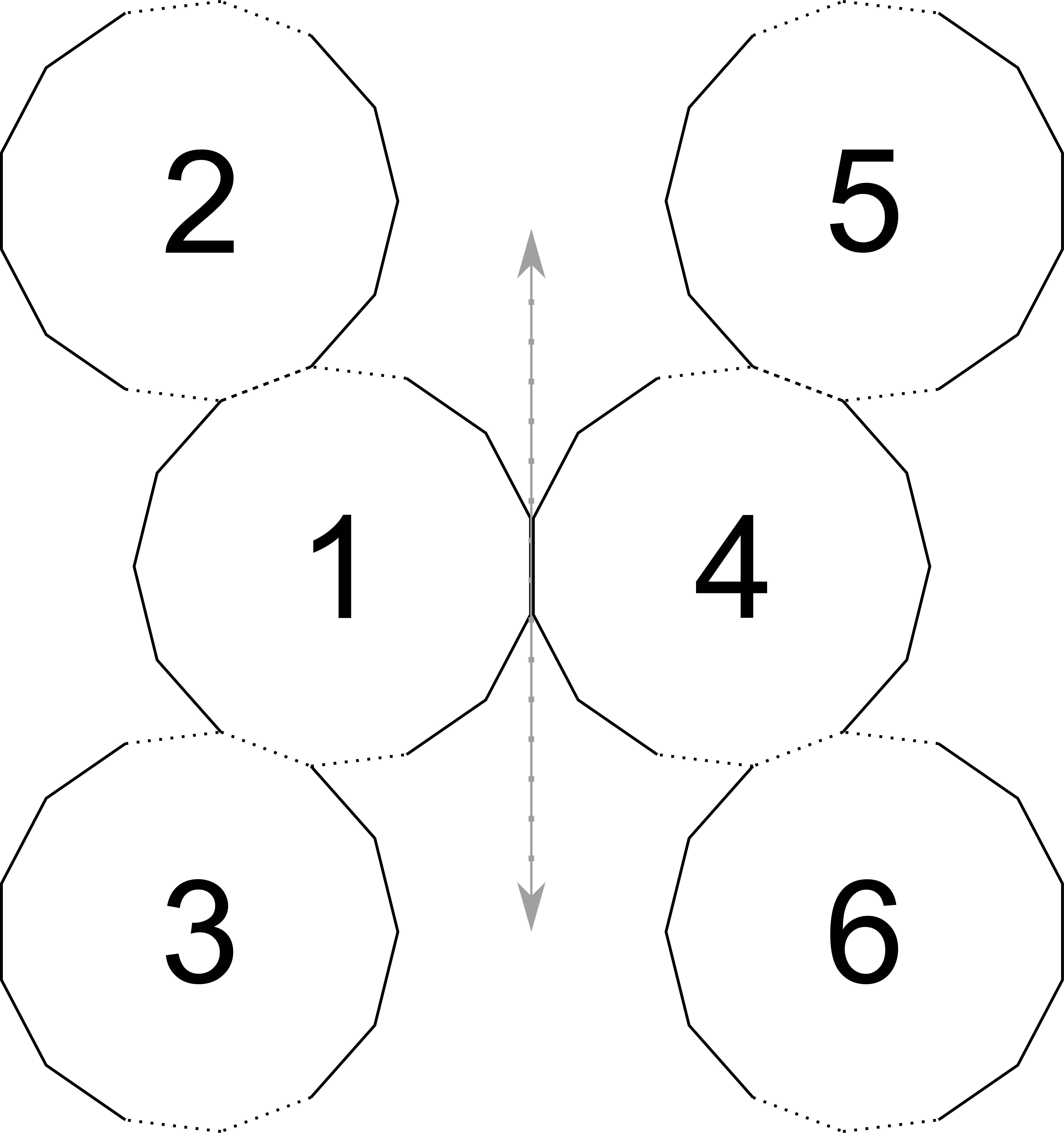}
        }%
  \caption{The construction of a junction polyform.}
  \label{fig:polyform}
\end{figure}

We now prove that this is indeed a valid polyform.  First, we begin with some observations.

\begin{observation} \label{ob:q2}
For any $n \in \mathbb{N}$ with $n > 2$, there exists a point $p$ in the $n^{th}$ roots of unity such that $-\frac{\sqrt{3}}{2} \leq \Re(p)\leq 0$.
\end{observation}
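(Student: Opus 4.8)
The plan is a pigeonhole argument on the unit circle, with the handful of small cases checked by hand. Recall from Section~\ref{sec:roots-of-unity} that the $n^{th}$ roots of unity are $\omega^0,\omega^1,\dots,\omega^{n-1}$ with $\omega = e^{i\theta}$ and $\theta = \tfrac{2\pi}{n}$; geometrically these are $n$ equally spaced points on the unit circle, and they divide the circle into $n$ closed arcs each of angular length $\tfrac{2\pi}{n}$.

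First I would rewrite the target condition as a constraint on arguments. Since $\cos$ is strictly decreasing on $[0,\pi]$ with $\cos\tfrac{\pi}{2}=0$ and $\cos\tfrac{5\pi}{6}=-\tfrac{\sqrt3}{2}$, a unit-modulus complex number $e^{i\varphi}$ satisfies $-\tfrac{\sqrt3}{2}\le\Re(e^{i\varphi})\le 0$ precisely when $\varphi$ (taken mod $2\pi$) lies in $[\tfrac{\pi}{2},\tfrac{5\pi}{6}]\cup[\tfrac{7\pi}{6},\tfrac{3\pi}{2}]$. In particular it suffices to find a root of unity whose argument lies in the closed arc $I := [\tfrac{\pi}{2},\tfrac{5\pi}{6}]$, which has angular length $\tfrac{\pi}{3}$.

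Next, for $n\ge 6$ we have $\tfrac{2\pi}{n}\le\tfrac{\pi}{3}$. The complement in the unit circle of the $n$ roots of unity is a disjoint union of open arcs of length $\tfrac{2\pi}{n}$, so any arc avoiding every root of unity has length strictly less than $\tfrac{2\pi}{n}\le\tfrac{\pi}{3}$; hence the closed arc $I$ of length $\tfrac{\pi}{3}$ must contain some $\omega^k$, and this $p=\omega^k$ satisfies $\Re(p)\in[-\tfrac{\sqrt3}{2},0]$. Finally I would dispose of $n\in\{3,4,5\}$ directly: for $n=3$, $\Re(\omega)=\cos\tfrac{2\pi}{3}=-\tfrac12$; for $n=4$, $\Re(\omega)=\cos\tfrac{\pi}{2}=0$; and for $n=5$, $\omega^2$ has argument $\tfrac{4\pi}{5}\in[\tfrac{\pi}{2},\tfrac{5\pi}{6}]$, so $\Re(\omega^2)=\cos\tfrac{4\pi}{5}\in[-\tfrac{\sqrt3}{2},0]$. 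Together these cases exhaust all $n>2$.

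Since everything here is elementary, I do not expect a genuine obstacle; the only points requiring a little care are that the inequalities in the statement are non-strict (so the endpoints of $I$ are permissible, which is what makes $n=4$ work), and that the uniform-spacing bound only forces a root into $I$ once $n\ge 6$, which is precisely why $n=3,4,5$ must be treated separately.
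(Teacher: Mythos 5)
Your proof is correct and takes essentially the same approach as the paper: explicit verification for a few small values of $n$ together with a uniform-spacing/pigeonhole argument for larger $n$. You are in fact more careful than the paper's two-sentence justification---you make the pigeonhole step precise (a closed arc of length $\pi/3$ must meet the roots once the gap $2\pi/n\le\pi/3$, giving a tight cutoff of $n\ge 6$ rather than the paper's $n\ge 8$) and you explicitly handle $n=3$, which the paper's phrasing ``For $3 < n < 8$'' inadvertently omits.
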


For $3 < n < 8$, this observation is mechanical.  If $n>=8$, the observation must hold since the $n^{th}$ roots of unity are evenly spaced around the unit circle.

\begin{observation} \label{ob:nonov0}
Let $P$ be a regular polygon with $n$ sides in standard orientation.  Also, let $k\in \mathbb{N}\cup\{0\}$ be such that $k \leq n$ and $\Im(-\omega^k)\leq 0$.  Denote the vertices that compose side $s_k$ by $\vec{v_l}$ and $\vec{v_r}$ where $\vec{v_l}$ is the counterclockwise most vertex and $\vec{v_r}$ is the clockwise most vertex.  Set $\vec{v} = \vec{v_r}-\vec{v_l}$.  Then the following hold:
\begin{enumerate}
\item if $\Re(-\omega^k) > 0$, then $\Im(\vec{v}) > 0$, and
\item if $\Re(-\omega^k) \leq 0$, then $\Im(\vec{v}) \leq 0$.
\end{enumerate}
\end{observation}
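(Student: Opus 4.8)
The plan is a direct computation in the complex plane using the coordinate setup of Section~\ref{sec:roots-of-unity}. I would place $P$ in standard orientation, so that it is centered at the origin, its apothem has length $\frac12$, and side $s_0$ is the vertical segment with midpoint $(\frac12,0)$; call its (positive) length $L$. Writing $\theta = \frac{2\pi}{n}$ and $\omega = e^{i\theta}$, the side $s_k$ is the image of $s_0$ under rotation by $k\theta$ about the origin, so $s_k$ has midpoint $\frac12\omega^k$ and is perpendicular to the radius $\omega^k$ pointing to that midpoint; hence its two endpoints are $\frac12\omega^k \pm \frac{L}{2}\,i\omega^k$.

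Next I would pin down which endpoint is $\vec{v_l}$ (counterclockwise-most) and which is $\vec{v_r}$ (clockwise-most). Since $P$ is convex, as its boundary is traversed counterclockwise the direction of travel along $s_k$ is $90^\circ$ counterclockwise from the outward normal $\omega^k$, i.e.\ a positive multiple of $i\omega^k$; so the endpoint encountered later is $\vec{v_l} = \omega^k\!\left(\tfrac12 + \tfrac{L}{2}i\right)$ and the one encountered earlier is $\vec{v_r} = \omega^k\!\left(\tfrac12 - \tfrac{L}{2}i\right)$. This is the only place the hypothesis $\Im(-\omega^k)\le 0$ plays a role: it confines $s_k$ to the closed upper half-plane, where this labeling is the obvious ``left/right'' one. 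Then
\[
  \vec{v} \;=\; \vec{v_r} - \vec{v_l} \;=\; -\,iL\,\omega^k ,
\]
and the identity $\Im(-iz) = -\Re(z)$ applied with $z = L\omega^k$ gives $\Im(\vec{v}) = -L\,\Re(\omega^k) = L\,\Re(-\omega^k)$. Since $L>0$, the sign of $\Im(\vec{v})$ equals that of $\Re(-\omega^k)$, which is precisely the dichotomy in items~(1) and~(2).

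I do not expect any genuine obstacle here: the substance is the endpoint formula together with the one-line sign identity. The single step deserving a careful sentence is the orientation bookkeeping for $\vec{v_l}$ versus $\vec{v_r}$; if the outward-normal argument feels too terse, an equivalent route is to verify the endpoint formula explicitly for $s_0$ (where $\vec{v_l}=\tfrac12+\tfrac{L}{2}i$ and $\vec{v_r}=\tfrac12-\tfrac{L}{2}i$, so $\vec{v}=-iL$) and then transport it to $s_k$ by the rotation $\omega^k$.
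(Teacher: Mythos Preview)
Your argument is correct and is precisely the approach the paper has in mind: the paper's entire justification is the single sentence ``This observation falls out of the fact that $\omega^k$ and $\vec{v}$ must be orthogonal,'' and your computation $\vec{v}=-iL\,\omega^k$, hence $\Im(\vec{v})=L\,\Re(-\omega^k)$, is exactly that orthogonality made explicit. Your treatment is in fact more complete than the paper's, since you pin down the \emph{sign} of the orthogonal direction via the counterclockwise-traversal convention (the paper's one-liner leaves this implicit); note also that your formula $\Im(\vec{v})=L\,\Re(-\omega^k)$ actually holds for all $k$, so the hypothesis $\Im(-\omega^k)\le0$ is not logically needed---it merely makes the ``left/right'' reading of $\vec{v_l},\vec{v_r}$ geometrically natural, as you observed.
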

This observation falls out of the fact that $\omega^k$ and $\vec{v}$ must be orthogonal.
\begin{observation} \label{ob:nonov}
Let $k$ be the polyform constant for some polyform composed of regular polygons with $n$ sides.  Let $P$ be a regular polygon with $n$ sides centered at the origin in standard orientation.  Then
\begin{enumerate}
\item the clockwise most vertex that composes $s'_k$ is a southernmost point in $P$, and
\item the location of the counterclockwise most vertex that composes $s_k$, call this point $\vec{z}$, is such that $\Im(\vec{z}) \geq 0$.
\end{enumerate}
To see the first part of this observation, note that $\Re(-\omega^{k-1}) < 0$.  This along with the observation~\ref{ob:nonov0} implies that the clockwise most vertex of side $s'_{k-1}$ must lie to the north of the clockwise most vertex that composes $s'_k$.  Note that the clockwise most vertex of side $s'_{k+1}$ also must not lie to the south of the counterclockwise most vertex of side $s'_k$.  Consequently, because $P$ is convex, the clockwise most vertex that composes $s'_k$ is a southernmost point in $P$.

The second part of this observation follows from Observation~\ref{ob:q2} and the fact that a regular polygon in standard orientation centered at the origin will always have a vertex with an absent imaginary part and a real part that is less than 0.
\end{observation}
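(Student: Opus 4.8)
The plan is to work in the complex-plane coordinates of Section~\ref{sec:roots-of-unity}: set $\theta = 2\pi/n$ and $\omega = e^{i\theta}$, normalize the apothem to $\tfrac12$ so that the standard-orientation copy of $P$ has circumradius $R = \tfrac{1}{2\cos(\pi/n)}$, and first pin down the polyform constant explicitly as $k = \lceil n/4\rceil$ --- the least index in $\{0,1,\dots,n-1\}$ with $\Re(\omega^k)\le 0$. This value supplies the only arithmetic the proof needs, namely $(k-1)\theta < \tfrac{\pi}{2} \le k\theta < \pi$ and $k \le \tfrac{n-1}{2}$, both valid for every $n\ge 3$. I will also fix two conventions from that section: the two endpoints of side $s_m$ of the standard-orientation copy are the vertices of $P$ at position angles $(2m\pm1)\tfrac{\pi}{n}$; and, reading part~1 as a statement about the negated-orientation copy $\bar P$ (the copy that actually carries the labels $s'_i$), the outward unit normal of $s'_m$ is $-\omega^m$, i.e.\ points at argument $m\theta+\pi$, so that $s'_0,s'_1,\dots$ is exactly the cyclic list of edges of $\bar P$ in counterclockwise order.

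Part~2 then reduces to a one-line check. The counterclockwise-most endpoint $\vec z$ of $s_k$ is the vertex of $P$ at position angle $(2k+1)\tfrac{\pi}{n}$, so $\Im(\vec z) = R\sin\!\big((2k+1)\tfrac{\pi}{n}\big)$ with $R>0$; since $k\le \tfrac{n-1}{2}$ we have $(2k+1)\tfrac{\pi}{n}\le \pi$, whence the sine --- and therefore $\Im(\vec z)$ --- is nonnegative.

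For part~1 I would avoid locating the extremal vertex by direct computation and argue instead from convexity. As $m$ increases the outward normal $-\omega^m$ of $s'_m$ turns counterclockwise by $\theta$, so a counterclockwise traversal of $\partial\bar P$ sweeps the outward normal monotonically through every direction; the southernmost point of a convex polygon is the vertex at which this sweep passes the ``due south'' direction $-i$ (argument $\tfrac{3\pi}{2}$), with the degenerate possibility that one edge has outward normal exactly $-i$, in which case both of its endpoints are southernmost. Since $(k-1)\theta < \tfrac{\pi}{2}\le k\theta < \pi$, the normal of $s'_{k-1}$ has argument $(k-1)\theta+\pi \le \tfrac{3\pi}{2}$ and the normal of $s'_k$ has argument $k\theta+\pi \ge \tfrac{3\pi}{2}$, both arguments lying in $[\pi,2\pi)$; because $s'_{k-1}$ and $s'_k$ are consecutive edges, the sweep crosses $-i$ precisely at the vertex they share, which is exactly the clockwise-most vertex of $s'_k$. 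That vertex is therefore a southernmost point of $\bar P$ --- and when $4\mid n$ the edge $s'_k$ is horizontal, so its other endpoint, the counterclockwise-most vertex of $s'_k$, is equally southernmost, matching the ``\emph{a} southernmost point'' wording. (An equivalent, more computational route, closer to the paper's own sketch, is to feed the edge vectors of the $s'_m$ through the negated-orientation analogue of Observation~\ref{ob:nonov0}, concluding that the $y$-coordinate strictly decreases along each of $s'_0,\dots,s'_{k-1}$ and does not decrease along $s'_k$, and then using that a maximal strictly-decreasing run on a convex boundary must end at a southernmost vertex.)

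I expect the only genuinely delicate point to be the bookkeeping of the negated-orientation labelling: verifying that $-\omega^m$ is indeed the outward normal of $s'_m$ (hence that the $s'_i$ run counterclockwise), that ``the clockwise-most vertex of $s'_k$'' is the vertex shared with $s'_{k-1}$, and that the phrase ``a southernmost point in $P$'' in part~1 must be read against the negated-orientation copy rather than the standard-orientation $P$ named in the hypothesis. Once those conventions are settled, the whole statement rests on the two elementary inequalities above together with the standard support-function characterization of the extreme points of a convex polygon.
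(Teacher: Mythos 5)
Your proof is correct, and at both steps it is tighter than the paper's own argument. For part~1 the paper reasons edge\nobreakdash-by\nobreakdash-edge: it applies Observation~\ref{ob:nonov0} (the sign of the $y$-component of the edge vector of $s'_j$ as a function of $\Re(-\omega^j)$) to the two sides adjacent to the candidate vertex and then appeals to convexity. Your normal-cone argument --- the southernmost vertex is the one whose adjacent outward normals $-\omega^{k-1}$ and $-\omega^{k}$ straddle $-i$, which is exactly the pair of inequalities $(k-1)\theta<\pi/2\le k\theta$ that define the polyform constant --- is the same geometric idea packaged through the support-function characterization, but it makes the role of the minimality of $k$ explicit and cleanly absorbs the degenerate case $4\mid n$ in which $s'_k$ is horizontal and both of its endpoints are southernmost. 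For part~2 the routes genuinely diverge: the paper cites Observation~\ref{ob:q2} together with the claim that a standard-orientation regular polygon always has a vertex on the negative real axis, which is false for even $n$ (the vertices sit at angles $(2m+1)\pi/n$, none equal to $\pi$ unless $n$ is odd), whereas your one-line computation $\Im(\vec z)=R\sin\bigl((2k+1)\pi/n\bigr)\ge 0$, using $k=\lceil n/4\rceil\le (n-1)/2$, needs no such claim and covers every $n\ge 3$ uniformly. You are also right that the only delicate point is the labelling convention --- that $s'_m$ has outward normal $-\omega^m$ and that part~1 must be read against the negated-orientation copy; this is consistent with how the observation is actually used in Lemma~\ref{lem:polyshape}.
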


\begin{lemma} \label{lem:polyshape}
Let $P$ be a regular polygon, and let $k$ be the junction polyform constant obtained from the junction polyform composed of $P$.  Then the sets of interior points of the following polygons are pairwise disjoint: 1) the polygon in standard orientation centered at the origin, 2) the polygon with negated orientation centered at $\omega^k$, and 3) the polygon with negated orientation centered at $\overline{\omega^k}$.
\end{lemma}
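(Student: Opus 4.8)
The plan is to carry out the whole argument in the complex plane using the roots-of-unity set-up of Section~\ref{sec:roots-of-unity}. Write $A$ for the $n$-gon in standard orientation centered at $0$, $B$ for the negated-orientation $n$-gon centered at $\omega^k$, and $C$ for the negated-orientation $n$-gon centered at $\overline{\omega^k}$. The first move is to exploit the conjugation symmetry of the picture: $A$ is symmetric under complex conjugation (reflection across the real axis) because its distinguished side is a vertical segment with midpoint on the real axis, and conjugation carries $B$ onto $C$, since conjugation composed with the reflection across the imaginary axis that defines ``negated orientation'' is rotation by $\pi$, and the $\pi$-rotation of a regular $n$-gon has the same set of edge-normal directions as its mirror image, hence is the same polygon. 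So it suffices to prove: (i) $\operatorname{int}(A)\cap\operatorname{int}(B)=\emptyset$, from which $\operatorname{int}(A)\cap\operatorname{int}(C)=\emptyset$ follows by conjugating; and (ii) $B\subseteq\{\,z:\Im(z)\ge 0\,\}$, from which $\operatorname{int}(B)\subseteq\{\Im>0\}$ and $\operatorname{int}(C)\subseteq\{\Im<0\}$, so that $B$ and $C$ have disjoint interiors.

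For (i): by the construction of the junction polyform, $B$ is the copy of $P$ placed so as to abut $A$ along side $s_k$, and the computation of Section~\ref{sec:roots-of-unity} (a tile abutting a standard-orientation tile along $s_m$ has its center at $\omega^m$) confirms that $B$ actually shares the complete edge $s_k$ with $A$. Two congruent convex polygons that share a complete edge lie in the two closed half-planes bounded by the supporting line $\ell$ of that edge -- $A$ in the half-plane containing $0$, $B$ in the one containing $\omega^k$ -- and a convex polygon having an edge on $\ell$ has its interior contained in the corresponding open half-plane. Hence $\operatorname{int}(A)$ and $\operatorname{int}(B)$ sit in complementary open half-planes and are disjoint.

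For (ii), the substantive step: I would use Observation~\ref{ob:nonov} to locate the southernmost point of $B$. That observation identifies it, via the edge $s'_k$ of the negated-orientation polygon -- which is exactly the edge $B$ shares with $A$, namely $s_k$ -- as a particular vertex $\vec z$ of $s_k$, and it shows $\Im(\vec z)\ge 0$. Since $B$ is convex with lowest point $\vec z$, all of $B$ lies in $\{\Im\ge 0\}$, which is (ii). The supporting Observations~\ref{ob:q2} and~\ref{ob:nonov0} are what make this application of Observation~\ref{ob:nonov} legitimate: Observation~\ref{ob:q2} bounds how deep into the second quadrant $\omega^k$ lies, and Observation~\ref{ob:nonov0} fixes the direction of the edge $s_k$ relative to $\omega^k$, which is what determines which vertex of $s_k$ is the relevant one and that it is genuinely the southernmost point of $B$.

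The main obstacle is step (ii), and specifically that the bound is tight: already for $n=3$ and $n=5$ the vertex $\vec z$ lies exactly on the real axis, so no coarse estimate -- bounding $B$ by a disk or an axis-parallel box around its center $\omega^k$ -- can work, since for $n=5$ the circumscribed disk of $B$ genuinely dips below the real axis. One is therefore forced to work with the exact vertex positions supplied by Observations~\ref{ob:nonov0} and~\ref{ob:nonov}, keeping track of the distinction between odd $n$, where ``negated orientation'' is a genuine reflection of the standard orientation, and even $n$, where it is merely a translate of it. The uniform fact that the negated-orientation $n$-gon is the $\pi$-rotation of the standard one is what lets both cases be handled at once, and it is also what underlies the conjugation symmetry used in the reduction.
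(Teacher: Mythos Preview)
Your proposal is correct and follows essentially the same route as the paper's proof: the pairs $(A,B)$ and $(A,C)$ are handled by the abutting-edge/half-plane observation from Section~\ref{sec:roots-of-unity}, and the pair $(B,C)$ is separated by the real axis via Observation~\ref{ob:nonov}, with the reflection symmetry doing the reduction from $C$ to $B$. Your write-up is a bit more explicit than the paper's (you spell out the conjugation symmetry and the half-plane argument in (i), and you correctly flag that the bound in (ii) is tight so that only the exact vertex identification of Observation~\ref{ob:nonov} can work), but the underlying argument is the same.
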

\begin{proof}
It follows from the discussion in Section~\ref{sec:roots-of-unity} that the interior points of the polygon centered at the origin and the polygon centered at $\omega^k$ are disjoint.  Also since the complex conjugate of a root of unity is also a root of unity, it follows from the discussion in Section~\ref{sec:roots-of-unity} that the interior points of the polygon centered at the origin and the polygon centered at $\overline{\omega^k}$ are disjoint.

It is left to show that the interior points of the two polygons centered at the roots of unity are disjoint.  To see this, first note that it follows from Observation~\ref{ob:nonov} that no interior point of the polygon centered at the location $\omega^k$ has real part that is less than or equal to 0.  Indeed, first note that the clockwise most vertex of side $s'_k$ of the polygon centered at location $\omega^k$ will overlap the counterclockwise most vertex of side $s_k$ of the polygon centered at the origin by construction.  It follows immediately from Observation~\ref{ob:nonov} that all interior points in the polygon centered at $\omega^k$ have imaginary parts greater than $0$.  Since the polygon centered at $\overline{\omega^k}$ is a reflected copy of the polygon centered at $\overline{\omega^k}$, it follows that the interior points in this polygon have imaginary parts less than 0.  Consequently, the interior points of the two polygons are disjoint.

\end{proof}

\begin{lemma}
Given a regular polygon $P$, the junction polyform constructed above is indeed a valid polyform.
\end{lemma}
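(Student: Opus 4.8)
The plan is to verify directly that the junction polyform $\mathcal{P}$ meets all parts of the definition of a polyform: it is a connected figure built from finitely many copies of $P$, those copies have pairwise disjoint interiors, and each copy shares a complete edge with another. Throughout, write $\mathcal{P}=X\cup X'$, where $X$ consists of the copy $P_0$ at the origin in standard orientation together with the copies $P_1,P_2$ centered at $\omega^k$ and $\overline{\omega^k}$, and $X'$ is the reflection of $X$ across the line $x=\tfrac12$, with corresponding copies $P_0',P_1',P_2'$.

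Three of the requirements are immediate from the construction. The shape $\mathcal{P}$ is a union of six copies of $P$, so it is built from finitely many copies. By construction $P_1$ and $P_2$ each abut $P_0$ along a complete edge (namely $s_k$ and the side of $P_0$ facing $\overline{\omega^k}$), and the mirror statements hold for $P_1',P_2'$ and $P_0'$; moreover the side $s_0$ of $P_0$ is by definition the vertical segment with midpoint $(\tfrac12,0)$, so it lies on the reflection axis $x=\tfrac12$ and is therefore shared by $P_0$ and $P_0'$. Hence every copy shares a complete edge with another one, and since each copy is convex and the resulting edge-sharing graph ($P_1-P_0-P_0'-P_1'$ with $P_2$ attached to $P_0$ and $P_2'$ to $P_0'$) is connected, $\mathcal{P}$ is a connected shape.

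The substantive step is pairwise disjointness of the interiors of the six copies. For the three copies comprising $X$ this is exactly Lemma~\ref{lem:polyshape}, and for the three comprising $X'$ it follows by applying Lemma~\ref{lem:polyshape} to the reflected configuration. For a copy of $X$ against a copy of $X'$ I will separate the two halves by the line $x=\tfrac12$: I claim $X\subseteq\{z:\Re(z)\le\tfrac12\}$, whence $X'\subseteq\{z:\Re(z)\ge\tfrac12\}$ by reflection; then the interior of every copy in $X$ lies in the open half-plane $\{z:\Re(z)<\tfrac12\}$, the interior of every copy in $X'$ lies in $\{z:\Re(z)>\tfrac12\}$, and no interior point can be shared. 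To establish the claim, $P_0$ is trivial (it is convex, centered at the origin, and its side $s_0$ on $x=\tfrac12$ is its rightmost boundary). For $P_1$ (and $P_2$, by complex conjugation): when $n$ is even, $P_1$ is a translate of $P_0$ by $\omega^k$, so $P_1\subseteq\{z:\Re(z)\le\tfrac12+\Re(\omega^k)\}\subseteq\{z:\Re(z)\le\tfrac12\}$ since $\Re(\omega^k)\le 0$; when $n$ is odd, $P_1$ is the negated-orientation copy at $\omega^k$, whose rightmost point is $\omega^k+R$ with $R=\tfrac{1}{2\cos(\pi/n)}$ the circumradius, so the claim reduces to the scalar inequality $\cos\tfrac{2\pi k}{n}\le\tfrac12-\tfrac{1}{2\cos(\pi/n)}$.

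I expect this last inequality to be the only real obstacle; everything else is bookkeeping against the construction together with a single invocation of Lemma~\ref{lem:polyshape}. The difficulty is that $\Re(\omega^k)$ is only known to be non-positive, while the rightmost vertex of the negated copy overhangs its center by a full circumradius $R>\tfrac12$, so the two quantities must be balanced precisely. The key observation is that for the minimal admissible index $k=\lceil n/4\rceil$ one has $\tfrac{2\pi k}{n}=\tfrac\pi2+\delta$ with $\delta\ge\tfrac{\pi}{2n}$, so $\cos\tfrac{2\pi k}{n}=-\sin\delta$, and using $1-\cos(\pi/n)=2\sin^2(\pi/2n)$ the inequality collapses to $\cos(\pi/n)\ge\sin(\pi/2n)$, i.e. to $\tfrac1n\le\tfrac12-\tfrac{1}{2n}$, which holds for every $n\ge 3$ (with equality exactly at $n=3$, i.e. for triangles). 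As an alternative to this estimate, one can instead read off the half-plane containment from Observation~\ref{ob:nonov}, which already locates the extreme vertices of $s_k$ and $s'_k$ and is precisely the input used for the analogous containment inside the proof of Lemma~\ref{lem:polyshape}.
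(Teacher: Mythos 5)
Your proof is correct, and it takes a genuinely different (and arguably tighter) route for the key step: showing the two reflected halves $X$ and $X'$ have disjoint interiors. The paper's proof handles this by a distance comparison between polygon centers: since the centers of the copies labeled $2,3$ have $\Re\le 0$ and those of $5,6$ have $\Re\ge 1$, the center-to-center distances $2$--$5$ and $3$--$6$ are at least that of $1$--$4$, which share only an edge; the paper then concludes disjointness. You instead prove the stronger, explicit containment $X\subseteq\{\Re(z)\le\tfrac12\}$, separating all six copies by the reflection axis $x=\tfrac12$. This forces you to control the rightward overhang of the negated-orientation copies past their centers (a full circumradius $R=\tfrac{1}{2\cos(\pi/n)}$), which you reduce via $k=\lceil n/4\rceil$ and the half-angle identity to the scalar inequality $\sin(\pi/(2n))\le\cos(\pi/n)$, valid for every $n\ge 3$ with equality only at $n=3$. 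The calculation checks out. What your route buys is precision: the paper's argument as written uses only $\Re(\omega^k)\le 0$, giving center-distance $\ge 1$ between $P_2$ and $P_5$, but since both copies overhang toward each other by $R>\tfrac12$ that bound alone does not preclude a vertex-to-vertex overlap; the sharper bound $\Re(\omega^k)\le\tfrac12-R$ that you establish is exactly the missing quantitative control. One small caveat: the closing parenthetical that the half-plane containment can instead ``be read off'' from Observation~\ref{ob:nonov} doesn't quite work — that observation gives \emph{imaginary-part} (vertical) bounds on the extremal vertices of $s_k$ and $s'_k$, used in Lemma~\ref{lem:polyshape} to separate $P_1$ from $P_2$ across the real axis, not the horizontal-extent bound you need here, so your trigonometric estimate is in fact doing essential work.
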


\begin{proof}
To see that the junction polyform constructed above is a valid polyform, we check that all of the requirements in the definition of polyform are met.  Since the center of polygons labeled ``2'' and ``3'' are each located at one of the $n^{th}$ roots of unity, it follows from the discussion in Section~\ref{sec:roots-of-unity} that polygons labeled ``1'' and ``2'' as well as polygons labeled ``1'' and ``3'' are joined along a common edge and share that edge entirely.  This same line of reasoning shows that the polygon labeled ``4'' is joined along a common edge and shares that edge entirely with the polygon labeled ``1''.  Since the shape formed by polygons labeled ``4'', ``5'' and ``6'' is a reflection of the left side of the shape, all of the polygons are joined along a common edge and shares that edge entirely. It is readily seen from this argument that our shape is also connected.

It is now left to show that no two polygons in the shape overlap.  We denote the polyform constant obtained from $P$ by $k$. It follows from Lemma~\ref{lem:polyshape} that the interior points of the polygons labeled ``1'', ``2'', and ``3'' are pairwise disjoint.  Since, the polygons labeled ``4'', ``5'', and ``6'' are a reflection of the polygons labeled ``1'', ``2'', and ``3'', they too are pairwise disjoint. To show that the polygons in the two reflected halves of the shape are pairwise disjoint, first observe that the centers of the polygons labeled ``2'' and ``3'' have real parts less than or equal to the real part of the polygon labeled ``1''.  Consequently, after the reflection and ``attachment'' of the two halves, the polygons labeled ``2'' and ``5'' and the polygons labeled ``3'' and ``6'' have no less distance between each other than the polygons labeled ``1'' and ``4''.  Since the polygons labeled ``1'' and ``4'' have disjoint interior points, it follows that the polygons mentioned above have disjoint interior points.  Consequently, no two polygons in the shape overlap.

\end{proof}

\subsubsection{Polygonal Grid Technical Lemmas}
The following lemma will assist us in proving Lemma~\ref{lem:grids}.  Informally, it states that the bounding rectangle of the junction polyform described above and shown in Figure~\ref{fig:polyform_full} will ``touch'' sides $s'_0$ of the polygons labeled ``2'' and ``3''  and sides $s_0$ of the polygons labeled ``5'' and ``6''.  This will imply that we can attach the polyform junctions by attaching sides $s_0$ of polygons labeled ``5'' and ``6'' to sides $s'_0$ of polygons labeled ``5'' and ``6''.

\begin{lemma}
Consider the polygons composing the junction polyform $\mathcal{P}$ constructed above from some regular polygon $P$ (shown in Figure~\ref{fig:polyform_full}). Also, let $B$ be the bounding rectangle around $\mathcal{P}$.  Let $E$ be the set of points consisting of the union of the following sets of points: 1) the set of boundary points on side $s'_0$ of the polygon labeled ``2'', 2) the set of boundary points on side $s'_0$ of the polygon labeled ``3'', 3) set of boundary points on side $s_0$ of the polygon labeled ``5'', and 4) the set of boundary points on side $s_0$ of the polygon labeled ``6''.  Then $E \subset E \cap B$.
\end{lemma}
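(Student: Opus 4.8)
\noindent The stated inclusion $E\subseteq E\cap B$ holds trivially, since every point of $E$ lies in $\mathcal{P}\subseteq B$; the content of the lemma must therefore be the stronger assertion that $E$ lies on the \emph{boundary} of $B$, i.e.\ that the four sides comprising $E$ sit on the left and right edges of the bounding rectangle. (That is exactly the fact used afterward: translating $\mathcal{P}$ rightward by the width of $B$ then slides the sides $s'_0$ of the copies of ``2'' and ``3'' flush against the sides $s_0$ of ``5'' and ``6'', so the polyforms interlock.) So the plan is to prove $E\subseteq\partial B$, by pinning down the horizontal extent of $\mathcal{P}$ exactly.

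I would work in $\mathbb{C}$ with apothem $\tfrac12$ as in Section~\ref{sec:roots-of-unity}, with $\omega=e^{2\pi i/n}$ and $k$ the polyform constant ($\Re(\omega^{k})\le0$, and $k$ least with this property). \emph{Step 1 (locate $E$).} Polygon ``2'' is placed at center $\omega^{k}$ with its side $s'_0$ vertical, and $s_0$ is at signed distance $\tfrac12$ (the apothem) from the center, so its mirror image $s'_0$ lies on the line $x=x_L:=\Re(\omega^{k})-\tfrac12$; polygon ``3'', centered at $\overline{\omega^{k}}$, likewise has $s'_0$ on $x=x_L$. Since ``5'' and ``6'' are the reflections of ``2'' and ``3'' across $x=\tfrac12$, their sides $s_0$ lie on $x=x_R:=1-x_L=\tfrac32-\Re(\omega^{k})$. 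Hence every point of $E$ has abscissa $x_L$ or $x_R$.

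\emph{Step 2 (the crux): $\mathcal{P}\subseteq\{x_L\le x\le x_R\}$.} By the symmetry of $\mathcal{P}$ about $x=\tfrac12$ it suffices to show no point of $\mathcal{P}$ has abscissa $<x_L$, and I would check the six constituent polygons individually. Polygons ``2'' and ``3'': in negated orientation $s'_0$ faces the $-x$ direction, so by convexity it realizes their minimum abscissa, namely $x_L$ --- no protrusion. Polygons ``4'',``5'',``6'': a one-line estimate using $\Re(\omega^{k})\le0$ and the bound $R:=\tfrac12\sec(\pi/n)\le1$ on the circumradius (valid for $n\ge3$) shows each lies strictly to the right of $x_L$. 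The one delicate case is polygon ``1'' (center $0$, standard orientation), whose leftmost extent is $-\tfrac12$ for even $n$ (a side faces $-x$) and $-R$ for odd $n$ (a vertex faces $-x$). For even $n$, $-\tfrac12\ge x_L\iff\Re(\omega^{k})\le0$, which holds by the choice of $k$. For odd $n$ we need $-R\ge x_L$, i.e.\ $\Re(\omega^{k})\le\tfrac12\bigl(1-\sec(\pi/n)\bigr)$; writing $n=4q+r$ with $r\in\{1,3\}$ gives $k=q+1$ and $\Re(\omega^{k})=\cos\!\bigl(\tfrac{\pi}{2}+\tfrac{(4-r)\pi}{2n}\bigr)=-\sin\!\bigl(\tfrac{(4-r)\pi}{2n}\bigr)\le-\sin\!\bigl(\tfrac{\pi}{2n}\bigr)$, and then substituting $1-\cos(\pi/n)=2\sin^{2}(\pi/2n)$ reduces the desired inequality to $\cos(\pi/n)\ge\sin(\pi/2n)$, i.e.\ $\tfrac{\pi}{n}\le\tfrac{\pi}{2}-\tfrac{\pi}{2n}$, i.e.\ $n\ge3$ (with equality only at $n=3$, where polygon ``1'' also touches the left edge). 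The geometric facts invoked here --- the non-overlap of ``1'',``2'',``3'' and the locations of the extreme vertex/side of a polygon in standard/negated orientation --- are supplied by Observations~\ref{ob:q2}--\ref{ob:nonov}. I expect this even/odd case split, together with the trigonometric estimate, to be the main obstacle; everything else is routine.

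\emph{Step 3 (conclude).} By Step 2, $\min_{p\in\mathcal{P}}\Re(p)=x_L$, attained (at least) on the $s'_0$ sides of ``2'',``3'', and by symmetry $\max_{p\in\mathcal{P}}\Re(p)=x_R$, attained on the $s_0$ sides of ``5'',``6''; consequently the (axis-aligned) bounding rectangle $B$ has its left edge on $x=x_L$ and its right edge on $x=x_R$. Every point of $E$ lies in $\mathcal{P}\subseteq B$, so its ordinate is within the vertical span of $B$, and by Step 1 its abscissa is $x_L$ or $x_R$; hence it lies on the left or right edge of $B$, i.e.\ $E\subseteq\partial B$, which is the assertion of the lemma. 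As a by-product this fixes the width of $B$ at $x_R-x_L=2-2\Re(\omega^{k})$ and makes the sides $s'_0$ of ``2'',``3'' and $s_0$ of ``5'',``6'' coincide with the vertical edges of $B$ --- exactly what is needed to stack translated copies of $\mathcal{P}$ into the grid of Lemma~\ref{lem:grids}.
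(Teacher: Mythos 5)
Your proof is correct, and your initial diagnosis is right: as written, ``$E \subset E \cap B$'' is vacuous (it holds for any $E \subseteq B$, and $E \subseteq \mathcal{P} \subseteq B$ trivially). Both you and the paper's own proof actually establish the intended stronger fact that $E$ lies on the two vertical edges of the bounding rectangle, which is what the subsequent grid argument needs.

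Your approach follows the same skeleton as the paper's --- locate the extremal abscissas of $\mathcal{P}$ and identify them with the prescribed sides --- but your Step~2 fills a genuine gap. The paper's argument is: the center of polygon ``5'' lies east of the center of polygon ``4'' (by $r \in [0,\sqrt{3}/2]$), hence ``5'' contains the easternmost point. That inference is automatic only when ``4'' and ``5'' are congruent in the \emph{same} orientation, which holds for even $n$ (standard and negated orientations coincide as point sets). For odd $n$, the reflected ``4'' has a vertex pointing east, reaching $x = 1 + R$ with $R = \tfrac12\sec(\pi/n)$, while ``5'' in standard orientation has its $s_0$ side at $x = \tfrac32 - \Re(\omega^k)$; for ``5'' to dominate one needs precisely $\Re(\omega^k) \le \tfrac12 - R$, which the paper never verifies (nor does it verify the symmetric claim that ``it is clear that none of the polygons labeled 1, 2, or 3 have a point that is an easternmost point'' --- which by the mirror symmetry is the same inequality). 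Your reduction of that inequality to $\cos(\pi/n) \ge \sin(\pi/(2n))$, i.e.\ $n \ge 3$ (with equality at $n = 3$, where the triangle's opposite vertex also meets the edge), supplies exactly what is missing. One small citation remark: the pairwise non-overlap of polygons ``1'', ``2'', ``3'' that you invoke is Lemma~\ref{lem:polyshape}, not one of the observations, though this does not affect the substance.
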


\begin{proof}
We prove that the boundary points on side $s_0$ of the polygons labeled ``4'' and ``6'' in Figure~\ref{fig:polyform_full} lie on the bounding rectangle $B$.  The proof that the boundary points on side $s'_0$ of the polygons labeled ``2'' and ``3'' lie on the bounding rectangle will then follow from a similar argument.

First, observe that for a polygon $P$ with standard position centered at the origin, the boundary points on side $s_0$ are the easternmost points contained in the polygon.  Furthermore, all of these points lie on the line $x=\frac{1}{2}$.  Now note that by our construction of the junction polyform, one of the tiles labeled ``5'' and ``6'' will contain the easternmost point of the polyform.  Indeed, let $x_4$ be the real part of the point in the center of the polygon labeled ``4''.  Since our construction ensures the real part of the point in the center of the polygon labeled ``5'' is of the form $x_4+r$ for $r\in[0,\frac{\sqrt{3}}{2}]$, the polygon labeled ``5'' will contain a point as east or further east than the points in the polygon labeled ``4''.

We claim that the polygons labeled ``5'' and ``6'' have centers with equal real parts.  To see this, recall that the centers of the polygons labeled ``2'' and ``3''have the same real parts since they are conjugates of each other.  Since the polygons labeled ``5'' and ``6'' are in the same position relative to each other as the polygons labeled ``2'' and ``3'' just reflected across the line $y=\frac{1}{2}i$, it follows that the polygons labeled ``5'' and ``6'' have equal real parts.

From our construction of the junction polyform, it is clear that none of the polygons labeled ``1'', ``2'', or ``3'' have a point that is an easternmost point of the polyform.  Thus, the $s_0$ sides of the polygons labeled ``5'' and ``6'' are all easternmost points of the polyform.  Consequently, these points lie on the bounding box $B$.
\end{proof}

\begin{observation} \label{lem:box}
Let $P$ be a regular polygon, $\mathcal{P}$ be a polyform junction formed from $P$, $B$ be the bounding rectangle for $\mathcal{P}$, and let $k$ be the polyform constant.  Furthermore, let $h_b$ be the height of the bounding rectangle and let $h_w$ be the width of the bounding rectangle.  Then, the following constraints hold for $h_b$ and $h_w$: 1) $h_b \leq 4\Im(\omega^k)$ and 2) $h_w = 2\Re(-\omega^k + 1)$.
\end{observation}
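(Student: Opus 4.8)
The plan is to read everything off from the two reflective symmetries of the junction polyform $\mathcal{P}$ together with the orientation conventions fixed in Section~\ref{sec:roots-of-unity}. First I would note that $\mathcal{P}$ is symmetric across the vertical line $x=\frac12$ (by construction, the right half $X'$ is the reflection of $X$ across that line) and across the real axis (the polygon labeled ``1'' is in standard orientation, hence symmetric across the real axis; the polygon labeled ``4'' is its reflection across $x=\frac12$ and hence also symmetric across the real axis; and the polygons labeled ``3'' and ``6'' are, as in the proof of Lemma~\ref{lem:polyshape}, the reflections across the real axis of the polygons labeled ``2'' and ``5''). Because of these symmetries, $h_b$ is exactly twice the largest imaginary part attained by a point of $\mathcal{P}$, and $h_w$ is exactly twice the distance from the line $x=\frac12$ to the easternmost point of $\mathcal{P}$.

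For the width I would invoke the preceding lemma, which already establishes that the boundary points on side $s_0$ of the polygons labeled ``5'' and ``6'' are easternmost points of $\mathcal{P}$; it remains only to locate them. By the conventions of Section~\ref{sec:roots-of-unity}, side $s'_0$ of a polygon in negated orientation centered at the origin is the vertical segment on the line $x=-\frac12$, so side $s'_0$ of the polygon labeled ``2'' (whose center is $\omega^k$) lies on $x=\Re(\omega^k)-\frac12$, and its image under reflection across $x=\frac12$, which is side $s_0$ of the polygon labeled ``5'', lies on $x=\frac32-\Re(\omega^k)$. Hence the easternmost extent of $\mathcal{P}$ is $\frac32-\Re(\omega^k)$, so $h_w = 2\bigl((\tfrac32-\Re(\omega^k))-\tfrac12\bigr) = 2\bigl(1-\Re(\omega^k)\bigr) = 2\Re(-\omega^k+1)$.

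For the height it suffices to show that the largest imaginary part $y^*$ attained by $\mathcal{P}$ is at most $2\Im(\omega^k)$, for then $h_b = 2y^* \le 4\Im(\omega^k)$. The polygons labeled ``3'' and ``6'' lie in the closed lower half-plane: by Observation~\ref{ob:nonov} and the argument in the proof of Lemma~\ref{lem:polyshape}, the southernmost point of the polygon labeled ``2'' coincides with the counterclockwise-most vertex of side $s_k$ of the polygon labeled ``1'', which has imaginary part $\ge 0$, and reflecting across the real axis shows the northernmost points of ``3'' and ``6'' have imaginary part $\le 0$. So $y^*$ is attained on one of ``1'', ``2'', ``4'', ``5''. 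The polygons ``1'' and ``4'' have centers on the real axis and are symmetric across it, so their topmost points have imaginary part at most the circumradius $R=\frac{1}{2\cos(\pi/n)}$, which is $\le 1$ for $n\ge 3$; and $1 \le 2\Im(\omega^k)$ since $\Im(\omega^k) = \sin\!\bigl(\tfrac{2\pi}{n}\lceil n/4\rceil\bigr) \ge \tfrac12$ (for $n\ge 6$ this is immediate because $\tfrac{2\pi}{n}\lceil n/4\rceil \in [\tfrac\pi2,\tfrac{5\pi}{6}]$, and $n\in\{3,4,5\}$ are checked directly). The polygons ``2'' and ``5'' are symmetric across the horizontal line through their center at height $\Im(\omega^k)$ — a regular polygon in standard, hence also in negated, orientation is symmetric across the horizontal axis, and reflection across $x=\frac12$ preserves this — and their southernmost point has imaginary part $\ge 0$, so their topmost point has imaginary part $\le 2\Im(\omega^k)$. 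Combining, $y^* \le 2\Im(\omega^k)$.

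The main obstacle is this last step and, more precisely, the failure of the naive estimate: one is tempted to bound the topmost point of the polygon labeled ``2'' by its center height plus its circumradius, i.e.\ by $\Im(\omega^k)+R$, but this can exceed $2\Im(\omega^k)$ — for an equilateral triangle $R=1 > \Im(\omega^1) = \frac{\sqrt3}{2}$, and the claimed bound then holds only with equality. The correct argument must instead use how the polygon labeled ``2'' is glued on (its southernmost vertex sits on side $s_k$ of the polygon labeled ``1'', at height $\ge 0$) together with the horizontal-axis symmetry of a regular polygon; most of the care in a full write-up goes into the bookkeeping of which side plays the role of $s'_0$, which vertex is southernmost, and tracking the reflection images, all of which is supplied by the orientation conventions of Section~\ref{sec:roots-of-unity} and Observation~\ref{ob:nonov}.
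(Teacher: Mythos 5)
Your proof is correct and takes essentially the same approach as the paper's, which treats the observation as something read off directly from Figure~\ref{fig:polyform_dims} (no interior point of $\mathcal{P}$ lies outside the horizontal strip between $y=\pm 2\Im(\omega^k)$, and the width is visibly $2\Re(-\omega^k+1)$). You supply the details the paper leaves implicit, and you correctly identify the one nontrivial step for the height bound: using the horizontal-axis symmetry of the negated-orientation polygons ``2'' and ``5'' together with the nonnegativity of their southernmost vertices (Observation~\ref{ob:nonov}), since the naive center-plus-circumradius estimate does not suffice for small $n$.
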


Figure~\ref{fig:polyform_dims} shows the dimensions of the polyform.  Note that the width of the polyform is clearly $2\Re(-\omega^k + 1)$.  To see that $h_b \leq 4\Im(\omega^k)$, note that by the way we constructed the junction polyform no interior points of the polyform can lie on the dotted lines shown in Figure~\ref{fig:polyform_dims}.  Since the distance between these two dotted lines is $4\Im(\omega^k)$, it must be the case that $h_b \leq 4\Im(\omega^k)$.

\begin{figure}[htp]
\begin{center}
\includegraphics[width=4.0in]{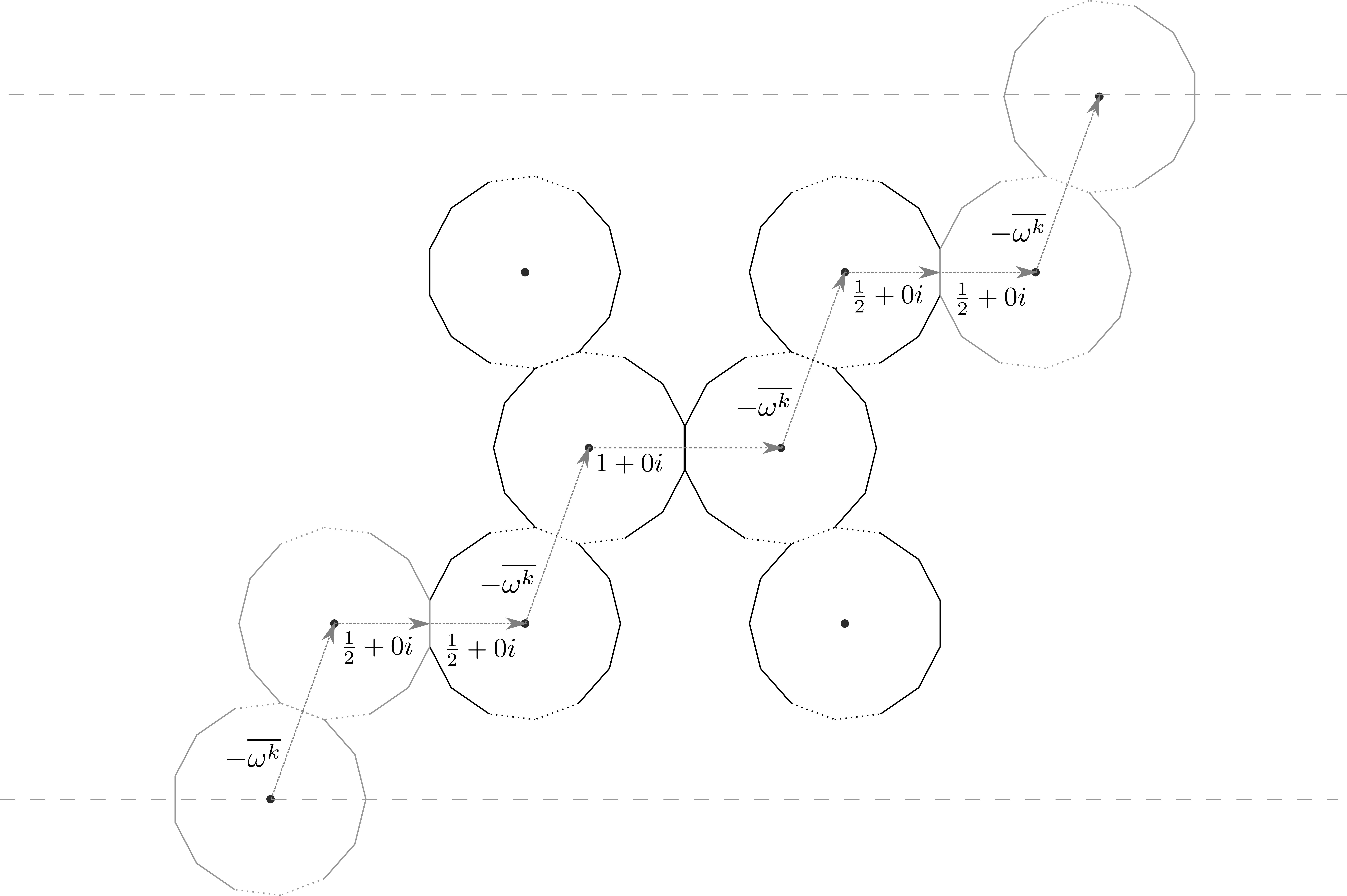}
\caption{The vectors showing the dimensions of the polyforms.}
\label{fig:polyform_dims}
\end{center}
\end{figure}

The next lemma states that given any regular polygon, we can form a a periodic grid of the plane.

\begin{figure}[htp]
\begin{center}
\includegraphics[width=2.0in]{images/polyTile}
\caption{The preformed assembly which is composed of the tile set of the system described in the proof of Lemma~\ref{lem:grids}.  The preformed assembly has two glues labeled ``a'' and ``b'' placed as shown.}
\label{fig:polyTile}
\end{center}
\end{figure}

\begin{figure}[htp]
\begin{center}
\includegraphics[width=4.5in]{images/polyTileGrid}
\caption{An assembly formed by the system described in the proof of Lemma~\ref{lem:grids}.}
\label{fig:polyTileGrid}
\end{center}
\end{figure}

\begin{figure}[htp]
\centering
  \subfloat[][The path of vectors which yields the vector $\vec{v}$.]{%
        \label{fig:polyformv}%
        \includegraphics[width=2.2in]{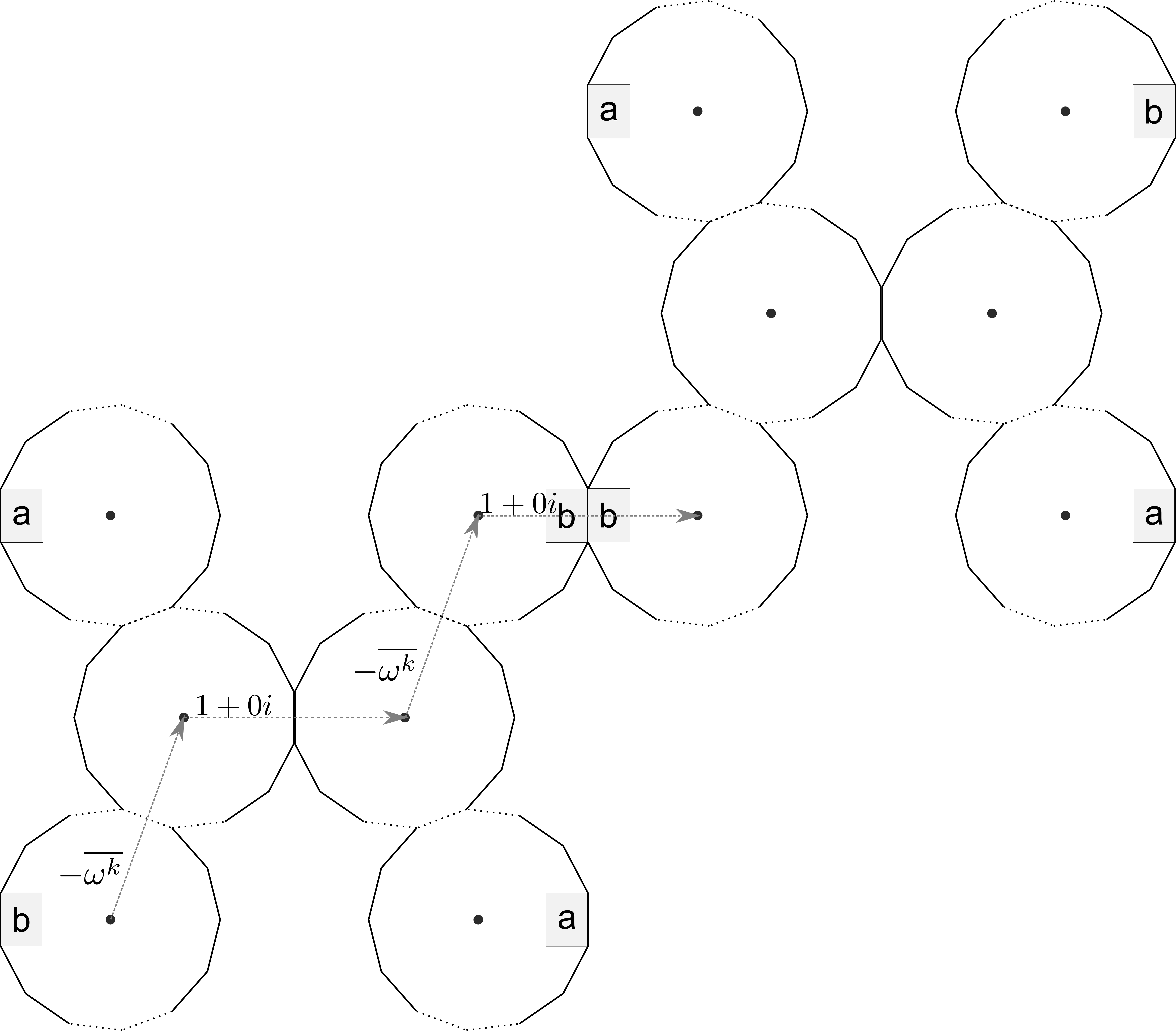}
        }%
        \quad
  \subfloat[][The path of vectors which yields the vector $\vec{w}$.]{%
        \label{fig:polyformw}%
        \includegraphics[width=2.2in]{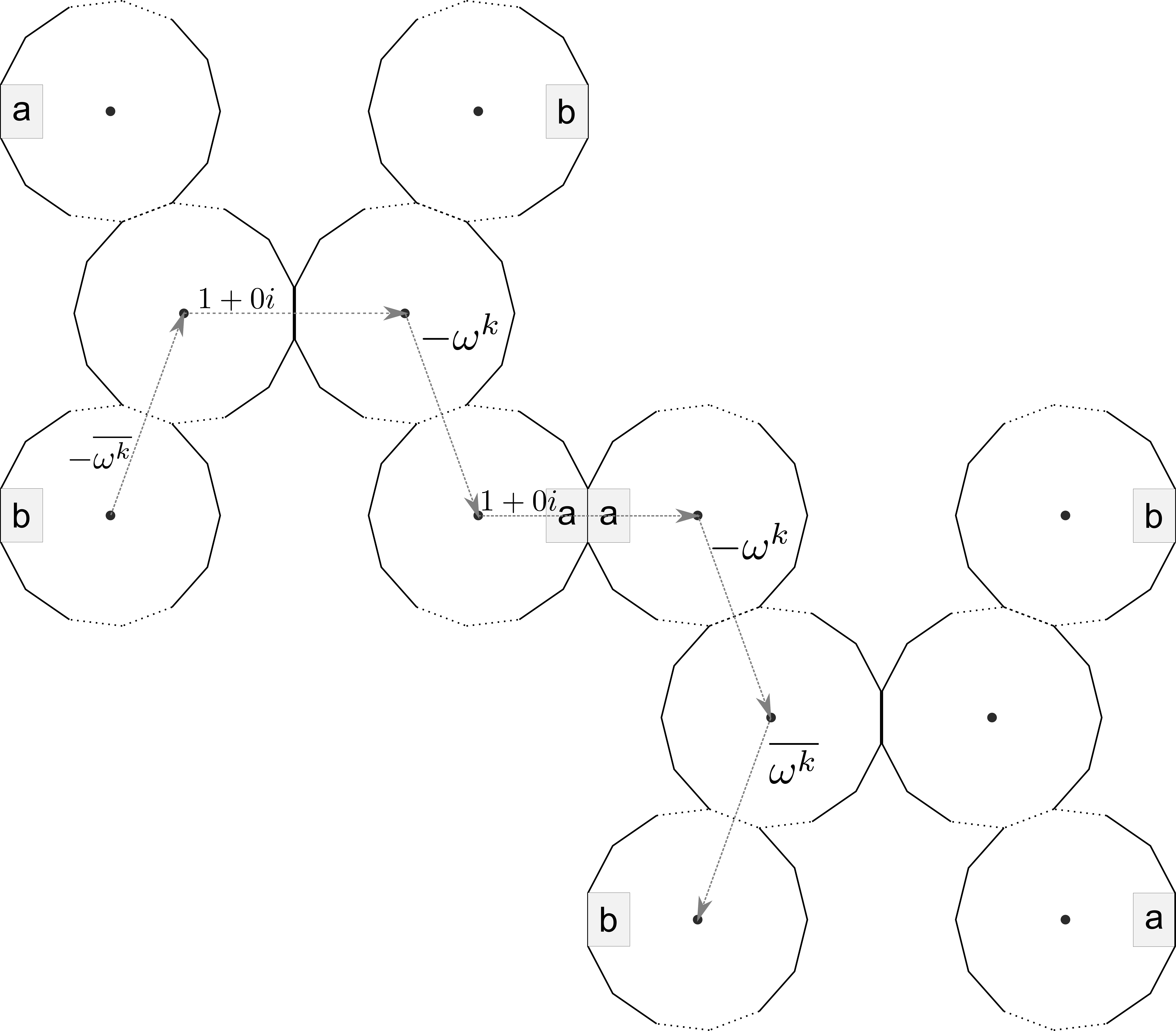}
        }%
  \caption{Choosing the vectors $\vec{v}$ and $\vec{w}$.}
  \label{fig:polyformvw}
\end{figure}

\subsubsection{Constructing the Polygonal Grid}

\begin{lemma} \label{lem:grids}
Given a regular polygon $P$, there exists a directed, polygonal tile system $\mathcal{T} = (T, \sigma)$ (where the seed is centered at location $(0,0)$ and the tile set $T$ contains a tile $t$) and vectors $\vec{v}, \vec{w} \in \mathbb{Z}^2$, such that $\mathcal{T}$ produces the terminal assembly $\alpha$, which we refer to as a \emph{grid}, with the following properties. (1) Every position in $\alpha$ of the form $c_1\vec{v} + c_2\vec{w}$, where $c_1, c_2 \in \mathbb{Z}$, is occupied by the tile $t$, and (2) for every $c_1, c_2\in \mathbb{Z}$, the position in $\Z^2$ of the form $c_1\vec{v} + c_2\vec{w}$ is occupied by the tile $t$.
\end{lemma}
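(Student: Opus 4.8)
The plan is to promote the junction polyform $\mathcal{P}$ of Section~\ref{sec:polyform} to the fundamental cell of a periodic tiling of $\R^2$ and to realize that tiling as the unique terminal assembly of a single-shape temperature-$1$ system. Label the six copies of $P$ making up $\mathcal{P}$ by $1,\dots,6$ as in Figure~\ref{fig:polyform_full}, with copy $1$ the one in standard orientation centered at the origin; this copy of $P$ will be the distinguished tile $t$. I would let $T$ consist of six tile types, one per labeled copy, and assign glues as follows. The ``internal'' glues of $\mathcal{P}$ are pairwise distinct strength-$1$ glues placed so that, from a seed $\sigma$ equal to a single copy of $t$ centered at $(0,0)$, the remaining copies attach one tile at a time in a forced order ($1\!\to\!2$, $1\!\to\!3$, $1\!\to\!4$, $4\!\to\!5$, $4\!\to\!6$); since all glues have strength $1$, these same bonds also let a copy of $\mathcal{P}$ regrow starting from any of the ``entry'' copies $2,3,5,6$. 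In addition I place a self-complementary strength-$1$ glue ``$a$'' on side $s_0$ of copies $5,6$ and on side $s'_0$ of copies $2,3$, and a self-complementary glue ``$b$'' on the analogous top/bottom sides, positioned as in Figure~\ref{fig:polyTile}. Because $\mathcal{P}$ is symmetric across $x=\tfrac12$ and across $y=0$ by construction, any copy of $\mathcal{P}$ attached to an existing one through an ``$a$''- or ``$b$''-bond is an \emph{exact translate} of it, so the producible assemblies are exactly the (finite, or the one infinite) unions of translates $\mathcal{P}+c_1\vec v+c_2\vec w$, where $\vec v,\vec w$ are the displacements from a copy of $\mathcal{P}$ to its right-neighbour and top-neighbour. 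Tracing a path of unit tile-to-tile steps (each a root of unity, in the sense of Section~\ref{sec:roots-of-unity}) from copy $1$ to copy $1$ of each neighbour, as in Figure~\ref{fig:polyformvw}, expresses $\vec v$ and $\vec w$ as integer-coefficient polynomials in $\omega$, and identifying the index set of the polyforms with $\Z^2$ via $(c_1,c_2)\mapsto c_1\vec v+c_2\vec w$ makes $\vec v,\vec w$ the standard basis.

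The crux of the proof is global non-overlap: that the translates $\mathcal{P}+c_1\vec v+c_2\vec w$ have pairwise-disjoint interiors and meet exactly along the designated ``$a$''/``$b$'' edges. I would deduce this from the bounding-rectangle analysis already carried out. We have $\mathcal{P}\subseteq B$, and by the lemma that the $s'_0$-sides of copies $2,3$ and the $s_0$-sides of copies $5,6$ lie on $\partial B$, together with the dimension formulas of Observation~\ref{lem:box} (width $2\Re(-\omega^k+1)$, height at most $4\Im(\omega^k)$) and the analogous placement of ``$b$'' on the top/bottom edges of $B$, the translates of $B$ under $\Z\vec v+\Z\vec w$ cover $\R^2$ with disjoint interiors; hence so do the translates of $\mathcal{P}$. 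Along each shared edge of two adjacent copies of $B$ the only polyform boundary present is precisely the intended ``$a$'' (resp.\ ``$b$'') side, and the same lemma shows the two such sides coincide in position and length, so adjacent polyforms are genuinely adjacent (their edges intersect completely) and every attachment in the growth process binds with strength $1=\tau$ with non-overlapping interiors. Since every partially built polyform is a subassembly of a full one, no intermediate configuration creates an overlap either, and since the internal glues are pairwise distinct while ``$a$'' and ``$b$'' force position, no tile can ever attach in an unintended location. This gives directedness: from $\sigma$ every addition is forced, so there is a unique terminal assembly $\alpha=\bigcup_{c_1,c_2\in\Z}\big(\mathcal{P}+c_1\vec v+c_2\vec w\big)$, the grid of Figure~\ref{fig:polyTileGrid}.

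Properties (1)–(2) are then immediate: copy $1$ of $P$ occurs exactly once in each translate of $\mathcal{P}$ and nowhere else in $\alpha$, so the set of positions occupied by the tile $t$ is exactly $\{\,c_1\vec v+c_2\vec w : c_1,c_2\in\Z\,\}$. The step I expect to be the genuine obstacle is precisely the global non-overlap / no-spurious-attachment claim: the per-step, per-polyform reasoning is local and forced, but ruling out collisions among infinitely many copies of $\mathcal{P}$—and ruling out a (partial or complete) copy of $\mathcal{P}$ binding in some shifted position—rests entirely on the $B$-tiling being exact, in particular on shared edges matching to full length and position, which is exactly what the earlier lemmas on the sides of copies $2,3,5,6$ lying on $\partial B$ and the dimension formulas of Observation~\ref{lem:box} are there to supply.
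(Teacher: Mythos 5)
Your architecture matches the paper's: build the junction polyform $\mathcal{P}$ as a six-copy fundamental cell, give its copies internal glues and two external glues so it tiles under the lattice $\mathbb{Z}\vec v+\mathbb{Z}\vec w$, and observe $t$ (copy $1$) occupies exactly the lattice positions. The main place you differ is the global non-overlap step, where the paper doesn't invoke ``$B$ tiles the plane'' but instead writes $\vec v=-2\overline{\omega^k}+2$, $\vec w=-2\omega^k+2$ (so that $a\vec v+b\vec w=\bigl((a+b)h_w,\ (a-b)\,2\Im\omega^k\bigr)$) and splits on the sign of $a+b$: if $a+b\neq 0$ the horizontal displacement is at least the box width $h_w=2\Re(-\omega^k+1)$, and if $a+b=0$ then $a-b=2a\neq 0$, so the vertical displacement is at least $4\Im\omega^k\ge h_b$. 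That two-line computation is exactly what your invocation of Observation~\ref{lem:box} is supposed to deliver, and you correctly flag it as the real content, but you do not carry it out; a reader has to reconstruct it.

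Two smaller inaccuracies. First, ``the translates of $B$ under $\mathbb{Z}\vec v+\mathbb{Z}\vec w$ cover $\R^2$'' is false in general: the lattice cell has area $4h_w\,\Im\omega^k$, which can exceed $h_w\cdot h_b$ when $h_b<4\Im\omega^k$, so gaps remain. Fortunately only disjointness of interiors is needed, and that part survives. Second, the glues $a$ and $b$ are not on opposite pairs of sides (left/right versus top/bottom) of $B$; the paper puts both on the $s'_0$-sides of copies $2,3$ and the $s_0$-sides of copies $5,6$, i.e.\ both on the east/west faces at different heights, and $\vec v,\vec w$ are correspondingly both eastward-pointing (one tilted up, one tilted down) rather than a right-neighbour and a top-neighbour pair. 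This doesn't break the argument structurally, but if you tried to place a self-complementary $b$-glue on the top and bottom of $B$ you would be designing a different (possibly non-functional) bonding scheme than the one the paper verifies. With those two fixes and the $a+b$ case analysis written out, your proof would coincide with the paper's.
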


\begin{proof}
For the first part of this proof, we think of our polygonal tile system as first forming the junction polyform $\mathcal{P}$ before attaching it to our assembly.  Later on in the proof, we will see that this is a valid assumption.  Our tile set $T$, will consist of tiles of shape $P$ that form the junction polyform with the glues labeled ``a'' and ``b'' exposed as shown in Figure~\ref{fig:polyTile}.  Note that for the first part of the proof we are essentially thinking of the assembly shown in Figure~\ref{fig:polyTile} as a tile.  Thus, we refer to the junction polyform as a tile and we refer to a polygon composing the polyform as a pixel.  More formally, a pixel in the polyform is a location in the complex plane given by the center of a tile in the polyform shown in Figure~\ref{fig:polyform} (where we assume that the center of the tile labeled ``1'' is placed at the origin).

To begin, we position our single seed so that the polygon labeled ``1'' in Figure~\ref{fig:polyform_full} is centered at the origin.  An assembly formed by such a system is shown in Figure~\ref{fig:polyTileGrid}.

Let $\mathcal{P}$ be a junction polyform composed of the polygon $P$ and let $k$ be the polyform constant as discussed in the construction of the junction polyform.  Set $\vec{v} = -\overline{\omega^k} + (1+0i) - -\overline{\omega^k} + (1+0i) = -2\overline{\omega^k} + 2(1+0i)$ and $\vec{w}=  -\overline{\omega^k} + (1+0i) - \omega^k + (1+0i)  - \omega^k + \overline{\omega^k} = -2\omega^k + 2(1+0i)$.  The intuition behind choosing these vectors is shown in Figure~\ref{fig:polyformv} and Figure~\ref{fig:polyformw}.

The following terminology is borrowed from \cite{Polyominoes}.  Define $\mathcal{P}[i,j] = p + i \cdot \vec{v} + j \cdot \vec{w}$ for $i,j \in \mathbb{Z}^2$.  Here, $p$ acts as a distinguished pixel that we use as a reference point.  Then, for two polyforms $\mathcal{P}[i,j]$ and $\mathcal{P}[k, l]$, we say that these polyforms are neighboring if $i=k$ and $|j-l|=1$ or $j=l$ and $|i-k| = 1$.

As in \cite{Polyominoes} we prove the following claim.

\emph{Claim:} $\mathcal{P}[i,j]$ for all $i, j \in \mathbb{Z}^2$ defines a grid of non-overlapping polyforms such that any two neighboring polyforms $\mathcal{P}[i,j]$ and $\mathcal{P}[k,l]$ contain pixels with a shared edge.  Such a grid of polyforms is shown in Figure~\ref{fig:polyTileGrid}.

To begin, we show that if $i \neq k$ or $j \neq l$, then the interior points of $\mathcal{P}[i,j]$ and $\mathcal{P}[k,l]$ are disjoint.  Let $a=(k-i)$ and $b=(l-j)$.  In order to show that $\mathcal{P}[i,j]$ does not overlap $\mathcal{P}[k,l]$, we show that 1)$|\Re(a\vec{v} + b\vec{w})| \geq |2\Re(-\omega^k + 1)|$ or 2) $|\Im(a\vec{v} + b\vec{w})| \geq |4\Im(\omega^k)|$.  Since, by Lemma~\ref{lem:box}, these are the dimensions of the bounding box of $\mathcal{P}$, it will then follow that their interiors are disjoint.

We consider three cases 1) $a+b > 0$, 2) $a+b=0$, and 3) $a+b < 0$.  First note that
\begin{eqnarray*}
a\vec{v}+b\vec{w} &=& a(-2\overline{\omega^k}+ 2(1+0i))+b(-2\omega^k+2(1+0i)) \\
                  &=& -2(a\overline{\omega^k}+b\omega^k)+2(a+b) 
\end{eqnarray*}

For case 1, observe that
\begin{eqnarray*}
|\Re(a\vec{v}+b\vec{w})| &=& |\Re(-2(a\overline{\omega^k}+b\omega^k) + 2(a+b))| \\
                        &=& |-2(a\Re(\overline{\omega^k})+b\Re(\omega^k))+2(a+b)| \\
                        &=& |-2\Re(\omega^k)(a+b)+2(a+b)| \\
                        &\geq& |-2\Re(\omega^k)+2|. 
\end{eqnarray*}

In the case that $a+b=0$, we have
\begin{eqnarray*}
|\Im(a\vec{v}+b\vec{w})| &=& |\Im(-2(a\overline{\omega^k}+b\omega^k)+2(a+b))| \\
                       &=& |\Im(-2(a\overline{\omega^k}+b\omega^k))| \\
                       &=& |\Im(-2((-b)\overline{\omega^k}+b\omega^k))| \\
                       &=& |-2(b)(\Im(-\overline{\omega^k})+\Im(\omega^k))| \\
                       &=& |-2(b)(2\Im(\omega^k))| \\
                       &\geq& |-4\Im(\omega^k)|.  
\end{eqnarray*}

Although case 3 is similar to case 1, we include it here for completeness.  If $a+b<0$, notice that
\begin{eqnarray*}
|\Re(a\vec{v}+b\vec{w})| &=& |\Re(-2(a\overline{\omega^k}+b\omega^k) + 2(a+b))| \\
                        &=& |-2(a\Re(\overline{\omega^k})+b\Re(\omega^k))+2(a+b)| \\
                        &=& |-2\Re(\omega^k)(a+b)+2(a+b)| \\
                        &\geq& |2\Re(\omega^k)-2|. 
\end{eqnarray*}

Now suppose that $\mathcal{P}[i,j]$ and $\mathcal{P}[k,l]$ are neighboring polyforms.  First, suppose that $i=k$ and $|j-l|=1$.  We consider the case where $l = j + 1$ and note that the case where $l=j-1$ is similar.  Consider the polygons in the lower left hand corner of the bounding rectangle of the polyforms and denote this polygon $p$.  Note that the polygon $p$ in $\mathcal{P}[k,l]$ lies at a position
\begin{eqnarray*}
(k\vec{v}+l\vec{w}) - (i\vec{v}+j\vec{w}) &=& (i\vec{v} + (j+1)\vec{w}) - (i\vec{v} + j\vec{w}) \\
                                          &=& \vec{w} 
\end{eqnarray*}
relative to the polygon $p$ in $\mathcal{P}[i,j]$.

Now, notice that $\mathcal{P}[i,j]$ has a polygon that lies at position $-\overline{\omega^k} + (1+0i) - \omega^k$ relative to $p$ in $\mathcal{P}[i,j]$(this is the polygon that lies in the bottom right hand corner of the bounding box), and $\mathcal{P}[k,l]$ has a polygon that lies at position $-\overline{\omega^k} + \omega^k$ relative to $p$ in $\mathcal{P}[k,l]$ (this is the polygon that lies in the top left hand corner of the bounding box).  Call the first pixel described $p'$ and the latter $p''$.   Observe that by the construction of the junction polyform, $p'$ has standard orientation and $p''$ has negated orientation.  Furthermore, observe that $p''$ lies at location
\begin{eqnarray*}
(\vec{w} + (-\overline{\omega^k} + \omega^k) - (-\overline{\omega^k} + (1+0i) - \omega^k) &=&  -2\omega^k + 2(1+0i) + (-\overline{\omega^k} + \omega^k) - (-\overline{\omega^k} + (1+0i) - \omega^k)\\
                                                                                          &=& (1+0i) 
\end{eqnarray*}
relative to $p'$.  Since $p'$ has standard orientation, $p''$ has negated orientation and $p''$ lies at position $(1+0i)$ relative to $p'$, it follows from the discussion in Section~\ref{sec:roots-of-unity} that polygon $p'$ and polygon $p''$ completely share a common edge.

Conversely, assume that $j=l$ and $|i-k| = 1$.  We consider the case where $k = i - 1$, and, once again, note that the case where $k = i + 1$ is similar.  Notice that the polygon $p$ in $\mathcal{P}[k,l]$ lies at a position
\begin{eqnarray*}
(k\vec{v}+l\vec{w}) - (i\vec{v}+j\vec{w}) &=& ((i-1)\vec{v} + j\vec{w}) - (i\vec{v} + j\vec{w}) \\
                                          &=& -\vec{v} 
\end{eqnarray*}
relative to the polygon $p$ in $\mathcal{P}[i,j]$.

Denote the polygon that lies at position $-2\overline{\omega^k} + (1+0i)$ relative to $p$ in $\mathcal{P}[k,l]$ by $p'$ (this is the polygon that lies in the top right hand corner of the bounding box).  Observe that, relative to polygon $p$ in $\mathcal{P}[i,j]$, the polygon $p'$ in $\mathcal{P}[k,l]$ lies at position
\begin{eqnarray*}
-\vec{v} + (-2\overline{\omega^k} + (1+0i)) &=& -(-2\overline{\omega^k} + 2(1+0i)) + (-2\overline{\omega^k} + (1+0i)) \\
                                            &=& -(1+0i). 
\end{eqnarray*}
Since $p$ in $\mathcal{P}[i,j]$ has negated orientation, $p'$ in $\mathcal{P}[k,l]$ has standard orientation, and $p'$ lies at a position $-(1+0i)$ relative to $p$, it follows from the discussion in Section~\ref{sec:roots-of-unity} that polygon $p$ and polygon $p''$ completely share a common edge.

Now, note that since none of the ``polyform junction tiles'' overlap, there are not any race conditions.  Consequently, we can build the assembly described above by attaching one polygon tile at a time (instead of an assembly of polygons).  The seed of our assembly will be the southwest tile of $\mathcal{P}[0,0]$.
\end{proof}

\subsection{Grid Notation}\label{sec:grid-notation}

For some polygon $P$, we let $g_{\alpha}$ denote the terminal assembly of the tile system given in Lemma~\ref{lem:grids} (i.e. the grid assembly obtained from $P$). Furthermore, for a tile system $\mathcal{T}$ of shape $P$, $\alpha \in \prodasm{\mathcal{T}}$, and $t$ a tile of $\alpha$ centered at the location $\vec{x}$, we say that $t$ is \emph{on grid} with respect to $g_{\alpha}$ if there exists a tile $t'\in g_{\alpha}$ such that $t'$ is centered at the location $\vec{x}$ and has the same orientation of $t$.  If there does not exist such a $t'$, then we say that $t$ is \emph{off grid} with respect to $g_{\alpha}$.

\subsection{\emph{Normalized} Bit-reading Gadgets}
 Let a bit reading gadget have the properties that: 1)the tile from which the bit writer begins growth is on grid, 2) the last tile to be placed in the bit writer is on grid, and 3) the tile $t$ from which the bit reader grows is also placed on grid.  We call such a bit-reading gadget an \emph{on grid bit-reading gadget}.  A pair of \emph{normalized bit-writers} $\alpha_{u0}$ and $\alpha_{u1}$ have the property that 1) $\alpha_{u0}$ and $\alpha_{u1}$ are the two bit writers for some bit reading gadget and 2) the location and position of the first tile placed in the two assemblies is the same as well as the location and position of the last tile placed.  A \emph{normalized bit-reading gadget} is an on grid bit-reading gadget with normalized bit-writers.
} %
\fi 

\ifabstract
\later{
\section{Polygons Which ``Can't Compute'' at Temperature 1}\label{sec:imposs-poly}

In this section, we prove Theorem~\ref{thm:cant-bit-read} by showing a set of polygons for which it is impossible to create bit-reading gadgets at $\tau=1$, namely regular polygons with less than 7 sides (i.e. equilateral triangles, regular pentagons, and regular hexagons), as this was already shown to be true for squares in \cite{Polyominoes}.  This provides a sharp dividing line, since we have shown that all regular polygons with $\ge 7$ sides can form bit reading gadgets, and thus are capable of universal computation, at $\tau=1$.

We now restate the Theorem for completeness and give its proof.

\begin{theorem}\label{thm:cant-bit-read-append}
Let $n\in \N$ be such that $3\leq n \leq 6$. Then, there exists no temperature 1 single-shaped polygonal tile assembly system $\mathcal{T} = (T,\sigma,1)$ where for all $t \in T$, $t$ is a regular polygon with $n$ sides, and a bit-reading gadget exists for $\mathcal{T}$.
\end{theorem}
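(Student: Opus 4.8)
The plan is a proof by contradiction that first isolates the single geometric phenomenon a bit-reading gadget requires — one reader path being blocked by a bit writer while a companion path slips past the same region — and then shows regular $n$-gons with $3\le n\le 6$ cannot realize it.

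First I would set up the contradiction and reduce it to a purely geometric ``jamming'' statement. Assume a bit-reading gadget exists for $\mathcal{T}$ and fix, from the definition (\ref{def:bit-reader}), the bit writers $\alpha_0,\alpha_1$, the bit-type sets $T_0,T_1$, the reader tile $t$, and paths $P_0,P_1$ that read $0$ (resp.\ $1$) in the presence of $\alpha_0$ (resp.\ $\alpha_1$). The key preliminary observation: replay the tile sequence of $P_0$ with $\alpha_1$ in place of $\alpha_0$, both positioned as the definition prescribes and with $t$ in the same location. If the whole sequence can be placed, then $P_0$ is a path from $t$ staying strictly above the $x$-axis whose first tile with $x<0$ lies in $T_0\subseteq T\setminus T_1$ while $\alpha_1$ is present, contradicting the unambiguity clause. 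Hence some tile of $P_0$ must overlap a tile of $\alpha_1$; symmetrically, some tile of $P_1$ overlaps a tile of $\alpha_0$. So a bit-reading gadget forces a configuration in which an $n$-gon along one reader path lies arbitrarily close to — but does not overlap — a blocking assembly of $n$-gons, while an $n$-gon along the other reader path, following a slightly different history, fits past the same region. Everything reduces to ruling out this selective near-miss for $n\le 6$.

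For $n\in\{3,4,6\}$ I would use that these are exactly the regular polygons that tile the plane, and that the edge-to-edge monohedral tiling is rigid: given one placed $P_n$, any $P_n$ sharing a full edge with it has a uniquely determined position, namely the corresponding lattice neighbour. Since equal-size regular tiles can only be adjacent by sharing a full edge, an induction on the assembly sequence shows every producible assembly of $\mathcal{T}$ is a subassembly of one fixed tiling $\mathcal{L}_n$ of the plane by $P_n$. On a lattice there is no near-miss: a tile either fills a cell or overlaps a neighbouring one, a path is jammed exactly when the next cell it needs is occupied, and the route $P_1$ must use to get past the (connected, $x$-axis-touching) bit writer passes through the very choke region $\alpha_1$ must fill to jam $P_0$, so the reading becomes ambiguous. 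This is precisely the impossibility established for square tiles in \cite{Polyominoes}; that argument uses only lattice rigidity and transfers essentially verbatim to the triangular and hexagonal lattices. So $n\in\{3,4,6\}$ follows by reduction to the known lattice case.

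The case $n=5$ is the genuinely new one and, I expect, the main obstacle: regular pentagons do not tile the plane, so a pentagon assembly need not lie in any lattice — the reachable tile centres are in fact a dense subset of the plane — and the lattice reduction is unavailable. The plan is to analyse pentagon packings directly. The leverage is that, although centres are dense, the reachable tile \emph{orientations} form only a finite set (rotations of a pentagon by multiples of $72^\circ$, possibly composed with a reflection), and a regular pentagon has interior angle $108^\circ$. One enumerates the ways a new pentagon can be placed adjacent to, or wedged among, the pentagons already present along a jammed reader path, and checks that every resulting gap is one of two types: either it is wide enough to admit a pentagon sharing a full edge — so both companion paths complete and no path is actually blocked — or it is a narrow wedge (such as the $\le 36^\circ$ opening left when several pentagons meet near a vertex) too small to admit any pentagon at all — so every candidate reader path is blocked, violating the requirement that the correct path always grows. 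There is no intermediate shallow concavity a pentagon can enter to within $\epsilon$ but not fully — the configuration the positive results exploit for the obtuser $n\ge 7$ polygons. Since no configuration selectively blocks one reader path while letting the other through, no bit-reading gadget exists for $n=5$, completing the theorem. The hard part is the exhaustiveness of this $n=5$ analysis: without a lattice one must reason about pentagons placed at arbitrary dense positions and lean on the finiteness of the orientation set and on the relatively small $108^\circ$ angle to close off every case.
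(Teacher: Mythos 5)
Your overall architecture matches the paper's proof exactly: argue by contradiction, handle $n\in\{3,4,6\}$ by lattice rigidity and a transfer of the square-tile impossibility argument of \cite{Polyominoes}, and treat $n=5$ separately by a direct geometric analysis that leans on the finiteness of realizable pentagon orientations. For the tessellating cases your reduction is sound, though note that the step you summarize as ``the route $P_1$ must use passes through the very choke region'' is not automatic; the argument you are importing from \cite{Polyominoes} makes this work by first selecting the \emph{outermost} reader path $p'$ (always taking the clockwise-most branch at every divergence point) and choosing an assembly sequence that grows $p'$ first, so that when $p'$ is stopped by $\alpha_1$ every other candidate path is enclosed by $p'$, $\alpha_1$, and the lines $y=0$, $x=t_x$. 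You should make that enclosure device explicit, since it is the entire mechanism by which ``one path blocked'' forces ``all paths blocked.''

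The genuine gap is in the $n=5$ case, which you correctly identify as the crux but do not actually close. Your proposed dichotomy --- every gap is either wide enough to admit a full edge-sharing pentagon (so nothing is blocked) or is a wedge of opening at most $36^\circ$ (so everything is blocked) --- is both unproven and not quite the right statement: the configuration that must be excluded is not an arbitrary gap in a pentagon packing but the specific one arising at the choke point, namely the unoccupied portion of the first blocked location $p'_b$ of the outermost path, where the blocking tile of $\alpha_1$ overlaps $p'_b$ but need not share an edge with the last placed tile $p'_{\mathrm{end}}$. The paper resolves this by a finite case analysis on the first tile of a hypothetical threading path $p$: since any two pentagons in a connected assembly can have only two relative orientations, there are essentially four cases (same orientation offset below or above $p'_b$, opposite orientation with its near vertex below or above the corner shared by $p'_{\mathrm{end}}$ and $p'_b$), and in each case the threading tile or its successor is forced to collide with either the blocker or $p'_{\mathrm{end}}$. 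Your plan gestures at such an enumeration but does not carry it out, and the exhaustiveness you flag as ``the hard part'' is precisely the content of the proof; as written, the $n=5$ case rests on an asserted geometric trichotomy rather than on an argument.
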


To prove Theorem~\ref{thm:cant-bit-read-append}, we break it into two main cases and prove lemmas about (1) equilateral triangles and hexagons, and (2) pentagons.

\subsection{Equilateral triangles, squares, and regular hexagons}

Equilateral triangles, squares, and regular hexagons are all capable of tessellations of the plane.  That is, using tiles of only one of those shapes it is possible to tile the entire plane with no gaps.  (As a side note, these are the only regular polygons which can do so.)  In a system consisting of tiles of only one of those shapes, all tiles must be placed into positions aligning with a regular grid (i.e. no tile can be offset or rotated from the grid).  It was shown in \cite{Polyominoes} that squares cannot form bit-reading gadgets at $\tau=1$, and because of the tessellation ability of equilateral triangles and regular hexagons and their restriction to fixed grids, the proof of \cite{Polyominoes} can be extended in a straightforward way to also prove that equilateral triangles and regular hexagons cannot form bit reading gadgets at $\tau=1$.  Thus, the following proof is nearly identical to that for squares of \cite{Polyominoes}.

\begin{lemma}\label{lem:tri}
There exists no temperature 1 polygonal tile assembly system $\mathcal{T} = (T,\sigma,1)$ where for all $t \in T$, $t$ is an equilateral triangle, and a bit-reading gadget exists for $\mathcal{T}$.
\end{lemma}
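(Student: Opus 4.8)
The plan is to reduce the equilateral-triangle case to the square case of \cite{Polyominoes} by first establishing that, in a single-shaped system whose tile is an equilateral triangle, every producible assembly is ``grid-locked'': once the position and orientation of the seed are fixed, every tile of every producible assembly occupies a cell of the standard triangular grid in one of exactly two orientations (``up'' and ``down''), and no tile can be offset or placed at any other angle. This is because assemblies are connected, and two equilateral triangles can share a complete edge only if they are oppositely oriented (one a $180^\circ$-rotation of the other, flipping being disallowed); an induction along any assembly sequence then propagates the grid determined by the seed to every tile. Consequently, exactly as for squares and regular hexagons, the set of legal tile placements is a fixed discrete grid, adjacency of tiles means sharing a complete edge of that grid, and --- crucially --- the only way for a pre-existing tile to prevent the attachment of a later tile at temperature $1$ is to occupy the grid cell that the later tile would need; there is no ``partial'' geometric hindrance. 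This rigidity is precisely the property the impossibility proof of \cite{Polyominoes} exploits, so that argument carries over with only cosmetic changes (replacing ``square cell'' by ``triangular cell'' and carrying along the irrelevant up/down label).

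Concretely, I would argue by contradiction. Suppose a bit-reading gadget exists for $\mathcal{T}$, with bit writers $\alpha_0,\alpha_1$, reader tile $t$ placed at $(w+n,h)$, and all the data of Definition~\ref{def:bit-reader}; after the translation, both writers lie in the strip $0\le x\le w$, so the strip $w<x<w+n$ contains no writer tiles in either scenario. Fix a path $P_0$ that reads the $0$ bit, truncated at its first tile with $x$-coordinate $<0$ (of type in $T_0$), and likewise a truncated path $P_1$ reading the $1$ bit (ending at a tile of type in $T_1$, with $T_0\cap T_1=\emptyset$). Since in the $1$-scenario no path may place a tile of type in $T\setminus T_1\supseteq T_0$ as the first tile with $x$-coordinate $<0$, the path $P_0$ cannot run to completion when $\alpha_1$ is present; as growth of a temperature-$1$ path is halted only by a needed cell being occupied, and $P_0$ is self-avoiding while $t$ is unchanged, some cell along $P_0$ must be occupied by a tile of $\alpha_1$. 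Symmetrically, some cell along $P_1$ must be occupied by a tile of $\alpha_0$. From here the contradiction is obtained exactly as in \cite{Polyominoes}: one tracks where $P_0$ and $P_1$ diverge and in which of the two scenarios each divergence is permitted, and uses the confinement of both writers to the width-$w$ strip (while the paths must cross that strip staying strictly above the $x$-axis) to conclude that either some path of the wrong type can complete or every path of the correct type is blocked --- in either case contradicting Definition~\ref{def:bit-reader}.

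The main obstacle --- really the only nontrivial point --- is the rigidity claim and its clean interface with Definition~\ref{def:bit-reader}, which is phrased in continuous space: once grid-lockedness is in hand, every quantity in that definition (``first tile with $x$-coordinate $<0$'', ``maximal width $w$'', ``strictly above the $x$-axis'') becomes a statement about the discrete triangular grid, and the combinatorial core of the \cite{Polyominoes} proof applies verbatim. The one thing to verify is that that core never used a feature special to squares (a single cell type, or axis-aligned cells); it does not, since it uses only grid-rigidity, the bounded width of the writers, and the impossibility of hindrance without cell occupation --- all of which hold equally for equilateral triangles. The same remarks will then give the regular-hexagon case, which together with the known square result of \cite{Polyominoes} completes the proof of Theorem~\ref{thm:cant-bit-read}.
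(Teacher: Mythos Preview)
Your proposal is essentially the same approach as the paper's: establish that equilateral-triangle assemblies are rigidly locked to the triangular grid (so blocking is by cell occupation only), and then invoke the combinatorial argument from \cite{Polyominoes}. The paper carries this out and in fact spells the \cite{Polyominoes} argument out in full; one small discrepancy is that the actual mechanism is not ``fix one $P_0$ and one $P_1$ and track where they diverge'' but rather ``among \emph{all} candidate paths take the clockwise-most one $p'$, grow it first in the wrong scenario, and observe that $p'$ together with the blocking writer and the bounding lines encloses every other path'' --- the outermost-path selection is the point, and your two-path description does not quite capture it, though since you defer to \cite{Polyominoes} for this step it is not a gap. Also note a small overreach in your last sentence: grid-rigidity handles triangles, squares, and hexagons, but Theorem~\ref{thm:cant-bit-read} also requires the pentagon case, which does \emph{not} fall under this umbrella (pentagons do not tessellate) and is treated by a separate argument in the paper.
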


\begin{lemma}\label{lem:hex}
There exists no temperature 1 polygonal tile assembly system $\mathcal{T} = (T,\sigma,1)$ where for all $t \in T$, $t$ is a regular hexagon, and a bit-reading gadget exists for $\mathcal{T}$.
\end{lemma}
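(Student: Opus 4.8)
The plan is to reduce Lemma~\ref{lem:hex} to the square case already settled in \cite{Polyominoes}, by first showing that hexagon assemblies have no continuous ``wiggle room'' whatsoever. Concretely, the first thing I would prove is a rigidity fact: in every producible assembly of a single-shape system whose tile is a fixed regular hexagon, all tiles have the same orientation and all centroids lie on a single copy $L$ of the triangular (hexagonal-packing) lattice. The proof is an induction along the bond graph, which is connected because the seed is $1$-stable and, at temperature $1$, every further tile attaches via a complete-edge glue match. For the inductive step, suppose hexagons $H$ and $H'$ have disjoint interiors and share a complete edge $e$; since all edges of our tiles have equal length, $H'$ is \emph{the} unique regular hexagon having $e$ as an edge and lying on the far side of the line $\ell$ through $e$, i.e.\ $H'$ is the reflection of $H$ across $\ell$. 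Now $\ell$ is parallel to the line through the centroid of $H$ joining the two vertices ``opposite'' $e$, and every such vertex-line is a symmetry axis of a regular hexagon; hence reflection across $\ell$ agrees on $H$ with a pure translation. So $H'$ is a translate of $H$, and propagating along the bond graph places every tile on the one lattice $L$ with a common orientation. (For equilateral triangles, Lemma~\ref{lem:tri}, the same step instead produces the two alternating orientations of the triangular grid; the lattice is still fixed, which is all that matters.)

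Next I would invoke the reduction. Once every tile of every producible assembly of $\mathcal{T}$ lies on $L$, a bit-reading gadget for $\mathcal{T}$ in the sense of Definition~\ref{def:bit-reader} is just a bit-reading gadget on the fixed lattice $L$; the only change from the square aTAM of \cite{Polyominoes} is that each cell of $L$ has six neighbours rather than four. Here I would also dispatch the minor point that the start tile $t$ and all bit-reader paths may be taken to lie on the same $L$ as the bit writers $\alpha_0,\alpha_1$: this is automatic when the gadget occurs inside a Turing-machine simulation (all of whose tiles are bonded, hence on a common lattice by the rigidity fact), and the general placement allowed by Definition~\ref{def:bit-reader} reduces to this case because a path tile can only obstruct, and only be obstructed by, a bit-writer tile lying on its own lattice.

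With the problem now living on a fixed lattice, I would run the argument of \cite{Polyominoes} essentially verbatim. Assuming a bit-reading gadget exists: when reading a $0$, some path $P$ grows from $t$, stays strictly above the $x$-axis, and places a tile of $T_0$ as its first cell with $x<0$; since $T_0\cap T_1=\emptyset$ we have $T_0\subseteq T\setminus T_1$, so when $\alpha_1$ is present instead, $P$ must fail, which forces $\alpha_1$ to occupy a cell of $P$ inside the strip $0\le x\le w$. Symmetrically, when reading a $1$ there is a winning path $Q$ and $\alpha_0$ must occupy a cell of $Q$; moreover, since $P$ still succeeds in the presence of $\alpha_0$, the blocked cell of $Q$ lies strictly beyond the last cell on which $P$ and $Q$ agree, and likewise for $\alpha_1$ and $P$. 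The \cite{Polyominoes} analysis then derives a contradiction, the crux being that on a lattice a blocking tile fully occupies a cell and leaves no residual gap: the obstruction by which a bit writer distinguishes one path value cannot avoid affecting the other path as well, so the two bit values cannot be read unambiguously. This is exactly the geometric freedom that the $n\ge 7$ constructions exploit -- a blocked path separated from its blocking assembly by a narrow gap through which the other path threads -- and which is unavailable to hexagons. Since every step of that analysis uses only the lattice structure and not squareness, it transfers to $L$ and yields the lemma; the pentagon case is genuinely different and is handled separately.

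The main obstacle, and the only genuinely new ingredient relative to \cite{Polyominoes}, is the rigidity step: one must exclude \emph{every} offset or rotated hexagon placement, since a single off-lattice tile would reintroduce precisely the continuous degree of freedom the impossibility is meant to forbid. For hexagons this is actually the cleanest of the three tessellating cases (one orientation, no offsets), so the bulk of the work is bookkeeping; the remaining subtlety -- checking that the bit-reader's start tile and its paths share the bit writers' lattice -- I would handle with the common-lattice argument indicated above. Everything after the rigidity step is inherited from \cite{Polyominoes}.
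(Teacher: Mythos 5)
Your proposal matches the paper's approach: the paper likewise reduces hexagons (and equilateral triangles) to the square case of~\cite{Polyominoes} by observing that these shapes tessellate and so every producible assembly is confined to a single fixed grid, and then re-runs the clockwise-outermost-path trapping argument from~\cite{Polyominoes} on that grid. You make the rigidity step an explicit induction along the bond graph, which the paper merely asserts as evident from tessellation, but the decomposition into (rigidity) + (lattice impossibility \`a la~\cite{Polyominoes}) is the same, and your deferral of the core trapping argument to~\cite{Polyominoes} mirrors the paper, whose own proof is ``nearly identical to that for squares.''
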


\begin{proof}
We prove Lemmas~\ref{lem:tri} and \ref{lem:hex} by contradiction.  Also, since each will use exactly the same arguments, we will prove both simultaneously and note the single location in the proof where the shapes of the tiles is relevant.  Therefore, assume that there exists a single-shape system $\mathcal{T} = (T,\sigma,1)$ such that $\mathcal{T}$ has a bit-reading gadget.  (Without loss of generality, assume that the bit-reading gadget reads from right to left and has the same orientation as in Definition~\ref{def:bit-reader}.)  Let $(t_x,t_y)$ be the coordinate of the tile $t$ from which the bit-reading paths originate (recall that it is the same coordinate regardless of whether or not a $0$ or a $1$ is to be read from $\alpha_0$ or $\alpha_1$, respectively).  By Definition~\ref{def:bit-reader}, it must be the case that if $\alpha_0$ is the only portion of $\alpha$ in the first quadrant to the left of $t$, then at least one path can grow from $t$ to eventually place a tile from $T_0$ at $x=0$ (without placing a tile below $y=0$ or to the right of $(t_x-1)$).  We will define the set $P_0$ as the set of all such paths which can possibly grow.  Analogously, we will define the set of paths, $P_1$, as those which can grow in the presence of $\alpha_1$ and place a tile of a type in $T_1$ at $x=0$.  Note that by Definition~\ref{def:bit-reader}, neither $P_0$ nor $P_1$ can be empty.

Since all paths in $P_0$ and $P_1$ begin growth from $t$ at $(t_x,t_y)$ and must always be to the left of $t$, at least the first tile of each must be placed in location $(t_x-1,y)$.  We now consider a system where $t$ is placed at $(t_x,t_y)$ and is the only tile in the plane (i.e. neither $\alpha_0$ nor $\alpha_1$ exist to potentially block paths), and will inspect all paths in $P_0$ and $P_1$ in parallel.  If all paths follow exactly the same sequence of locations (i.e. they overlap completely) all the way to the first location where they place a tile at $x = 0$, we will select one that places a tile from $T_0$ as its first at $x=0$ and call this path $p_0$, and one which places a tile from $T_1$ as its first at $x = 0$ and call it $p_1$.  This situation will then be handled in Case (1) below.  In the case where all paths do not occupy the exact same locations, then there must be one or more locations where paths branch.  Since all paths begin from the same location, we move along them from $t$ in parallel, one tile at a time, until the first location where some path, or subset of paths, diverge.  At this point, we continue following only the path(s) which take the clockwise-most branch.  We continue in this manner, taking only clockwise-most branches and discarding other paths, until reaching the location of the first tile at $x = 0$.  (Figures~\ref{fig:tri-paths} and \ref{fig:hex-paths} show examples of this process.)  We now check to see which type(s) of tiles can be placed there, based on the path(s) which we are still following.  We again note that by Definition~\ref{def:bit-reader}, some path must make it this far, and must place a tile of a type either in $T_0$ or $T_1$ there.  If there is more than one path remaining, since they have all followed exactly the same sequence of locations, we randomly select one and call it $p'$.  If there is only one, call it $p'$.  Without loss of generality, assume that $p'$ can place a tile from $T_0$ at that location.  This puts us in Case (2) below.

\begin{figure}[htp]
\centering
  \subfloat[][Example sets $P_0$ and $P_1$, with $p'$ traced with a red line.  Red triangles represent branching points of paths, gold triangles represent overlapping points of different branches.]{%
        \label{fig:tri-paths}%
        \makebox[2.2in][c]{ \includegraphics[width=2.1in]{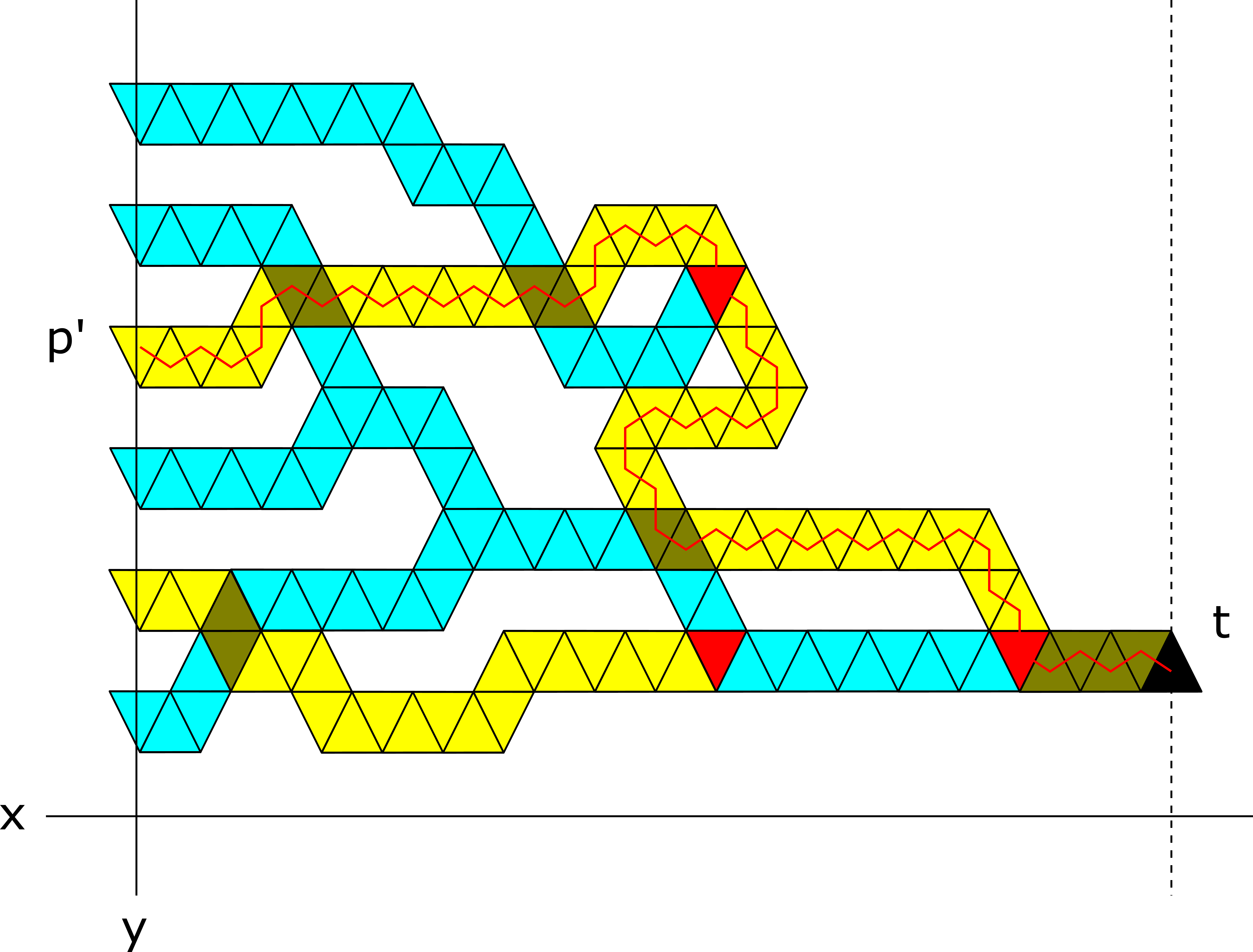}}
        }%
        \quad
  \subfloat[][An example of the growth of $p'$ (traced with a red line) blocked by $\alpha_1$.  By first letting as much of $p'$ grow as possible, it is guaranteed that all other paths must be blocked from reaching $x=0$.]{%
        \label{fig:tri-paths-blocked}%
        \makebox[2.2in][c]{\includegraphics[width=2.1in]{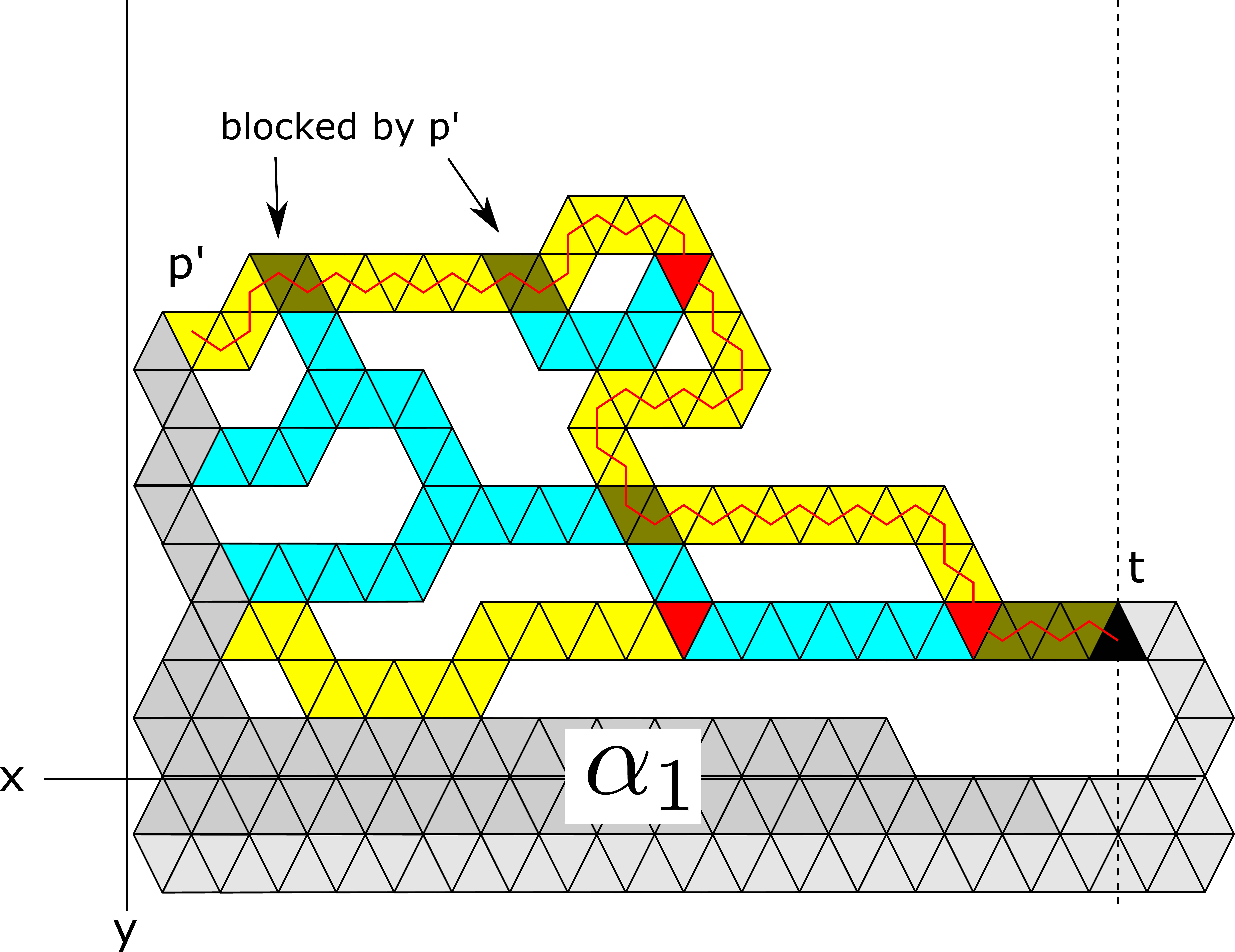}}
        }%
  \caption{Failed bit-readers with equilateral triangles.}
  \label{fig:failed-triangles}
\end{figure}

\begin{figure}[htp]
\centering
  \subfloat[][Example sets $P_0$ and $P_1$, with $p'$ traced with a red line.  Red hexagons represent branching points of paths, gold hexagons represent overlapping points of different branches.]{%
        \label{fig:hex-paths}%
        \makebox[2.2in][c]{ \includegraphics[width=2.1in]{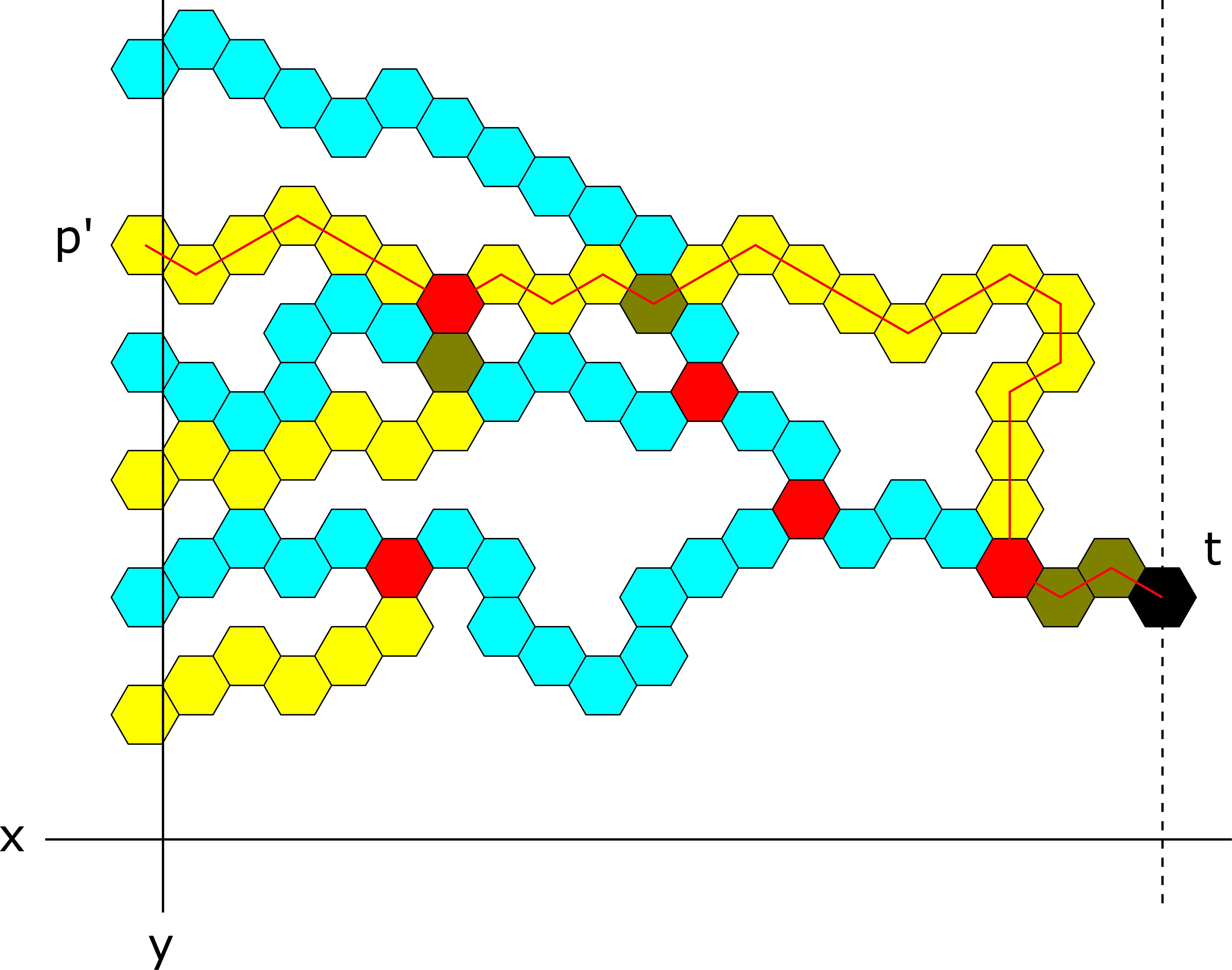}}
        }%
        \quad
  \subfloat[][An example of the growth of $p'$ (traced with a red line) blocked by $\alpha_1$.  By first letting as much of $p'$ grow as possible, it is guaranteed that all other paths must be blocked from reaching $x=0$.]{%
        \label{fig:hex-paths-blocked}%
        \makebox[2.2in][c]{\includegraphics[width=2.1in]{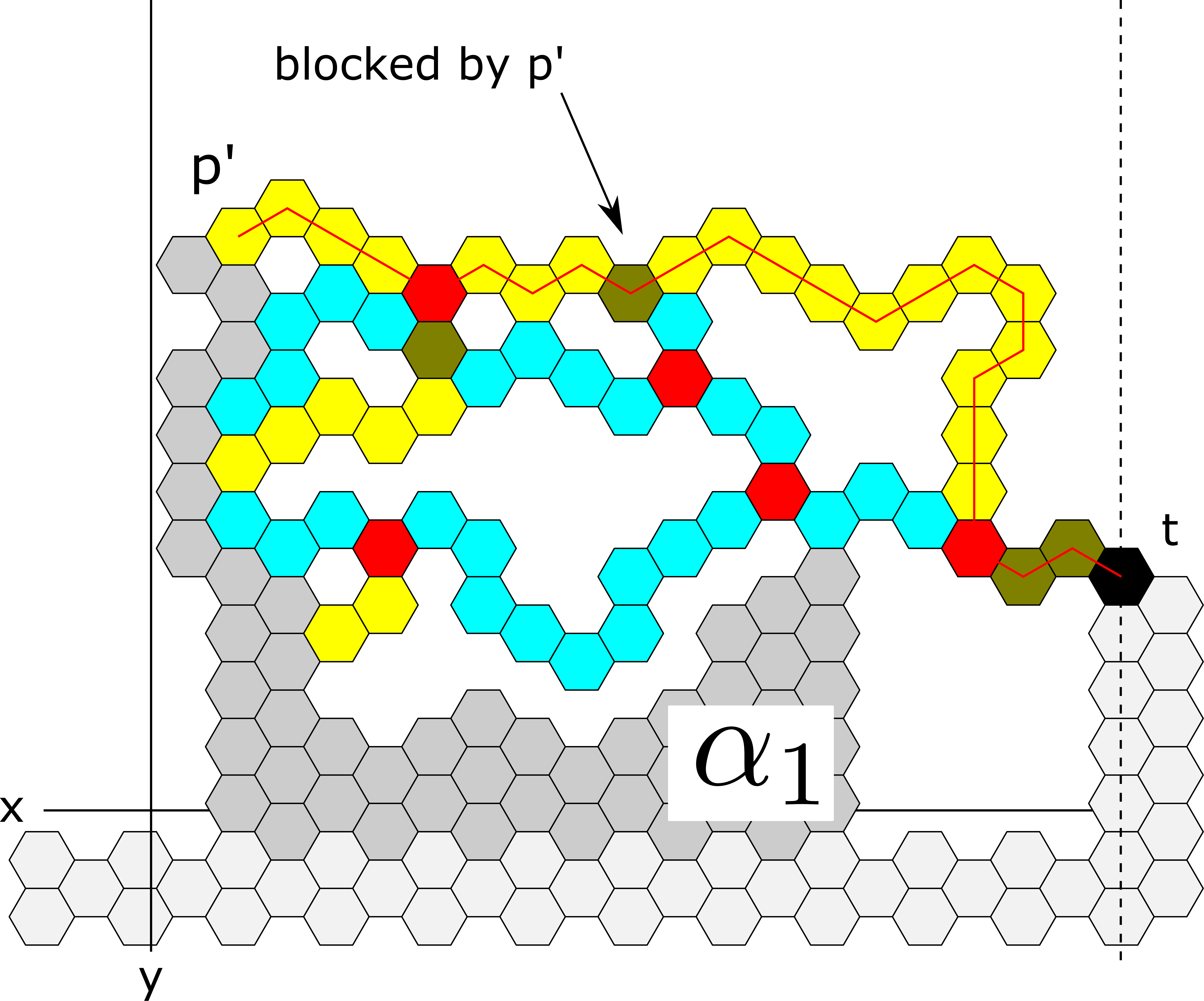}}
        }%
  \caption{Failed bit-readers with regular hexagons.}
  \label{fig:failed-hexagons}
\end{figure}

Case (1)  Paths $p_0$ and $p_1$ occupy the exact same locations through all tile positions and the placement of their first tiles at $x = 0$.  Also, there are no other paths which can grow from $t$, so, since by Definition~\ref{def:bit-reader} some path must be able to complete growth in the presence of $\alpha_0$, either must be able to.  Therefore, we place $\alpha_0$ appropriately and select an assembly sequence in which $p_1$ grows, placing a tile from $T_1$ as its first at $x = 0$.  This is a contradiction, and thus Case (1) cannot be true.

Case (2)  We now consider the scenario where $\alpha_1$ has been placed as the bit-writer according to Definition~\ref{def:bit-reader}, and with $t$ at $(t_x,t_y)$.  Note that path $p'$ must now always, in any valid assembly sequence, be prevented from growing to $x=0$ since it places a tile from $T_0$ at $x=0$, while some path from $T_1$ must always succeed.  We use the geometry of the paths of $T_1$ and path $p'$ to analyze possible assembly sequences.

We create a (valid) assembly sequence which attempts to first grow only $p'$ from $t$ (i.e. it places no tiles from any other branch).  If $p'$ reaches $x=0$, then this is not a valid bit-reader and thus a contradiction.  Therefore, $p'$ must not be able to reach $x=0$, and since the only way to stop it is for some location along $p'$ to be already occupied by a tile, then some tile of $\alpha_1$ must occupy such a location.  This means that we can extend our assembly sequence to include the placement of every tile along $p'$ up to the first tile of $p'$ occupied by $\alpha_1$, and note that by the definition of the regular grid of equilateral triangle tiles, or of regular hexagon tiles, some tile of $p'$ must now have a side adjacent to some tile of $\alpha_1$.  At this point, we can allow any paths from $P_1$ to attempt to grow.  However, by our choice of $p'$ as the ``outermost'' path due to always taking the clockwise-most branches, any path in $P_1$ (and also any other path in $P_0$ for that matter) must be surrounded in the plane by $p'$, $\alpha_1$, and the lines $y=0$ and $x=t_x$ (which they are not allowed to grow beyond), and thus cannot be connected and extend beyond that boundary. (Examples can be seen in Figures~\ref{fig:tri-paths-blocked} and \ref{fig:hex-paths-blocked}.)  Therefore, no path from $P_1$ can grow to a location where $x=0$ without colliding with a previously placed tile or violating the constraints of Definition~\ref{def:bit-reader}.  (This situation is analogous to a prematurely aborted computation which terminates in the middle of computational step.)  This is a contradiction that this is a bit-reader, and thus none must exist.
\qed
\end{proof}

\subsection{Regular pentagons}

Because regular pentagons don't tessellate the plane, the proof that they can't form bit-reading gadgets is slightly different than for equilateral triangles, squares, and regular hexagons.  However, the fact that they can only bind in two relative rotations and the ratio of their side lengths to perimeters ensure that they are still unable to form bit-reading gadgets due to the fact that it is still impossible for one path of regular pentagons to be blocked from continued growth without trapping all other paths on one side.  This means that the ``outermost'' path, along with any part of the bit-writer which blocks its full growth, can always prevent any inner paths from sufficient growth.

\begin{lemma}\label{lem:pent}
There exists no temperature 1 polygonal tile assembly system $\mathcal{T} = (T,\sigma,1)$ where for all $t \in T$, $t$ is a regular pentagon, and a bit-reading gadget exists for $\mathcal{T}$.
\end{lemma}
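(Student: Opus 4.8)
The plan is to argue by contradiction, reusing almost verbatim the template of the proofs of Lemmas~\ref{lem:tri} and~\ref{lem:hex}, and replacing only the one step in which the tessellation of the plane by the tile shape was invoked. So suppose $\mathcal{T} = (T,\sigma,1)$ is a single-shape system all of whose tiles are regular pentagons and for which a bit-reading gadget exists; without loss of generality the gadget reads from right to left as in Definition~\ref{def:bit-reader}, the reading paths originate from a tile $t$ placed at $(t_x,t_y)$, and (after rescaling) every pentagon has apothem $\frac{1}{2}$.

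First I would carry over the combinatorial set-up unchanged. Let $P_0$ (resp.\ $P_1$) be the set of all paths that can grow from $t$ in the presence of $\alpha_0$ (resp.\ $\alpha_1$) and place a tile of $T_0$ (resp.\ $T_1$) as the first tile with $x$-coordinate $<0$; by Definition~\ref{def:bit-reader} both are nonempty. Advancing all paths of $P_0 \cup P_1$ in parallel from $t$ and, at each divergence, keeping only the clockwise-most continuation, produces a distinguished path $p'$ that reaches $x=0$, together with the usual dichotomy. Either all surviving paths coincide through the placement of their first tile at $x=0$ (Case~1): then place $\alpha_0$ and run an assembly sequence realizing the $T_1$-labelled first tile at $x=0$, contradicting Definition~\ref{def:bit-reader}. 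Or there is genuine branching, and without loss of generality $p'$ places a tile of $T_0$ at $x=0$ (Case~2).

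Case~2 is where the new work lies. Place $\alpha_1$ as the bit-writer prescribed by Definition~\ref{def:bit-reader}; then in every valid assembly sequence $p'$ must be stopped before reaching $x=0$, so I would grow $p'$ as far as possible until its next tile $q$ cannot be placed because $q$ would overlap the interior of a tile $r$ of $\alpha_1$. In the tessellating cases this immediately gave a connected wall, because $q$ would have shared a full edge with $r$; for pentagons the two assemblies $p'$ and $\alpha_1$ need not sit on a common lattice, so the blocked position $q$ may overlap $r$ only slightly and a gap can remain between the grown portion of $p'$ and $\alpha_1$. The crux of the proof, and the step I expect to be the main obstacle, is a geometric lemma asserting that no regular pentagon of apothem $\frac{1}{2}$ can be placed so as to have interior disjoint from every tile of (the grown portion of) $p' \cup \alpha_1$ and yet ``bridge'' that gap, so that $p' \cup \alpha_1$, together with the floor $y=0$ and the vertical line $x=t_x$, still bounds a region from which no path can escape to $x=0$. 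To prove this lemma I would use two facts about regular pentagons that fail for $n \ge 7$: first, that any pentagon bound into a connected assembly occurs in exactly one of two orientations (the ``standard'' and ``negated'' orientations of Section~\ref{sec:roots-of-unity}, since $n$ is odd), so that the pentagons a path may legally place adjacent to the last placed tile $q_{-1}$ of $p'$ form a discrete set of four positions---one of which is $q$ (which overlaps $r$) and each of the others lying strictly inside the bounded region or overlapping $p'$; and second, the large side-length-to-perimeter ratio of a pentagon, which makes any overlap of $q$ with $r$ (when it occurs at all) too large for the residual clearance to admit another pentagon. Since $\alpha_0$ and $\alpha_1$ are required by Definition~\ref{def:bit-reader} to reach down to $y \le 0$, the bounded region does not meet the halfplane $x < 0$; hence, granting the lemma, every path of $P_1$ is confined to that region and cannot reach $x<0$, the desired contradiction.

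The geometric barrier lemma of the previous paragraph is exactly the point at which the whole argument must fail for regular $n$-gons with $n \ge 7$: for those shapes the gaps left between non-abutting tiles are wide enough for a second path of tiles to thread through, which is precisely the mechanism exploited by Theorem~\ref{thm:regular-CU}; for pentagons, even though they do not tessellate, the quantitative side-length estimate above shows that no such passage exists, which together with Lemmas~\ref{lem:tri} and~\ref{lem:hex} would complete the proof of Theorem~\ref{thm:cant-bit-read}. The remaining bookkeeping---checking that the assembly sequence that grows $p'$ is valid, and that the enclosed region is correctly described---is routine and mirrors the triangle/hexagon argument.
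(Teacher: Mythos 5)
Your plan follows the paper's proof closely: the combinatorial scaffolding (sets $P_0$, $P_1$, the clockwise-outermost path $p'$, the split into Case~1 and Case~2) is taken verbatim from the triangle/hexagon argument, and you correctly identify the new content as a geometric lemma asserting that the gap left between the stalled path $p'$ and the blocker cannot be bridged by a pentagon. The two ingredients you cite --- that pentagons in a connected assembly occur in exactly two orientations, and a quantitative side-length/clearance estimate --- are exactly the ones the paper invokes.

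The gap is in how you propose to prove the geometric lemma. You reduce it to the claim that ``the pentagons a path may legally place adjacent to the last placed tile $q_{-1}$ of $p'$ form a discrete set of four positions,'' each either overlapping the blocker or lying inside the bounded region. This is not the right object to analyze and would not close the argument: the escaping path $p \in P_1$ diverges from $p'$ at an earlier branch point and places its own tiles inside the bounded region, so the tile it would place in the gap is adjacent to an earlier tile of $p$, not to $q_{-1}$; and since $p$, $p'$, and $\alpha_1$ need not share a common lattice, the positions of that threading tile form a continuum, not a finite set. The paper instead parameterizes the possible threading tiles in the gap by their orientation (same as or opposite to the blocked location $p'_b$) and by whether their offset is above or below $p'_b$, then rules out each of the four resulting cases by showing that to avoid colliding with the last placed tile $p'_{end}$ of $p'$ a threading tile (or its successor) must overlap the blocker at $p'_\alpha$, using the observation that the narrowest clearance admitted between two adjacent regular pentagons is never less than one side length. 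So your intuition about the side-length bound is right, but the case analysis must run over placements within the gap relative to $p'_b$ and the blocker rather than over neighbours of $q_{-1}$.
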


\begin{proof}
The proof of Lemma~\ref{lem:pent} is nearly identical to that for Lemmas~\ref{lem:tri} and \ref{lem:hex}, with the only slight change being due to the fact that regular pentagons aren't constrained to a single fixed grid.  First, because of this we will slightly adapt Definition~\ref{def:bit-reader} so that rather than requiring tiles to be at specific discrete coordinates, they instead are constrained by lines in $\mathbb{R}^2$.  For instance, we no longer require the bit-reader to grow a path to $x$-coordinate $0$, but instead just beyond a set vertical line $x=r$ for some $r \in \mathbb{R}$.  (without loss of generality we'll assume $x=r=0$ for that constraint.)  This change is merely a technicality and does not affect the proof, and therefore, we will use the previous proof up to the point where Case (2) makes the argument that the regular grid of tiles ensures that the last tile which can be placed along $p'$ must have an edge adjacent to a tile of $\alpha_1$.  Due to the lack of such a grid, we will now only be able to guarantee that some portion of the next position of $p'$, i.e. the location where $\alpha_1$ first prevents the addition of another tile (which we will now refer to as location $p'_b$), is occupied by a tile of $\alpha_1$ (whose location we will now refer to as $p'_{\alpha}$.  Referring to the location of the last tile which can be placed on $p'$ as $p'_{end}$, by the fact that $p'$ would have been a connected path which included $p'_b$, and that the tile at $p_{\alpha}$ prevents its placement, the location $p'_b$ must consist of the area of a tile oriented so that it has an edge adjacent to $p'_{end}$.  Also, although the tiles at $p'_{\alpha}$ and $p_{end}$ need not share an adjacent edge and there may in fact be a gap between them, $p'_{\alpha}$ must overlap with $p'_b$.  (See Figure~\ref{fig:pent-paths-blocked} for an example.)

\begin{figure}[htp]
\centering
  \subfloat[][Example sets $P_0$ and $P_1$, with $p'$ traced with a red line.  Red pentagons represent branching points of paths, gold hexagons represent overlapping points of different branches.]{%
        \label{fig:pent-paths}%
        \makebox[2.2in][c]{ \includegraphics[width=2.1in]{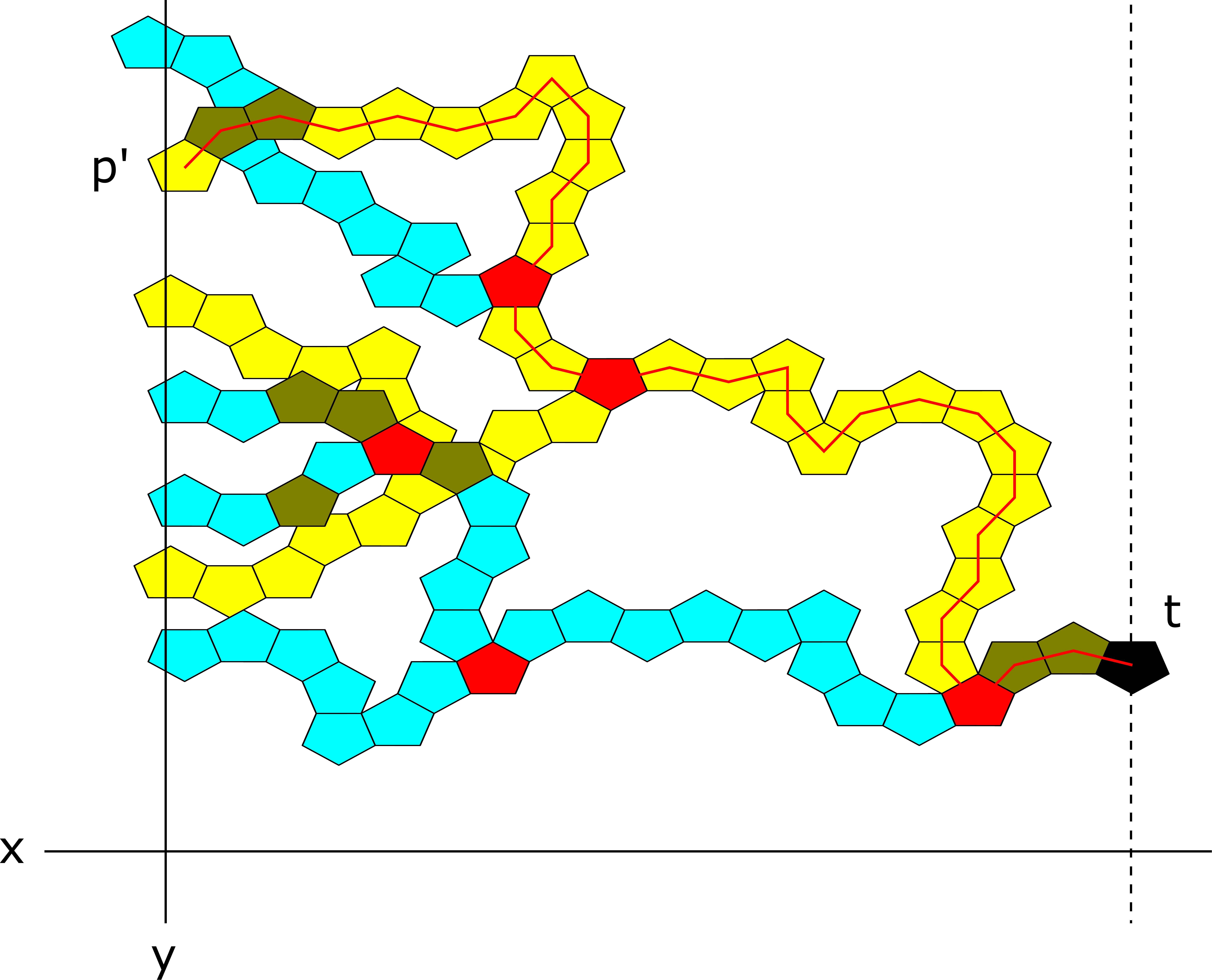}}
        }%
        \quad
  \subfloat[][An example of the growth of $p'$ (traced with a red line) blocked by $\alpha_1$.  By first letting as much of $p'$ grow as possible, it is guaranteed that all other paths must be blocked from reaching $x \leq 0$.  The location outlined in dashed red represents the location of the first tile of $p'$ which is blocked by $\alpha_1$.]{%
        \label{fig:pent-paths-blocked}%
        \makebox[2.2in][c]{\includegraphics[width=2.1in]{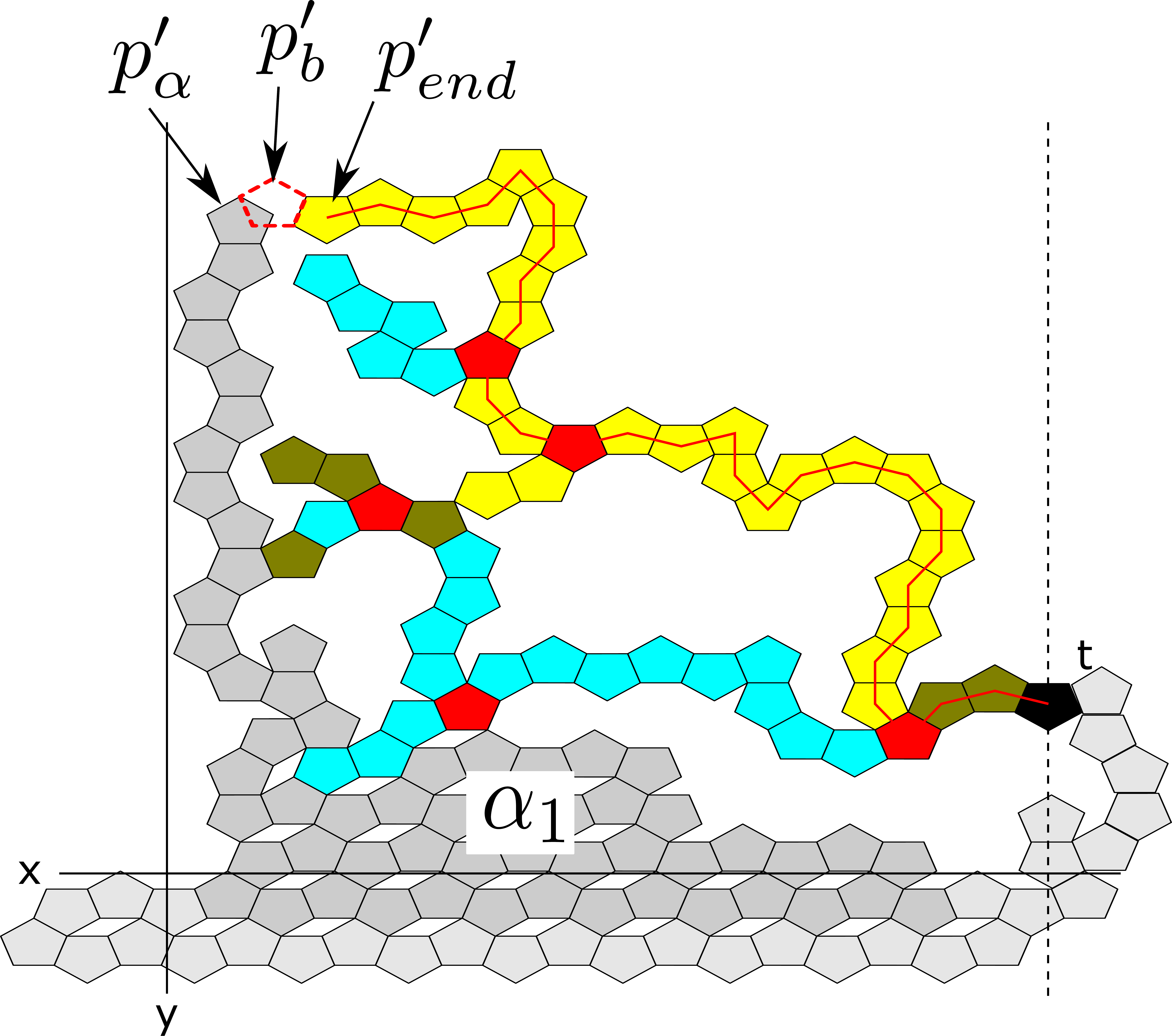}}
        }%
  \caption{Failed bit-readers with regular pentagons.}
  \label{fig:failed-pentagons}
\end{figure}

At this point, we continue the direction of the previous proof and allow any paths from $P_1$ to attempt to grow.  However, by our choice of $p'$ as the ``outermost'' path due to always taking the clockwise-most branches, any path in $P_1$ must be surrounded in the plane by $p'$, $\alpha_1$, and the lines $y=0$ and $x=t_x$ (which they are not allowed to grow beyond), with the only discontinuity being the possible gap consisting of the portion of $p'_b$ which is not occupied by the tile at $p'_{\alpha}$.  We prove that this gap must be insufficient to allow a path $p$ from $P_1$ to grow through using a simple case analysis.  A key feature of regular pentagonal tiles is the fact that although their relative offsets are not fixed on a grid, their relative rotations are constrained to a total of only two orientations while allowing them to be connected to the same assembly.

\begin{figure}[htp]
\centering
  \subfloat[][Tile 2 has the same orientation as $p'_b$ and attaches at an offset slightly below]{%
        \label{fig:pent-bad-same1}%
        \makebox[2.0in][c]{ \includegraphics[width=1.3in]{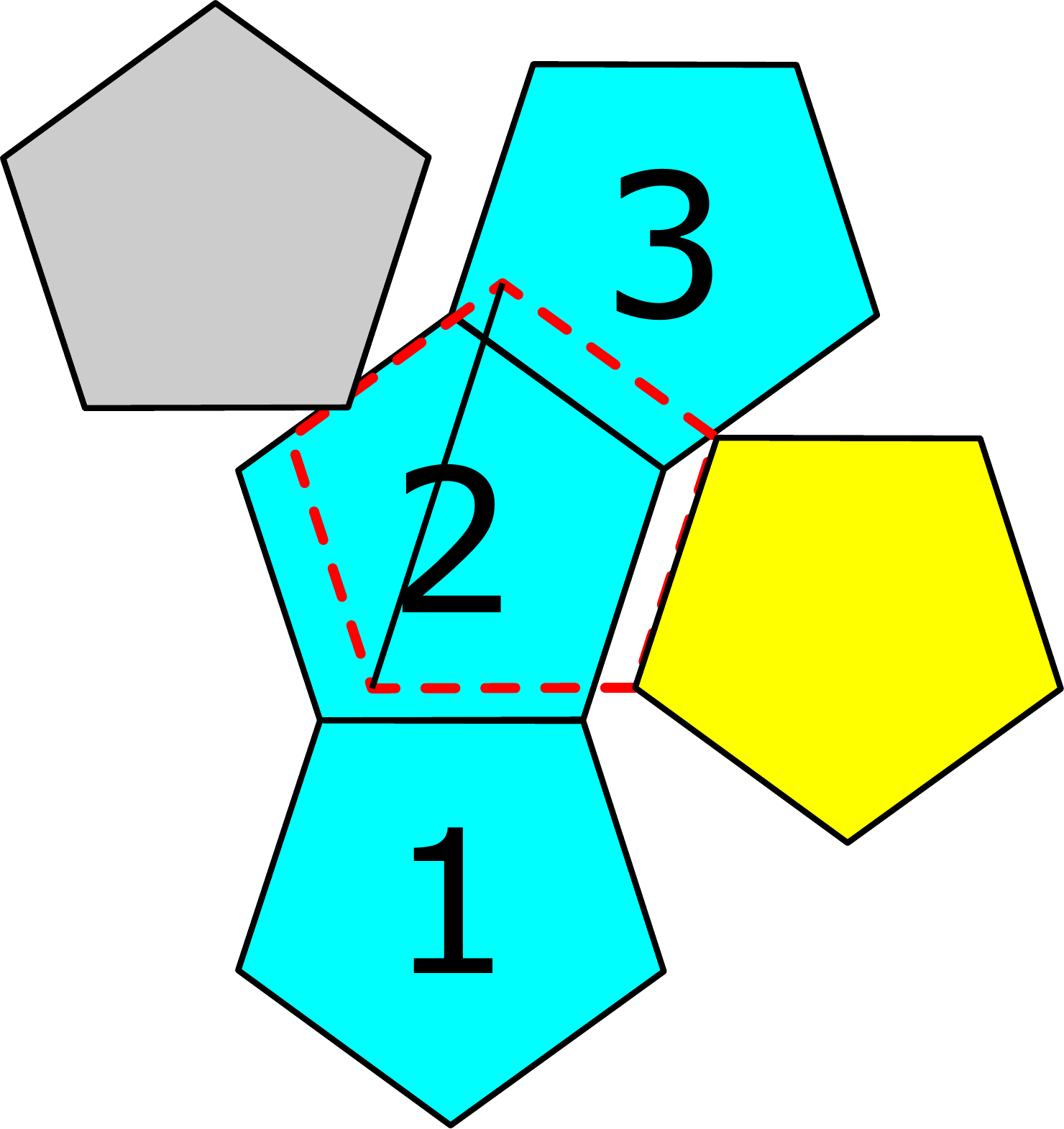}}
        }%
        \quad
  \subfloat[][Tile 2 has the same orientation as $p'_b$ and attaches at an offset slightly above]{%
        \label{fig:pent-bad-same2}%
        \makebox[2.0in][c]{\includegraphics[width=1.3in]{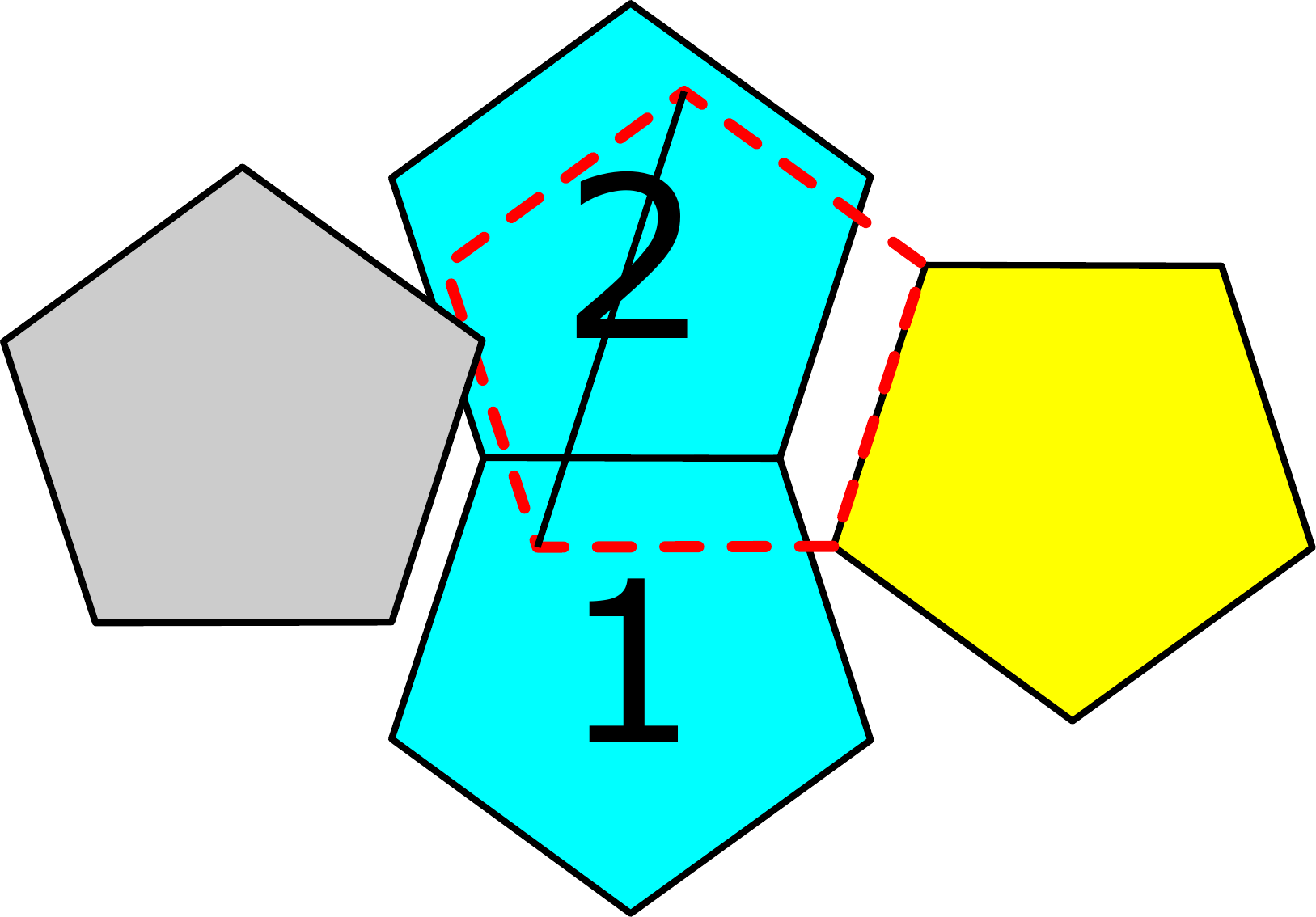}}
        }%
        \quad
  \subfloat[][Tile 2 has the opposite orientation as $p'_b$ and attaches with its nearest vertex below the bottom corner of $p'_{end}$ and $p'_b$]{%
        \label{fig:pent-bad-opp1}%
        \makebox[2.0in][c]{ \includegraphics[width=1.3in]{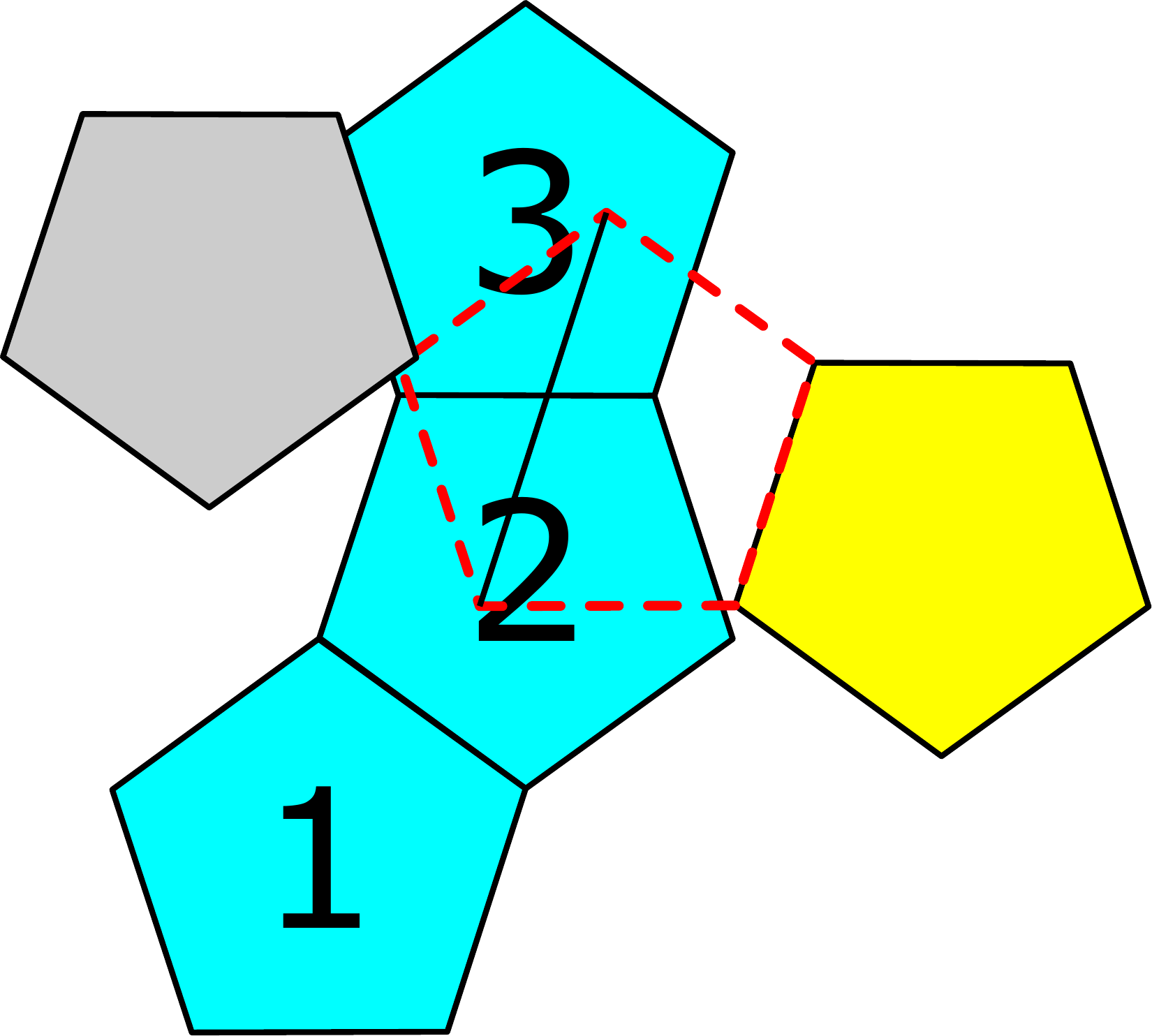}}
        }%
        \quad
  \subfloat[][Tile 2 has the opposite orientation as $p'_b$ and attaches with its nearest vertex above the bottom corner of $p'_{end}$ and $p'_b$]{%
        \label{fig:pent-bad-opp2}%
        \makebox[2.0in][c]{\includegraphics[width=1.3in]{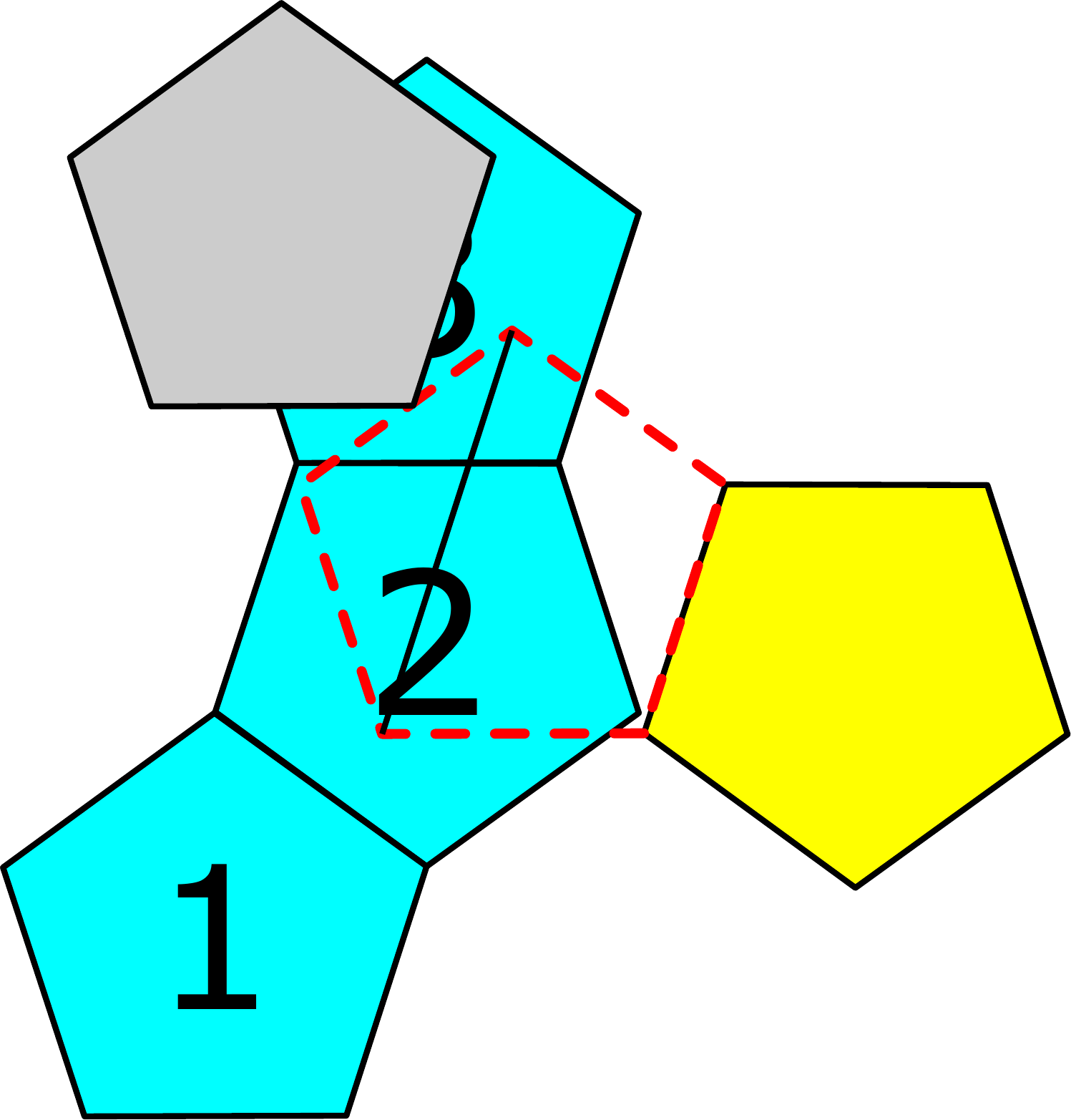}}
        }%
  \caption{A case analysis of why one path, $p'$, of regular pentagonal tiles cannot be blocked while allowing another, $p$, to grow through a gap.  Let the yellow tile be at $p'_{end}$, the location of the last placed tile of path $p'$, the grey represent some blocking tile of $\alpha_1$, at $p'_{\alpha}$ (which may be in either possible orientation), and the red dashed location be $p'_b$, the first tile of $p'$ prevented from being placed.  The blue tiles represent a portion of path $p$ which attempts to grow from below to above $p'$, with the tiles labeled in order of their placements.}
  \label{fig:pent-failed-block}
\end{figure}

To prove the gap is insufficient for $p$, we perform a case analysis as outlined in Figure~\ref{fig:pent-failed-block}. We first note that the blocker may never occupy space inside the black diagonal shown across a portion of the red dashed box, since that would leave a maximum distance of one side length for the gap throughout a portion of the gap, and for any pair of regular pentagonal tiles, the narrowest location is never less than that, occurring at the boundary of two adjacent tiles and immediately increasing on both sides of that.  We now analyze the various cases.

In Figure~\ref{fig:pent-bad-same1}, having the tile at position 2 at the same orientation but an offset below $p'_b$ requires that that tile fill the bottom edges of the location $p'_b$, leaving the blocker only the top left edge through which to block.  However, in order to allow the tile at location 3 to bind to the top right side, the tile at location 2 must be offset up and left in order not to collide with the yellow tile at $p'_{end}$ (since the width of the pair of adjacent tiles increases on both sides of their adjacent edge), forcing it (or a portion of tile 3) to overlap with the blocker. In order for the tile at location 3 to instead bind to the top left side of the tile at location 2, it would have to overlap with the blocker.  This means that $p$ would be blocked and the bit-reader fails, so this case must not be true.

In Figure~\ref{fig:pent-bad-same2}, having the tile at position 2 at the same orientation but an offset above $p'_b$ requires that that tile fill the entire right side of $p_b'$ in order to avoid the yellow tile at $p'_{\alpha}$, thus making it collide with the blocker and the bit-reader again fail.

In Figure~\ref{fig:pent-bad-opp1}, having the tile at position 2 at the opposite orientation as $p'_b$ but with its southeast corner below the southwest corner of the tile at $p'_{end}$ forces the tiles at positions 2 and 3 to cover all of the left side of $p'_b$ and thus collide with the blocker, once again making the bit-reader fail.

In Figure~\ref{fig:pent-bad-opp2}, having the tile at position 2 at the opposite orientation as $p'_b$ with with its southwestern corner above the southwest corner of the tile at $p'_{end}$ again requires that the tiles at positions 2 and 3 to cover the entire left side of $p'_b$, colliding with the blocker and making the bit-reader fail.

The cases discussed, along with all others which are the same up to rotation, prevent the growth of path $p$.  Therefore, no path from $P_1$ can grow to a location past the line $x\leq0$ without colliding with a previously placed tile or violating the constraints of Definition~\ref{def:bit-reader}.  (This situation is analogous to a prematurely aborted computation which terminates in the middle of computational step.)  This is a contradiction that this is a bit-reader, and thus none must exist.

\qed
\end{proof}

The combination of Lemmas~\ref{lem:tri}, \ref{lem:hex}, \ref{lem:pent}, and Theorem 6.1 of \cite{Polyominoes} suffice to prove Theorem~\ref{thm:cant-bit-read-append}.
} %
\fi 

\vspace{-15pt}
\section{Bit-reading Gadgets Overview}\label{sec:bit-readers-main}
\vspace{-10pt}

In the cases where tiles consist of regular polygons with $15$ or more sides, we give a general scheme for obtaining bit-reading gadgets for each case. Figure~\ref{fig:15+sides} depicts the bit-reading gadgets for each case. The others are handled explicitly in the technical appendix.  For the top configurations of Figure~\ref{fig:15+sides}, note that since each polygonal tile of these bit-reading gadgets is adjacent to another tile, we need only show that for each top configuration depicted in Figure~\ref{fig:15+sides}, of the two exposed glues, $g_0$ and $g_1$ of the tile $R$, $B$ prevents a tile from binding to $g_0$. In the bottom configurations of Figure~\ref{fig:15+sides}, we not only need to show that $B$ prevents a tile from binding to $g_1$, but we must also show that $B$ does not prevent a tile (the tile centered at $c_2$ in the bottom configurations for Figure~\ref{fig:15+sides}) from binding to the tile that binds to $g_0$. The latter statement ensures that when we use the bit-reading gadgets obtained from these configurations to simulate a Turing machine, in the case that a $0$ is read by attaching a tile to $g_0$, $B$ does not prevent further growth of an assembly.

\begin{figure}[htp]
\centering
\vspace{-20pt}
  \subfloat[][Pentadecagonal tiles]{%
        \label{fig:15+sidesA}%
    		\makebox[.30\textwidth]{
        \includegraphics[width=1.0in]{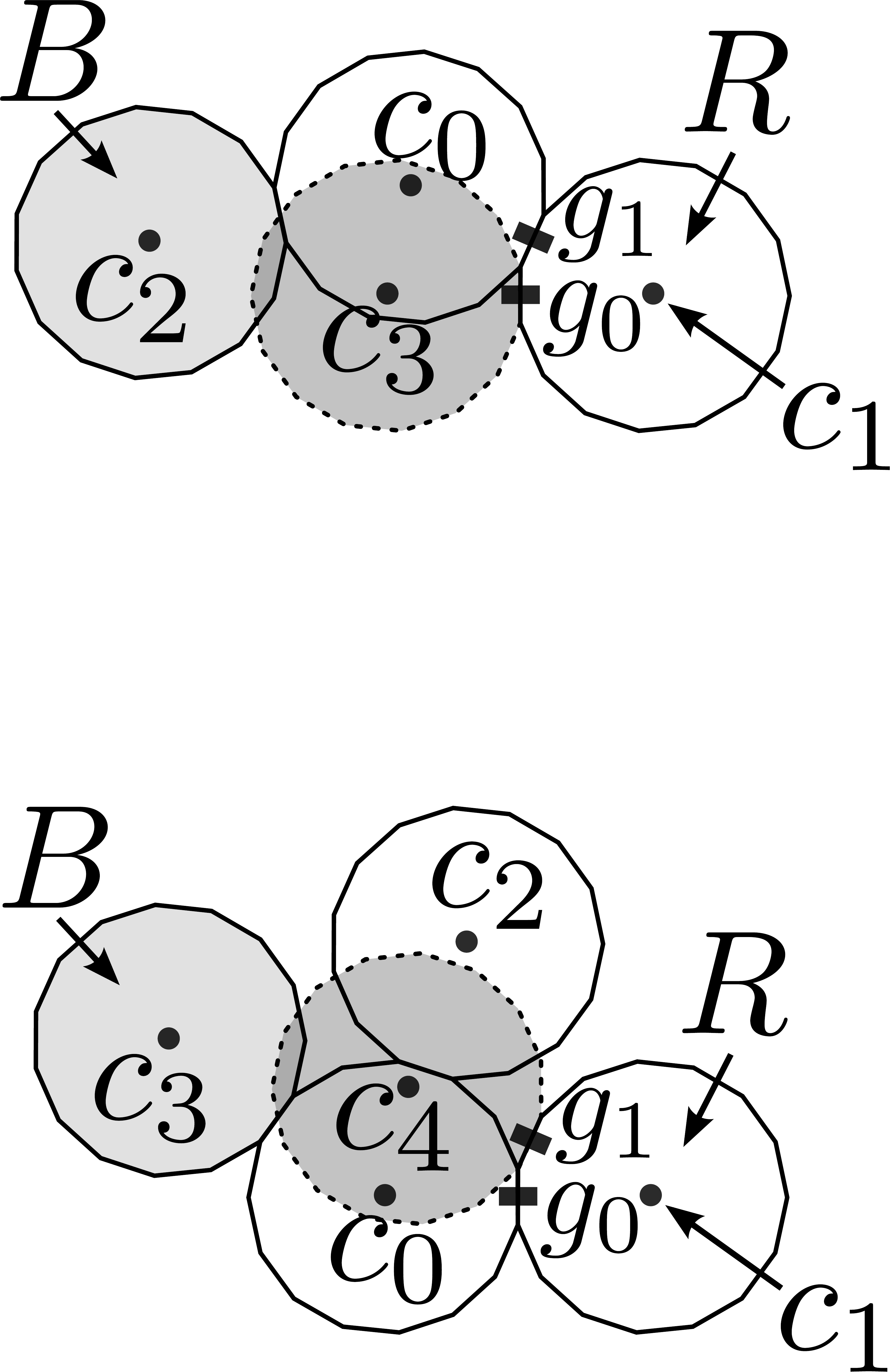}
        }
        }%
        \quad
  \subfloat[][Hexadecagonal tiles]{%
        \label{fig:15+sidesB}%
    		\makebox[.30\textwidth]{
        \includegraphics[width=1.0in]{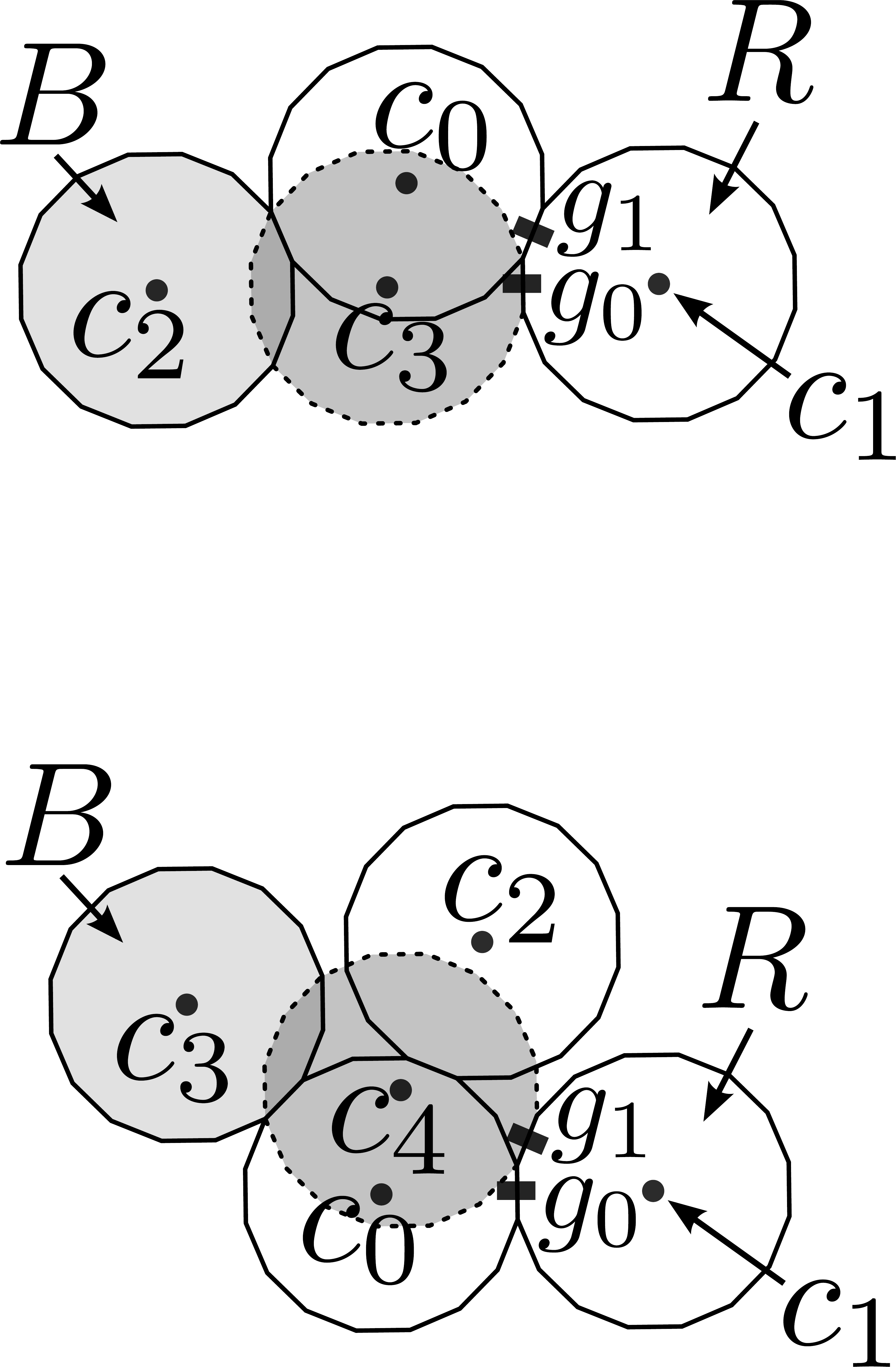}
        }
        }%
       \quad
  \subfloat[][Heptadecagonal tiles]{%
        \label{fig:15+sidesC}%
    		\makebox[.30\textwidth]{
        \includegraphics[width=1.0in]{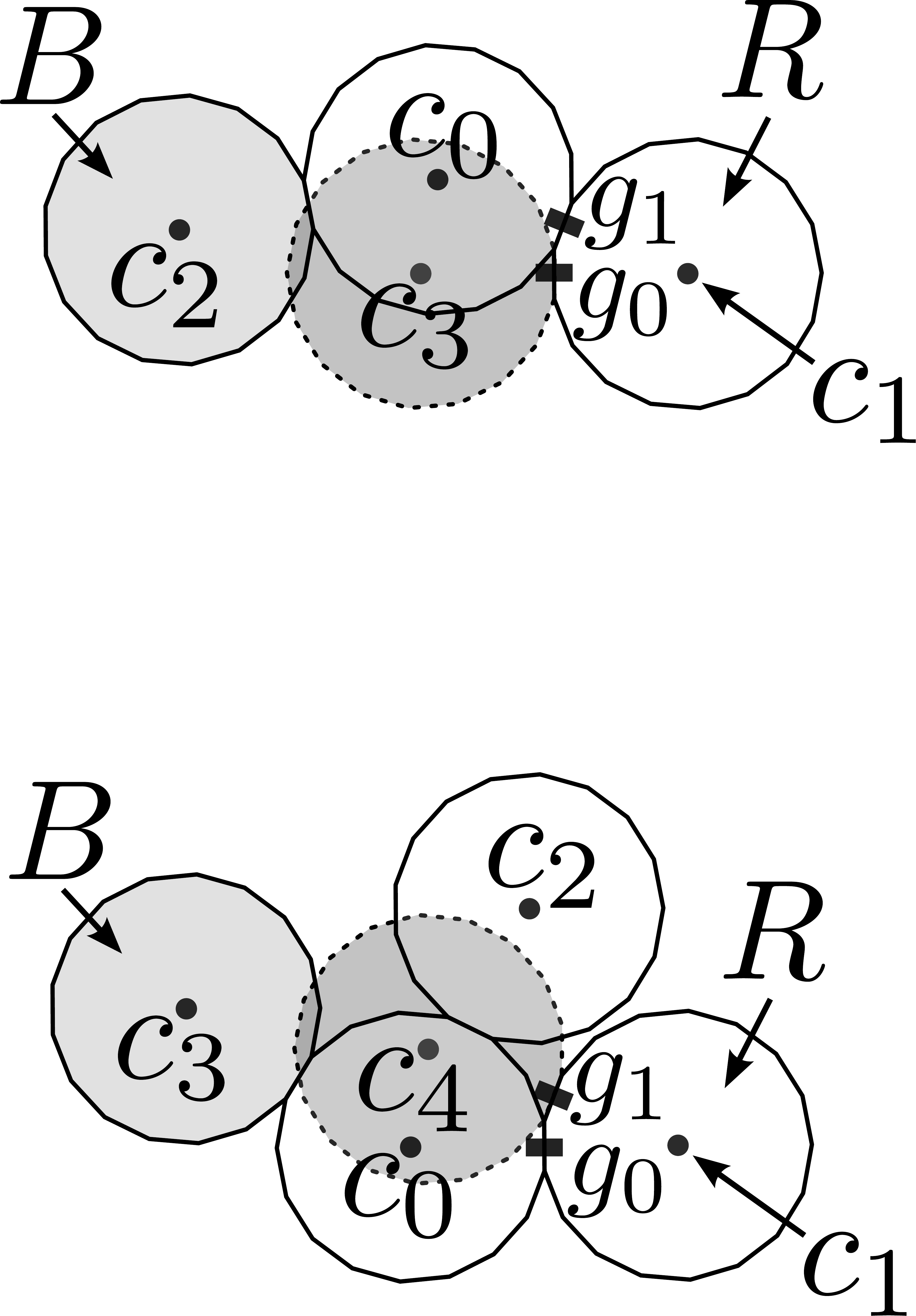}
        }
        }%
  \vspace{-10pt}
  \caption{Bit-reading gadget portions. (top) Reading a $1$ and preventing placement of a tile at $c_3$, (bottom) Reading a $0$ and preventing placement of a tile at $c_4$.
  \vspace{-20pt}}
  \label{fig:15+sides}
\end{figure}

Now, consider a polygon $P_n$ with $n\geq 15$ sides and let $\omega$ be the $n^{th}$ root of unity $e^{\frac{2\pi i}{n}}$. Then, the general scheme for constructing a bit-reading gadget falls into two cases. First, if $n$ is odd (the cases where $n$ is even are similar), relative to a tile with negated orientation (the polygon labeled $R$ in the configurations in Figure~\ref{fig:15+sides}), the two configurations that give rise to the bit-reading gadget are as follows. Let $k$ be such that $n=2k+1$ ($n=2k$ if $n$ is even). Referring to the top configurations of Figure~\ref{fig:15+sides}, to ``write'' a $1$, the configuration is obtained by centering a blocker tile with negated orientation, labeled $B$, at $-\omega^{n-1} + \omega^{k+1}$ (whether $n$ is even or odd) relative to $R$. Then to ``read'' a $1$, $R$ exposes two glues $g_1$ and $g_0$ such that if a tile binds to $g_1$, it will have standard orientation and be centered at $-\omega^{n-1}$ (whether $n$ is even or odd) and if a tile binds to $g_0$, it will have standard orientation and be centered at $-1$. We will show that $B$ will prevent this tile from binding. This gives the configuration depicted in the top figures of Figure~\ref{fig:15+sides}. Now, referring to the bottom configurations of Figure~\ref{fig:15+sides}, to ``write'' a $0$, the configuration is obtained by centering a blocker tile with negated orientation, labeled $B$, at $-1 + \omega^{k-1}$ ($-1 + \omega^{k-2}$ if $n$ is even) relative to $R$.  In this case, we will show that $B$ prevents a tile from binding to $g_1$. In addition, we place a glue on the tile that binds to $g_0$ that allows for another tile to bind to it so that its center is at $c_2 = -1 + \omega^{\lfloor \frac{k-1}{2} \rfloor}$ ($c_2 = -1 + \omega^{\frac{k-2}{2}}$ if $n$ is even) relative to $R$.  This gives the configuration depicted in the bottom figures of Figure~\ref{fig:15+sidesA} and Figure~\ref{fig:15+sidesC}. Moreover, we show that neither $R$ nor $B$ prevent the binding of this tile.

In order to perform the calculations used to show the correctness of these bit-reading gadgets, we consider the cases where $n$ is even and where $n$ is odd. Here we give brief version of the calculations that show that a regular polygon centered at $c_1$ and regular polygon centered at $c_2$ do not overlap when $n\geq 15$ is odd. For more detail and calculations for the case where $n$ is even, see Section~\ref{sec:technical-15+sides}.

Suppose that $n = 2k+1$. We now refer to the bottom configurations of Figure~\ref{fig:15+sidesA}. To show that a polygon centered at $c_1$ and a polygon centered at $c_2$ do not overlap, consider the case where $k$ is odd (the case where $k$ is even is similar). Note that relative to $c_0$, $c_1 = 1$ and $c_2 = \omega^{\frac{k-1}{2}}$. Then the distance $d_n$ from $c_1$ to $c_2$ satisfies the following equation.
{\fontsize{7pt}{1em}\selectfont
$$d_n^2 = \left(1-\cos\left(\frac{\left(k-1\right)\pi}{n}\right)\right)^2 + \sin^2\left(\frac{\left(k-1\right)\pi}{n}\right)$$
}%
Substituting $k = \frac{n-1}{2}$ for $k$ and simplifying, we obtain $d_n^2 = 2+2\sin\left(\frac{3\pi}{2n}\right)$. It is well known that for regular polygons with $n$ sides and apothem $\frac{1}{2}$, the circumradius is given by $\frac{1}{\cos\left( \frac{\pi}{n} \right)}$. Hence, to show that a polygon centered at $c_1$ and a polygon centered at $c_2$ do not overlap, we show that $d_{n}^2 > \frac{1}{\cos^2\left(\frac{\pi}{n}\right)}$ for $n \geq 15$. (See Section~\ref{sec:technical-15+sides}. It then follows that $d_{n} > \frac{1}{\cos\left(\frac{\pi}{n}\right)}$.  Therefore, $d_n$ is greater than twice the circumradius of our polygons. Hence, a polygon centered at $c_1$ and a polygon centered at $c_2$ do not overlap. We then perform similar calculations to show that for $n\geq15$, the configurations described in this above indeed give bit-reading gadgets.

Thus, we have shown that bit-reading gadgets can be formed, along with grids that allow bits to be written and read, using polygonal tiles with $\ge 15$ sides.  Combined with standard tile assembly techniques to simulate Turing machines, this proves that such systems are computationally universal.
\ifabstract
\later{
\section{Bit-reading Gadgets}\label{sec:bit_reading}

In this section, we give configurations that are then used to construct bit-reading gadgets for 1) single shape systems with regular polygonal tiles with $7$ or more sides (See Section~\ref{sec:bit-readers-regular}.), 2) 2-shaped systems with regular polygonal tiles for pairs of distinct polygons with $3$ to $6$ sides (See Section~\ref{sec:bit-readers-2shaped-regular}.), and 3) single shaped systems with equilateral polygonal tiles with $4$, $5$, or $6$ sides (See Section~\ref{sec:bit-readers-equilaterals}.). Finally, in Section~\ref{sec:bit-readers-triangle}, we give a bit-reading gadget for a single shaped system with tiles having the shape of an obtuse isosceles triangle.  All of the configurations presented here will be used to obtain bit-reading gadgets that read bits from right to left. It should be noted that for all of the polygons considered here, configurations that yield left to right  bit-reading gadgets can be obtained by simply reflecting the corresponding right to left configurations. 

\subsection{Single shape systems with regular polygonal tiles}\label{sec:bit-readers-regular}

In the following subsections, we give configurations that will be used to construct bit-reading gadgets for single shape systems with regular polygonal tiles with $7$ or more sides. While the configurations presented here do not technically fit Definition~\ref{def:bit-reader}, in Section~\ref{sec:technicalLemmas} we describe how to turn these configurations into bit-reading gadgets that do conform to that definition. In this section, we are concerned with showing how to use the geometry of polygonal tiles to ensure that our bit-reading gadgets properly read and write bits as described in Definition~\ref{def:bit-reader}. Therefore, combining the results of this section with Section~\ref{sec:technicalLemmas}, we show the following lemma.

\begin{lemma}\label{lem:regular-bit-gadget}
Let $P_n$ be a regular polygon with $n$ sides. Then, for all $n \geq 7$, there exists a single-shaped system $\mathcal{T}_n = (T_n, \sigma_n)$ with shape $P_n$ such that a bit-reading gadget exists for $\mathcal{T}_n$.
\end{lemma}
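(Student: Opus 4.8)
The plan is to reduce the lemma to producing, for each $n\ge 7$, a single finite \emph{configuration} of $P_n$-tiles, and then to invoke the machinery of Section~\ref{sec:technicalLemmas} (together with the grid of Lemma~\ref{lem:grids}) to upgrade that configuration to a genuine bit-reading gadget in the sense of Definition~\ref{def:bit-reader}. So the substantive task is to exhibit a ``reader'' tile $R$, placed (say) in negated orientation and centered at the origin, that exposes two strength-$1$ glues $g_0$ and $g_1$ whose complementary tiles would be centered at $c_0=-1$ and at $c_1=-\omega^{n-1}$ respectively (with $\omega = e^{2\pi i/n}$), together with two candidate positions for a ``blocker'' tile $B$ --- the \emph{write-$1$} position and the \emph{write-$0$} position --- such that four facts hold: (i) from the write-$1$ position, the interior of $B$ meets the interior of the polygon centered at $c_0$ but is disjoint from that of the polygon centered at $c_1$; (ii) from the write-$0$ position, the interior of $B$ meets the polygon centered at $c_1$ but is disjoint from the polygon centered at $c_0$ and from the polygon centered at the follow-up location $c_2$; (iii) in each case $B$ itself is placeable, i.e.\ its interior is disjoint from $R$ and from the committed tiles of the pre-existing bit-writer; and (iv) all relevant tiles can be made to lie on the grid after normalization. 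Items (iii)--(iv) are routine bookkeeping using Section~\ref{sec:roots-of-unity} and Lemma~\ref{lem:grids}; items (i)--(ii) carry the content.

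I would prove (i)--(ii) by splitting on $n$. For $n\ge 15$ I would follow the scheme already sketched in Section~\ref{sec:bit-readers-main}: writing $n=2k+1$ (resp.\ $n=2k$), place $B$ in negated orientation at $-\omega^{n-1}+\omega^{k+1}$ relative to $R$ for the write-$1$ configuration and at $-1+\omega^{k-1}$ (resp.\ $-1+\omega^{k-2}$) for the write-$0$ configuration, and take $c_2 = -1+\omega^{\lfloor (k-1)/2\rfloor}$ (resp.\ $c_2 = -1+\omega^{(k-2)/2}$). Each of the required overlap / non-overlap statements then reduces to a comparison between the squared center-to-center distance and the squared diameter of a regular $n$-gon of apothem $\tfrac12$: by the law of cosines applied to the relevant powers of $\omega$, the squared distance is a short trigonometric expression in $\pi/n$, and two such polygons are guaranteed disjoint once that quantity exceeds $1/\cos^2(\pi/n)$ (the squared circumradius being $1/\cos^2(\pi/n)$). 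The showcased instance is exactly $d_n^2 = 2 + 2\sin\!\big(3\pi/(2n)\big) > 1/\cos^2(\pi/n)$ for $n\ge 15$, and the blocking statements are the reverse inequality for the appropriate pair; since each side of each such inequality is monotone in $n$ on the range in question, it suffices to evaluate at the endpoint $n=15$ and to take $n\to\infty$. The cases $n$ even versus $n$ odd, and within each the parity of $k$, are handled in the same way, with only the floor in the exponent of $\omega$ defining $c_2$ changing; I would record the even-$n$ computations in Section~\ref{sec:technical-15+sides}.

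For the remaining band $7\le n\le 14$ the crude ``distance beats the diameter'' test fails --- there the blocker must approach the reader's extension tiles closely --- so I would instead give the configurations explicitly (one per value of $n$, as depicted in the corresponding figures) and verify overlap / non-overlap directly. The key simplification is that in a single-shape regular-$n$-gon system two tiles that share an edge occupy only a bounded number of relative rotations (standard versus negated when $n$ is odd; a single orientation class when $n$ is even), so every pairwise check becomes a finite separating-axis computation between two polygons whose vertices are explicit polynomials in $\omega$. This is a finite amount of elementary trigonometry; I would work one representative value of $n$ in full and tabulate the rest.

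The step I expect to be the main obstacle is precisely this small-$n$ band $7\le n\le 14$: there one must show simultaneously that the ``wrong'' extension is genuinely blocked (the blocker's interior meets the forbidden tile) and that the ``correct'' extension and its follow-up tile are genuinely \emph{not} blocked (the blocker clears them, possibly by only a narrow gap), and these are quantitative statements about polygons in near-contact rather than consequences of a clean asymptotic inequality --- they are, after all, exactly the kind of statement shown to be \emph{impossible} for $n\le 6$. Once $n\ge 15$ the single inequality $d_n^2 > 1/\cos^2(\pi/n)$ and its analogues do essentially all the work, so there the only care needed is in organizing the even/odd and parity-of-$k$ subcases uniformly and checking monotonicity so that finitely many evaluations suffice.
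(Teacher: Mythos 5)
Your proposal matches the paper's proof essentially exactly: the same reduction to a finite configuration subsequently upgraded via the grid and normalization machinery of Section~\ref{sec:technicalLemmas}, the same blocker positions $-\omega^{n-1}+\omega^{k+1}$ and $-1+\omega^{k-1}$ (or $-1+\omega^{k-2}$) together with the same $c_2$ for $n\geq 15$ verified by comparing squared center-to-center distances against $1/\cos^2(\pi/n)$, and explicit case-by-case configurations checked by root-of-unity arithmetic for $7\leq n\leq 14$. The only cosmetic difference is that the paper further subdivides the small band, treating $n\in\{10,11,12\}$ with a slightly more uniform two-tile recipe than the fully ad hoc gadgets for $n\in\{7,8,9,13,14\}$, whereas you fold all of $7$--$14$ into one case-by-case band.
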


In order to prove Lemma~\ref{lem:regular-bit-gadget}, we first consider the cases where $n$ is $7,8,9,13,$ or $14$, since these cases are handled by giving a specific bit-reading gadget for each case. Second, we give bit-reading gadgets for the cases where $n$ is $10$, $11$, or $12$. These cases are simpler than the former cases and are handled using a more generic approach. Finally, we give the bit-reading gadgets for the cases where $n\geq 15$. These cases are handled by using a single generic scheme for constructing the bit-reading gadgets for each case.

\subsubsection{Tiles with $7,8,9,13,$ or $14$ sides}\label{sec:special-bit-readers}

In this section we give a description of the bit-reading gadget for heptagonal tiles and give a brief example of a calculation that shows that certain tiles do not overlap. Figure~\ref{fig:heptagonBitReader} gives a depiction of a bit-reader for heptagonal tiles.

\begin{figure}[htp]
\centering
  \subfloat[][A $0$ is read, and a $1$ cannot be read by mistake since the tile $B$ prevents a heptagonal tile from attaching via the glue labeled $g_1$.]{%
        \label{fig:heptagonBitReaderA}%
        \includegraphics[width=2.5in]{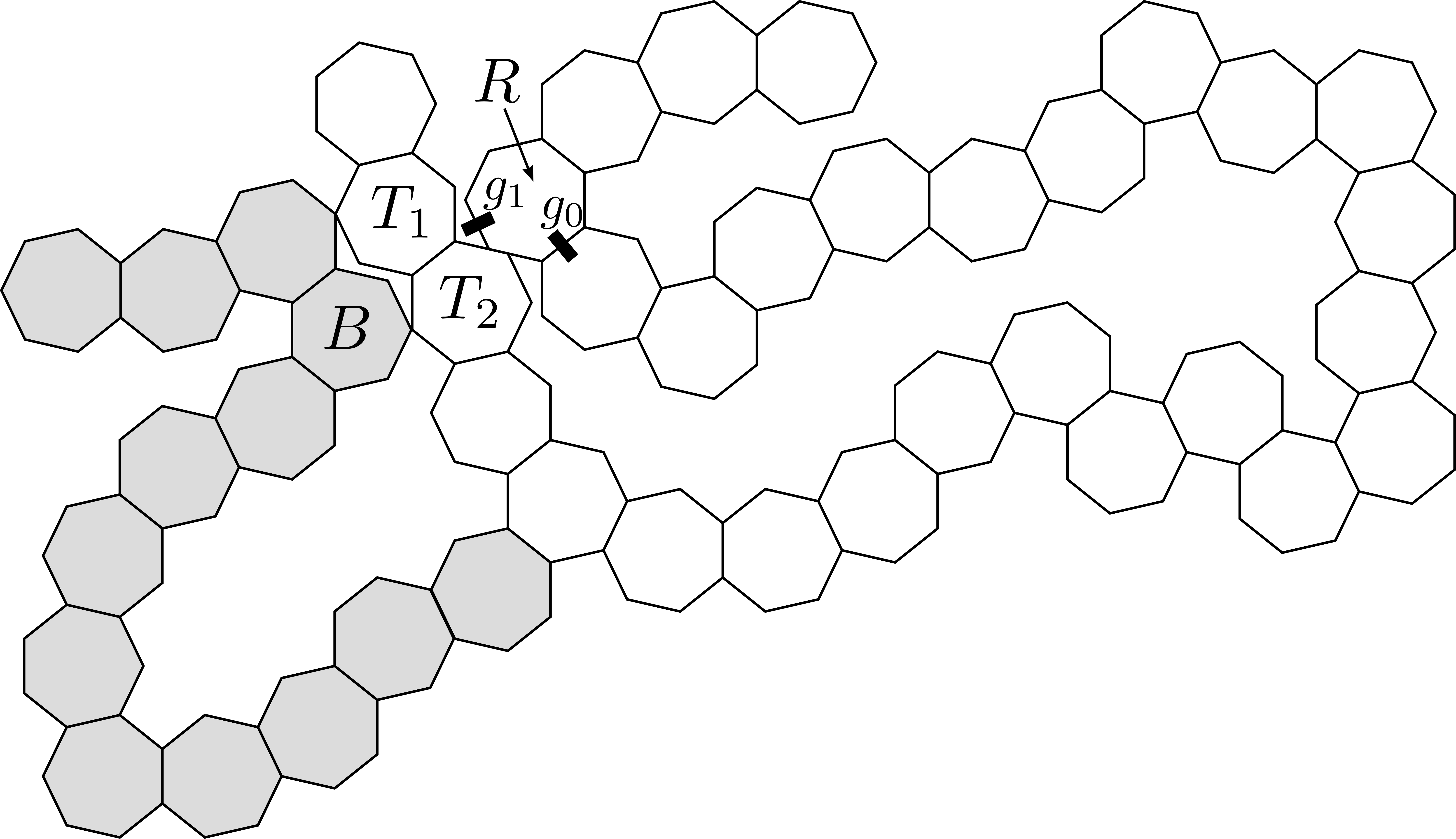}
        }%
        \quad
  \subfloat[][A $1$ is read. This time a $0$ cannot be read by mistake since the tile $B$ prevents growth of a path of heptagonal tiles that attach via the glue labeled $g_0$. Note that some of this path may form, but $B$ prevents the entire path from assembling, and thus prevents a $0$ from being read.]{%
        \label{fig:heptagonBitReaderB}%
        \includegraphics[width=2.5in]{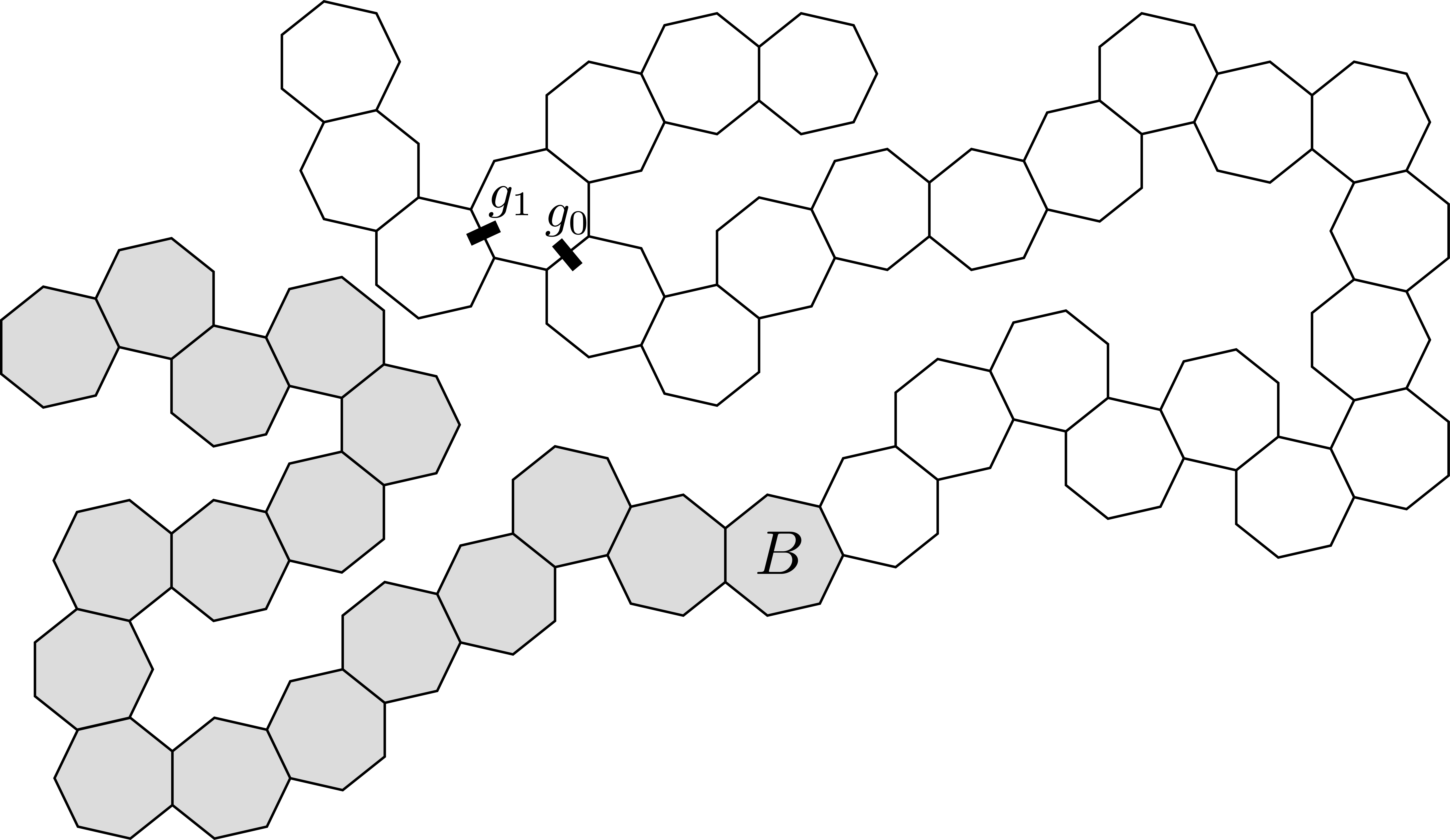}
        }%
  \caption{A connected bit-gadget consisting of heptagonal tiles.}
  \label{fig:heptagonBitReader}
\end{figure}

In Figure~\ref{fig:heptagonBitReader}, the gray tiles represent the bit writer tiles (representing either 0 or 1), while the white tiles are the bit reader tiles. In our construction, we ensure that we have an assembly sequence such that the gray tiles of a bit-reading gadget bind before any white tiles. Figure~\ref{fig:heptagonBitReaderB} depicts the case in which a $1$ has previously been written and is then read. In this case, we observe that the bit writer tiles prevent the formation of the path of tiles depicted in~\ref{fig:heptagonBitReaderA} from $R$ to $T_1$, ensuring that a tile is a tile binds to the glue $g_1$, resulting in a $1$ being read. Moreover, since the configuration of Figure~\ref{fig:heptagonBitReaderB} consists of abutting heptagonal tiles with non-overlapping interiors, we see that with appropriately defined glues, the bit writer configuration and the bit reader configuration are valid assemblies. We can also see that no two tiles of the bit writer configuration and the bit reader configuration have overlapping interiors; this ensures that these two assemblies can be part of the same larger assembly.

Similarly, Figure~\ref{fig:heptagonBitReaderA} depicts the case in which a $0$ has previously been written and is being read. Though much of this configuration consists of abutting heptagonal tiles with non-overlapping interiors, it is not clear that all of the heptagonal tiles have non-overlapping intersection. For example, it is indeed the case that $R$ and $T_2$ have non-overlapping intersection (It turns out that they do share a portion of an edge.) but it is not clear that the interiors of these tiles do not overlap on some tiny set of points. Moreover, it is not clear that a tile could not attach to the glue $g_1$. Therefore, we must calculate the distance between these tiles to show that, with appropriately defined glues, the bit reader configuration is a valid assembly, and that no two tiles of the bit writer configuration and the bit reader configuration have overlapping interiors.

Referring to Figure~\ref{fig:heptagonBitReaderA}, we will first show that the tile labeled $R$ does not prevent the binding of the tile labeled $T_1$ or the tile labeled $T_2$.
Let $c$ denote the center of the tile $R$, $c_1$ denote the center of $T_1$, and $c_2$ denote the center of $T_2$. Then, to calculate $c_1$ and $c_2$ relative to $c$, we assume that $R$ has standard orientation and is centered at the origin. Following the path of tiles from $R$ to $T_1$ and summing the appropriate roots of unity, we obtain the polynomials $c_1 = \omega^6 - \omega^3 + \omega - \omega^4 + 1 -\omega^4 + \omega - \omega^3 + 1 - \omega^2 + \omega^5 - \omega^2 + \omega^4 - \omega^6 + \omega^4 - \omega^6 + \omega^4 - \omega + \omega^4 - 1 + \omega^3 - \omega^6 + \omega^2$. Note that $c_2 = c_1 - \omega^6$. By simplifying $c_1$, we get $c_1 = 1+ \omega - \omega^2 - \omega^3  + 2\omega^4 + \omega^5 - 2\omega^6$.
Then, as multiplying by $\omega$ corresponds to rotating by $2\pi/7$, it is enough to show that $\Re(\omega^2 c_1) \geq 1$, and to see this, consider the following.

\begin{eqnarray*}
\omega^2c_1 &=& \omega^2 + \omega^3 - \omega^4 - \omega^5  + 2\omega^6 + \omega^7 - 2\omega^8\\
            &=& \omega^2 + \omega^3 - \omega^4 - \omega^5  + 2\omega^6 + 1 - 2\omega\\
            &=& \omega^2 + \omega^3 - \omega^{-3} - \omega^{-2}  + 2\omega^6 + 1 - 2\omega^{-6}\\
            &=& 1 + (\omega^2 - \omega^{-2}) + (\omega^3 - \omega^{-3})  + 2(\omega^6 - \omega^{-6})
\end{eqnarray*}

\noindent Finally, since $(\omega^2 - \omega^{-2})$, $(\omega^3 - \omega^{-3})$, and $2(\omega^6 - \omega^{-6})$ are purely imaginary, we see $\Re(\omega^2c_1) = 1$. It follows that the intersection of the interiors of $R$ and $T_1$ is empty. The remainder of the distance calculations are given in Section~\ref{sec:technical-heptagonal}. For tiles consisting of regular polygons with $8,9,13,$ or $14$ sides we give the bit-reading gadgets and calculations in Section~\ref{sec:technical-single-regular}.

\subsubsection{Tiles with $10,11,$ or $12$ sides}\label{sec:10-12sides-bit-readers}

In the cases where tiles consist of regular polygons with $10,11,$ or $12$ sides, bit-reading gadgets are relatively simple to construct. Figure~\ref{fig:10to12sidesBitReaders} depicts the configurations that we will use to construct our bit-reading gadgets for each case. Note that since each polygonal tile of these configurations is adjacent to another tile, we need only show that for each configuration depicted in Figure~\ref{fig:10to12sidesBitReaders}, of the two exposed glues, $g_0$ and $g_1$ of the tile $R$, a tile can only attach to one of these glues depending on the position of the tile $B$ in the figure. In other words, for each configuration depicted in Figure~\ref{fig:10to12sidesBitReaders}, we show that in the top configuration, $B$ prevents a tile from binding to $g_1$, and that in the bottom configuration, $B$ prevents a tile from binding to $g_0$.

\begin{figure}[htp]
\centering
  \subfloat[][Bit-reading gadget configuration for decagonal tiles.]{%
        \label{fig:10to12sidesBitReadersA}%
    		\makebox[.25\textwidth]{
        \includegraphics[width=.75in]{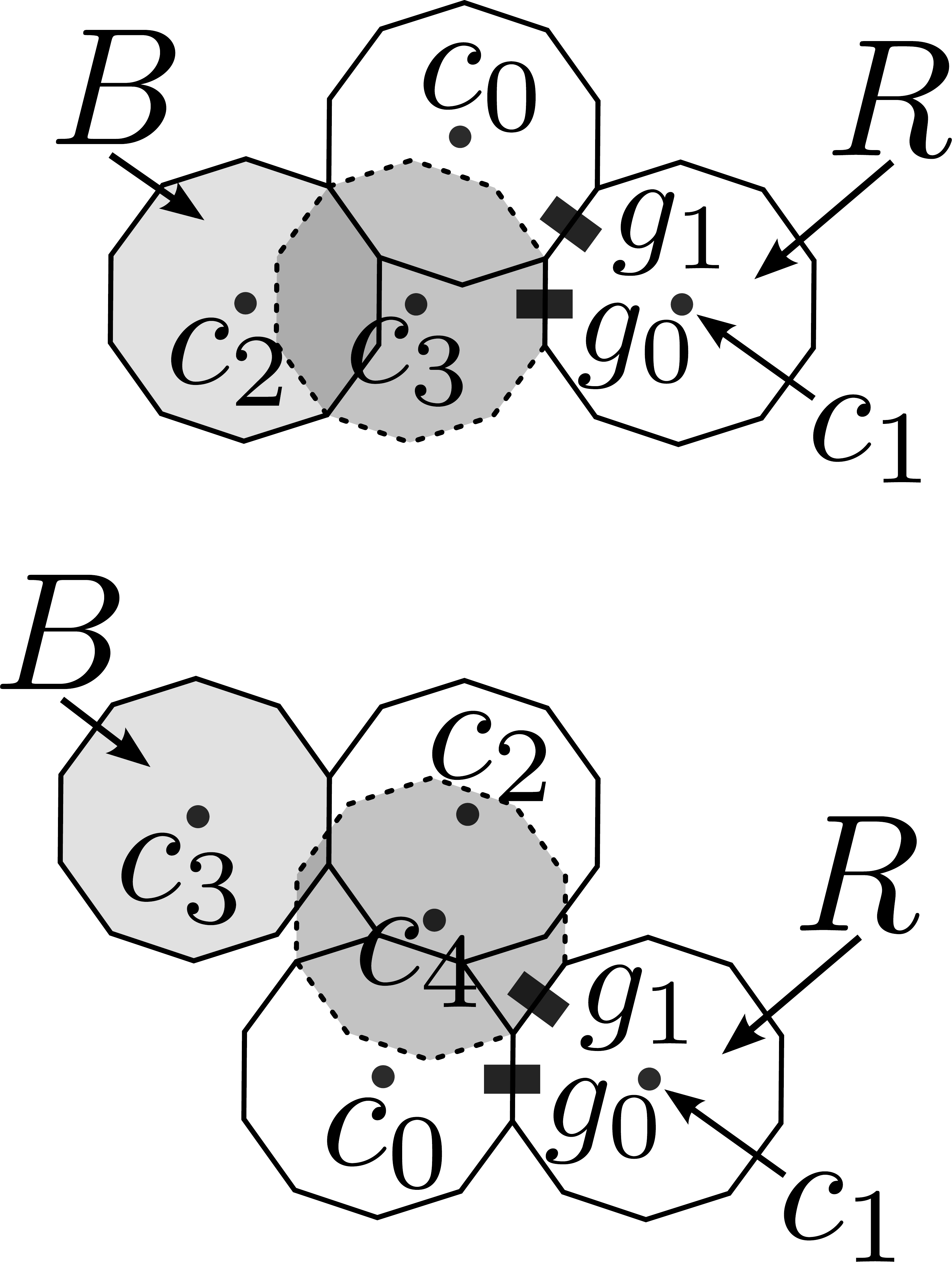}
        }
        }%
        \quad\quad
  \subfloat[][Bit-reading gadget configuration for hendecagonal tiles.]{%
        \label{fig:10to12sidesBitReadersB}%
    		\makebox[.25\textwidth]{
        \includegraphics[width=.75in]{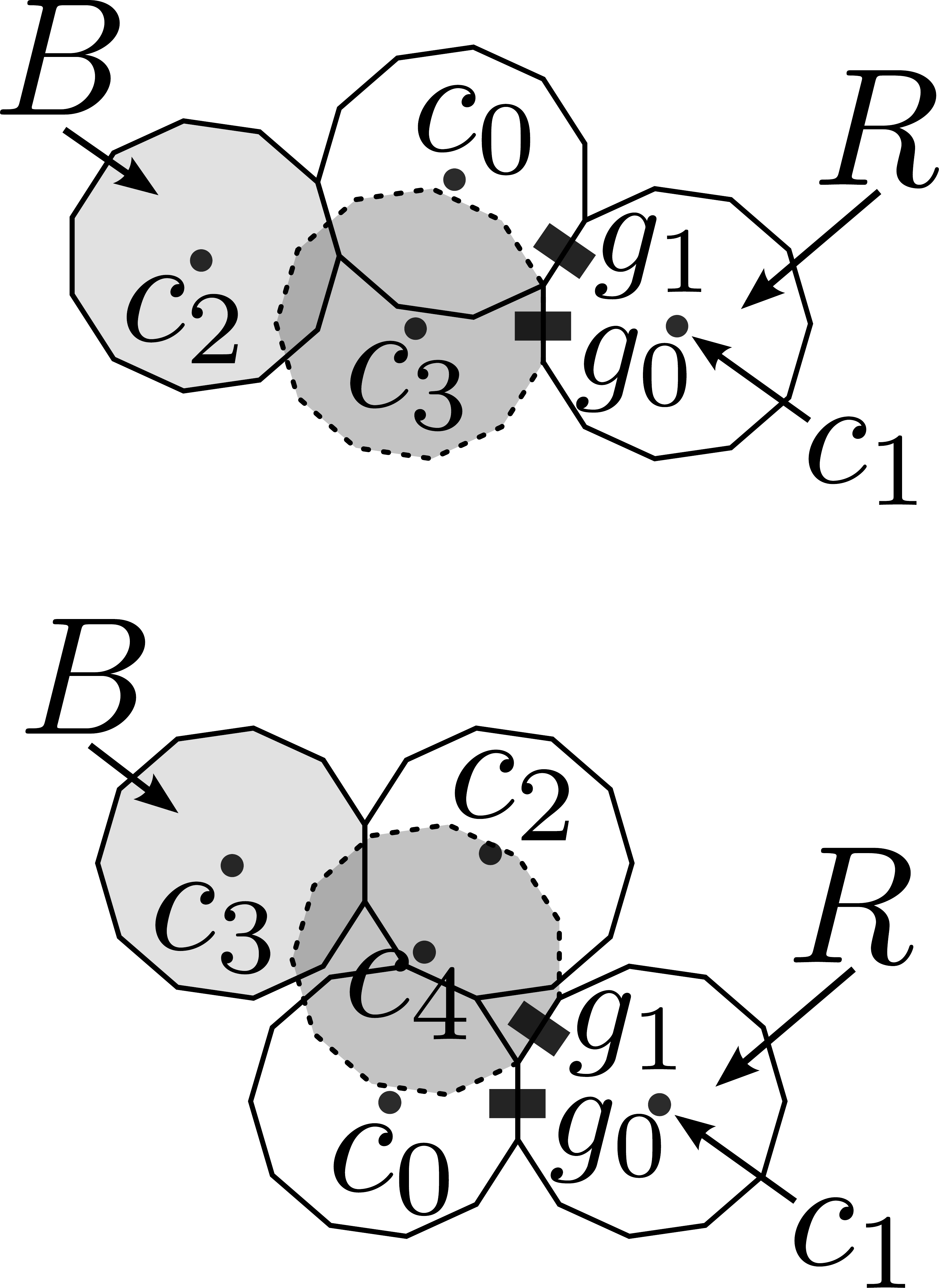}
        }
        }%
       \quad\quad
  \subfloat[][Bit-reading gadget configuration for dodecagonal tiles.]{%
        \label{fig:10to12sidesBitReadersC}%
    		\makebox[.25\textwidth]{
        \includegraphics[width=.75in]{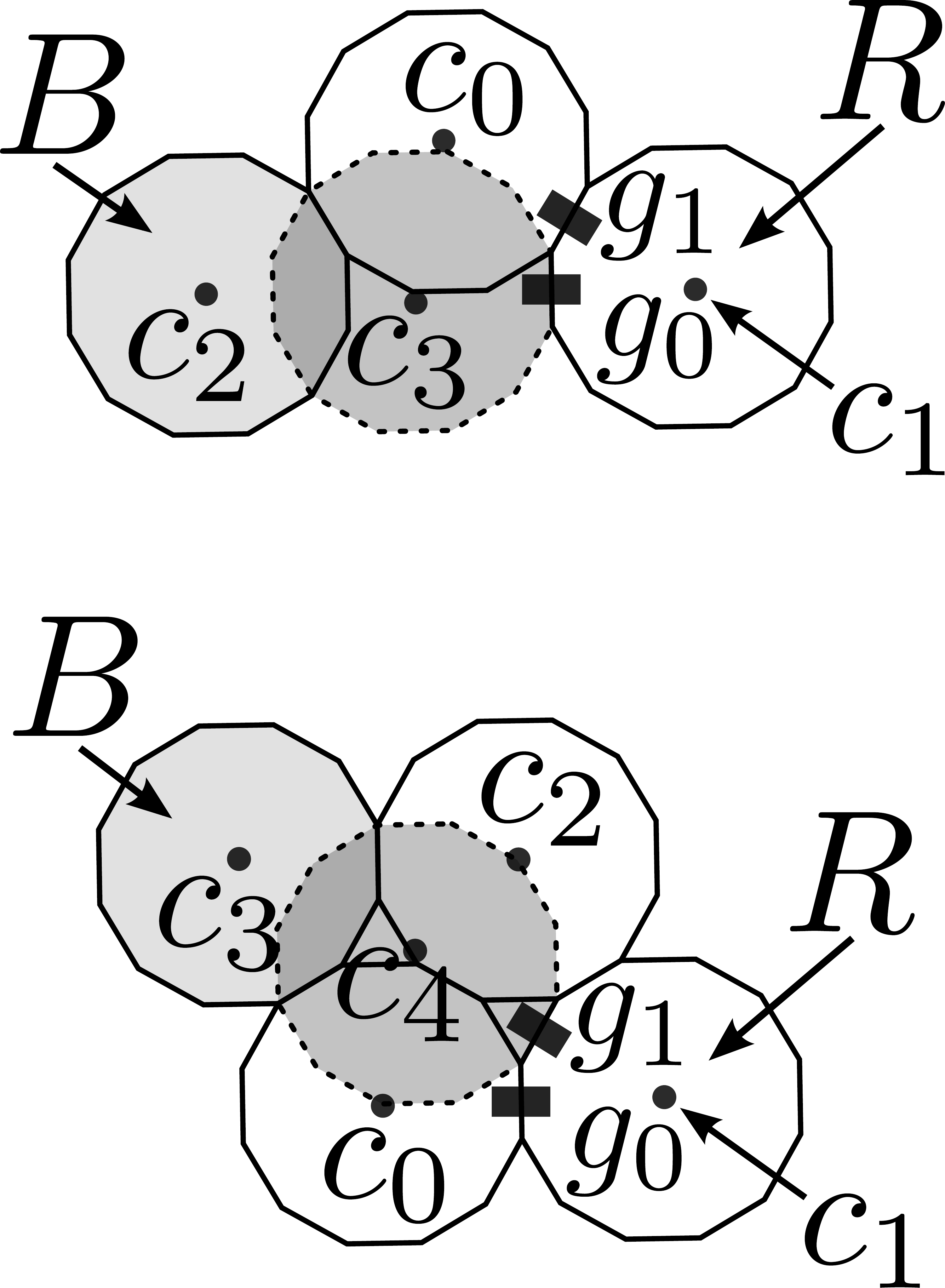}
        }
        }%
  \caption{(a), (b) and (c) each depict two configurations of polygonal tiles which represents either a $0$ (bottom) or a $1$ (top).}
  \label{fig:10to12sidesBitReaders}
\end{figure}

Like the bit-reading gadgets themselves, the calculations used to show the correctness of these bit-reading gadgets are relatively simple when compared to the previous cases. For example, for decagonal tiles,
in top configuration depicted in Figure~\ref{fig:10to12sidesBitReadersA}, to show that $B$ prevents a tile from binding to $g_0$, note that a polygon centered at $c_2$ and a polygon centered at $c_3$ overlap. Let $\omega$ be the $10^\text{th}$ root of unity $e^{\frac{2\pi i}{10}}$. Note that relative to $c_2$, $c_3 = \omega + \omega^9 - 1$. Hence, $c_3 = \omega + \omega^9 - 1 = 2\Re(\omega) - 1 = 2\cos\left( \frac{2\pi}{10} \right)$. Then the distance $d$ from $c_3$ to $c_2$ satisfies $d = |2\cos\left( \frac{2\pi}{10} \right) - 1| < .62$, and therefore the intersection of the interiors of a decagon centered at $c_3$ and a decagon centered at $c_2$ is nonempty. Hence, a decagonal tile cannot bind to the glue $g_0$. The remaining calculation for the decagonal tiles case as well as the calculations for the hendecagonal and dodecagonal cases are given in Section~\ref{sec:technical-10-12sides}.

\subsubsection{Tiles with $15$ or more sides}\label{sec:15+sides-bit-readers}

In the cases where tiles consist of regular polygons with $15$ or more sides, we give a general scheme for obtaining bit-reading gadgets for each case. Figure~\ref{fig:15+sides-append} depicts the bit-reading gadgets for each case. For the top configurations of Figure~\ref{fig:15+sides-append}, note that since each polygonal tile of these bit-reading gadgets is adjacent to another tile, we need only show that for each top configuration depicted in Figure~\ref{fig:15+sides-append}, of the two exposed glues, $g_0$ and $g_1$ of the tile $R$, $B$ prevents a tile from binding to $g_0$. In the bottom configurations of Figure~\ref{fig:15+sides-append}, we not only need to show that $B$ prevents a tile from binding to $g_1$, but we must also show that $B$ does not prevent a tile (the tile centered at $c_2$ in the bottom configurations for Figure~\ref{fig:15+sides-append}) from binding to the tile that binds to $g_0$. The latter statement ensures that when we use the bit-reading gadgets obtained from these configurations to simulate a Turing machine, in the case that a $0$ is read by attaching a tile to $g_0$, $B$ does not prevent further growth of an assembly.

\begin{figure}[htp]
\centering
  \subfloat[][Bit-reading gadget configuration for pentadecagonal tiles.]{%
        \label{fig:15+sidesA-append}%
    		\makebox[.25\textwidth]{
        \includegraphics[width=.8in]{images/15+sidesA}
        }
        }%
        \quad\quad
  \subfloat[][Bit-reading gadget configuration for hexadecagonal tiles.]{%
        \label{fig:15+sidesB-append}%
    		\makebox[.25\textwidth]{
        \includegraphics[width=.8in]{images/15+sidesB}
        }
        }%
       \quad\quad
  \subfloat[][Bit-reading gadget configuration for heptadecagonal tiles.]{%
        \label{fig:15+sidesC-append}%
    		\makebox[.25\textwidth]{
        \includegraphics[width=.8in]{images/15+sidesC}
        }
        }%
  \caption{(a), (b) and (c) each depict two configurations of polygonal tiles which represents either a $0$ (bottom) or a $1$ (top).}
  \label{fig:15+sides-append}
\end{figure}

Now, consider a polygon $P_n$ with $n\geq 15$ sides and let $\omega$ be the $n^{th}$ root of unity $e^{\frac{2\pi i}{n}}$. Then, the general scheme for constructing a bit-reading gadget falls into two cases. First, if $n$ is odd (the cases where $n$ is even are similar), relative to a tile with negated orientation (the polygon labeled $R$ in the configurations in Figure~\ref{fig:15+sides-append}), the two configurations that give rise to the bit-reading gadget are as follows. Let $k$ be such that $n=2k+1$ ($n=2k$ if $n$ is even). Referring to the top configurations of Figure~\ref{fig:15+sides-append}. To ``write'' a $1$, the configuration is obtained by centering a blocker tile with negated orientation, labeled $B$ in the top configurations of Figure~\ref{fig:15+sides-append}, at $-\omega^{n-1} + \omega^{k+1}$ (whether $n$ is even or odd) relative to $R$. Then to ``read'' a $1$, $R$ exposes two glues $g_1$ and $g_0$ such that if a tile binds to $g_1$, it will have standard orientation and be centered at $-\omega^{n-1}$ (whether $n$ is even or odd) and if a tile that binds to $g_0$, it will have standard orientation and be centered at $-1$. We will show that $B$ will prevent this tile from binding. This gives the configuration depicted in the top figures of Figure~\ref{fig:15+sides-append}. Now, referring to the bottom configurations of Figure~\ref{fig:15+sides-append}, to ``write'' a $0$, the configuration is obtained by centering a blocker tile with negated orientation, labeled $B$ in the bottom configuration of Figure~\ref{fig:15+sidesA-append}, at $-1 + \omega^{k-1}$ ($-1 + \omega^{k-2}$ if $n$ is even) relative to $R$.  In this case, we will show that $B$ prevents a tile from binding to $g_1$. In addition, we place a glue on the tile that binds to $g_0$ that allows for another tile to bind to it so that its center is at $c_2 = -1 + \omega^{\lfloor \frac{k-1}{2} \rfloor}$ ($c_2 = -1 + \omega^{\frac{k-2}{2}}$ if $n$ is even) relative to $R$.  This gives the configuration depicted in the bottom figures of Figure~\ref{fig:15+sidesA-append} and Figure~\ref{fig:15+sidesC-append}. Moreover, we show that neither $R$ nor $B$ prevent the binding of this tile.

In order to perform the calculations used to show the correctness of these bit-reading gadgets, we consider the cases where $n$ is even and where $n$ is odd. Here we give brief versions of the calculations when $n$ is odd. For more detail and calculations for the case where $n$ is even, see Section~\ref{sec:technical-15+sides}.

Suppose that $n = 2k+1$. We now refer to the bottom configurations of Figure~\ref{fig:15+sidesA-append}. To show that a polygon centered at $c_1$ and a polygon centered at $c_2$ do not overlap, consider the case where $k$ is odd (the case where $k$ is even is similar). Note that relative to $c_0$, $c_1 = 1$ and $c_2 = \omega^{\frac{k-1}{2}}$. Then the distance $d_n$ from $c_1$ to $c_2$ satisfies the following equation.
$$d_n^2 = \left(1-\cos\left(\frac{\left(k-1\right)\pi}{n}\right)\right)^2 + \sin^2\left(\frac{\left(k-1\right)\pi}{n}\right)$$
Substituting $k = \frac{n-1}{2}$ for $k$ and simplifying, we obtain $d_n^2 = 2+2\sin\left(\frac{3\pi}{2n}\right)$. It is well known that for regular polygons with $n$ sides and apothem $\frac{1}{2}$, the circumradius is given by $\frac{1}{\cos\left( \frac{\pi}{n} \right)}$. Hence, to show that a polygon centered at $c_1$ and a polygon centered at $c_2$ do not overlap, we show that $d_{n}^2 > \frac{1}{\cos^2\left(\frac{\pi}{n}\right)}$ for $n \geq 15$. To see this, note that $\cos^2\left(\frac{\pi}{n}\right)d_n^2 = 2\cos^2\left(\frac{\pi}{n}\right)\left(1+\sin\left(\frac{3\pi}{2n}\right)\right)$. Then for $n\geq15$, $2\cos^2\left(\frac{\pi}{n}\right)\left(1+\sin\left(\frac{3\pi}{2n}\right)\right) > 2\cos^2\left(\frac{\pi}{4}\right) =1$.
It then follows that $d_{n} > \frac{1}{\cos\left(\frac{\pi}{n}\right)}$.  Therefore, $d_n$ is greater than twice the circumradius of our polygons. Hence, a polygon centered at $c_1$ and a polygon centered at $c_2$ do not overlap.

To show that a polygon centered at $c_3$ and a polygon centered at $c_4$ overlap, note that relative to $c_1$, $c_3 = -1 + \omega^{k-1}$ and $c_4 = -\omega^{n-1}$. Therefore, the distance $d_n$ from $c_3$ to $c_4$ is satisfies the equation 
\begin{align*}
d_n^2 &= \left( -1 + \cos\left(\frac{2(k-1)\pi}{n}\right) + \cos\left( \frac{2(n-1)\pi}{n} \right) \right)^2 \\
      &= \ \ \ \ + \left( \sin\left( \frac{2(k-1)\pi}{n} \right) + \sin\left( \frac{2(n-1)\pi}{n} \right) \right)^2
\end{align*}
Substituting $k = \frac{n-1}{2}$ for $k$ and simplifying, we obtain, $$d_n^2 = 1 + 2\left(2\sin^2\left(\frac{\pi}{n}\right)\left(1-2\cos\left(\frac{\pi}{n}\right)\right)\right).$$ Note that for each $n>2$, $d_n^2 < 1$. To see this, it suffices to show that $$2\sin^2\left(\frac{\pi}{n}\right)\left(1-2\cos\left(\frac{\pi}{n}\right)\right) < 0.$$ This follows from the fact that $2\sin^2\left(\frac{\pi}{n}\right) > 0$ and $1-2\cos\left(\frac{\pi}{n}\right) < 0$ for $n >2$.
Now, since for each $n>2$, $d_n^2 < 1$, we see that $d_n < 1$. Since the length of the apothem for each tile is assumed to be $\frac{1}{2}$, we can conclude that a polygon centered at $c_3$ and a polygon centered at $c_4$ must overlap.

\subsection{2-shaped systems with regular polygonal tiles}\label{sec:bit-readers-2shaped-regular}

In this section we describe bit-reading gadgets for 2-shaped systems whose tileset consists of two distinct regular polygons. We assume that the edges of all polygonal tiles have the same length. The bit-reading gadgets that we give here are normalized on-grid bit-readers.

\begin{lemma}\label{lem:2shaped-bit-gadget}
Let $P_n$ and $Q_m$ be a regular polygons with $n$ and $m$ sides of equal length. Then, for all $n \geq 3$ and $m \geq 3$ such that $n\neq m$, there exists a 2-shaped system $\mathcal{T}_{n,m} = (T_{n,m}, \sigma_{n,m})$ with shapes $P_n$ and $Q_m$ such there a bit-reading gadget exists for $\mathcal{T}_{n,m}$.
\end{lemma}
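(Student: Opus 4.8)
The plan is to build a bit-reading gadget whose reading paths and blocker are made of the two different polygons, exploiting the fact that for $n\neq m$ the interior angles $\pi-\tfrac{2\pi}{n}$ and $\pi-\tfrac{2\pi}{m}$ differ while the edge lengths agree, so a path of $P_n$-tiles turns at a rate incommensurable with a path of $Q_m$-tiles; this mismatch lets a blocking assembly built from one shape pinch off a reading path built from the other while leaving a gap too thin for the ``wrong'' path but wide enough for its companion. Without loss of generality take $3\le n<m$. Following Section~\ref{sec:roots-of-unity}, I would place a distinguished tile $R$ in standard orientation at the origin, identify $\R^2$ with $\mathbb{C}$, and normalize so that each polygon has edge length $1$; then the center of any tile in an assembly is a finite sum of step vectors, each perpendicular to a shared edge and of length $\tfrac12\cot(\tfrac{\pi}{n})$, $\tfrac12\cot(\tfrac{\pi}{m})$, or $\tfrac12\big(\cot(\tfrac{\pi}{n})+\cot(\tfrac{\pi}{m})\big)$ according to which two shapes abut there, so every center I care about is an explicit expression in the roots of unity $e^{2\pi i j/n}$, $e^{2\pi i j/m}$ and these scalars. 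To anchor the gadget on a lattice, I would invoke Lemma~\ref{lem:grids} applied to $P_n$ and require the first tile of the bit writer, the last tile of the bit writer, and the tile $R$ all to lie on that grid, so the gadget is automatically \emph{normalized} and \emph{on grid}.

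The configuration itself would mirror the single-shape scheme of Figure~\ref{fig:15+sides}. The tile $R$ exposes two glues $g_0,g_1$: a tile on $g_0$ starts a short path $p_0$ ending with a tile type in $T_0$ (reading a $0$), and a tile on $g_1$ starts a path $p_1$ ending with a tile type in $T_1$ (reading a $1$). A \emph{bit writer} is an assembly containing one distinguished \emph{blocker} tile $B$ whose center depends on the bit: to write a $1$, $B$ is centered so that it lies strictly inside the region $p_0$ must occupy yet is disjoint from every tile of $p_1$; to write a $0$, $B$ is centered inside the region $p_1$ must occupy, disjoint from every tile of $p_0$, and also disjoint from the ``continuation'' tile (the analogue of the tile at $c_2$ in the bottom configurations of Figure~\ref{fig:15+sides}) that must attach after a $0$ is read so that Turing-machine growth can proceed. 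Since we may choose whichever of $P_n$, $Q_m$ is most convenient for $B$, for the reader paths, and for the continuation tile, the offending and non-offending positions can be pushed apart with room to spare, making this step considerably easier than the single-shape case behind Lemma~\ref{lem:regular-bit-gadget}.

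The bulk of the work --- and the main obstacle --- is verifying the overlap/non-overlap relations uniformly over all admissible pairs $(n,m)$. Two convex tiles are disjoint once the distance between their centers exceeds the sum of their circumradii (each of the form $\tfrac12\csc(\tfrac{\pi}{n})$ or $\tfrac12\csc(\tfrac{\pi}{m})$), and they overlap once that distance is below the sum of their apothems; hence every requirement reduces to a trigonometric inequality in $n$ and $m$. I would handle this exactly as the single-shape analysis does: a finite collection of small cases (say $\min(n,m)\le 14$, or a few small values of the larger index), each dispatched by an explicit polynomial-in-roots-of-unity computation as in Section~\ref{sec:bit-readers-main}, together with a general regime where both $n$ and $m$ are large, in which the needed inequalities follow from monotonicity and Taylor estimates of $\cos,\sin,\cot,\csc$ near $0$, paralleling the $n\ge 15$ argument given above. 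Once the configuration is shown to be a valid assembly with the asserted blocking behavior for every pair, the normalization procedure of Section~\ref{sec:technicalLemmas} (equivalently, Section~3.2 of \cite{Polyominoes}) turns it into a normalized on-grid bit-reading gadget, which is precisely the claim of Lemma~\ref{lem:2shaped-bit-gadget}.
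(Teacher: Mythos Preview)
Your plan is plausible but misses the paper's key simplification, and as written your case split does not obviously reduce to a finite check. The paper observes (implicitly, by only giving explicit constructions for pairs with both $n,m\le 6$) that whenever $\max(n,m)\ge 7$ the lemma is immediate from Lemma~\ref{lem:regular-bit-gadget}: a $2$-shaped system may contain a single-shape bit-reading gadget built entirely from the polygon with at least seven sides, with the second shape present only as an unused tile type (or in the seed). Thus the genuinely new content of Lemma~\ref{lem:2shaped-bit-gadget} is exactly the six pairs $\{3,4\},\{3,5\},\{3,6\},\{4,5\},\{4,6\},\{5,6\}$, and the paper dispatches each by an explicit figure-based construction (Figure~\ref{fig:34BitReader} and the tables in Section~\ref{sec:technical-2shaped}). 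There is no asymptotic analysis for pairs at all.

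Your proposal instead tries to produce a uniform two-shape analogue of the $n\ge 15$ scheme and then handle ``small'' cases separately. Two issues arise. First, the scheme of Figure~\ref{fig:15+sides} relies on the polygon being round enough that a single blocker tile can cover one glue position while clearing the other; in the single-shape setting this already fails below $15$ sides and forces ad~hoc gadgets for $7\le n\le 14$, so there is no reason to expect it to succeed when one of the shapes is a triangle, square, pentagon, or hexagon. Second, your proposed split (``$\min(n,m)\le 14$'' or ``a few small values of the larger index'') still leaves infinitely many pairs in the ``small'' bucket, so you have not actually reduced to a finite verification. Once you use the reduction to $\max(n,m)\le 6$, there are only six cases and each needs a bespoke picture, which is exactly what the paper supplies; your trigonometric-inequality machinery is then unnecessary.
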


\begin{figure}[htp]
\centering
  \subfloat[][A $1$ is read. This time a $0$ cannot be read by mistake since the tile $B$ prevents growth of a path of tiles that attach via the glue labeled $g_0$.]{%
        \label{fig:34BitReaderA}%
		\makebox[.45\textwidth]{
        \includegraphics[width=2in]{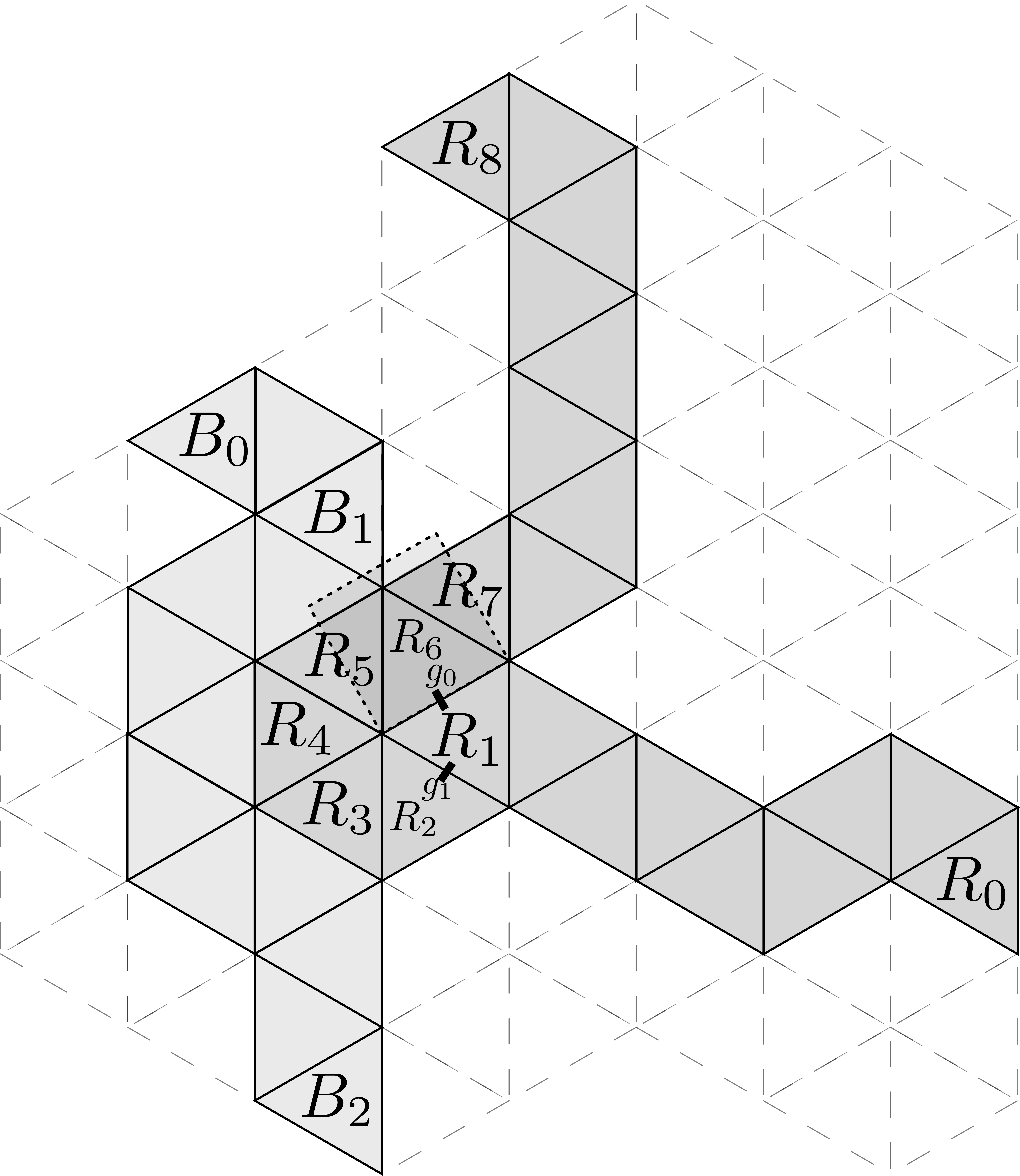}
        }}%
        \quad
  \subfloat[][A $0$ is read, and a $1$ cannot be read by mistake since the tile $B$ prevents a tile from attaching via the glue labeled $g_1$.]{%
        \label{fig:34BitReaderB}%
		\makebox[.45\textwidth]{
        \includegraphics[width=2in]{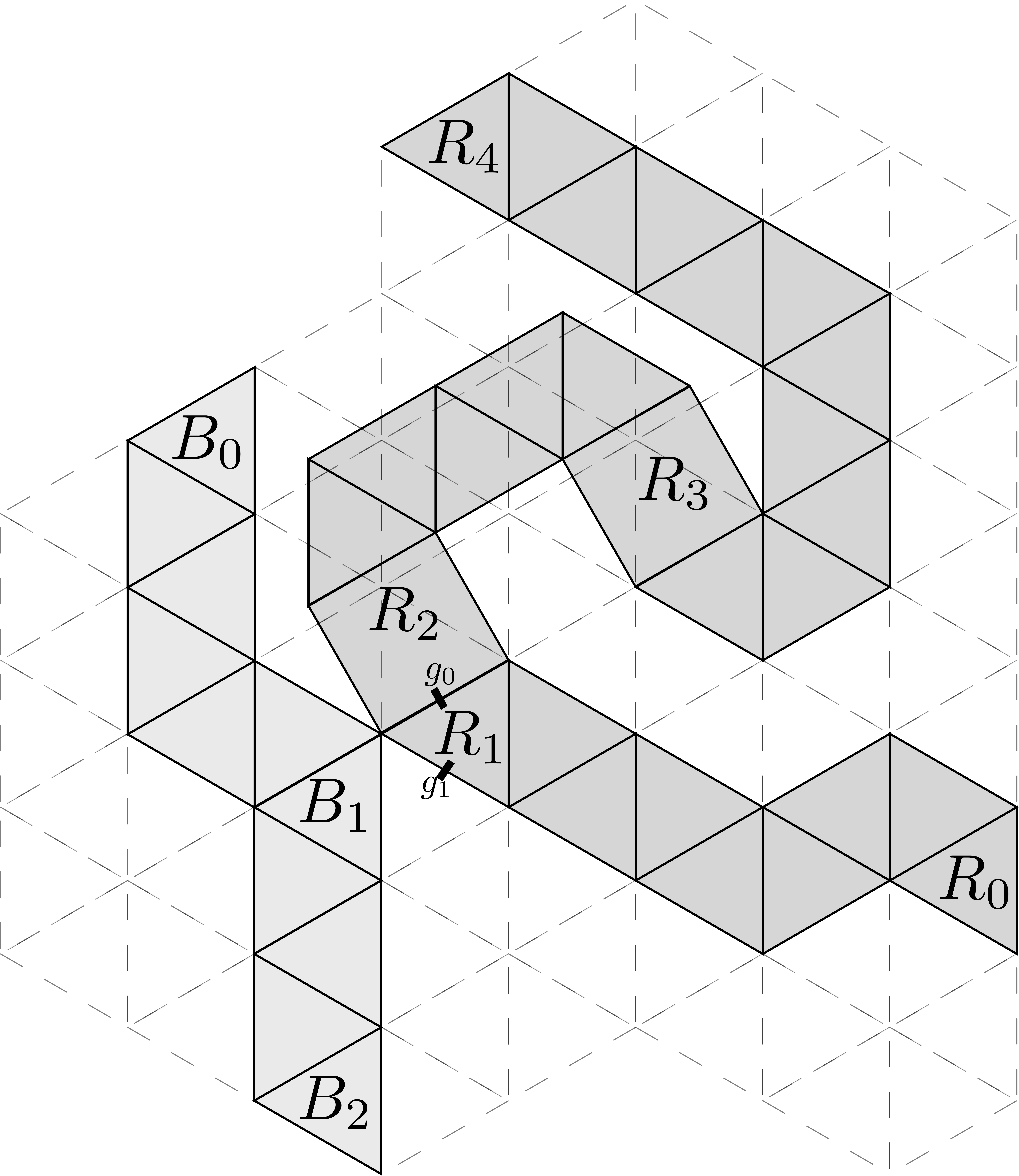}
        }}%
  \caption{A connected bit-gadget consisting of tiles shaped like a regular triangle or square.}
  \label{fig:34BitReader}
\end{figure}

(a) and (b) of Figure~\ref{fig:34BitReader} depict bit-reading gadgets which give a scheme for ``writing a bit'' as growth proceeds from left to right, and ``reading a bit'' as growth proceeds from right to left. To write a bit, we can define unique glues that enforce the assembly of the path of tiles (light gray tiles in (a) and (b) of Figure~\ref{fig:34BitReader}) starting from the tile labeled $B_0$ and ending at a tile labeled $B_2$ in (a) and (b). Assuming that the light gray tiles are part of an existing assembly, to read a bit, we define unique glues that enforce the (dark gray tiles in (a) and (b)) starting from the tile labeled $R_0$ and ending with the tile labeled $R_1$. Then, $R_1$ exposes two glues labeled $g_0$ and $g_1$ in Figure~\ref{fig:34BitReader}. Now, depending on whether an assembly which represents a $0$ is present or an assembly which represents a $1$ is present, either a triangular tile with a glue labeled $g_1$ binds to $R_1$ via the glue $g_1$ exposed by $R_1$ (depicted in Figure~\ref{fig:34BitReaderA}) or a square shaped tile with a glue labeled $g_0$ binds to $R_1$ via the glue $g_0$ exposed by $R_1$ (depicted in Figure~\ref{fig:34BitReaderB}).

In the former case, denote the triangular tile which binds to $R_1$ by $R_2$; this tile is labeled $R_2$ in Figure~\ref{fig:34BitReaderA}. Then, we can define glues that allow the tiles $R_i$ for $i = 3,4,5,6$ or $7$ to bind in that order as depicted in Figure~\ref{fig:34BitReaderA}. Finally, we define a set of tiles that form the path of tiles from $R_7$ to $R_8$. The latter case, depicted in Figure~\ref{fig:34BitReaderB}, is similar. In this case, a square tile (labeled $R_2$) binds to $g_0$. We define this tile such that the path of tiles from $R_2$ to $R_4$ assembles. Note that the square tile labeled $R_3$ ensures that the triangular tiles along this path of tiles from $R_3$ to $R_4$ are on-grid. In particular, $R_4$ is on-grid. Lastly, we refer to each configuration in Figure~\ref{fig:34BitReader} and note that relative to the underlying grid (shown as dashed lines), $B_0$, $B_2$, $R_0$ and $R_8$ in (a) and respectively $B_0$, $B_2$, $R_0$ and $R_4$ in (b) are on-grid and in the same location. It is straightforward to see that such configurations can be extended to give a normalized on-grid bit-reading gadget that conforms to Definition~\ref{def:bit-reader}.

Constructions for normalized on-grid bit-reading gadgets for pairs of regular polygons with sides $m$ and $n$ where $3\leq m \leq 6$, $5\leq n \leq 6$ and $m\neq n$ are similar and are given in Section~\ref{sec:technical-2shaped}.

\subsection{Single shaped systems with equilateral polygonal tiles}\label{sec:bit-readers-equilaterals}

In this section we describe bit-reading gadgets for single-shaped systems whose tileset consists of an equilateral polygon. The bit-reading gadgets that we give here are normalized on-grid bit-readers. Note that for polygons with $7$ or more sides, Lemma~\ref{lem:regular-bit-gadget} implies the following lemma. Hence, we need only show Lemma~\ref{lem:equilateral-bit-gadget} for equilateral polygons with $4$, $5$, or $6$ sides. Therefore, we give normalized on-grid bit-reading gadgets for all three cases showing the following lemma. It should be noted that while the general grid construction given in Section~\ref{sec:bit-grid-main} pertain to regular polygons. Similar techniques can be used to obtain grids for the equilateral polygonal tiles in this section. The grids themselves are depicted using dashed lines in the figures of this section.

\begin{lemma}\label{lem:equilateral-bit-gadget}
For all $n \geq 4$, there exists an equilateral polygon $P_n$ with $n$ sides and a single-shaped system $\mathcal{T}_n = (T_n, \sigma_n)$ shape $P_n$ such a bit-reading gadget exists for $\mathcal{T}_n$.
\end{lemma}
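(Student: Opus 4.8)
The plan is to split on the number of sides. For $n \geq 7$ a regular $n$-gon is in particular equilateral, so Lemma~\ref{lem:regular-bit-gadget} immediately supplies a single-shaped system of that shape with a bit-reading gadget, and that case is done. Hence the real content is the three cases $n \in \{4,5,6\}$, and for each of these I would exhibit an explicit equilateral polygon $P_n$ --- for $n=4$ a suitable non-square rhombus, and for $n=5,6$ suitable equilateral (possibly non-convex) polygons --- together with an explicit tile set, a pair of bit writers $\alpha_0,\alpha_1$, and the corresponding bit-reader paths, exactly as was done for the regular polygons with $\geq 7$ sides. The design principle is the one described in Section~\ref{sec:bit-gadgets}: choose $P_n$ so that a path of copies of $P_n$ can bend past a blocking assembly while leaving a gap through which the other path can grow, so that for one written bit exactly the ``$0$'' path completes and for the other exactly the ``$1$'' path completes. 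The reason for dropping regularity is that it enlarges the set of relative rotations at which two tiles may abut (for instance a non-square rhombus is not confined to any single lattice, since iterated reflections of its edge directions across its own edges generate a non-discrete set of directions), which is precisely the freedom the impossibility arguments of Section~\ref{sec:imposs-poly} rule out for the regular polygons with $\le 6$ sides.

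For each of $n=4,5,6$ the verification has the same two ingredients, carried out in the style of Section~\ref{sec:roots-of-unity} and Section~\ref{sec:bit-readers-main}. First I would show that the bit-writer configuration and each bit-reader configuration are valid assemblies: since every tile abuts another, it suffices to compute the centre-to-centre displacement of each relevant pair of tiles --- an integer combination of the edge vectors of $P_n$ and of their reflections --- and to check that this distance exceeds the diameter of $P_n$ wherever the two tiles must be disjoint, and is small enough to force overlap wherever a tile must be blocked. Second I would check that these displacements realise exactly the blocking pattern required by Definition~\ref{def:bit-reader}: in the ``write a $1$'' configuration the blocker $B$ overlaps the (unique) location of a tile that would bind to $g_0$ but not the location of a tile binding to $g_1$, and symmetrically in the ``write a $0$'' configuration, with the additional requirement (as in the $\geq 15$-sided case) that $B$ does not block the tile that must continue growth after $g_0$ has been used. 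Finally I would note that although the grid construction of Section~\ref{sec:bit-grid-main} is phrased for regular polygons, the same junction-polyform idea produces a periodic grid from any equilateral $P_n$ (these are the dashed lattices shown in the figures of this section), so the gadgets can be made on-grid and normalized in the sense of Section~\ref{sec:bit-grid-main}; feeding the resulting normalized bit-reading gadget into the Turing-machine simulation of Section~\ref{sec:TMsim} then completes the argument.

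The main obstacle is the $n=4$ case: one has to actually produce a rhombus together with explicit bending paths that achieve the ``block one path, squeeze the other through the gap'' behaviour --- a quadrilateral is the least flexible shape for which this can be asked, and, unlike the two-shape constructions of Lemma~\ref{lem:2shaped-bit-gadget}, there is no auxiliary small tile available for fine positional corrections, so the rhombus angle and the path geometry must be chosen together and the gap-size inequality checked by hand. The cases $n=5,6$ are easier, since the extra sides give more room to route the paths; in fact once $n=4$ is in hand one can obtain $n=5$ and $n=6$ by analogous explicit constructions (or by ``decorating'' the quadrilateral construction with further equal-length edges), so the quadrilateral is the case that drives the whole lemma.
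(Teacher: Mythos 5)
Your proposal follows the same route the paper takes: reduce $n \ge 7$ to Lemma~\ref{lem:regular-bit-gadget} since a regular $n$-gon is equilateral, then handle $n = 4, 5, 6$ by exhibiting explicit equilateral polygons (a non-square rhombus for $n=4$ and suitable equilateral pentagon/hexagon shapes for $n=5,6$), giving concrete bit-writer and bit-reader configurations, verifying the blocking/non-overlap conditions by the center-to-center distance computations of Section~\ref{sec:roots-of-unity}, and adapting the junction-polyform grid of Section~\ref{sec:bit-grid-main} so the gadgets can be normalized and fed into the Turing-machine simulation --- which is exactly what Section~\ref{sec:bit-readers-equilaterals} does. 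One small caveat: your parenthetical explanation that a non-square rhombus generates a \emph{non-discrete} set of edge directions is at odds with your (and the paper's) reliance on a periodic grid; what is actually used is a rhombus whose angle is a rational multiple of $\pi$, so the set of orientations is finite but strictly larger than the single orientation class a square admits, and it is this multiplicity of local orientations --- not non-discreteness --- that breaks the ``all tiles lie on one fixed lattice'' premise underlying the impossibility arguments of Section~\ref{sec:imposs-poly}.
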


\begin{figure}[htp]
\centering
  \subfloat[][A $0$ is read, and a $1$ cannot be read by mistake since the tile $B$ prevents a quadrilateral tile from attaching via the glue labeled $g_1$.]{%
        \label{fig:equilateralBitReaderA}%
        \includegraphics[width=2.5in]{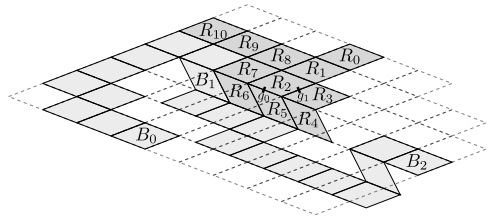}
        }%
        \quad
  \subfloat[][A $1$ is read. This time a $0$ cannot be read by mistake since the tile $B$ prevents growth of a path of quadrilateral tiles that attach via the glue labeled $g_0$.]{%
        \label{fig:equilateralBitReaderB}%
        \includegraphics[width=2.5in]{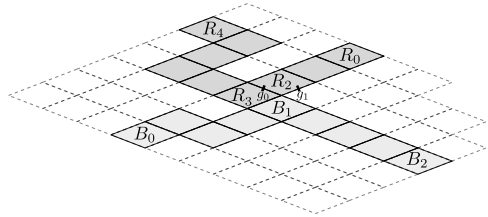}
        }%
  \caption{A connected bit-gadget consisting of quadrilateral tiles. This figure also depicts the tile shape.}
  \label{fig:equilateralBitReader}
\end{figure}

(a) and (b) of Figure~\ref{fig:equilateralBitReader} depict bit-reading gadgets which give a scheme for ``writing a bit'' as growth proceeds from left to right, and ``reading a bit'' as growth proceeds from right to left. To write a bit, we can define unique glues that enforce the assembly of the path of tiles (light gray tiles in (a) and (b) of Figure~\ref{fig:equilateralBitReader}) starting from the tile labeled $B_0$ and ending at a tile labeled $B_2$ in (a) and (b). Assuming that the light gray tiles are part of an existing assembly, to read a bit, we define unique glues that allow $R_0$, $R_1$ and $R_2$ to bind in that order. Then, $R_2$ exposes two glues labeled $g_0$ and $g_1$ in Figure~\ref{fig:equilateralBitReader}. Now, depending on whether an assembly which represents a $0$ is present or an assembly which represents a $1$ is present, either a quadrilateral tile with a glue labeled $g_1$ binds to $R_2$ via the glue $g_1$ exposed by $R_2$ (depicted in Figure~\ref{fig:equilateralBitReaderA}) or a quadrilateral tile with a glue labeled $g_0$ binds to $R_2$ via the glue $g_0$ exposed by $R_2$ (depicted in Figure~\ref{fig:equilateralBitReaderB}).
In the former case, denote the quadrilateral tile which binds to $R_2$ by $R_3$; this tile is labeled $R_3$ in Figure~\ref{fig:equilateralBitReaderA}. Then, we can define glues that allow the tiles $R_i$ for $3\leq i \leq 10$ to bind in that order as depicted in Figure~\ref{fig:equilateralBitReaderA}. Moreover, we note that $B_1$ prevents a tile from binding to $g_0$. The latter case, depicted in Figure~\ref{fig:equilateralBitReaderB}, is similar. In this case, a quadrilateral tile (labeled $R_2$) binds to $g_0$. We define this tile such that the path of tiles from $R_2$ to $R_4$ assembles. In the case of Figure~\ref{fig:equilateralBitReaderB}, we also note that $B_1$ prevents a tile from binding to $g_1$.  Finally, we note that relative to the underlying grid (shown as dashed lines in (a) and (b) of Figure~\ref{fig:equilateralBitReader}) this configuration can then be used to obtain a normalized on-grid bit-reading gadget.

Constructions for normalized on-grid bit-reading gadgets for equilateral polygons with sides $5$ or $6$ sides  are similar and are given in Section~\ref{sec:technical-equilaterals}.

\subsection{A single shaped system with triangular tiles}\label{sec:bit-readers-triangle}

In this section we describe bit-reading gadgets for single-shaped systems whose tile set consists of a particular obtuse isosceles triangle. We assume that the edges of all triangular tiles have the same length. The bit-reading gadgets that we give here are normalized on-grid bit-readers. Again, it should be noted that while the general grid construction given in Section~\ref{sec:bit-grid-main} pertain to regular polygons. Similar techniques can be used to obtain grids for the triangular tiles in this section. The grids themselves are depicted using dashed lines in the figures of this section.

\begin{lemma}\label{lem:triangular-bit-gadget}
There exists an obtuse isosceles triangle $P$ and a single-shaped system $\mathcal{T} = (T, \sigma)$ with shape $P$ such a bit-reading gadget exists for $\mathcal{T}$.
\end{lemma}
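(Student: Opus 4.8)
\noindent\emph{Proof plan.} The plan is to follow the template already established for the other positive results (Lemmas~\ref{lem:regular-bit-gadget}--\ref{lem:equilateral-bit-gadget} together with Lemma~\ref{lem:grids}): fix a single obtuse isosceles triangle $P$, build a periodic grid out of copies of $P$, exhibit an explicit pair of on-grid bit-writer configurations ``$0$'' and ``$1$'' together with the two competing bit-reader paths, and then verify by a coordinate computation that the blocker tile $B$ placed by the writer obstructs exactly the path that would misread the stored bit, while leaving the correct path --- and all growth continuing past it --- unobstructed. Assembling these pieces with unique glues forcing each path (exactly as in Section~3.2 of~\cite{Polyominoes}) then yields a normalized on-grid bit-reading gadget in the sense of Definition~\ref{def:bit-reader}.

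First I would fix the shape. Let $P$ be the isosceles triangle with base of length $1$ and obtuse apex angle $\theta$, chosen so that the base length differs from the common leg length $\tfrac{1}{2}\csc(\theta/2)$, so that edge-adjacent copies of $P$ can meet only base-to-base or leg-to-leg. Because $P$ is not equilateral, the leg-to-leg adjacencies (in either of the allowed orientations, since flipping is disallowed) produce assemblies that are \emph{not} confined to a single fixed lattice; this is precisely the positional freedom that is unavailable to the equilateral triangle and that defeats the impossibility argument of Lemma~\ref{lem:tri}. Using this freedom I would assemble a ``junction polyform'' out of a bounded number of copies of $P$, in the spirit of Section~\ref{sec:bit-grid-main}, read off two integer translation vectors $\vec v,\vec w$ for which neighbouring copies share full edges and no two copies overlap, and thereby obtain, by the argument of Lemma~\ref{lem:grids}, a directed system whose terminal assembly is a periodic grid $g_\alpha$ of the plane made of $P$. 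This grid supplies the coordinate system with respect to which the bit gadget is ``on grid''.

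Next I would give the two configurations (the analog of Figures~\ref{fig:heptagonBitReader} and~\ref{fig:equilateralBitReader}). The writer grows a short path $B_0,\dots,B_2$ from left to right whose last tiles include a blocker $B$ whose centroid sits at one of two grid-relative positions, $p_1$ if a $1$ is stored or $p_0$ if a $0$ is stored. The reader grows $R_0,R_1,R_2$ from right to left; $R_2$ exposes two glues $g_0$ and $g_1$, where the tile binding $g_1$ initiates the ``$1$-path'' and the tile binding $g_0$ initiates the ``$0$-path'', each forced by unique glues down to an on-grid tile carrying a type in $T_1$, resp.\ $T_0$, past the line $x=0$. The verification then reduces, as in the edge-vector (root-of-unity) computations of Section~\ref{sec:roots-of-unity}, to summing the finitely many edge vectors of $P$ along each path to locate the relevant vertices and centroids, and checking: (i) with $B$ at $p_1$, the interior of $B$ meets the interior of the tile in the unique next position of the $0$-path (so no $0$ can be read), yet is disjoint from every tile of the $1$-path; (ii) with $B$ at $p_0$, the symmetric statement; and (iii) with $B$ at $p_0$, $B$ is moreover disjoint from the tile that must attach just beyond the $0$-reading tile, so that reading a $0$ does not stall later growth. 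Since along each path these tiles are edge-adjacent and, by convexity, the clearance between two edge-adjacent copies of $P$ is narrowest along their shared edge and widens on either side of it, each of (i)--(iii) becomes a single explicit distance/sign inequality.

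The main obstacle is that (i)--(iii) must hold \emph{simultaneously}: the pentagon impossibility proof shows that for a ``fat'' tile one cannot seal a path with a blocker without trapping every other path, so the whole construction hinges on using the shape freedom to open a usable channel. The intended resolution is to take $\theta$ very obtuse, so that $P$ is a thin splinter; then the last tile of the $0$-path together with $B$ can bound a region that is itself a thin, nearly $P$-shaped sliver through which a thin tile of the $1$-path slips, while that same sliver is narrow enough to be completely sealed by $B$ when a $0$ is written. Showing that the set of admissible apex angles and blocker offsets for which all of (i)--(iii) hold is nonempty is the crux; the remaining coordinate bookkeeping, together with the standard verification that the forced paths and blocker admit a consistent glue assignment and a valid assembly ordering (writer before reader), would be deferred to the technical appendix.
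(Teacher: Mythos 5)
Your plan follows the same overall template the paper uses for this lemma: pick an obtuse isosceles triangle, build a periodic grid out of it by junction-polyform techniques à la Section~\ref{sec:bit-grid-main}, exhibit a ``0'' and a ``1'' bit-writer configuration plus the two competing reader paths, and close the argument with coordinate/overlap checks. The paper's own proof is itself quite terse --- it essentially points to Figure~\ref{fig:irregular-triangles} as the witness configuration and says the verification proceeds ``as in Section~\ref{sec:bit-readers-equilaterals}'' --- so your high-level strategy is the intended one.

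That said, there is a genuine gap. The lemma is an existence claim, and the burden is to actually exhibit one specific obtuse isosceles triangle $P$ together with concrete bit-writer/bit-reader tile configurations and to verify the three conditions (block the wrong path, leave the right path clear, leave onward growth clear). Your proposal instead presents the verification itself as the remaining ``crux'' and defers it: you observe that taking the apex angle very obtuse ought to open a slippage channel while still allowing the blocker to seal the other path, but you neither fix $\theta$ nor carry out the sign/distance checks (i)--(iii), nor argue that your grid vectors $\vec v,\vec w$ exist and are compatible with the blocker offsets $p_0,p_1$ you need. The paper discharges this (albeit summarily) by actually drawing the configurations; your plan promises a construction rather than producing one. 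The ``thin splinter'' intuition is a good informal explanation of why the triangle can evade the Lemma~\ref{lem:pent}-style trapping argument, and it does not appear explicitly in the paper --- but intuition is not a substitute for the explicit blocking calculation, especially since the whole difficulty of temperature-1 bit-reading is that the blocker that seals the $0$-path must simultaneously \emph{not} seal the $1$-path and \emph{not} seal the continuation past it, and nothing in your text pins down a parameter regime in which all three inequalities are satisfied simultaneously.

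One further small discrepancy worth flagging: you explicitly choose the base length to differ from the leg length. The paper's prose states ``we assume that the edges of all triangular tiles have the same length,'' which, read literally, forces an equilateral triangle and contradicts the lemma statement. This appears to be a copy-over from the $2$-shaped-systems subsection (where equal edge lengths across two different regular polygons is a meaningful normalization); your reading (base $\neq$ legs, so adjacencies are base-to-base or leg-to-leg) is the sensible one and is presumably what the figure uses, but you should note the departure.
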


\begin{figure}[htp]
\centering
  \subfloat[][A configuration of polygonal tiles which represents a $0$]{%
        \label{fig:irregular-triangleA}%
    		\makebox[.53\textwidth]{
        \includegraphics[width=3in]{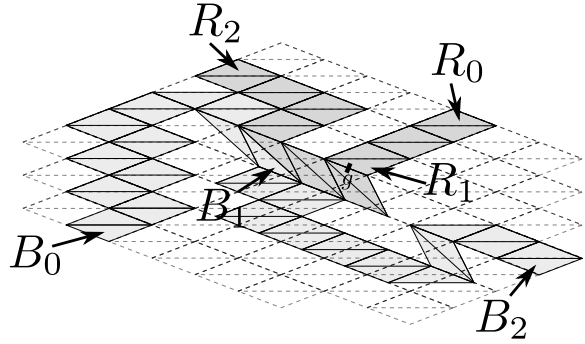}
        }
        }%
        \quad\quad\quad\quad\quad\quad\quad\quad
  \subfloat[][A configuration of polygonal tiles which represents a $1$.]{%
        \label{fig:irregular-triangleB}%
    		\makebox[.53\textwidth]{
        \includegraphics[width=3in]{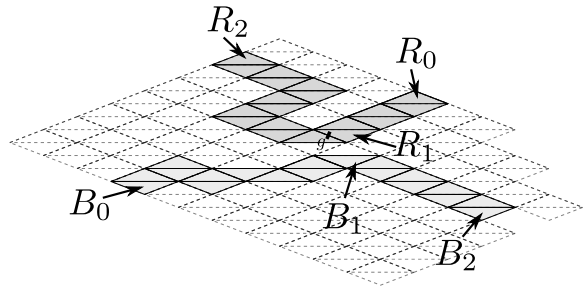}
        }
        }%
  \caption{Bit-reading gadget configuration for tiles with the shape of an irregular triangle.}
  \label{fig:irregular-triangles}
\end{figure}

Figure~\ref{fig:irregular-triangles} depicts the configurations that give rise to bit-reading gadgets for
single-shaped systems whose tiles have the shape of an obtuse isosceles triangle. As in Section~\ref{sec:bit-readers-equilaterals}, one can see that these configurations can be used to obtain a normalized on-grid bit-reading gadget.
} %
\fi

\ifabstract
\later{
\section{Building \emph{Normalized} Bit-reading Gadgets} \label{sec:technicalLemmas}
Let $P$ be a regular polygon with $7$ or more sides, and let $g_{\alpha}$ denote the terminal assembly of the tile system given in Lemma~\ref{lem:grids}.
We now show that given a bit-reading gadget from proceeding section corresponding to the regular polygon $P$, we can form an on grid bit-reading gadget (with respect to $g_{\alpha}$).  In order to show this, we first show how the individual bit-writers can be grown in an on grid manner (with the bit reader that reads these writers also on grid), and then we show how to find positions common to these bit-writers so that up to translation, the bit writer start and end in the same place. Before we begin our construction, we introduce a couple of definitions.
We denote the location of the center of a tile $P$ in the complex plane by $c(P)$.  We say that a tile $P$ is \emph{x-centered on grid $g_{\alpha}$} provided that $c(P) = c(P')$ and $P$ and $P'$ have the same orientation for some tile $P' \in g_{\alpha}$.

At a high-level, we construct a normalized bit-reading gadget from one of the gadgets presented in Section~\ref{sec:bit-readers-main} in the following way. Consider Figures~\ref{fig:heptagonal-gadget-0} and~\ref{fig:heptagonal-gadget-1} where normalized bit-reading gadgets are given in the case of heptagonal tiles. In those figures a bit is written from west to east and read from east to west. When writing a bit and starting from the southwest-most black tile, assembly proceeds via attachment of a single tile at a time on some fixed grid $g_\alpha$ (shown in the background in the figures as white heptagons). Then, the blue tiles ``shift'' off this grid and onto another grid, $g_{\alpha}'$ say. This shifting ensures that the tile labeled $R$ in those figures is on the grid $g_\alpha$. Then, the portion of a bit-reading gadget which encodes a 0 (Figure~\ref{fig:heptagonal-gadget-0}) or 1 (Figure~\ref{fig:heptagonal-gadget-1}) is assembled. The tiles which make up this portion are purple in the figures. Call the set of these tiles $S$. At this point, we are possibly on a grid $g_\alpha''$ which may or may not be distinct from $g_\alpha$ or $g_\alpha'$. Finally, we ``shift'' back onto the grid $g_\alpha$ by assembling the remaining portion of a bit-reading gadget (those tiles of the bit-reading gadget that are not in $S$). The tiles which produce this final shift are green in Figures~\ref{fig:heptagonal-gadget-0} and~\ref{fig:heptagonal-gadget-1}. At this point, a path of tiles (each of which is on $g_\alpha$) assemble until the southeast-most black tile in the figures attaches. This bit is read using the orange, red and yellow tiles. The tile $R$ is on grid $g_\alpha$. The red tiles are the path of unblocked tiles whose assembly indicates that the appropriate bit is read. The final red tile that is placed may not be on grid $g_\alpha$. Therefore, the yellow tiles (whose assembly sequence is essentially that of the red tiles in reverse order) ``shift'' back onto grid $g_\alpha$. Note that the black tiles and the end tiles of the reading path of tiles (the orange, yellow, and red tiles) in Figure~\ref{fig:heptagonal-gadget-0} have locations that match the locations of the respective tiles in Figure~\ref{fig:heptagonal-gadget-1}. This ensures that we can ``plug'' these gadgets into a zig-zag growth pattern to simulate a Turing machine.

\begin{figure}[htp]
	\includegraphics[width=\linewidth]{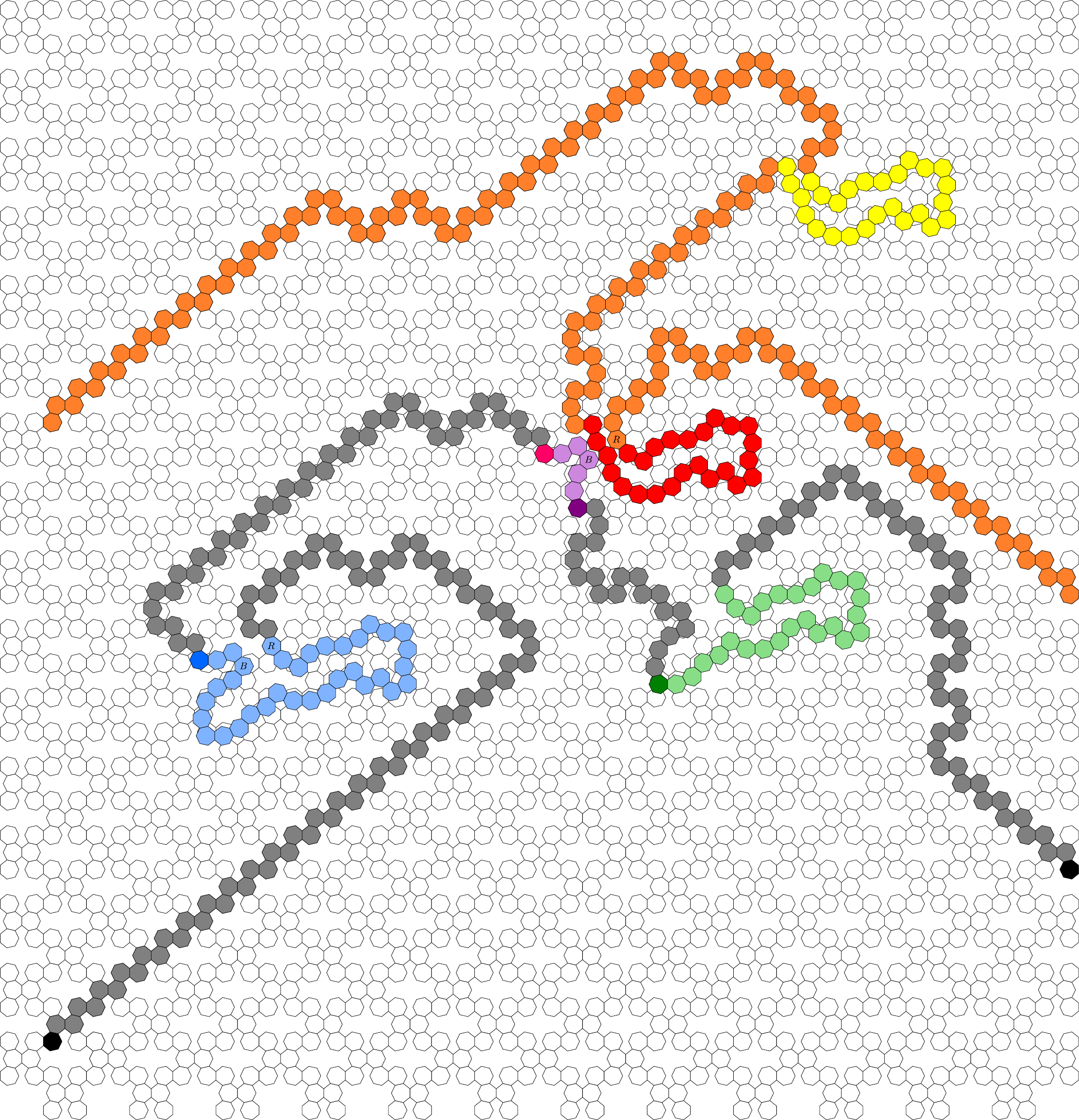}
    \caption{The completed bit-reading gadget for heptagons when a 0 is ``read''.  The grey tiles represent paths which connect the subconfigurations in the bit writer.  The blue tiles represent $C_{\alpha_w}$, and the dark blue tile represents $t_{ww}$.  The purple tiles represent $C_{\alpha}$, and the dark purple tile represents $t_s$.  The pink tile represents $t_w$.  The green tiles represent $C_{\alpha_e}$, and the dark green tile represents $t_{se}$. All other color of tiles represent tiles composing the bit reader. In this figure the bit is written from west to east and read from east to west.}
  \label{fig:heptagonal-gadget-0}
\end{figure}

\begin{figure}[htp]
	\includegraphics[width=\linewidth]{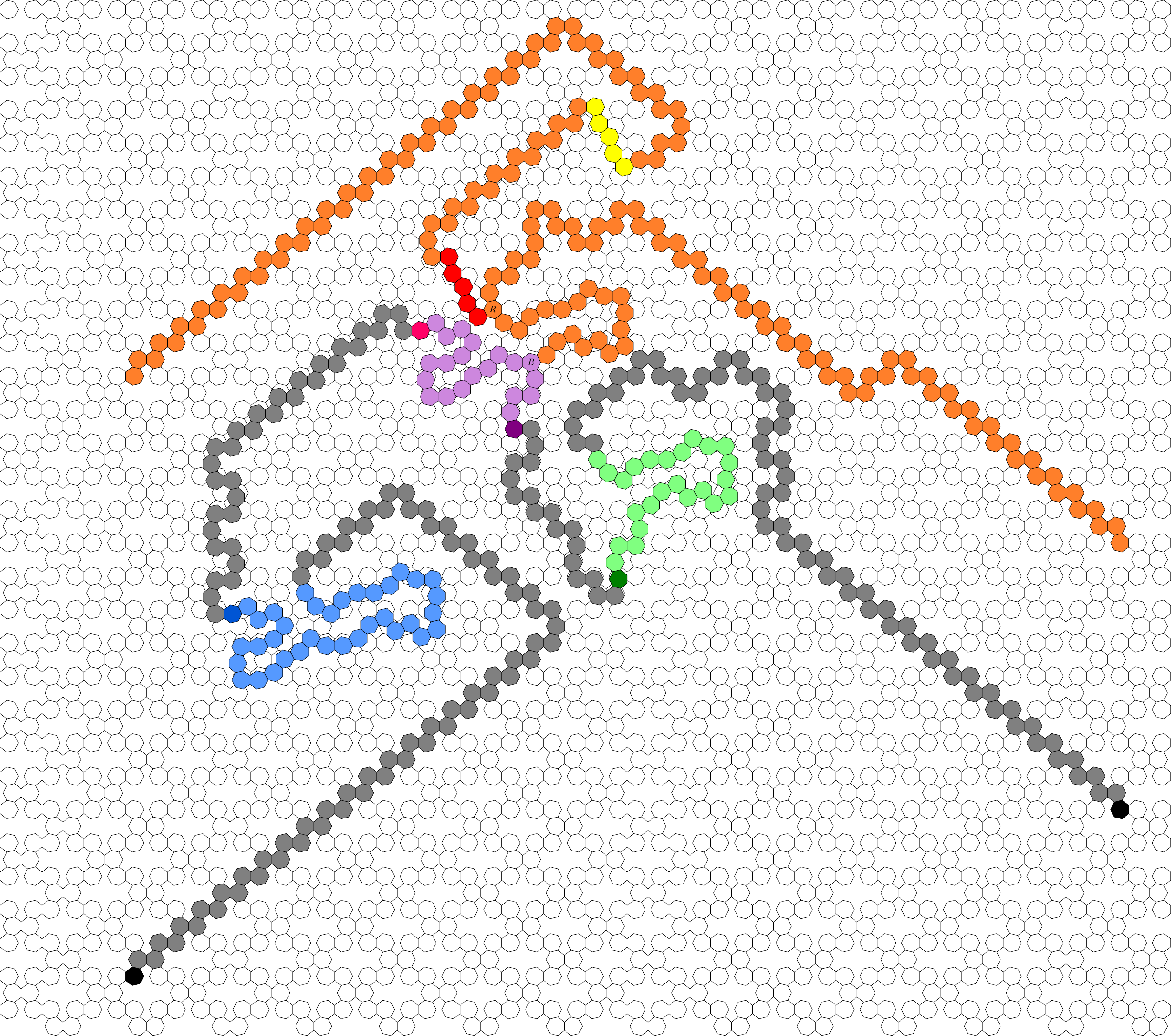}
    \caption{The completed bit-reading gadget gadget for heptagons when a 1 is ``read''.  The grey tiles represent paths which connect the subconfigurations in the bit writer.  The blue tiles represent $C_{\alpha_w}$, and the dark blue tile represents $t_{ww}$.  The purple tiles represent $C_{\alpha}$, and the dark purple tile represents $t_s$.  The pink tile represents $t_w$.  The green tiles represent $C_{\alpha_e}$, and the dark green tile represents $t_{se}$. All other color of tiles represent tiles composing the bit reader. In this figure the bit is written from west to east and read from east to west.}
  \label{fig:heptagonal-gadget-1}
\end{figure}

\subsection{Constructing On Grid Bit-writer Configurations} \label{sec:bitwon}
The $\alpha_0$ on grid bit writer will consist of three parts which we call: 1) a blocker subconfiguration, 2) an east shifting subconfiguration and 3) a west shifting subconfiguration.  The three subconfigurations are all formed by modifying a ``base'' configuration which we describe now.  The base configuration is formed by modifying the assembly obtained when the bit-reading gadget described in Section~\ref{sec:bit_reading} ``reads a 0''.

If the bit-reading gadget for $P$ is simple (e.g. those shown in Figures~\ref{fig:10to12sidesBitReaders} and~\ref{fig:15+sides}), we first extend the bit writing portion of the gadget in the following way.  To begin, observe that the bit writer portion of the bit reading gadget will consist of a tile with negated orientation which we will call $B$.  Note that by the construction of these simple bit reading gadgets, we can always place a tile which has standard orientation at a position of $\omega^{\lceil \frac{3k}{4} \rceil}$ relative to $B$.  After placing this tile, we continue placing tiles so that we form a path of tiles from $B$ such that the last tile placed in this path is the northernmost tile in the bit reading gadget configuration and has standard orientation (see Figure~\ref{fig:simple_bit_ext}).  Next we grow a path of tiles from $B$ that extends south so that the last tile placed in this path is the southernmost tile in the bit reading configuration as shown in Figure~\ref{fig:simple_bit_ext}.

\begin{figure}[htp]
\centering
  \subfloat[][A simple bit reading gadget formed from a regular polygon with 10 sides which has read a 0 (also shown in Figure~\ref{fig:10to12sidesBitReaders}.]{%
        \label{fig:simple_bit}%
        \includegraphics[width=0.75in]{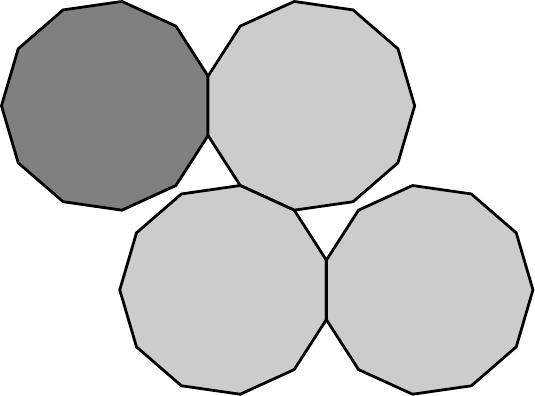}
        }%
        \quad
  \subfloat[][Extending the bit reading gadget to form our base configuration.]{%
        \label{fig:simple_bit_ext}%
        \includegraphics[width=1.5in]{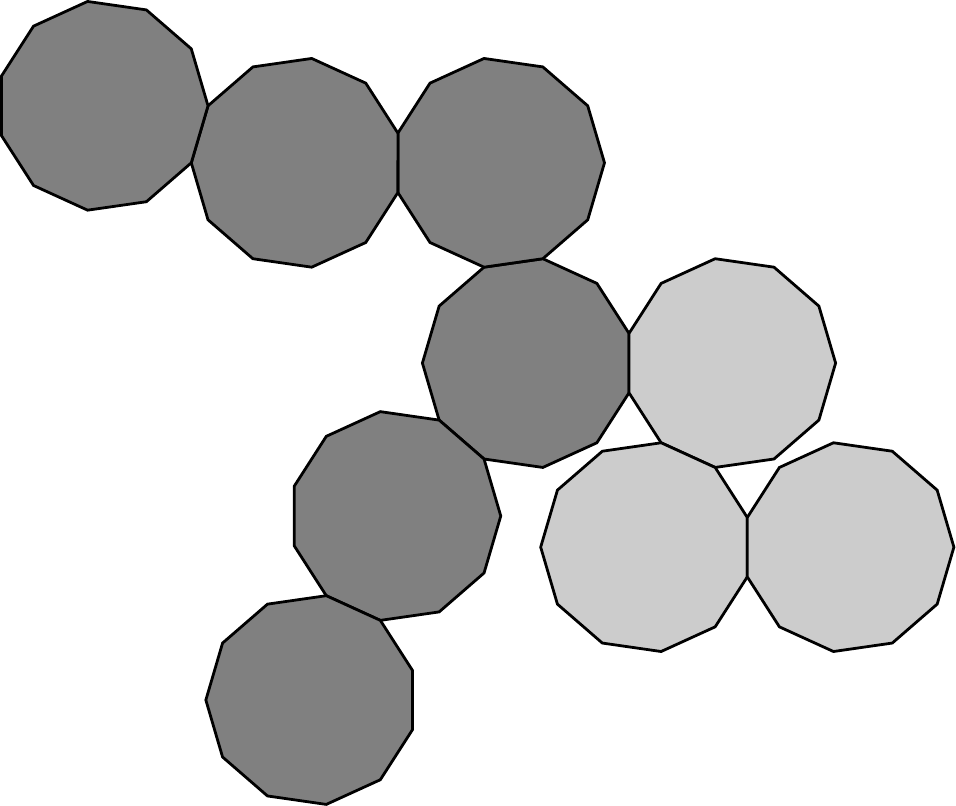}
        }%
  \caption{A simple bit reading gadget (which ``read a 0'') and its extension (which will form our base configuration).  The darkly shaded tiles are the bit writer portion of the bit reading gadget.}
  \label{fig:simple_ext}
\end{figure}

We say that the first tile to be placed in the bit-writer subconfiguration of the bit-reading gadget is the northernmost ``end tile'' in the path.  The other ``end tile'' in the path we refer to as the last tile to be placed in the bit-writer subconfiguration.  Also, recall that the tile $R$ in the bit reading gadget is the tile from which the bit reader grows.  To construct the base configuration, we simply remove the tiles in the configuration which do not lie on either the path from the first tile in the bit-writer to the tile $R$ or the path from the last tile in the bit-writer to the tile $R$.  Furthermore, we extend a path from the first tile to be placed in the bit-writer portion of the bit-reading gadget so that the last tile placed in this path has negated orientation and is the westernmost tile in the bit reading gadget configuration.  Call this configuration $C_{\alpha}$.

Let the tile $t_s$ represent the westernmost tile of the set of southernmost tiles in the bit-writer portion of the configuration $C_{\alpha}$.  We consider two cases: 1) the tile $t_s$ has negated orientation and 2) the tile $t_s$ has standard orientation.  In case 1, we add a tile in standard orientation to the configuration at location $-\omega^{\lfloor \frac{k}{4} \rfloor}$ relative to the tile $t_s$.  We know that this is still a valid configuration by the construction of the bit reading gadgets in the previous section and the assumption that $t_s$ is the westernmost tile of the set of southernmost tiles.  Note that after this modification we are now in case 2.  In the case that $t_s$ has standard orientation, we translate $C_{\alpha}$ so that the tile $t_s$ is $1$-centered on the grid $g_{\alpha}$. We denote the bounding box of $C_{\alpha}$ by $B_{\alpha}$ and the dimensions of $B_{\alpha}$ by $m_B \times n_B$.

Now, let $C_{\alpha_e}$ be the configuration obtained by taking a copy of $C_{\alpha}$ and removing all tiles which do not lie on the shortest path from $t_s$ to $R$.  For clarity we denote the tile $t_s$ in $C_{\alpha_e}$ by $t_{se}$.  Translate this configuration so that it has the following properties: 1) the tile $t_{se}$ is $1$-centered on the gird $g_{\alpha}$, 2) $\Re(c(t_{se})) - \Re(c(t_s)) \geq 5$, and 3) $\Im(c(t_s))-\Im(c(t_{se})) \geq n_B$.

Define the tile $t_w$ to be the westernmost tile of $C_{\alpha}$.  Translate the configurations $C_{\alpha}$ and $C_{\alpha_e}$ so that they remain in the same positions relative to each other and the tile $t_w$ is $4$-centered on the grid $g_{\alpha}$.  We now make a copy of configuration $C_{\alpha}$, which we call $C_{\alpha_w}$, and denote the tile $t_w$ in $C_{\alpha_w}$ by $t_{ww}$.  We translate the configuration $C_{\alpha_w}$ so that its location meets the following requirements: 1) the tile $t_{ww}$ is $4$-centered on the grid $g_{\alpha}$, 2) $\Re(c(t_w))-\Re(c(t_{ww}))\geq n_{B'}+5$, and 3) $\Im(c(t_w))-\Im(c(t_{ww}))\geq m_{B'}+5$.  Call this configuration $C''$.

The blocker subconfiguration consists of a modified version of the configuration $C_{\alpha}$.  Namely, it consists of the configuration $C_{\alpha}$ with all the tiles which do not lie on the minimal path from $t_w$ to $t_s$ removed.  We leave these extra tiles in the figures in the hopes that it will make the proof of correctness clearer.  The east shifting subconfiguration is given by $C_{\alpha_e}$ and the west shifting subconfiguration is given by $C_{\alpha_w}$.

\begin{figure}[htp]
\centering
  \subfloat[][Positioning the configuration $\alpha_e$ relative to the configuration $\alpha$.]{%
        \label{fig:bit-writer-ae}%
        \includegraphics[width=2.4in]{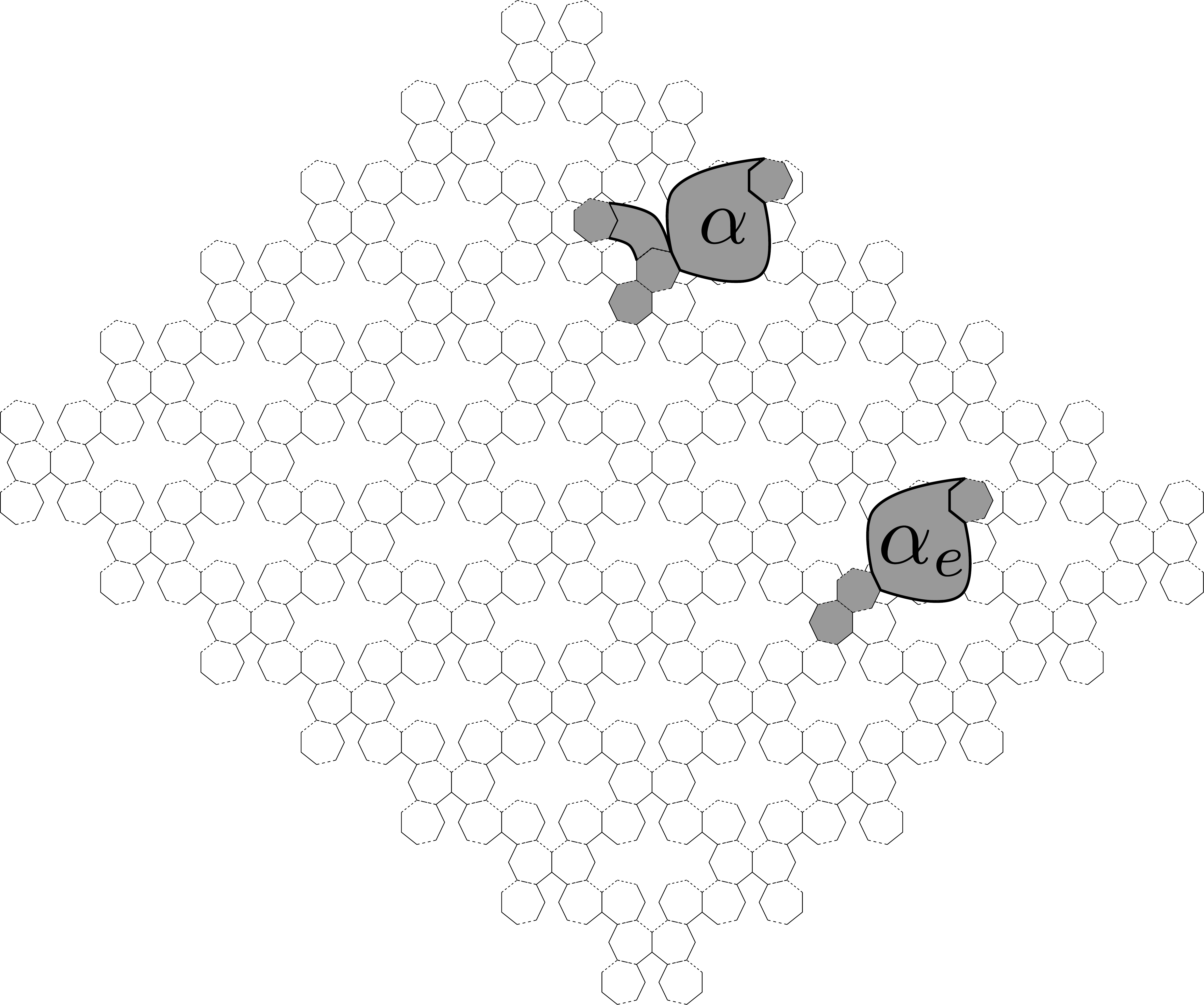}
        }%
        \quad
  \subfloat[][Positioning the configuration $\alpha_v$ relative to the configuration $\alpha$.]{%
        \label{fig:bit-writer-aw}%
        \includegraphics[width=2.4in]{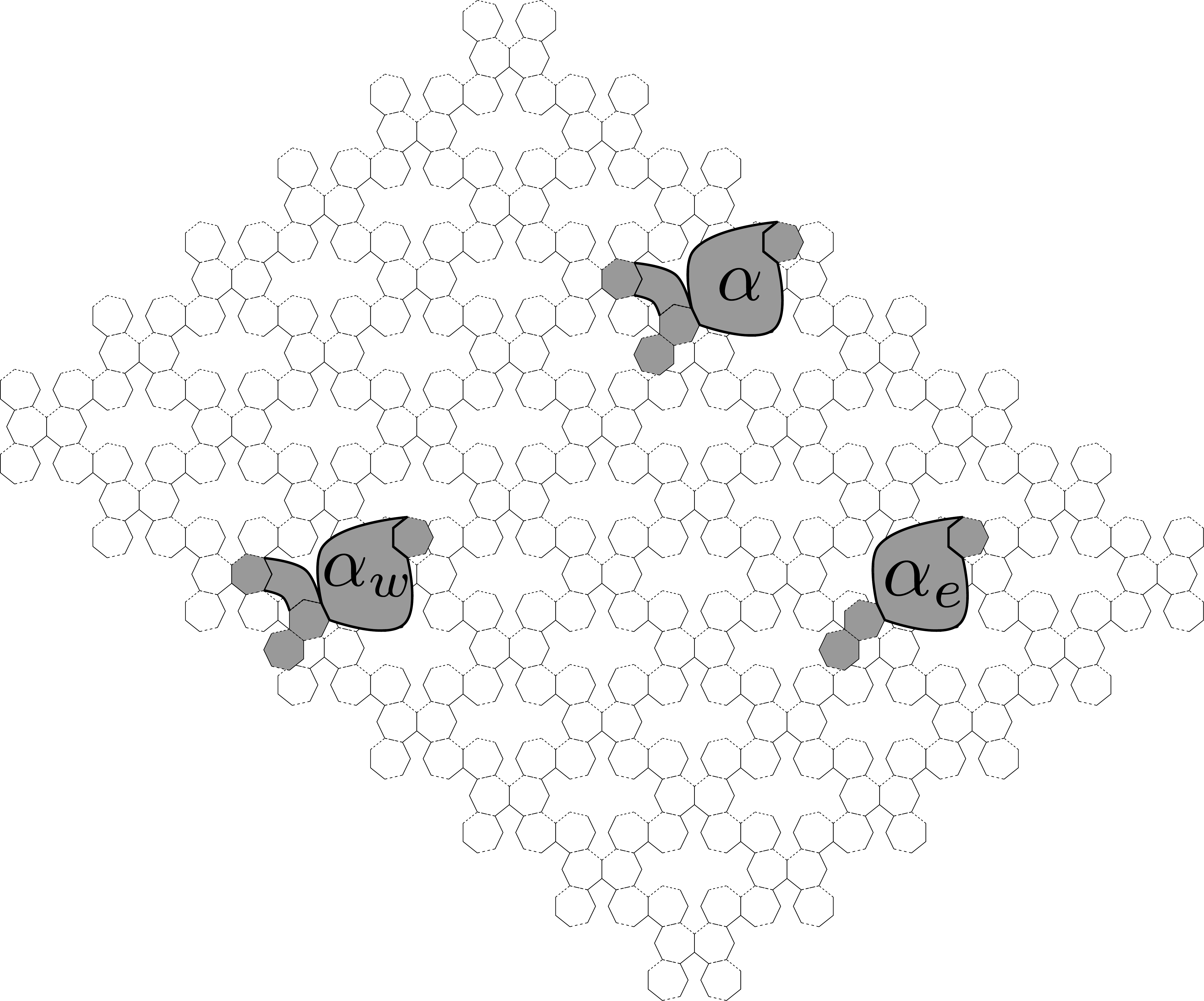}
        }%
  \caption{The three configurations and their positions relative to each other.}
  \label{fig:bit-writer-a}
\end{figure}

\subsection{Connecting the Bit-writer Subconfigurations}
Intuitively, we connect the blocker configuration to the east shifting configuration in the following way.  We shift the three configurations so that they remain in the same positions relative to each other, and the tile $t_s$ is $1$-centered.  Note that by construction, the tile $t_{se}$ lies at least 5 tiles to the right of $t_{s}$.  Thus, we can grow an almost straight line of tiles, which all lie on grid, until there is a tile which lies south of $t_s$ in the path.  Call this path $p_{se}$.  We then grow a path on grid from the last tile placed in $p_{se}$ that attaches to the southernmost side of the tile $t_s$.  An example of this can be found in Figure~\ref{fig:bit-writer-pe}.

Similarly, to attach the blocker configuration to the west shifting configuration, we first shift the two unconnected configurations (note there are now only two unconnected configurations now since the blocker configuration and the east shifting configuration are now attached) so that they remain in the same positions relative to each other, and the tile $t_w$ is $4$-centered on the grid.  Then we grow an on grid path of tiles from $t_n$ to the west (while keeping the path as straight as possible) until the path has tiles which lie to the west of $t_nw$ at which time the path turns (while still on grid) and grows south until it attaches to $t_nw$.  An example of this can be found in Figure~\ref{fig:bit-writer-pw}.

More formally, to connect the configurations $C_{\alpha}$ to $C_{\alpha_e}$ in the configuration $C''$, we grow a path in the following manner.  The first tile is placed with negated orientation and $3$-centered so that it completely shares a common edge with $t_s$.  We then grow a periodic path of tiles to the south with the tiles in the same positions and grid locations as the path of tiles in Figure~\ref{fig:bit-writer-pe}.  This repeats until a $1$-centered tile is placed so that it has the same imaginary part as tile $t_{se}$.  Once this occurs, we grow a periodic path of appropriately positioned tiles to the east in the same positions and grid locations as the path of east growing tiles in Figure~\ref{fig:bit-writer-pe}.  We do this until the $2$-centered tile completely shares an edge with the tile $t_{se}$ as shown in Figure~\ref{fig:bit-writer-pe}.  We call this configuration $C_e$.

To connect the configurations $C_{\alpha}$ to $C_{\alpha_w}$ in $C_e$, a path is grown from $C_{\alpha}$ to $C_{\alpha_w}$ as follows.  First, we shift the configuration $C_e$ so that the tile $t_w$ is $4$-centered.  Note that this also means the tile $t_ww$ is also $4$-centered.  We then grow a periodic path of $1$-centered, $2$-centered, $6$-centered, $4$-centered, $1$-centered, $3$-centered, $5$-centered, and $4$-centered tiles to the west (as shown in Figure~\ref{fig:bit-writer-pw}) until a $4$-centered tile is placed so that it has the same real part as the tile $t_{ww}$.  Once this occurs, we grow the periodic pattern south shown in Figure~\ref{fig:bit-writer-pw} until the $1$-centered tile in our path completely shares a common edge with $t_{ww}$.  Call this configuration $C'$.

\begin{figure}[htp]
\centering
  \subfloat[][Connecting $C_{\alpha_e}$ to $C_{\alpha}$.]{%
        \label{fig:bit-writer-pe}%
        \includegraphics[width=2.4in]{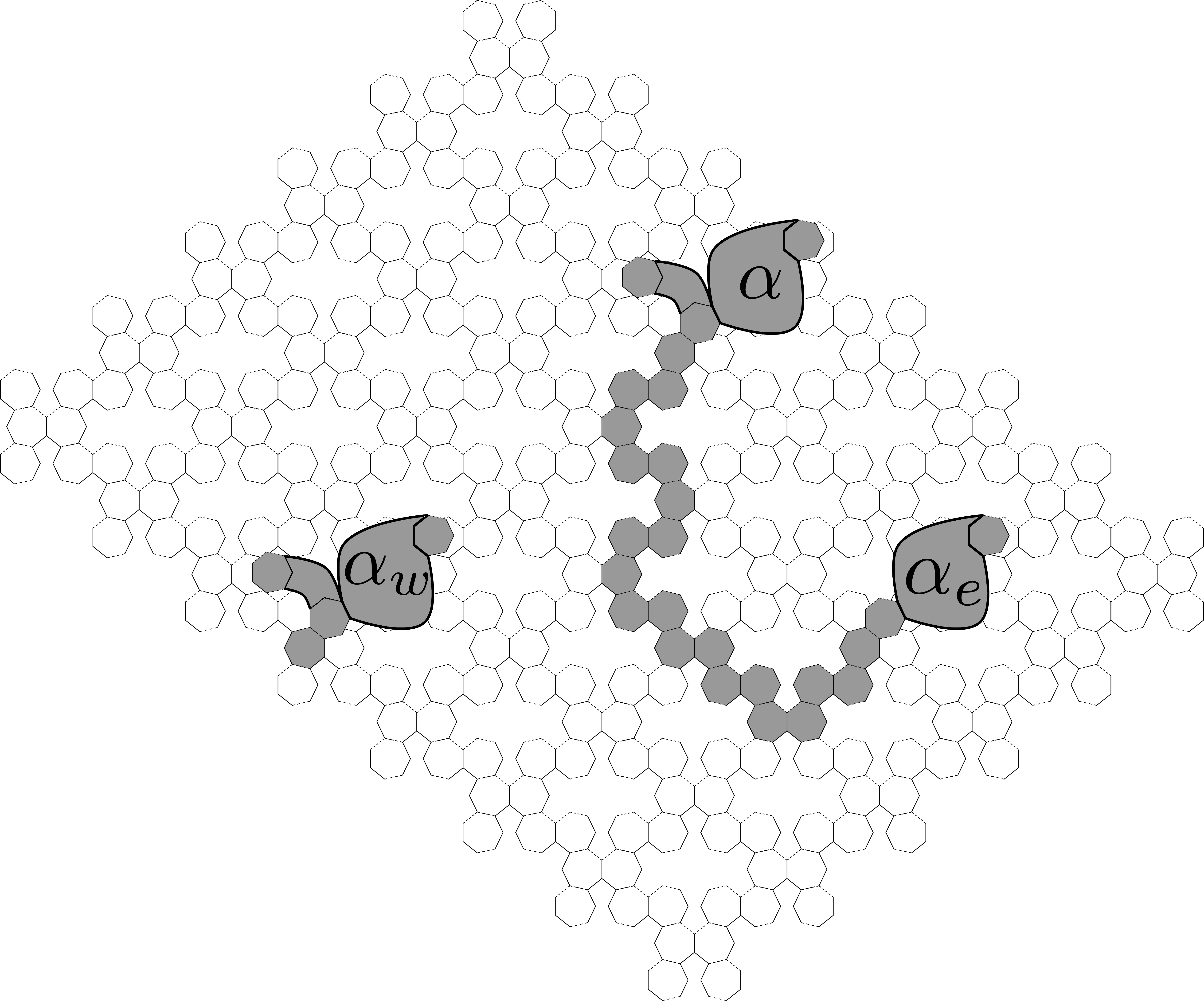}
        }%
        \quad
  \subfloat[][Connecting $C_{\alpha_w}$ to $C_{\alpha}$.]{%
        \label{fig:bit-writer-pw}%
        \includegraphics[width=2.4in]{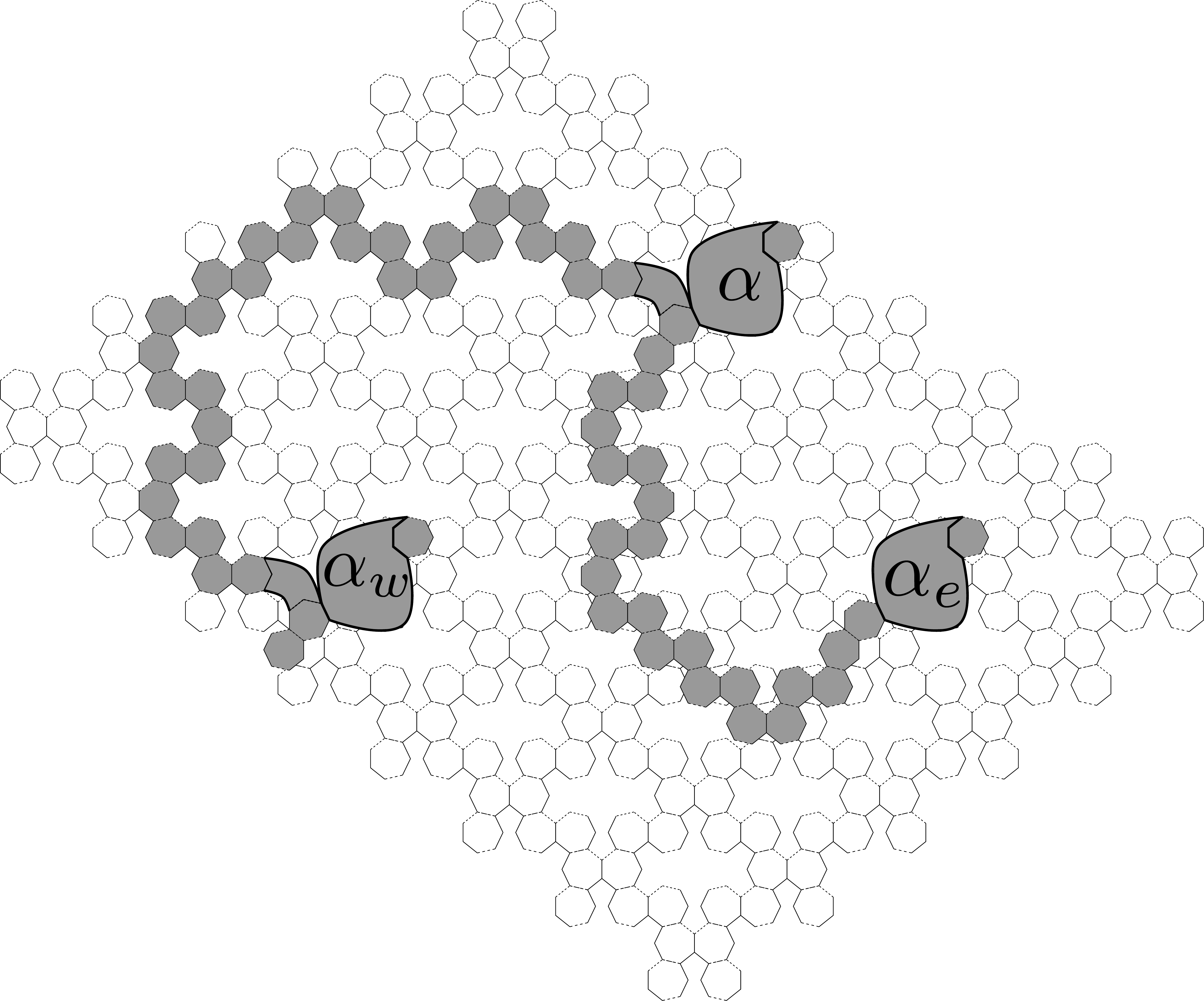}
        }%
  \caption{Connecting the configurations.}
  \label{fig:bit-writer-p}
\end{figure}

\begin{figure}[htp]
\centering
  \subfloat[][Positioning the configurations so that the $R$ tile is on grid.]{%
        \label{fig:bit-writer-complete0}%
        \includegraphics[width=2.4in]{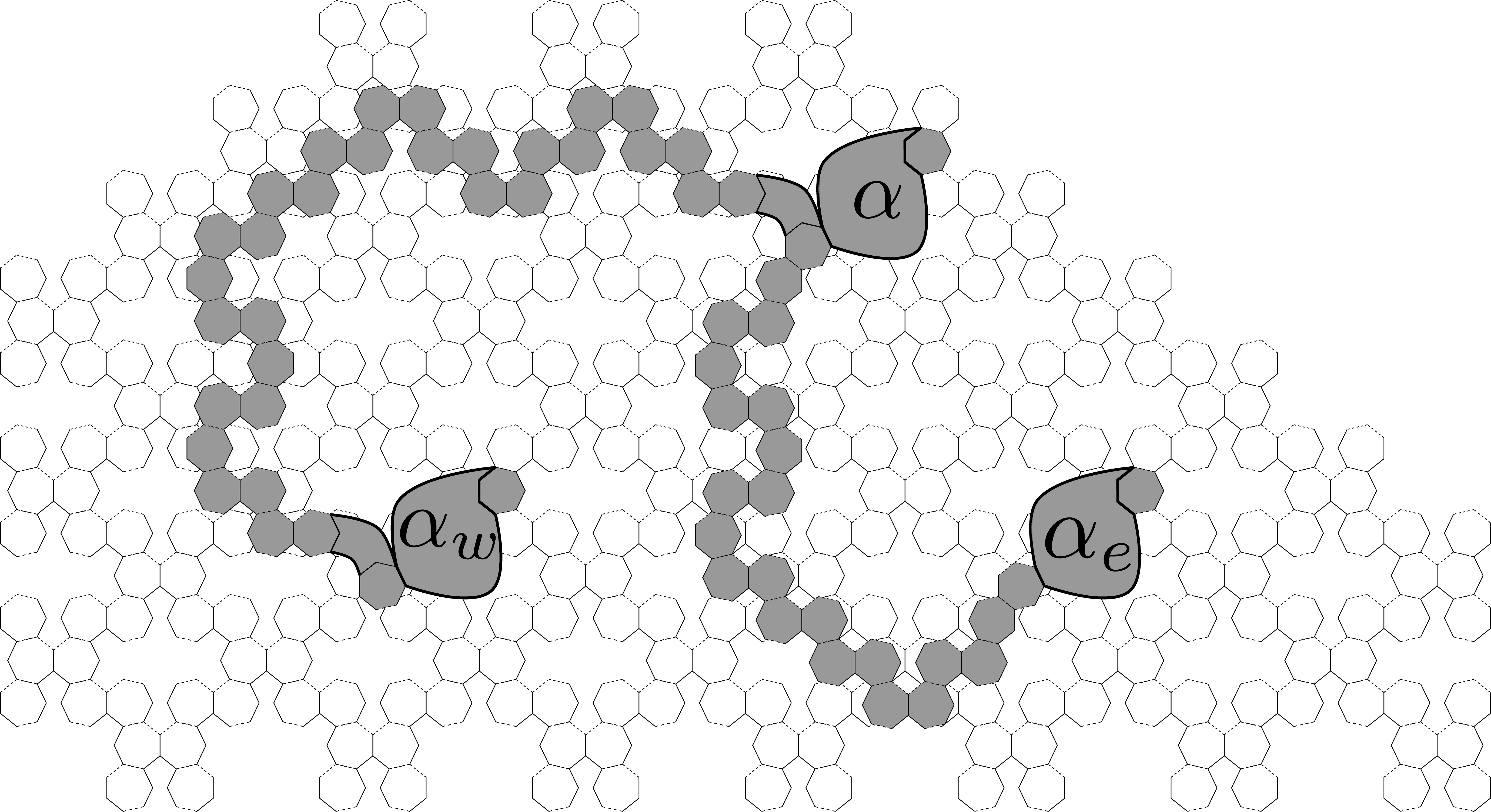}
        }%
        \quad
  \subfloat[][Growing the final paths of the bit writer configuration and removing the non-bit-writer portion of the configuration $\alpha$.  We refer to this configuration as the $\alpha_0$ bit-writer.]{%
        \label{fig:bit-writer-complete}%
        \includegraphics[width=2.4in]{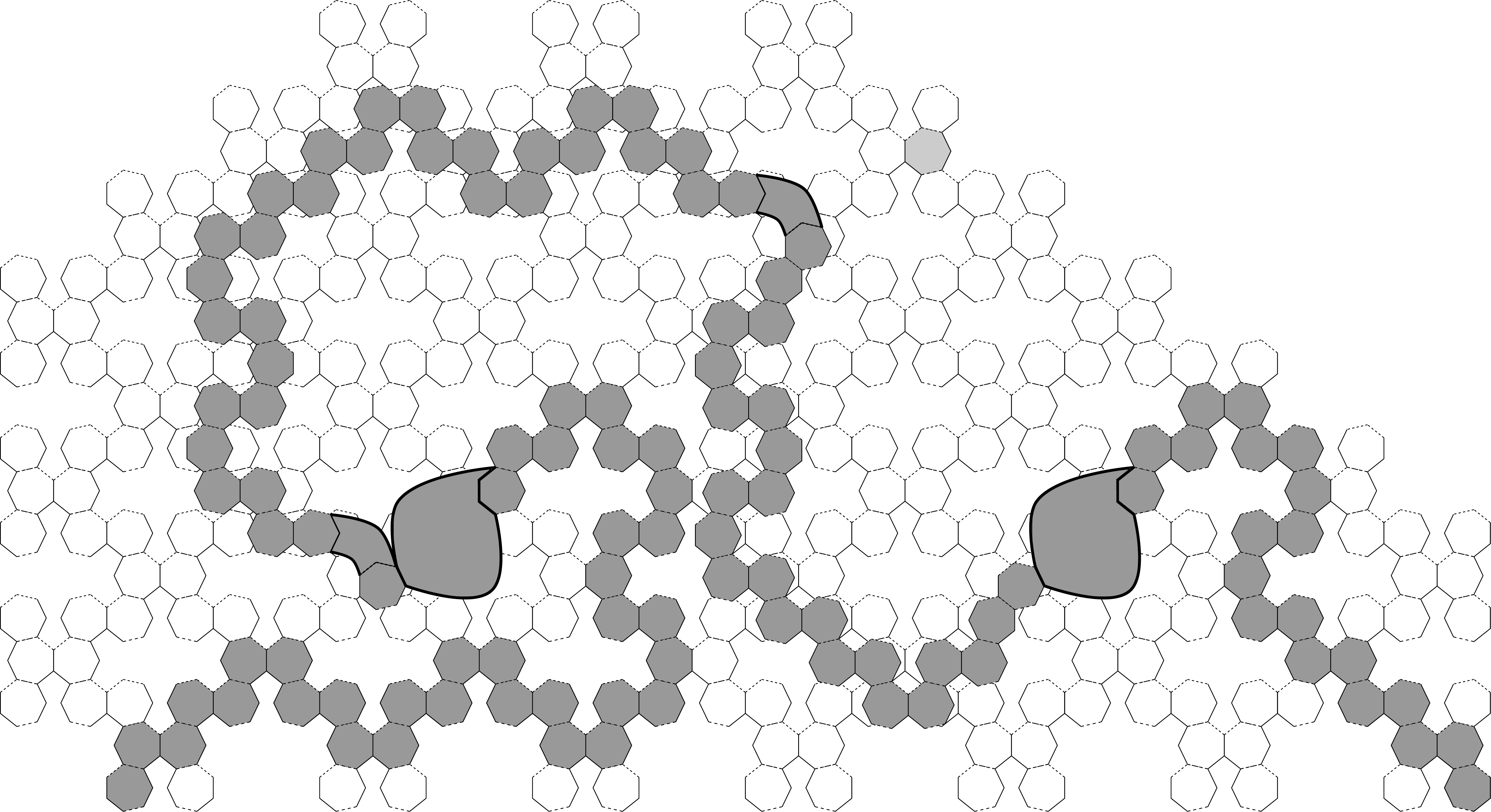}
        }%
  \caption{Completing the bit writer.}
  \label{fig:bit-writer-done}
\end{figure}

Now we describe how to grow out ``arms'' from the bit writer that are on grid which will allow the bit writers to connect to each other.  First, we translate the configuration $C'$ so that the tile $R$ in $C_{\alpha}$ is $4$ centered.  Note that this will imply the $R$ tiles in the configurations $C_{\alpha_e}$ and $C_{\alpha_w}$ are also $4$-centered as shown in Figure~\ref{fig:bit-writer-complete0}.  We then place a tile such that it has negated orientation, $3$-centered, and both the southernmost tile and east most tile in the configuration $C'$ (call this tile $t_{wb}$).  Likewise, we also place a tile such that it has standard orientation, it is $6$-centered, and both the southernmost tile and easternmost tile in the configuration $C'$ (call this tile $t_{eb}$).  Next, we place tiles on grid so that a path of tiles is formed from the $R$ tile in $C_{\alpha_w}$ to the $3$-centered tile placed above as shown in Figure~\ref{fig:bit-writer-complete}.  Similarly, we place tiles on grid so that a path of tiles is formed from the $R$ tile in $C_{\alpha_e}$ to the $6$-centered tile placed above.  Call this configuration $C$.

We construct $\alpha_1$ in a manner similar to our construction of $\alpha_0$.  The only difference in our construction of $\alpha_1$ will be that the configuration obtained from the bit reading gadget ``reading a 0'' will be used as our base configuration.

\subsection{Normalizing Bit-writers}
Now that we have constructed on grid bit reading gadgets, we can describe the construction of normalized bit-writers.

Construction of the normalized bit-writers begins by laying down the configurations $C_{\alpha_0}$ and $C_{\alpha_1}$ in the plane so that the tile labeled $R$ in each configuration (see above for the description of $R$) lies centered at the same point.  Next, we remove all tiles in the two configurations except for the tiles $t_{wb}$ and $t_{eb}$ in each bit writer.  We now place two new extremal tiles.  The first tile we place should be both the southernmost and westernmost tile in the configuration as well as $3$-centered.  Denote this tile $t_{\max w}$.  The location of the second tile's center should have the same imaginary part as the location of the center of the tile $t_{\max w}$.  In addition, this tile, which we denote $t_{\max e}$ should be the easternmost tile in the configuration.  See Figure~\ref{fig:au_st} for an example.

Now, we consider the configuration obtained above with all tiles contributed by $C_{\alpha_1}$ removed.  We place a connected path of tiles from the tile $t_{wb}$ to the tile $t_{\max w}$ as shown in Figure~\ref{fig:au_p0}.  Note that this path of tiles is such that none of the interior points of tiles overlap and every tile is connected to some other tile in the path by a completely shared edge. Similarly, we place tiles so as to form a path from the tile $t_{eb}$ to the tile $t_{\max w}$ which is also shown in Figure~\ref{fig:au_p0}.  These paths of tiles are then attached to configuration $C_{\alpha_0}$ in the same manner they are attached to the extremal tiles in the current configuration to form the configuration for the normalized $\alpha_0$ bit writer (shown in Figure~\ref{fig:au_a0}.

We also repeat this same process for the configuration obtained by considering the configuration in Figure~\ref{fig:au_st} with all tiles contributed by $C_{\alpha_0}$ removed which yields Figure~\ref{fig:au_p1}.  After ``copying and pasting'' these paths, we obtain the configuration for the normalized $\alpha_1$ bit writer which is shown in Figure~\ref{fig:au_a1}.

\begin{figure}[htp]
\centering
  \subfloat[][An example schematic of an $\alpha_0$ bit-writer.]{%
        \label{fig:au_a0_e}%
        \includegraphics[width=2.4in]{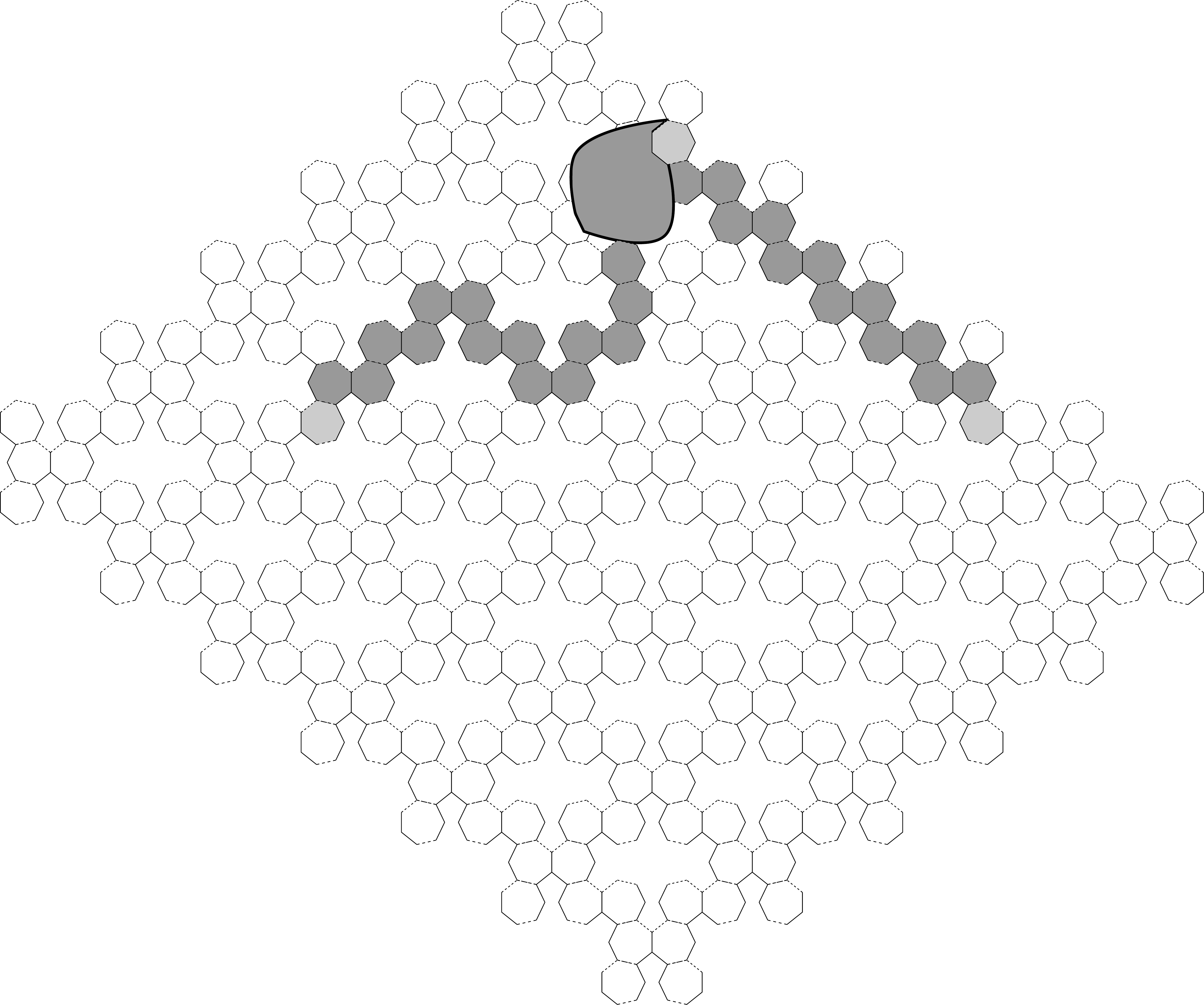}
        }%
        \quad
  \subfloat[][An example schematic of an $\alpha_1$ bit-writer.]{%
        \label{fig:au_a1_e}%
        \includegraphics[width=2.4in]{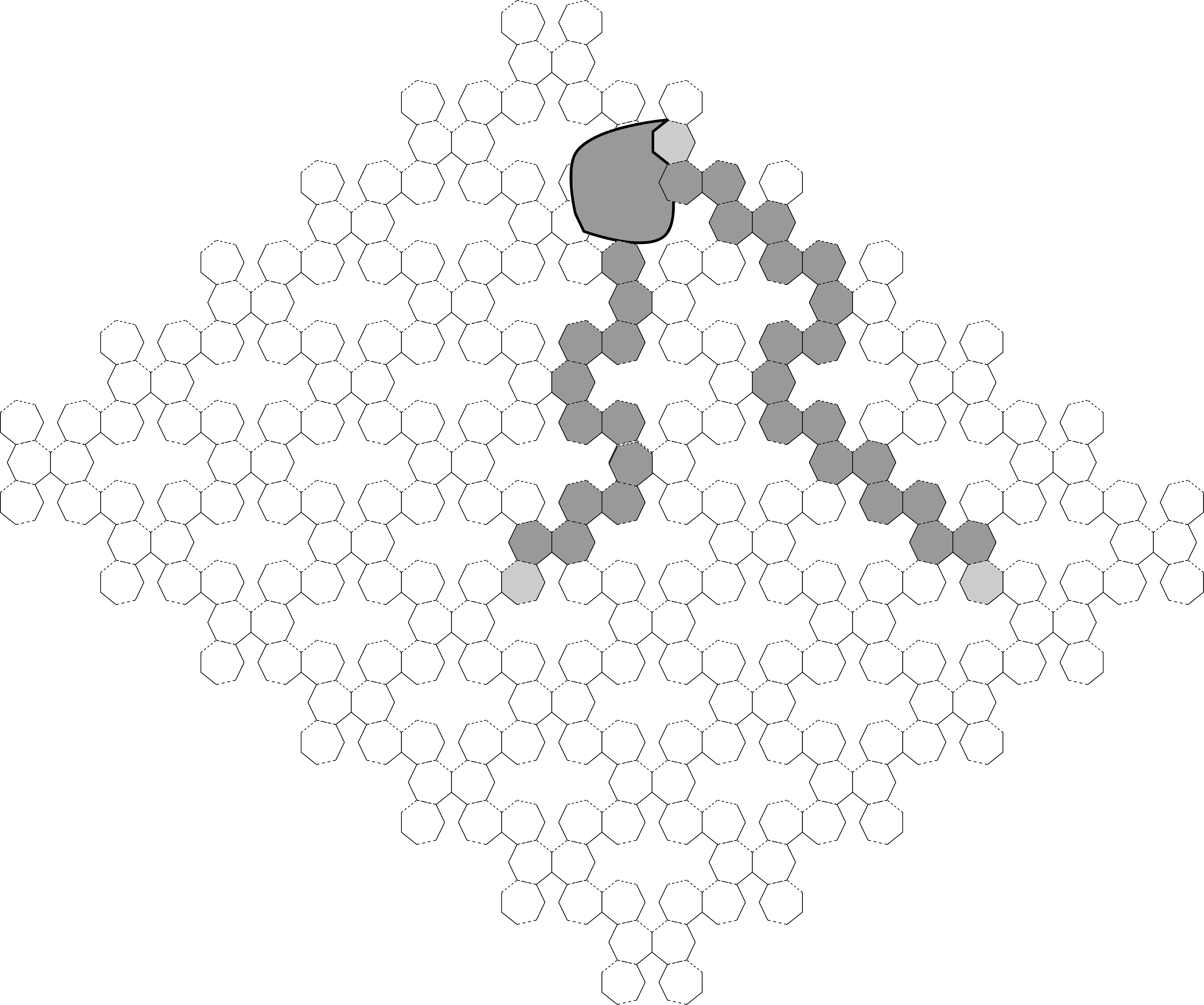}
        }%
  \caption{Completing the bit writer.}
  \label{fig:au_e}
\end{figure}

\begin{figure}[htp]
\begin{center}
\includegraphics[width=2.0in]{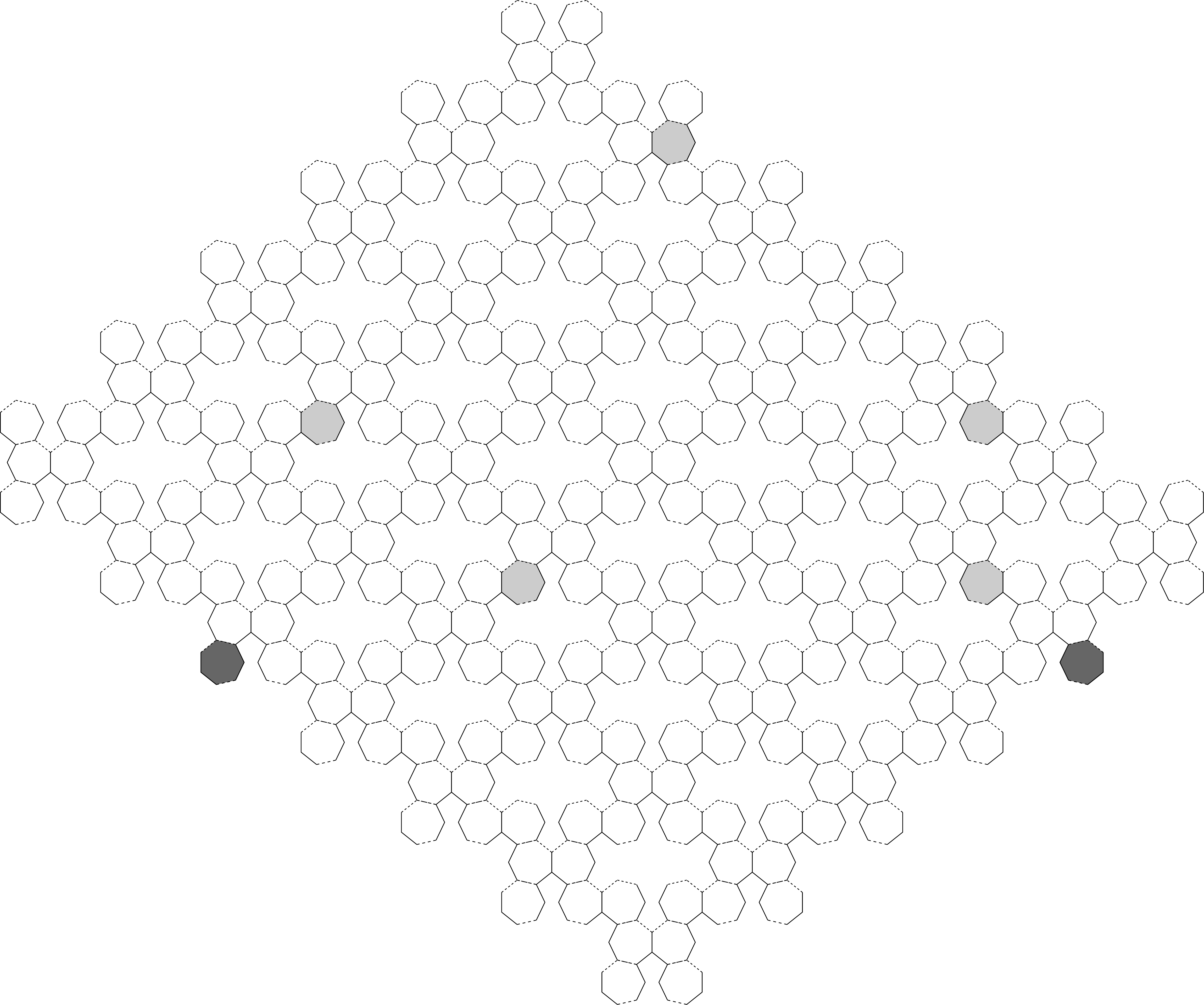}
\caption{The extremal points of the bit writer configurations (lightly shaded) and the newly created extremal points (darkly shaded).}
\label{fig:au_st}
\end{center}
\end{figure}

\begin{figure}[htp]
\centering
  \subfloat[][]{%
        \label{fig:au_p0}%
        \includegraphics[width=2.4in]{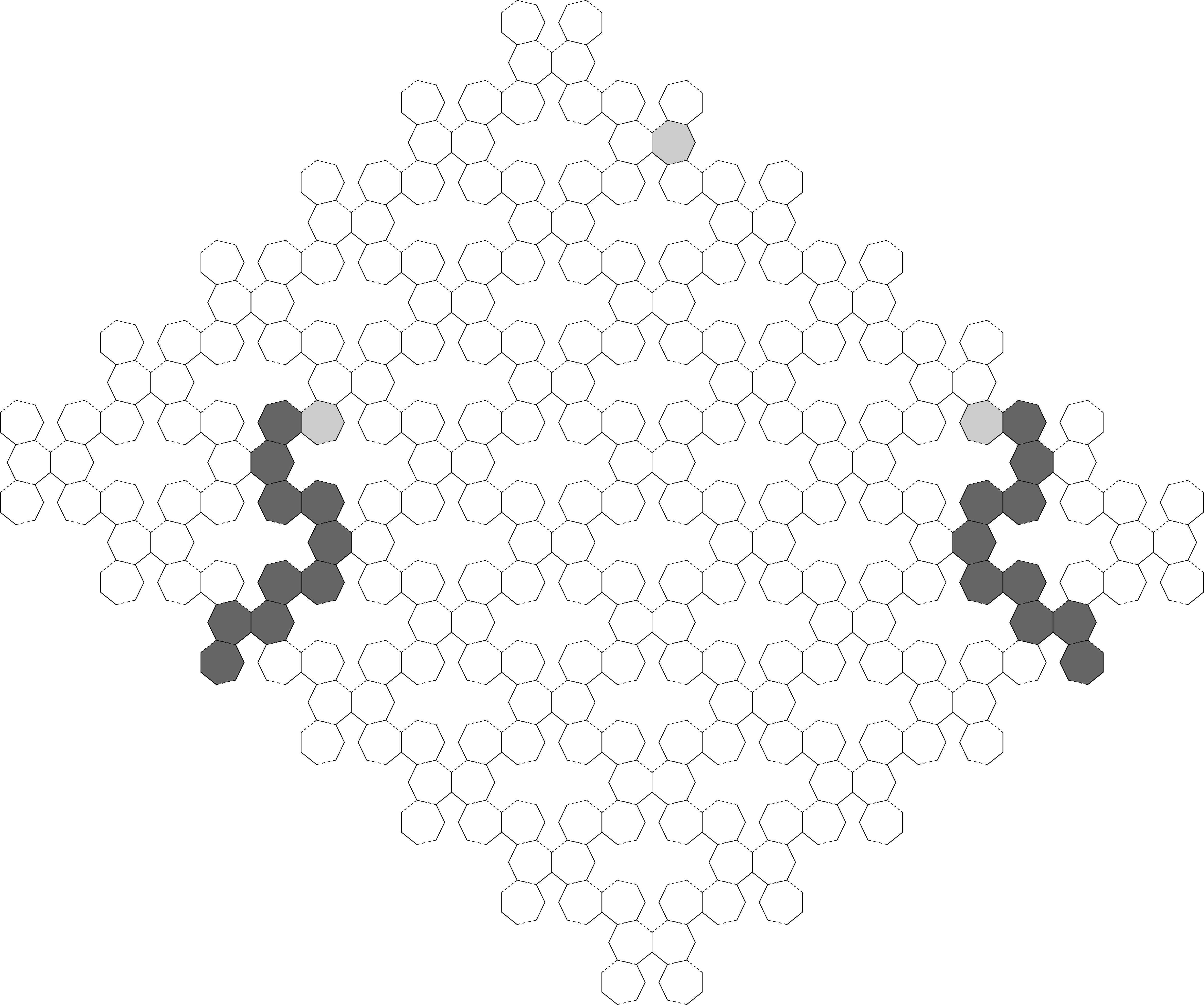}
        }%
        \quad
  \subfloat[][]{%
        \label{fig:au_p1}%
        \includegraphics[width=2.4in]{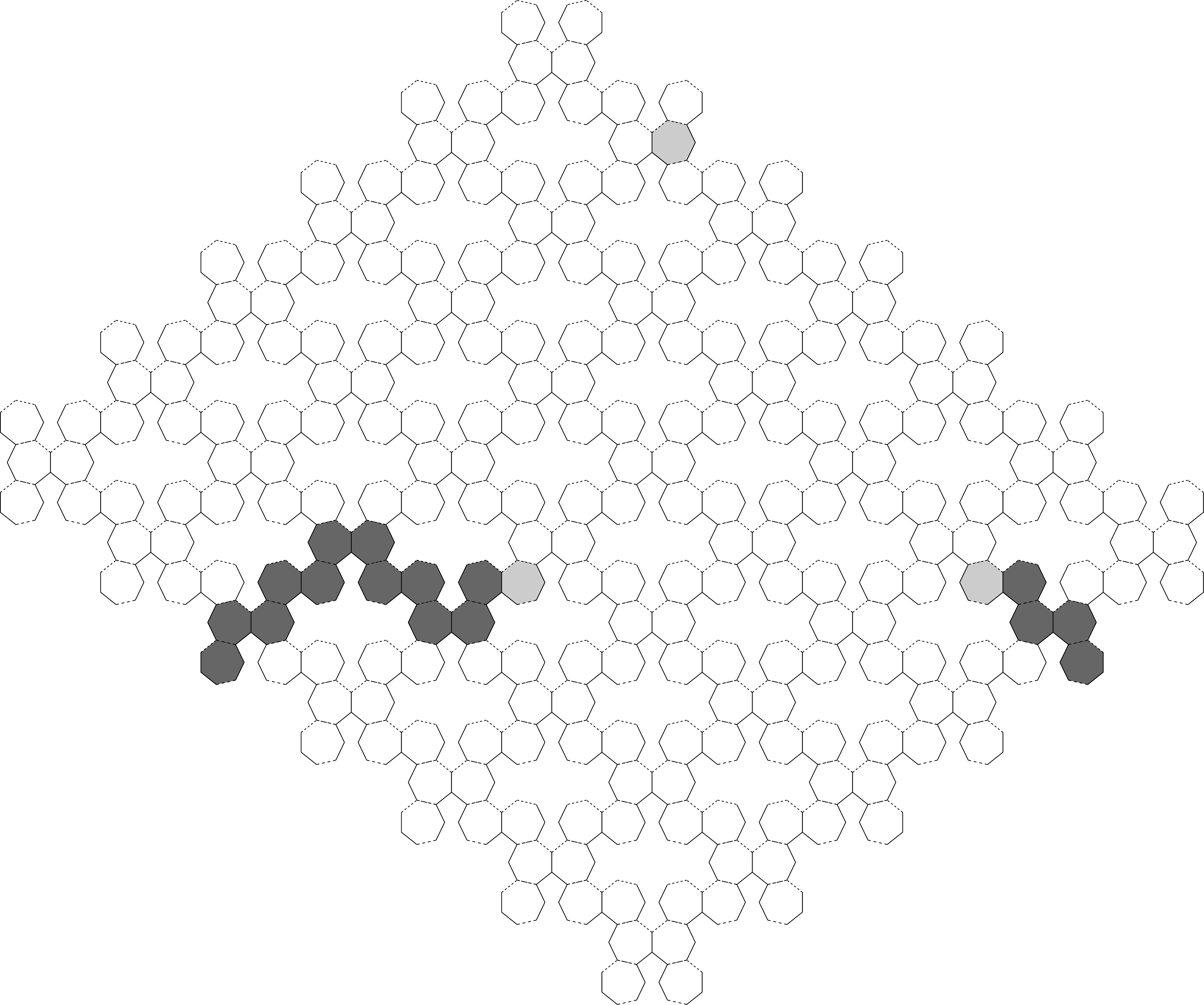}
        }%
  \caption{Growing a path of tiles from the old extremal points to the new ones.}
  \label{fig:au_p}
\end{figure}

\begin{figure}[htp]
\centering
  \subfloat[][]{%
        \label{fig:au_a0}%
        \includegraphics[width=2.4in]{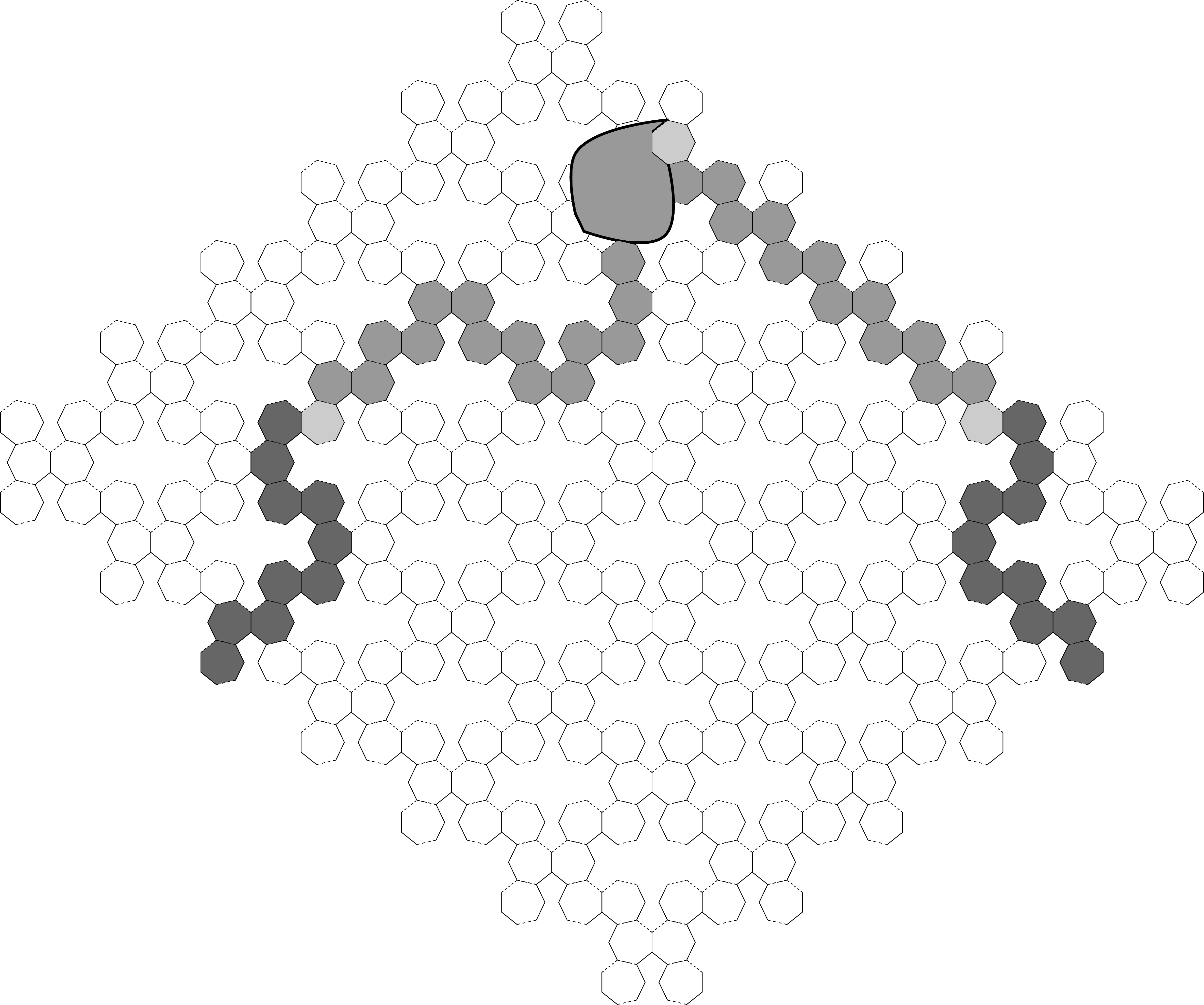}
        }%
        \quad
  \subfloat[][]{%
        \label{fig:au_a1}%
        \includegraphics[width=2.4in]{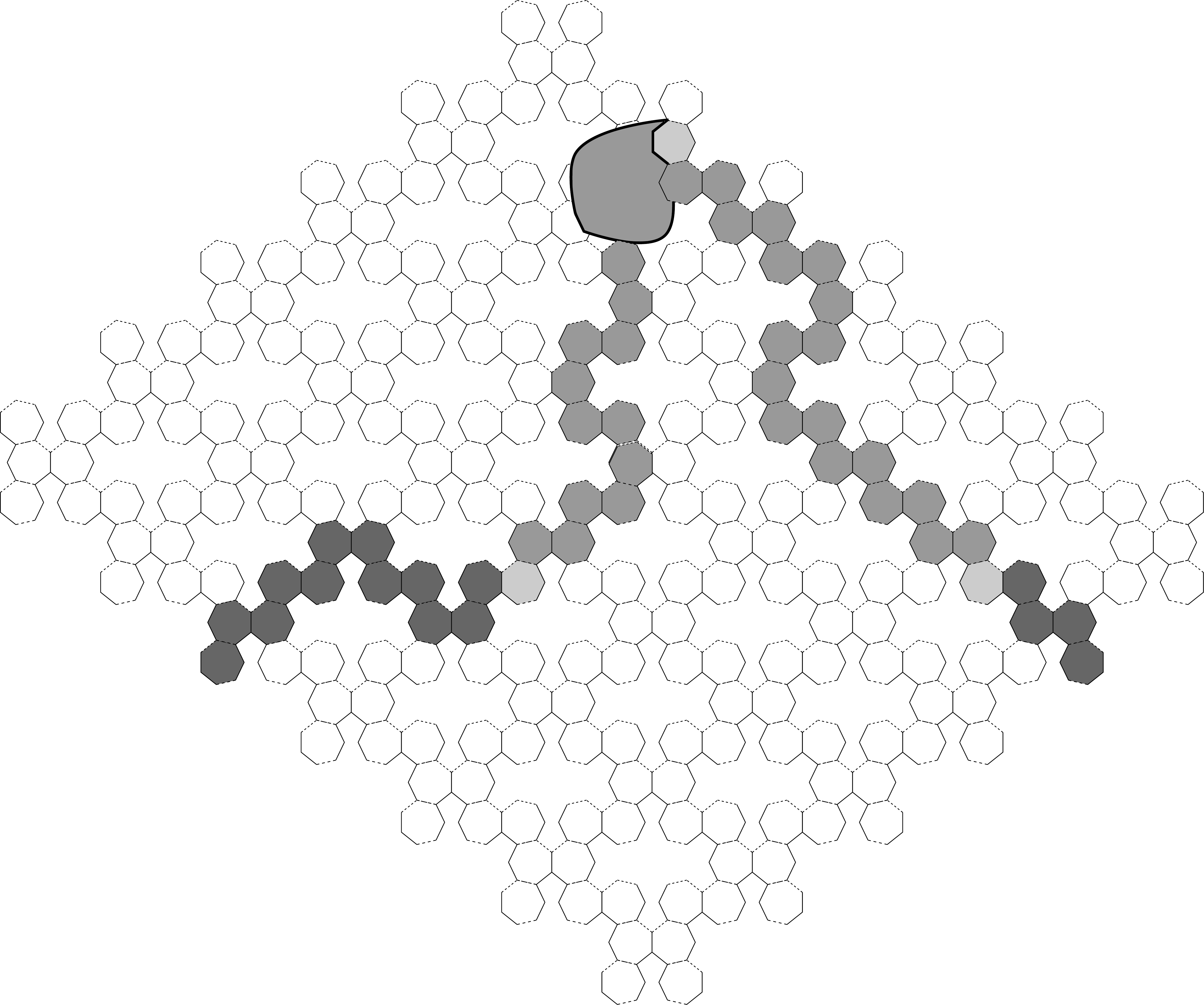}
        }%
  \caption{The normalized bit writers.}
  \label{fig:au_a}
\end{figure}
\subsection{Shifting on Grid after the Read}
Call the last tile placed by the bit reader $T_1$.  We now describe how the bit reader shifts back on grid after ``reading a bit''.  This part of the construction is very similar to the construction of the on grid bit writers in Section~\ref{sec:bitwon}.  For our shifting configuration we will use the configuration obtained by removing all tiles in the bit reading configuration except for the tiles that lie on the path from the tile $T_1$ to the tile $R$ (where tiles $T_1$ and $R$ are as described above in Section~\ref{sec:bit_reading}.  Without loss of generality, we assume that $T_1$ has standard orientation since if it is not, we can add one more tile to the path so that the last tile placed in the bit reader path is in standard orientation. We then construct an on grid bit reader in a manner similar to the way the on grid bit writers were constructed in Section~\ref{sec:bitwon}.

\subsection{Proof of Correctness}
To see that the configurations above, are indeed on grid bit reading gadgets (and thus assemblies) we make three claims: 1) every tile in the configuration completely shares an edge with another tile in the configuration and the configuration is connected, 2) the interiors of all the tiles in the configurations are pairwise disjoint, and 3) the beginning and end tiles are on grid as well as the $R$ tile.  After we see that these claims are true, then we can easily give a system which contains a bit reading gadget.

The first claim follows immediately from our construction.  The construction ensured that every tile placed was next to a pre-existing tile in the assembly and in the proper orientation.  The second claim also follows from the construction since we were careful to place subconfigurations sufficiently far away from each other so that there is no overlap and paths can be grown between them without overlapping.

To see claim 3, observe that the $R$ tiles in all of the subconfigurations lie in the same position relative to some polyform on the grid (see Figure~\ref{fig:bit-writer-complete0}.  Consequently, once we connect the subconfigurations and shift $R$ so that it is on grid, all of the $R$ tiles in the subconfigurations are on grid.  Thus when the ``arms'' of the bit writer are grown, they start and end on grid with respect to the tile $R$.  Hence, the beginning and end tiles are on grid as well as the $R$ tile.

To see that we can create a system which contains a bit reading gadget using our normalized bit writers, note that we can grow a path of tiles from the last tile placed in the normalized bit writer so that it starts the growth of a bit reader at an appropriate position in relation to the bit writer.  Using this notion, Figure~\ref{fig:complete_gadget} shows an example schematic of the complete bit reading gadget which results from reading a particular bit.  The system shown can be constructed by placing appropriate glues on the tiles so that they come together as shown.

\begin{figure}[htp]
\begin{center}
\includegraphics[width=2.0in]{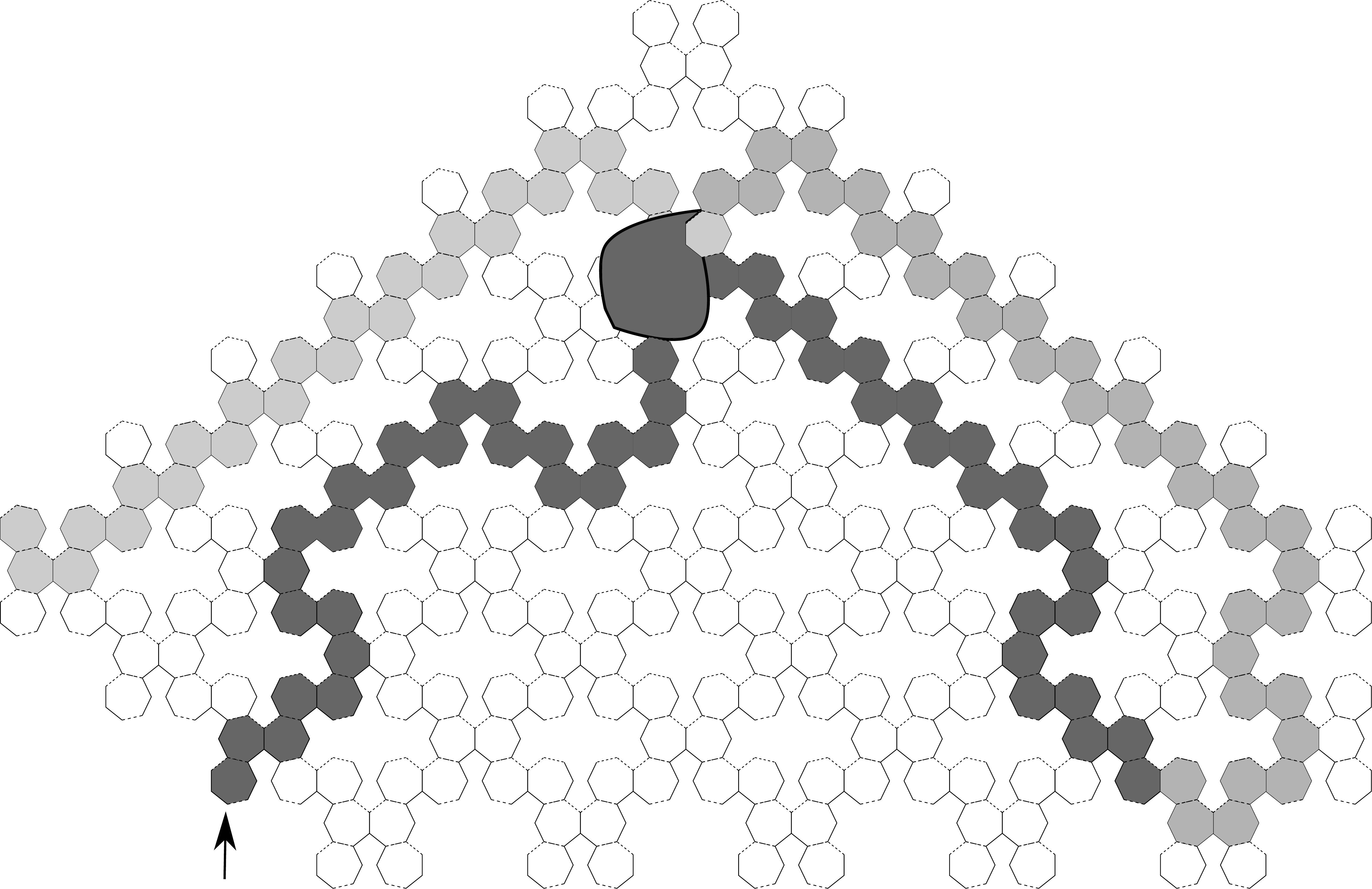}
\caption{The complete bit reading gadget reading a particular bit.  The arrow in this picture points to the seed of the system.}
\label{fig:complete_gadget}
\end{center}
\end{figure}
} %
\fi

\ifabstract
\later{
\section{Technical Appendix}\label{sec:technical-appendix}

In the following sections we will use this technique for computing the positions of the centers of polygonal tiles in order to show that the bit gadgets that we construct are indeed valid bit gadgets.

\subsection{Systems with tiles shaped like a single regular polygon}\label{sec:technical-single-regular}

In this section, we present the bit-reading gadgets for tiles shaped like a single regular polygon and the relevant calculation to show that these bit-reading gadgets are valid. Throughout this section we will use complex number to analyze configurations of polygonal tiles. This idea is presented in Section~\ref{sec:roots-of-unity}.
Many of these calculations rely on well known properties of complex numbers and regular polygons. In particular, for a complex number $z$, we use the equations $2\Re(z) = z + z^{-1}$ and $2\Im(z) = z - z^{-1}$. We also apply Euler's identity ($e^{i\theta} = \cos\left( \theta \right) + i\sin\left( \theta \right)$) when needed. Moreover, for a regular polygon $P_n$ with $n$ sides and apothem $.5$ (which we assume for all of the regular polygons considered here), the diameter $d_n$ of $P_n$ is given by the following equation which will often be used to show that two polygons do not overlap.. $$d_n = \frac{1}{2\cos\left( \frac{\pi}{n} \right)}$$

\subsubsection{A bit gadget for heptagonal tiles}\label{sec:technical-heptagonal}

Now that we have a means of computing the exact positions of the centers of polygonal tiles, we give a bit-reader gadget that works for heptagonal tiles at temperature-1. Figure~\ref{fig:technical-heptagonBitReader} gives a depiction of this bit-reader. Given this bit-reader, the burden of proof is two fold. 1) We must calculate the distances of the tiles in the bit-reader assembly in order to show that when a $1$ is read, a $0$ cannot be read and vice versa, and 2) we must show that these gadgets assemble in regular (grid-like) positions. In this section we will handle the first burden of proof, and show the second burden in Section~\ref{sec:technicalLemmas}.

\begin{figure}[htp]
\centering
  \subfloat[][A $0$ is read, and a $1$ cannot be read by mistake since the tile $B$ prevents a heptagonal tile from attaching via the glue labeled $g_1$.]{%
        \label{fig:technical-heptagonBitReaderA}%
        \includegraphics[width=2.5in]{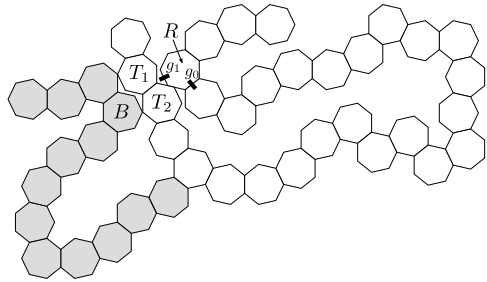}
        }%
        \quad
  \subfloat[][A $1$ is read. This time a $0$ cannot be read by mistake since the tile $B$ prevents growth of a path of heptagonal tiles that attach via the glue labeled $g_0$. Note that some of this path may form, but $B$ prevents the entire path from assembling, and thus prevents a $0$ from being read.]{%
        \label{fig:technical-heptagonBitReaderB}%
        \includegraphics[width=2.5in]{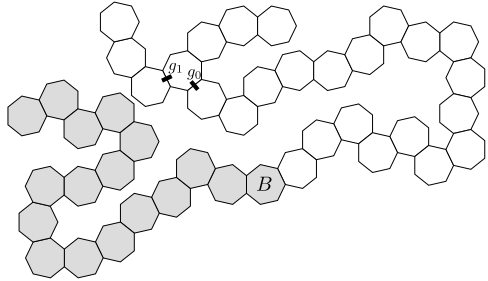}
        }%
  \caption{A connected bit-gadget consisting of heptagonal tiles.}
  \label{fig:technical-heptagonBitReader}
\end{figure}

In Figure~\ref{fig:technical-heptagonBitReader}, the gray tiles represent a ``written'' bit (either 0 or 1), while the white tiles are ``reading'' this bit. We ensure that the assembly sequence of a bit-gadget is such that all of the gray tiles bind before any white tiles. Referring to Figure~\ref{fig:technical-heptagonBitReaderA}, we will first show that the tile labeled $R$ in does not prevent the binding of the tile labeled $T_1$ or the tile labeled $T_2$.

Let $s$ denote the center of the tile $R$, $c_1$ denote the center of $T_1$, and $c_2$ denote the center of $T_2$. This is depicted in Figure~\ref{fig:technical-heptagonBitReader1}. Then, to calculate $c_1$ and $c_2$ relative to $s$, we assume that $R$ is in standard orientation. Following the path of tiles lying on the dotted line in Figure~\ref{fig:technical-heptagonBitReader1} and summing the appropriate roots of unity, we obtain the polynomials $c_1 = \omega^6 - \omega^3 + \omega - \omega^4 + 1 -\omega^4 + \omega - \omega^3 + 1 - \omega^2 + \omega^5 - \omega^2 + \omega^4 - \omega^6 + \omega^4 - \omega^6 + \omega^4 - \omega + \omega^4 - 1 + \omega^3 - \omega^6 + \omega^2$ and $c_2 = c_1 - \omega^6$. By simplifying $c_1$, we get $c_1 = 1+ \omega - \omega^2 - \omega^3  + 2\omega^4 + \omega^5 - 2\omega^6$.

\begin{figure}[htp]
\centering
	\includegraphics[width=3in]{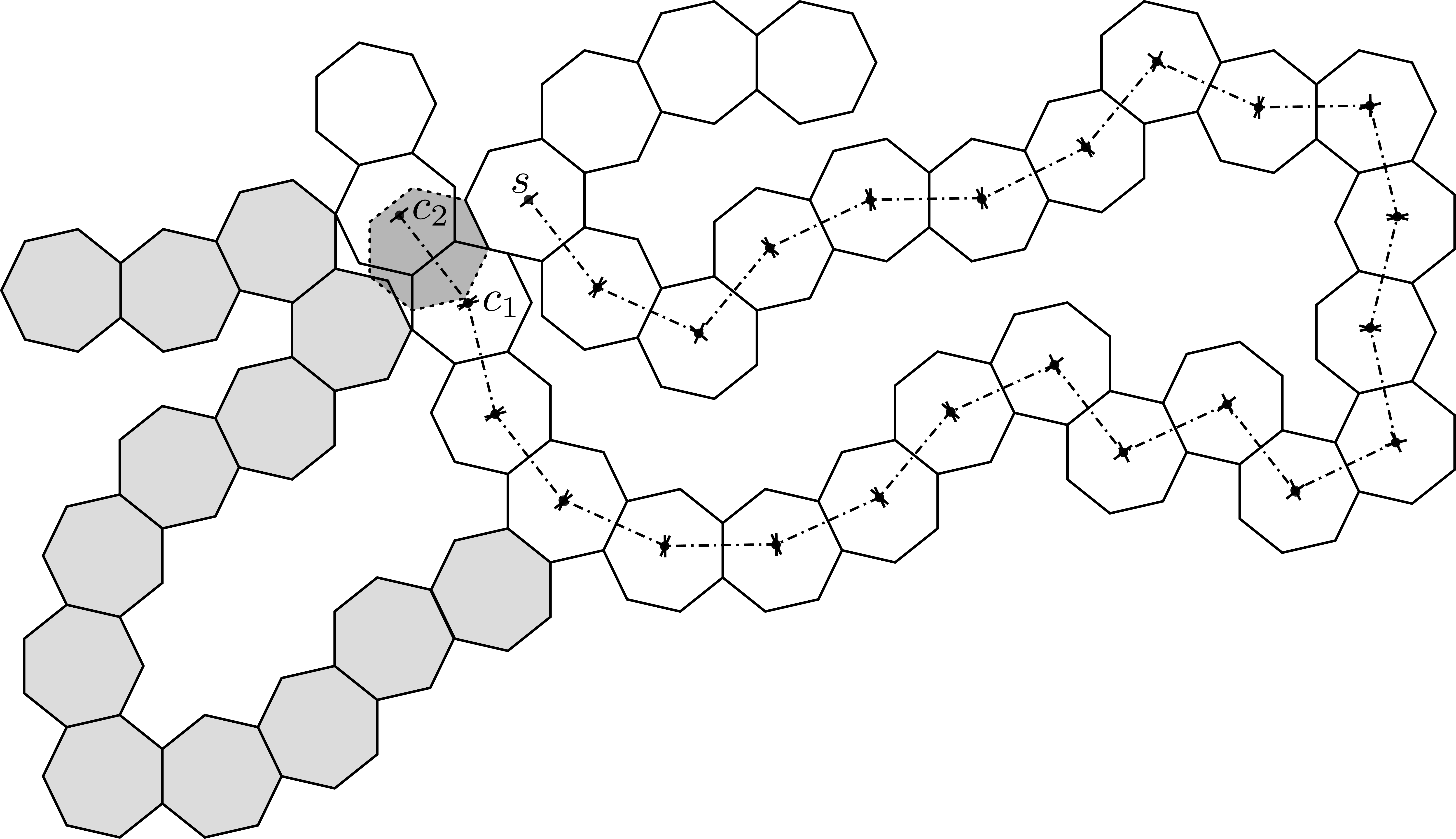}
	\caption{A possible configuration of the bit-reader given in Figure~\ref{fig:technical-heptagonBitReader}. We must show that the heptagonal tiles centered at $c_1$ and $c_2$ do not overlap the tile centered at $s$.}
	\label{fig:technical-heptagonBitReader1}
\end{figure}

First, as multiplying by $\omega$ is a rotation by $2\pi/7$, it is enough to show that $\Re(\omega^2 c_1) \geq 1$, and to see this, consider the following.

\begin{eqnarray*}
\omega^2c_1 &=& \omega^2 + \omega^3 - \omega^4 - \omega^5  + 2\omega^6 + \omega^7 - 2\omega^8\\
            &=& \omega^2 + \omega^3 - \omega^4 - \omega^5  + 2\omega^6 + 1 - 2\omega\\
            &=& \omega^2 + \omega^3 - \omega^{-3} - \omega^{-2}  + 2\omega^6 + 1 - 2\omega^{-6}\\
            &=& 1 + (\omega^2 - \omega^{-2}) + (\omega^3 - \omega^{-3})  + 2(\omega^6 - \omega^{-6})
\end{eqnarray*}

\noindent Then since $(\omega^2 - \omega^{-2})$, $(\omega^3 - \omega^{-3})$, and $2(\omega^6 - \omega^{-6})$ are imaginary, we see $\Re(\omega^2c_1) = 1$. Therefore, the heptagon with negated orientation centered at $c_1$ and the heptagon in standard orientation centered at $s$ do not overlap. Note that since $\Re(\omega^2c_1) = 1$, it may be that these two heptagons partially share an edge, however, the intersection of their interiors is empty.

To show that the heptagon, $H_{c_2}$, with standard orientation centered at $c_2$ and the heptagon, $H_{s}$, with negated orientation centered at $s$ do not overlap, note that $c_2 = 1 + \omega - \omega^2 - \omega^3  + 2\omega^4 + \omega^5 - 3\omega^6$. Then, one can approximate $|c_2|$ and observe that $|c_2| > 1.11 > \frac{1}{\cos\left( \frac{\pi}{7} \right)}$. Hence the distance from $s$ to $c_2$ is greater than twice the diameter of one of these heptagonal tiles. Therefore, $H_{c_2}$ and $H_{s}$ do not overlap.

\begin{figure}[htp]
\centering
	\includegraphics[width=3in]{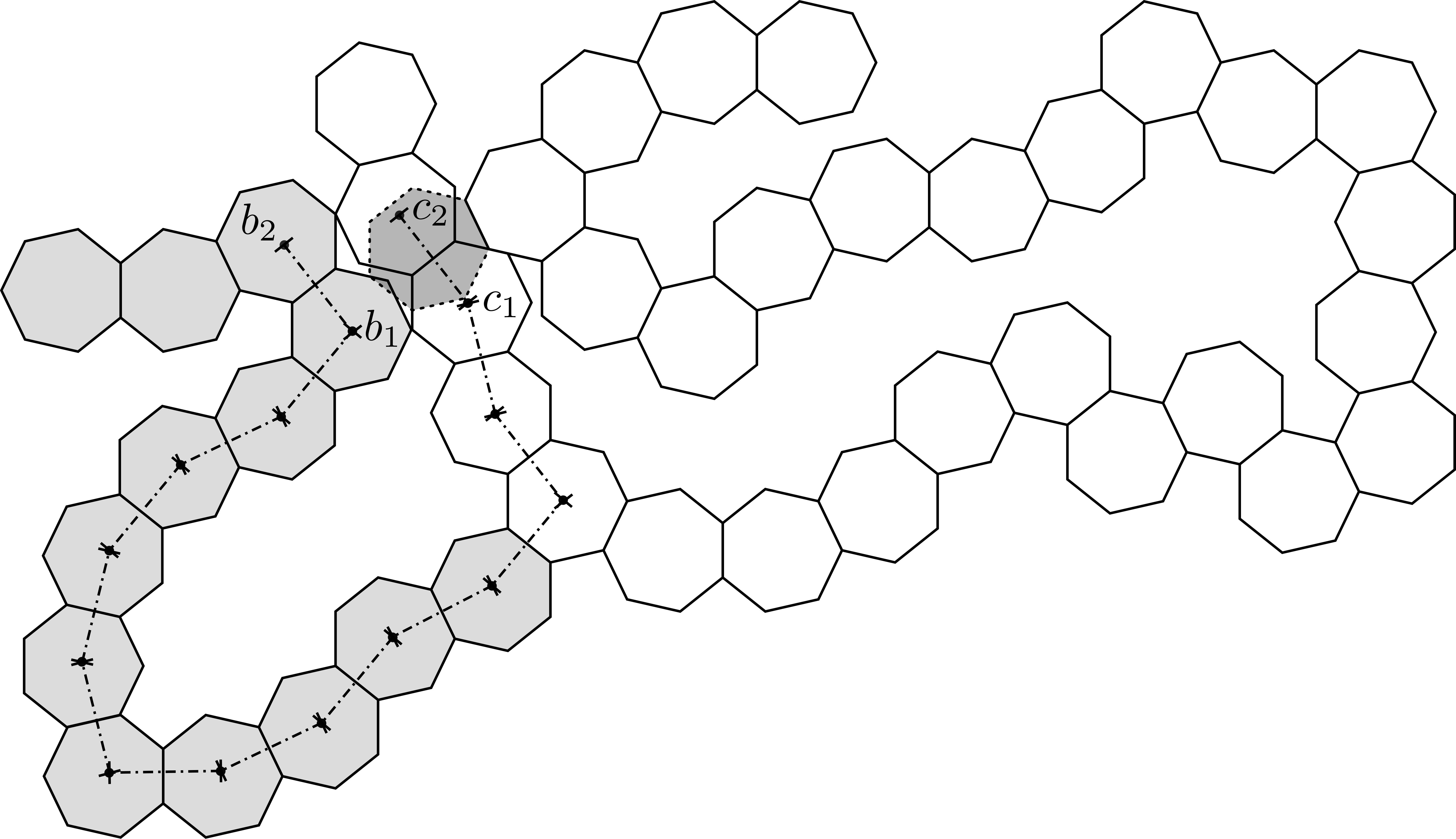}
	\caption{A possible configuration of the bit-reader given in Figure~\ref{fig:technical-heptagonBitReader}. We must show that the heptagonal tiles centered at $c_1$ and $c_2$ do not overlap those centered at $b_1$ and $b_2$.}
	\label{fig:technical-heptagonBitReader2}
\end{figure}

Referring to Figure~\ref{fig:technical-heptagonBitReader2}, relative to $b_1$, $c_1 = -\omega + \omega^4 - \omega + \omega^5 - \omega^2 + 1 - \omega^4 + \omega - \omega^4 + \omega - \omega^6 + \omega^2$, and $c_2 = c_1 - \omega^6$. Simplifying,
$c_1 = 1 - \omega^4 + \omega^5 - \omega^6$.

To show that a heptagonal tile in negated orientation centered at $b_1$ and a heptagonal tile in negated orientation centered at $c_1$ do not overlap, it suffices to show that $\Re(c_1) >= \frac{1}{2} + \frac{1}{2\cos\left(\frac{\pi}{7}\right)}$. Hence, it is enough to show that $\Re(c_1) = 1-\cos\left(\tfrac{8\pi}{7}\right) + \cos\left(\tfrac{10\pi}{7}\right) - \cos\left(\tfrac{12\pi}{7}\right) = \frac{1}{2} + \frac{1}{2\cos\left(\frac{\pi}{7}\right)}$. Equivalently, we show $1-2\cos\left(\tfrac{8\pi}{7} \right) + 2\cos\left(\tfrac{10\pi}{7}\right) - 2\cos\left( \tfrac{12\pi}{7} \right) = \frac{1}{\cos\left(\frac{\pi}{7}\right)}$. To see this, observe

\begin{eqnarray*}
2 &=& -2\cos\left(\pi\right)\\
  &=& -\left( e^{\pi i} + e^{-\pi i} \right)\\
  &=& - e^{\pi i} - e^{-\pi i} + e^{\frac{\pi i}{7}} - e^{\frac{\pi i}{7}} + e^{\frac{-\pi i}{7}} - e^{\frac{-\pi i}{7}}\\
  &=& e^{\frac{\pi i}{7}} - e^{-\pi i} - e^{\frac{-\pi i}{7}} + e^{\frac{-\pi i}{7}} - e^{\pi i} - e^{\frac{\pi i}{7}}\\
  &=& e^{\frac{\pi i}{7}} - e^{-\pi i} - e^{\frac{13\pi i}{7}} + e^{\frac{-\pi i}{7}} - e^{\pi i} - e^{-13\frac{\pi i}{7}}\\
  &=& e^{\frac{\pi i}{7}} - e^{\frac{9\pi i}{7}} - e^{\frac{-7\pi i}{7}} + e^{\frac{11\pi i}{7}} + e^{\frac{-9\pi i}{7}} - e^{\frac{13\pi i}{7}} - e^{\frac{-11\pi i}{7}}\\ &+&  e^{\frac{-\pi i}{7}} - e^{\frac{7\pi i}{7}} - e^{\frac{-9\pi i}{7}} + e^{\frac{9\pi i}{7}} + e^{\frac{-11\pi i}{7}} - e^{\frac{11\pi i}{7}} - e^{-13\frac{\pi i}{7}}\\
 &=& \left(e^{\frac{\pi i}{7}} + e^{\frac{-\pi i}{7}}\right)\left( 1 - e^{\frac{8\pi i}{7}} - e^{\frac{-8\pi i}{7}} + e^{\frac{10\pi i}{7}} + e^{\frac{-10\pi i}{7}} - e^{\frac{12\pi i}{7}} - e^{\frac{-12\pi i}{7}} \right)
\end{eqnarray*}

\noindent The last equality gives
$$1 - e^{\frac{8\pi i}{7}} - e^{\frac{-8\pi i}{7}} + e^{\frac{10\pi i}{7}} + e^{\frac{-10\pi i}{7}} - e^{\frac{12\pi i}{7}} - e^{\frac{-12\pi i}{7}} = \frac{2}{e^{\frac{\pi i}{7}} + e^{\frac{-\pi i}{7}}}$$

In other words,
$1 - 2\cos\left( \tfrac{8\pi}{7} \right) + 2\cos\left(\tfrac{10\pi}{7} \right) - 2\cos\left(\tfrac{12\pi}{7} \right) = \frac{1}{\cos\left(\frac{\pi}{7} \right)},$ which was what we wanted. Therefore, a heptagonal tile in negated orientation and centered $b_1$, and a heptagonal tile in negated orientation and centered at $c_1$ do not overlap.

From Figure~\ref{fig:technical-heptagonBitReader2}, it is now clear that a heptagonal tile in negated orientation and centered at $b_1$, and a heptagonal tile in standard orientation and centered at $c_2$ do not overlap, and that a heptagonal tile in standard orientation and centered $b_2$, and a heptagonal tile in standard orientation and centered at $c_2$ do not overlap.

\begin{figure}[htp]
\centering
	\includegraphics[width=3in]{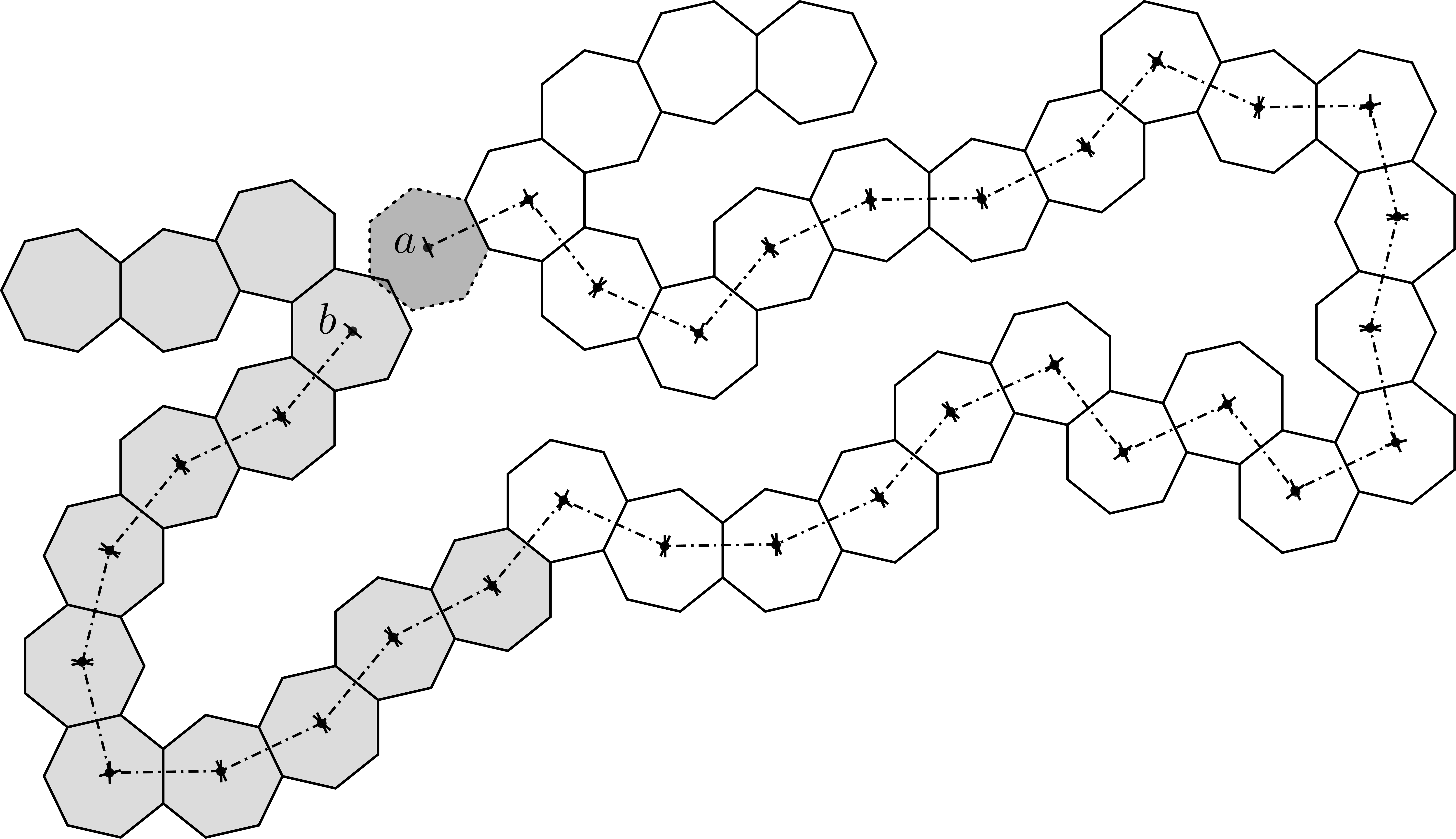}
	\caption{A possible configuration of the bit-reader given in Figure~\ref{fig:technical-heptagonBitReader}. We must show that the heptagonal tile centered at $a$ overlaps the tile centered at $b$.}
	\label{fig:technical-heptagonBitReader3}
\end{figure}

Now, referring to Figure~\ref{fig:technical-heptagonBitReader3}, we must show that a heptagonal tile, $H_a$, in negated orientation and centered $a$, and a heptagonal tile, $H_b$, in negated orientation and centered at $b$ overlap.
Note that relative to $a$, $b = -\omega^4 + \omega^6 - \omega^3 + \omega - \omega^4 + 1 -\omega^4 + \omega - \omega^3 + 1 - \omega^2 + \omega^5 - \omega^2 + \omega^4 - \omega^6 + \omega^4 - \omega^6 + \omega^4 - \omega + \omega^4 - 1 + \omega^3 - \omega + \omega^4 - \omega + \omega^4 - 1 + \omega^2 - \omega^5 + \omega - \omega^4 + \omega$. We can simplify $b$ to obtain $b = \omega - \omega^2 - \omega^3 + 2\omega^4 - \omega^6$. Then we approximate $|b|$ to show that $|b| < 1$. Therefore $H_a$ and $H_b$ must overlap. Given these calculations, we can obtain a bit-reading gadget for systems whose tiles have the shape of a heptagon.

\subsubsection{Octagonal Tile Assembly}\label{sec:technical-octagonal}

\begin{figure}[htp]
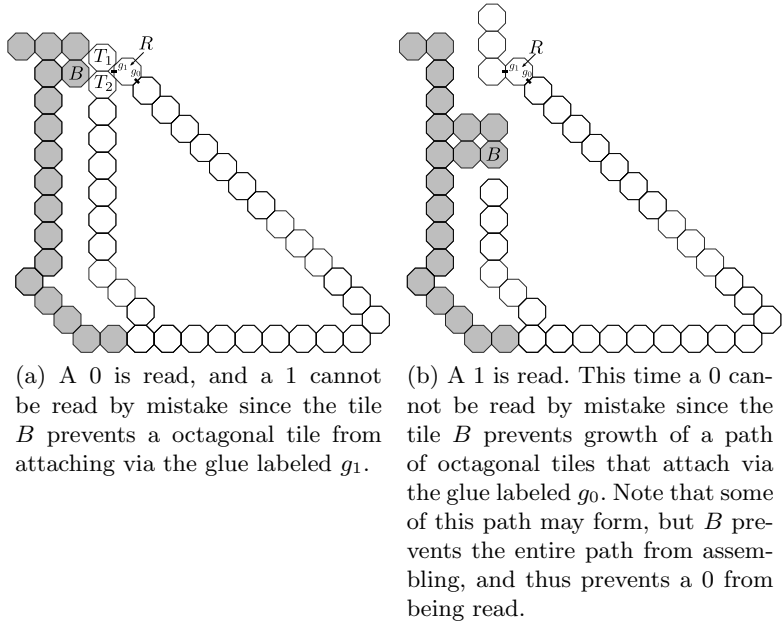

\centering
  \subfloat[][A $0$ is read, and a $1$ cannot be read by mistake since the tile $B$ prevents a octagonal tile from attaching via the glue labeled $g_1$.]{%
        \label{fig:technical-octagonBitReaderA}%
        \makebox[.4\textwidth]{
        \includegraphics[width=2in]{images/octagonBitReaderA}
        }}%
        \quad
  \subfloat[][A $1$ is read. This time a $0$ cannot be read by mistake since the tile $B$ prevents growth of a path of octagonal tiles that attach via the glue labeled $g_0$. Note that some of this path may form, but $B$ prevents the entire path from assembling, and thus prevents a $0$ from being read.]{%
        \label{fig:technical-octagonBitReaderB}%
        \makebox[.4\textwidth]{
        \includegraphics[width=2in]{images/octagonBitReaderB}
        }}%
  \caption{The configurations for a bit-reading gadget consisting of octagonal tiles.}
  \label{fig:technical-octagonBitReader}
\end{figure}

Figure~\ref{fig:technical-octagonBitReader} depicts two possible configurations of a bit-reading gadget construction for single-shaped systems with octagonal tiles, the gray tiles represent ``bit-writer'' tiles (representing either $0$ or $1$), while the white tiles are the ``bit-reader'' tiles. We ensure that the assembly sequence of a bit-gadget is such that all of the gray tiles bind before any white tiles. Referring to Figure~\ref{fig:technical-octagonBitReaderA}, we will first show that the tiles labeled $R$ and $B$ do not prevent the binding of the tile labeled $T_1$ or the tile labeled $T_2$. Then we will show that the tile labeled $B$ prevents an octagonal tile from binding to the glue labeled $g_1$.

\begin{figure}[htp]
\centering
	\includegraphics[width=1.5in]{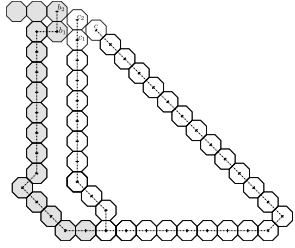}
	\caption{A possible configuration of the bit-reader given in Figure~\ref{fig:technical-octagonBitReader}. We must show that the octagonal tiles centered at $c_1$ and $c_2$ do not overlap those centered at $b_1$ and $b_2$.}
	\label{fig:octaBitReader1}
\end{figure}

When analyzing even sided polygons, note that we can always assume that each polygonal tile has the standard orientation. Let $T_{c}$ denote the octagonal tile  centered at $c$ and let $T_{c_1}$ denote the octagonal tile centered at $c_1$ as shown in Figure~\ref{fig:octaBitReader1}. To show that $T_{c}$ and $T_{c_1}$ do not overlap, let $\omega$ now denote $e^{\frac{2\pi}{8}}$ and note that relative to $c$, $c_1$ is given by the following equation.

\begin{eqnarray*}
c_1 &=& 13\omega^7 + \omega^5 + 8\omega^4 + \omega^2 + 2\omega^3 + 7\omega^2 \\
    &=& 8\omega^2 + 2\omega^3 + 8\omega^4 + \omega^5 + 13\omega^7
\end{eqnarray*}

Then, after multiplying by $\omega$ we need only show that $\Im(\omega c_1) \leq -1$. To see this, note that $\omega c_1 = 8\omega^3 + 2\omega^4 + 8\omega^5 + \omega^6 + 13\omega^8$. Then, since $\omega^8 = 1$, $\omega^2 = i$, $\omega^4 = -1$, and $\omega^3 = \omega^{-5}$, we see that $\omega c_1 = 11 + 8(2\Re(\omega^3)) - i$, and hence, $\Im(\omega c_1) = -1$. Therefore, $T_{c}$ and $T_{c_1}$ do not overlap.

To show that $T_{c}$ and $T_{c_2}$ do not overlap, note that $c_2 = c_1 + \omega^2$. Then, after multiplying $c_2$ by $\omega$ we need only show that $\Re(\omega c_2) \leq -1$. To see that this inequality holds, consider the following.

\begin{align*}
\Re\left(\omega c_2\right) &= \Re\left(\omega c_1 + \omega^3\right) = \Re\left(11 - i + 8(2\Re(\omega^3)) + \omega^3\right)\\
                &= \Re\left(11 + 17\Re(\omega^3)\right) = 11 - 17\dfrac{\sqrt{2}}{2}\\
                &< 11 - 17\dfrac{1.414}{2} < -1
\end{align*}

Then, since $\Re(\omega c_2) \leq -1$, we see that $T_{c}$ and $T_{c_2}$ do not overlap. Therefore, $T_c$ does not overlap $T_{c_1}$ and $T_{c_2}$ do not overlap. We now show that an octagonal tile, $T_{c_1}$ say, with center $c_1$ and an octagonal tile, $T_{b_1}$ say, with center $b_1$ do not overlap. It will then also be clear that an octagonal tile with center $c_1$ or $c_2$ and an octagonal tile with center $b_1$ or $b_2$ do not overlap.
To see that $T_{c_1}$ and $T_{c_1}$ do not overlap, note that relative to $c_1$, $b_1 = 7\omega^6 + 2\omega^7 + \omega^6 + 2\omega^4 + 3\omega^3 + \omega + 7\omega^2 + 1$. It suffices to show that $\Re(b_1) = -1$. Then we see that $\Re(b_1) = 1 + \frac{\sqrt{2}}{2} - 3\frac{\sqrt{2}}{2} - 2 + 2\frac{\sqrt{2}}{2}$.
Hence, $\Re(b_1) = -1$, and an octagonal tile with center $c_1$ and an octagonal tile with center $b_1$ do not overlap.

\begin{figure}[htp]
\centering
	\includegraphics[width=2.5in]{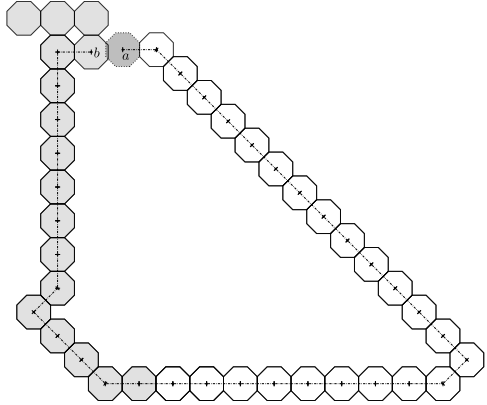}
	\caption{A configuration of the bit-reader given in Figure~\ref{fig:technical-octagonBitReader}. We must show that the octagonal tile centered at $a$ overlaps the tile centered at $b$.}
	\label{fig:octaBitReader2}
\end{figure}

Now, referring to Figure~\ref{fig:octaBitReader2}, in remains to be shown that an octagon with center $a$ and an octagon with center $b$ overlap. That is, an octagonal tile (in an existing assembly) centered at $b$ prevents the binding of an octagonal tile centered at $a$. To see this, note that relative to $a$, $b = 1 + 13\omega^7 + \omega^5 + 10\omega^4 + 3\omega^3 + \omega + 7\omega^2 + 1$. Simplifying $b$ gives $b = 2 + \omega + 7\omega^2 + 3\omega^3  + 10\omega^4 + \omega^5 + 13\omega^7$. Then one can check that $|b| < 1$. Given these calculations, we can obtain a bit-reading gadget for systems whose tiles have the shape of a octagon.

\subsubsection{Nonagonal Tile Assembly}\label{sec:technical-nonagonal}

Figure~\ref{fig:technical-nonagonBitReader} depicts two possible configurations of a bit-reading gadget construction for single-shaped systems with nonagonal tiles, the gray tiles represent a ``bit-writer'' tiles (representing either $0$ or $1$), while the white tiles are the ``bit-reader'' tiles. We ensure that the assembly sequence of a bit-gadget is such that all of the gray tiles bind before any white tiles. Referring to Figure~\ref{fig:technical-nonagonBitReaderA}, we will first show that the tiles labeled $R$ and $B$ do not prevent the binding of the tile labeled $T_1$ or the tile labeled $T_2$. Then we will show that the tile labeled $B$ prevents an octagonal tile from binding to the glue labeled $g_1$.

\begin{figure}[htp]
\centering
  \subfloat[][A $0$ is read, and a $1$ cannot be read by mistake since the tile $B$ prevents a nonagonal tile from attaching via the glue labeled $g_1$.]{%
        \label{fig:technical-nonagonBitReaderA}%
        \makebox[.4\textwidth]{
        \includegraphics[width=2in]{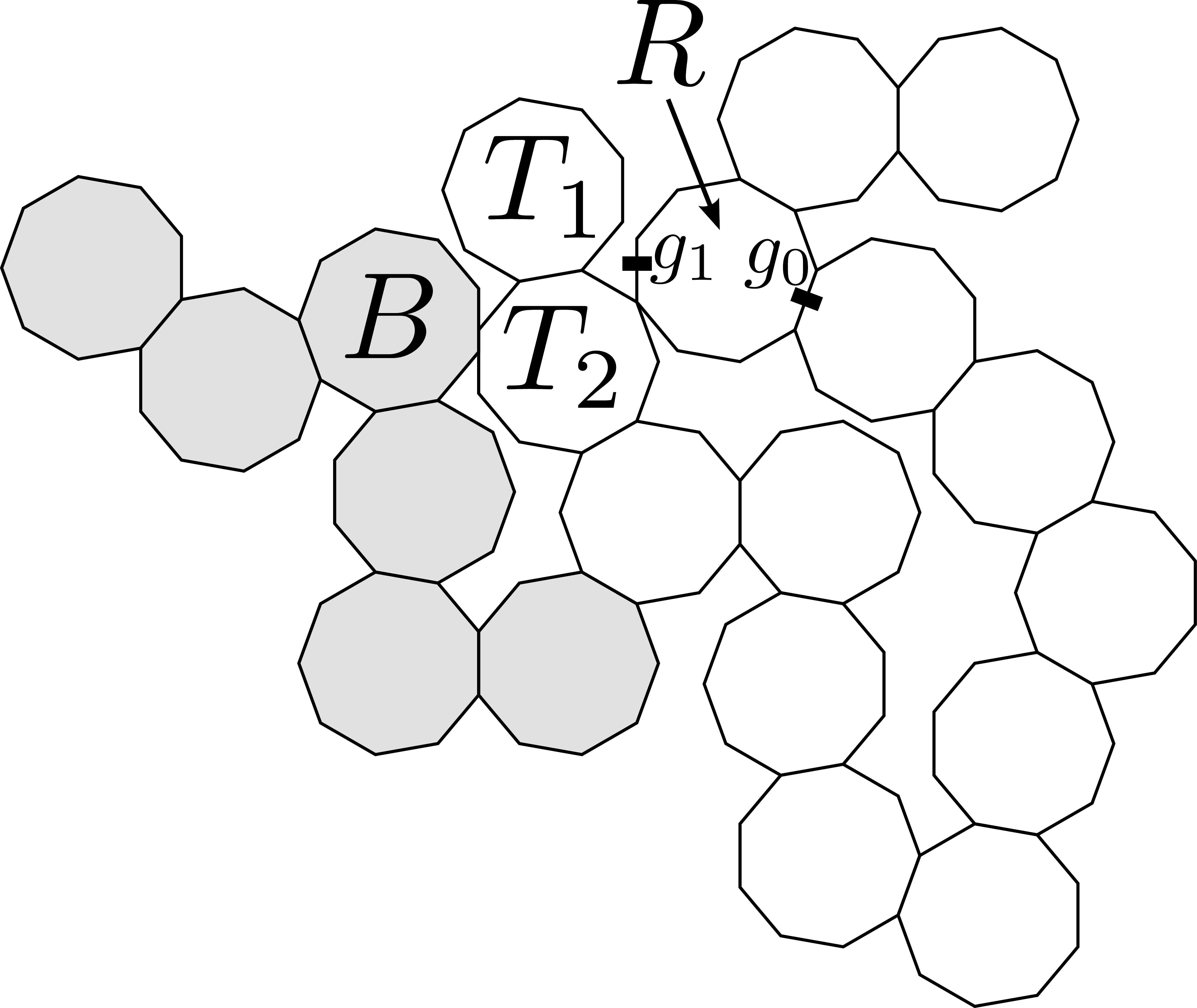}
        }}%
        \quad
  \subfloat[][A $1$ is read. This time a $0$ cannot be read since the tile $B$ prevents growth of a path of nonagonal tiles that attach via the glue labeled $g_0$. Note that some of this path may form, but $B$ prevents the entire path from assembling, and thus prevents a $0$ from being read.]{%
        \label{fig:technical-nonagonBitReaderB}%
        \makebox[.4\textwidth]{
        \includegraphics[width=2in]{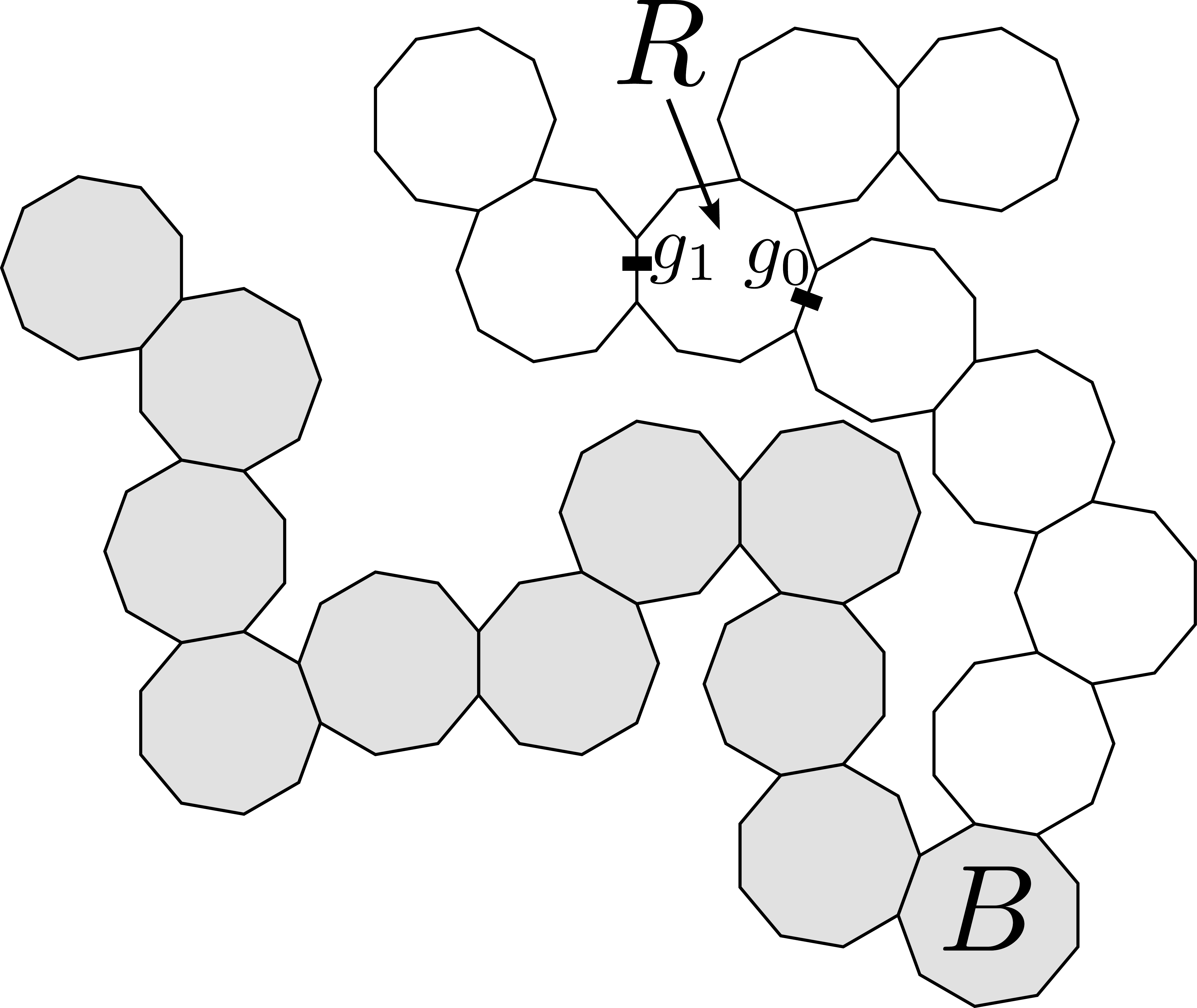}
        }}%
  \caption{The configurations for a bit-reading gadget consisting of nonagonal tiles.}
  \label{fig:technical-nonagonBitReader}
\end{figure}

Referring to Figure~\ref{fig:nonagonBitReader1}, let $T_c$ be a nonagonal tile with negated orientation centered at $c$ and let $T_{c_1}$ be a nonagonal tile with negated orientation centered at $c_1$. We first show that that $T_c$ and $T_{c_1}$ do not overlap. Let $\omega$ now denote $e^{\frac{2\pi}{9}}$.

\begin{figure}[htp]
\centering
	\includegraphics[width=2.5in]{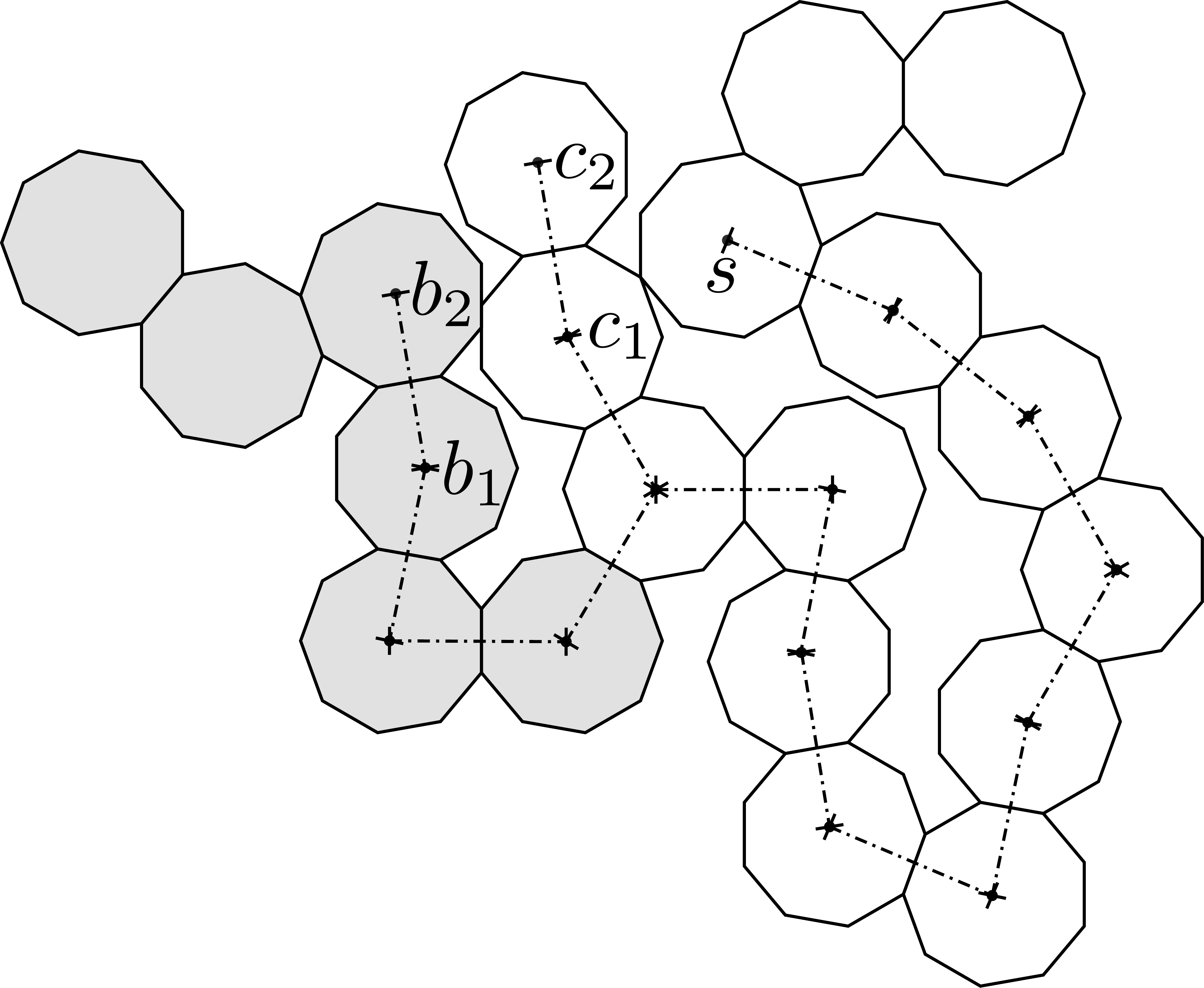}
	\caption{A possible configuration of the bit-reader given in Figure~\ref{fig:technical-nonagonBitReader}. We must show that the nonagonal tiles centered at $c_1$ and $c_2$ do not overlap those centered at $c$, $b_1$, and $b_2$.}
	\label{fig:nonagonBitReader1}
\end{figure}

Note that relative to $c$, $c_1 = -\omega^4 + \omega^8 - \omega^3 + \omega^6 - \omega^2 + \omega^4 - \omega^7 + \omega^2 - 1 + \omega^3$. Simplifying $c_1$ gives $c_1= -1 + \omega^8 + \omega^6 - \omega^7$. Then, after multiplying by $\omega^{-1}$ (which corresponds to rotating the Figure~\ref{fig:nonagonBitReader1} clockwise by $\frac{2\pi}{9}$), it suffices to show that $\Re(\omega^{-1}c_1) = -\frac{1}{2} - \frac{1}{2\cos\left(\frac{\pi}{9}\right)}$.
To see this, first note that $\omega^{-1}c_1 = \omega^5 - \omega^6 + \omega^7 - \omega^8$. Therefore, $\Re\left(\omega^{-1}c_1\right) = \cos\left(\tfrac{10\pi}{9}\right) - \cos\left(\tfrac{12\pi}{9}\right) + \cos\left(\tfrac{14\pi}{9}\right) - \cos\left(\tfrac{16\pi}{9}\right)$. Hence, it suffices to show that
$$\cos\left(\tfrac{10\pi}{9}\right) - \cos\left(\tfrac{12\pi}{9}\right) + \cos\left(\tfrac{14\pi}{9}\right) - \cos\left(\tfrac{16\pi}{9}\right) = -\frac{1}{2} - \frac{1}{2\cos\left(\frac{\pi}{9}\right)}.$$

\noindent To see this, consider the following equations.

\begin{align*}
-2 &= e^{\frac{\pi i}{9}} + e^{-\frac{\pi i}{9}} + e^{\pi i} + e^{-\pi i} - e^{-\frac{\pi i}{9}} - e^{\frac{\pi i}{9}}\\
   &= e^{\frac{\pi i}{9}} + e^{-\frac{\pi i}{9}} + e^{\frac{9\pi i}{9}} + e^{-\frac{9\pi i}{9}} - e^{\frac{17\pi i}{9}} - e^{-\frac{17\pi i}{9}}\\
   &= e^{\frac{\pi i}{9}} + e^{\frac{11\pi i}{9}} + e^{-\frac{9\pi i}{9}} - e^{\frac{13\pi i}{9}} - e^{-\frac{11\pi i}{9}} + e^{\frac{15\pi i}{9}} + e^{-\frac{13\pi i}{9}} - e^{\frac{17\pi i}{9}} - e^{-\frac{15\pi i}{9}}\\
   &\ \ + e^{-\frac{\pi i}{9}} + e^{\frac{9\pi i}{9}} + e^{-\frac{11\pi i}{9}} - e^{\frac{11\pi i}{9}} - e^{-\frac{13\pi i}{9}} + e^{\frac{13\pi i}{9}} + e^{-\frac{15\pi i}{9}} - e^{\frac{15\pi i}{9}} - e^{-\frac{17\pi i}{9}}\\
   &= \left(e^{\frac{\pi i}{9}} + e^{-\frac{\pi i}{9}}\right)\\
   &\ \ \times \left(1 + e^{\frac{10\pi i}{9}} + e^{-\frac{10\pi i}{9}} - e^{\frac{12\pi i}{9}} - e^{-\frac{12\pi i}{9}} + e^{\frac{14\pi i}{9}} + e^{-\frac{14\pi i}{9}} - e^{\frac{16\pi i}{9}} - e^{-\frac{16\pi i}{9}}\right)
\end{align*}

\noindent Therefore, $-\frac{2}{e^{\pi i/9} + e^{-\pi i/9}} = 1 + e^{10\pi i/9} + e^{-10\pi i/9} - e^{12\pi i/9} - e^{-12\pi i/9} + e^{14\pi i/9} + e^{-14\pi i/9} - e^{16\pi i/9} - e^{-16\pi i/9}$. Using the identity $\cos(\theta) = \frac{e^{i\theta} + e^{-i\theta}}{2}$, we can see that
$-1-\frac{1}{\cos\left(\frac{\pi i}{9}\right)} = 2\cos\left(\frac{10\pi i}{9}\right) - 2\cos\left(\frac{12\pi i}{9}\right) + 2\cos\left(\frac{14\pi i}{9}\right) - 2\cos\left(\frac{16\pi i}{9}\right)$. Therefore, $T_c$ and $T_{c_1}$ do not overlap.

Now we let $T_{b_1}$ denote a nonagonal tile with negated orientation centered at $b_1$ and show that $T_{b_1}$ and $T_{c_1}$ do not overlap. Relative to $c_1$, $b_1 = -\omega^3 + \omega^6 -1 + \omega^2$. It suffices to show that $\Re(\omega^{-1}b_1) < -\frac{1}{2} - \frac{1}{2\cos(\pi/9)}$, which we can numerically verify is true by approximating each side of the inequality.

Similarly, we let $T_{b_2}$ denote a nonagonal tile with standard orientation centered at $b_2$ and show that $T_{b_2}$ and $T_{c_1}$ do not overlap. Relative to $c_1$, $b_2 = -\omega^3 + \omega^6 -1 + \omega^2 - \omega^7$. Then, it suffices to show that $\Re(b_2) = -1$. To see this, note that
$b_2 =  -1 -(\omega^3 - \omega^{-3}) + (\omega^2 - \omega^{-2})$. Since $\omega^3 - \omega^{-3}$ and $\omega^2 - \omega^{-2}$ are imaginary, $\Re(b_2) = -1$.
Therefore, $T_{c_1}$ and $T_{b_2}$ do not overlap. Similarly, we can see that a nonagonal tile centered at $c_2$ that is in standard orientation and a nonagonal tile centered at $b_2$ that is in standard orientation do not overlap.

\begin{figure}[htp]
\centering
	\includegraphics[width=2.5in]{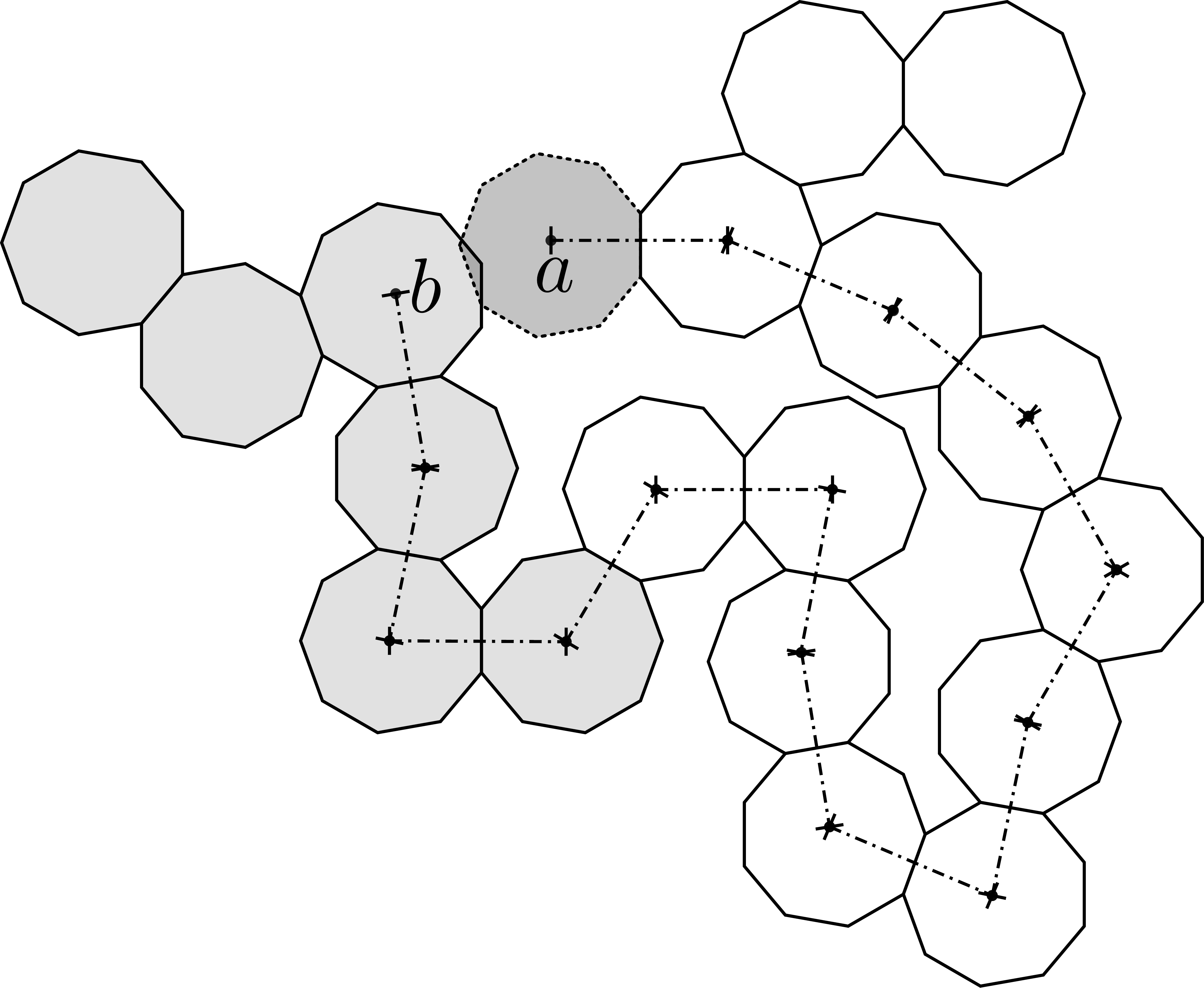}
	\caption{A configuration of the bit-reader given in Figure~\ref{fig:technical-nonagonBitReader}. We must show that the nonagonal tile centered at $a$ overlaps the tile centered at $b$.}
	\label{fig:nonagonBitReader2}
\end{figure}

Now, referring to Figure~\ref{fig:nonagonBitReader2}, it remains to be shown that a nonagonal tile, which we will denote by $T_a$, centered at $a$ that is in standard orientation and a nonagonal tile, which we will denote by $T_b$, centered at $b$ that is in standard orientation overlap. Relative to $a$, $b = 1 - \omega^4 + \omega^8 - \omega^3 + \omega^6 - \omega^2 + \omega^4 - \omega^7 + \omega^2 -1 + \omega^6 - 1 + \omega^2 - \omega^7$. Simplifying $b$ gives $b = -1 + \omega^2 - \omega^3 + 2\omega^6  - 2\omega^7 + \omega^8$. Then we can approximate $|b|$ to see that $|b|< 1$. Therefore, $T_a$ and $T_b$ overlap.

\subsubsection{Polygonal Tile Assembly with $10$, $11$, or $12$ Sided Regular Polygonal Tiles}\label{sec:technical-10-12sides}

In the cases where tiles consist of regular polygons with $10,11,$ or $12$ sides, bit-reading gadgets are relatively simple to construct. Figure~\ref{fig:technical-10to12sidesBitReaders} depicts the bit-reading gadgets for each case. Note that since each polygonal tile of these bit-reading gadgets abuts another tile, we need only show that for each configuration depicted in Figure~\ref{fig:technical-10to12sidesBitReaders}, of the two exposed glues, $g_0$ and $g_1$ of the tile $R$, a tile can only attach to one of these glues depending on the position of the tile $B$ in the figure. In other words, for each configuration depicted in Figure~\ref{fig:technical-10to12sidesBitReaders}, we show that the intersection of the interiors of a polygon with the same shape, position and orientation as $B$ and a polygon with the same shape, position and orientation of the gray tile's position and orientation.

\begin{figure}[htp]
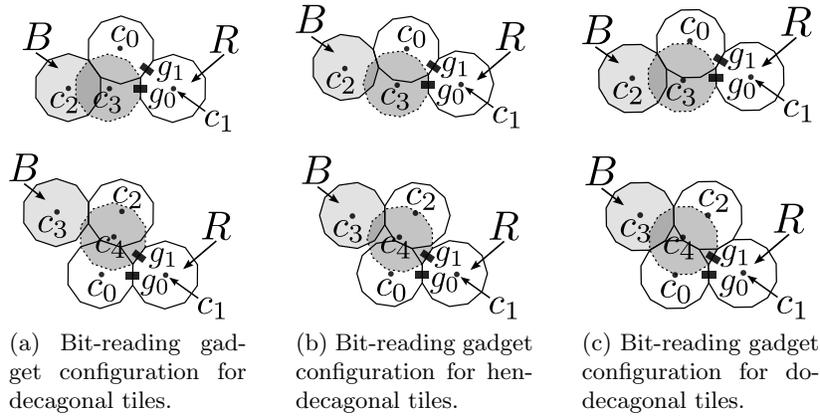

\centering
  \subfloat[][Bit-reading gadget configuration for decagonal tiles.]{%
        \label{fig:technical-10to12sidesBitReadersA}%
    		\makebox[.25\textwidth]{
        \includegraphics[width=1.2in]{images/10-12sidesBitReadersA}
        }
        }%
        \quad\quad
  \subfloat[][Bit-reading gadget configuration for hendecagonal tiles.]{%
        \label{fig:technical-10to12sidesBitReadersB}%
    		\makebox[.25\textwidth]{
        \includegraphics[width=1.2in]{images/10-12sidesBitReadersB}
        }
        }%
       \quad\quad
  \subfloat[][Bit-reading gadget configuration for dodecagonal tiles.]{%
        \label{fig:technical-10to12sidesBitReadersC}%
    		\makebox[.25\textwidth]{
        \includegraphics[width=1.2in]{images/10-12sidesBitReadersC}
        }
        }%
  \caption{(a), (b) and (c) each depict two configurations of polygonal tiles which represents either a $0$ (bottom) or a $1$ (top).}
  \label{fig:technical-10to12sidesBitReaders}
\end{figure}

For decagonal tiles, let $\omega = e^{\frac{2\pi i}{10}}$ and consider Figure~\ref{fig:technical-10to12sidesBitReadersA}. To show that this gives a valid bit-reader, we first show that using the top assembly depicted in the top figure of Figure~\ref{fig:technical-10to12sidesBitReadersA}, a polygon centered at $c_2$ and a polygon centered at $c_3$ overlap. Note that relative to $c_1$, $c_3 = -1$ and $c_2 = \omega^{4} + \omega^{6}$. Hence, $c_2 = \omega^{4} + \omega^{-4} = 2\Re\left(\omega^{4}\right) = 2\cos\left( \frac{8\pi}{10} \right)$. Then the distance $d$ from $c_3$ to $c_2$ satisfies $d = |-1 - 2\cos\left( \frac{8\pi}{10} \right)| < .62$.

Secondly, we show that in the bottom figure of Figure~\ref{fig:technical-10to12sidesBitReadersA}, a polygon centered at $c_3$ and a polygon centered at $c_4$ overlap. Relative to $c_4$, $c_3 = \omega^9 - 1 + \omega^2 - 1$. Hence, $c_3 = -2 + \omega^2 + \omega^9$. Then, $|c_3| = \left(-2 + \cos\left(\frac{4\pi}{10}\right) + \cos\left(\frac{18\pi}{10}\right) \right)^2 + \left(\sin\left(\frac{4\pi}{10}\right) + \sin\left(\frac{18\pi}{10}\right) \right)^2 < .91$.

For hendecagonal tiles, let $\omega = e^{\frac{2\pi i}{11}}$ and consider  Figure~\ref{fig:technical-10to12sidesBitReadersB}. To show that this gives a valid bit-reader, we first show that using the top assembly depicted in the top figure of Figure~\ref{fig:technical-10to12sidesBitReadersB}, a polygon centered at $c_2$ and a polygon centered at $c_3$ overlap. Note that relative to $c_3$, $c_2 = 1 - \omega^{10} + \omega^{6}$. Hence, $|c_2|^2 = \left(1 - \cos\left(\frac{20\pi}{11}\right) + \cos\left(\frac{12\pi}{11}\right) \right)^2 + \left(-\sin\left(\frac{20\pi}{11}\right) + \sin\left(\frac{12\pi}{11}\right) \right)^2 < .71$

Secondly, we show that in the bottom figure of Figure~\ref{fig:technical-10to12sidesBitReadersB}, a polygon centered at $c_2$ and a polygon centered at $c_1$ do not overlap. Relative to $c_1$, $c_2 = - 1 + \omega^2$.  Then, $|c_2|^2 = \left(-1 + \cos\left(\frac{4\pi}{11}\right)\right)^2 + \sin^2\left(\frac{4\pi}{11}\right) > \frac{1}{\cos\left(\frac{\pi}{11}\right)}$.

For dodecagonal tiles, let $\omega = e^{\frac{2\pi i}{12}}$ and consider Figure~\ref{fig:technical-10to12sidesBitReadersC}. To show that this gives a valid bit-reader, we first show that using the top assembly depicted in the top figure of Figure~\ref{fig:technical-10to12sidesBitReadersC}, a polygon centered at $c_2$ and a polygon centered at $c_3$ overlap. Note that relative to $c_3$, $c_2 = 1 + \omega^{5} + \omega^{7}$. Hence, $|c_2|^2 = \left(1 + \cos\left(\frac{10\pi}{12}\right) + \cos\left(\frac{14\pi}{12}\right) \right)^2 + \left(\sin\left(\frac{10\pi}{12}\right) + \sin\left(\frac{14\pi}{12}\right) \right)^2 < .54$

Secondly, we show that in the bottom figure of Figure~\ref{fig:technical-10to12sidesBitReadersC}, a polygon centered at $c_2$ and a polygon centered at $c_1$ do not overlap. Relative to $c_1$, $c_2 = - 1 + \omega^2$.  Then, it suffices to show that $\omega^2 c_2 = -1$. Note that

\begin{align*}
\omega^2 c_2 &= -\omega^2 + \omega^4 = \omega^{-4} + \omega^4\\
             &= 2\Re\left( \omega^4 \right) = 2\cos\left(\tfrac{8 \pi}{12}\right)\\
             &= 2\cos\left(\tfrac{2 \pi}{3}\right) = -1
\end{align*}

\subsubsection{Tridecagonal Tile Assembly}\label{sec:technical-tridecagonal}

\begin{figure}[htp]
\centering
  \subfloat[][A $0$ is read, and a $1$ cannot be read by mistake since the tile $B$ prevents a tridecagonal tile from attaching via the glue labeled $g_1$.]{%
        \label{fig:technical-tridecagonBitReaderA}%
        \makebox[.4\textwidth]{
        \includegraphics[width=2in]{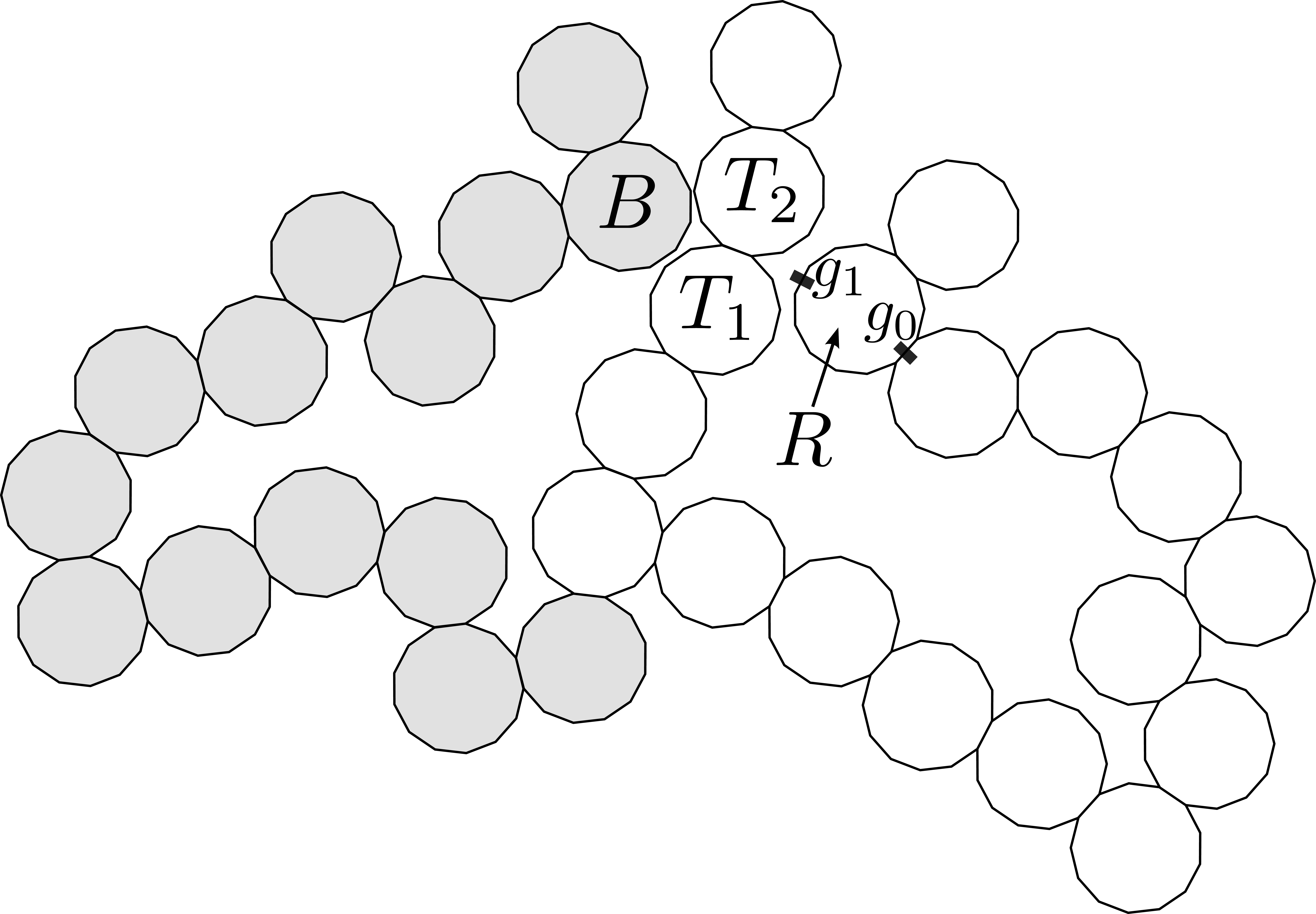}
        }}%
        \quad
  \subfloat[][A $1$ is read. This time a $0$ cannot be read by mistake since the tile $B$ prevents growth of a path of tridecagonal tiles that attach via the glue labeled $g_0$. Note that some of this path may form, but $B$ prevents the entire path from assembling, and thus prevents a $0$ from being read.]{%
        \label{fig:technical-octagonBitReaderB}%
        \makebox[.4\textwidth]{
        \includegraphics[width=2in]{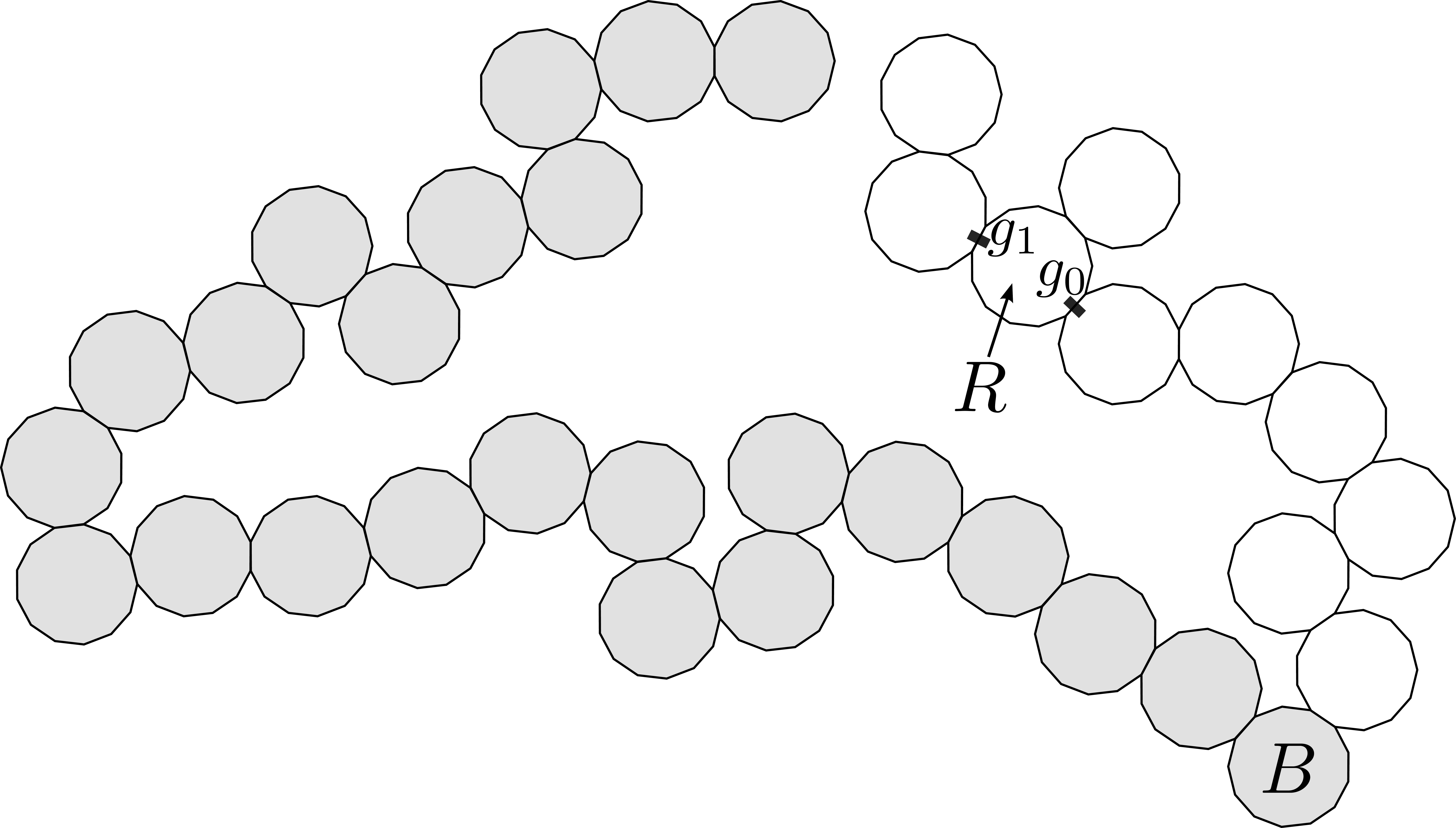}
        }}%
  \caption{The configurations for a bit-reading gadget consisting of tridecagonal tiles.}
  \label{fig:technical-tridecagonBitReader}
\end{figure}

Figure~\ref{fig:technical-tridecagonBitReader} depicts two possible configurations of a bit-reading gadget construction for single-shaped systems with tridecagonal tiles, the gray tiles represent a ``bit-writer'' tiles (representing either $0$ or $1$), while the white tiles are the ``bit-reader'' tiles. We ensure that the assembly sequence of a bit-gadget is such that all of the gray tiles bind before any white tiles. Referring to Figure~\ref{fig:technical-tridecagonBitReaderA}, we will first show that the tiles labeled $R$ and $B$ do not prevent the binding of the tile labeled $T_1$ or the tile labeled $T_2$. Then we will show that the tile labeled $B$ prevents an octagonal tile from binding to the glue labeled $g_1$.

\begin{figure}[htp]
\centering
	\includegraphics[width=2.5in]{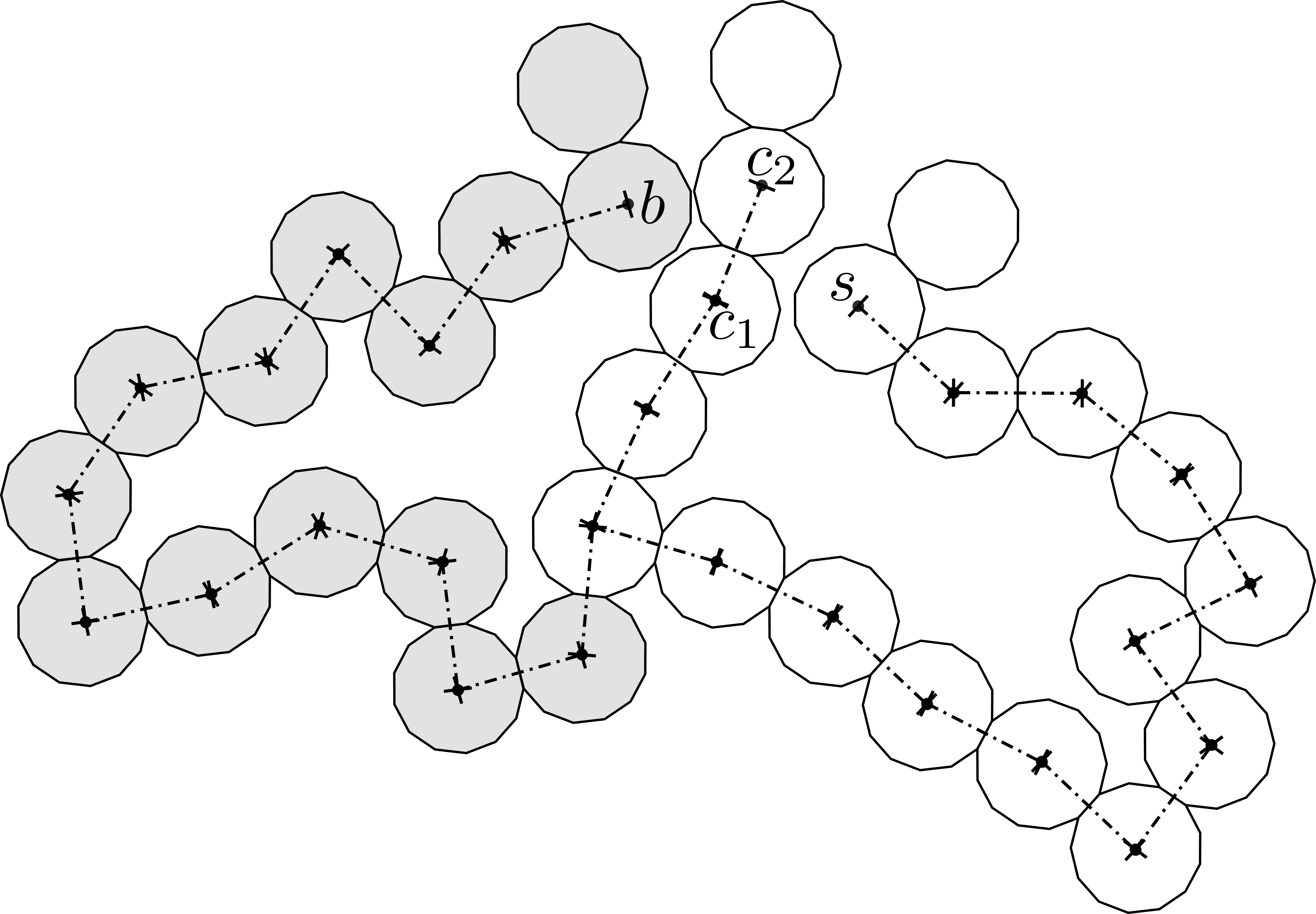}
	\caption{A possible configuration of the bit-reader given in Figure~\ref{fig:technical-tridecagonBitReader}. We must show that the nonagonal tiles centered at $c_1$ and $c_2$ do not overlap those centered at $c$ and $b$.}
	\label{fig:tridecagonBitReader1}
\end{figure}

We now refer to Figure~\ref{fig:tridecagonBitReader1} and let $\omega$ be $e^{\frac{2\pi i}{13}}$.
Let $T_{c}$ denote the tridecagonal tile with negated orientation centered at $c$ and let $T_{c_1}$ denote the tridecagonal tile with negated orientation centered at $c_1$. To show that $T_{c}$ and $T_{c_1}$ do not overlap, note that relative to $c$, $c_1$ is given by $c_1 = -\omega^{5} + 1 - \omega^{5} + \omega^{11} - \omega^{1} + \omega^{11} - \omega^{2} + \omega^{5} - \omega^{12} + \omega^{5} - \omega^{12} + \omega^{6} - \omega^{9} + \omega^{2}$ and $c_2 = c_1 - \omega^{9}$.
Simplifying $c_1$, we obtain $c_1 = -2 \omega ^{12}+2 \omega ^{11}-\omega ^9+\omega ^6-\omega +1$. Then by approximating $|c_1|$ we can see that $|c_1|>1.13>\frac{1}{\cos\left(\frac{\pi}{13}\right)}$. Therefore, $T_{c}$ and $T_{c_1}$ do not overlap.

Now let $T_{c_2}$ denote the tridecagonal tile with standard orientation centered at $c_2$. Since $c_2 = c_1 -\omega^9$, we see that $c_2 = -2 \omega ^{12}+2 \omega ^{11}-2\omega ^9+\omega ^6-\omega +1$. Then we approximate $|c_2|$ to show that $|c_2| > 1.21>\frac{1}{\cos\left(\frac{\pi}{13}\right)}$. Therefore, $T_{c}$ and $T_{c_2}$ do not overlap.

Let $T_b$ denote the tridecagonal tile with standard orientation centered at $b$. Then, relative to $b$, $c_1 = \omega^{7} - \omega^{2} + \omega^{5} - \omega^{2} + \omega^{7} - \omega^{2} + \omega^{10} - \omega^{7} + \omega - \omega^{6} + \omega^{10} - \omega^{7} + \omega^{3} - \omega^{9} + \omega^{2}$. We can simplify $c_1$ to obtain $c_1 = 2 \omega ^{10}-\omega ^9-\omega ^6+\omega ^5+\omega ^3-2 \omega ^2+\omega$. Also note that $c_2 = c_1 - \omega^9$.

Then, we approximate $|c_1|$ to show that $|c_1| > 1.06 > \frac{1}{\cos\left(\frac{\pi}{13}\right)}$. Therefore, $T_{b}$ and $T_{c_1}$ do not overlap.  Similarly, $|c_2| > 1.04 > \frac{1}{2} + \frac{1}{2\cos(\pi/13)}$, and so $T_{b}$ and $T_{c_1}$ do not overlap.

\begin{figure}[htp]
\centering
	\includegraphics[width=2.5in]{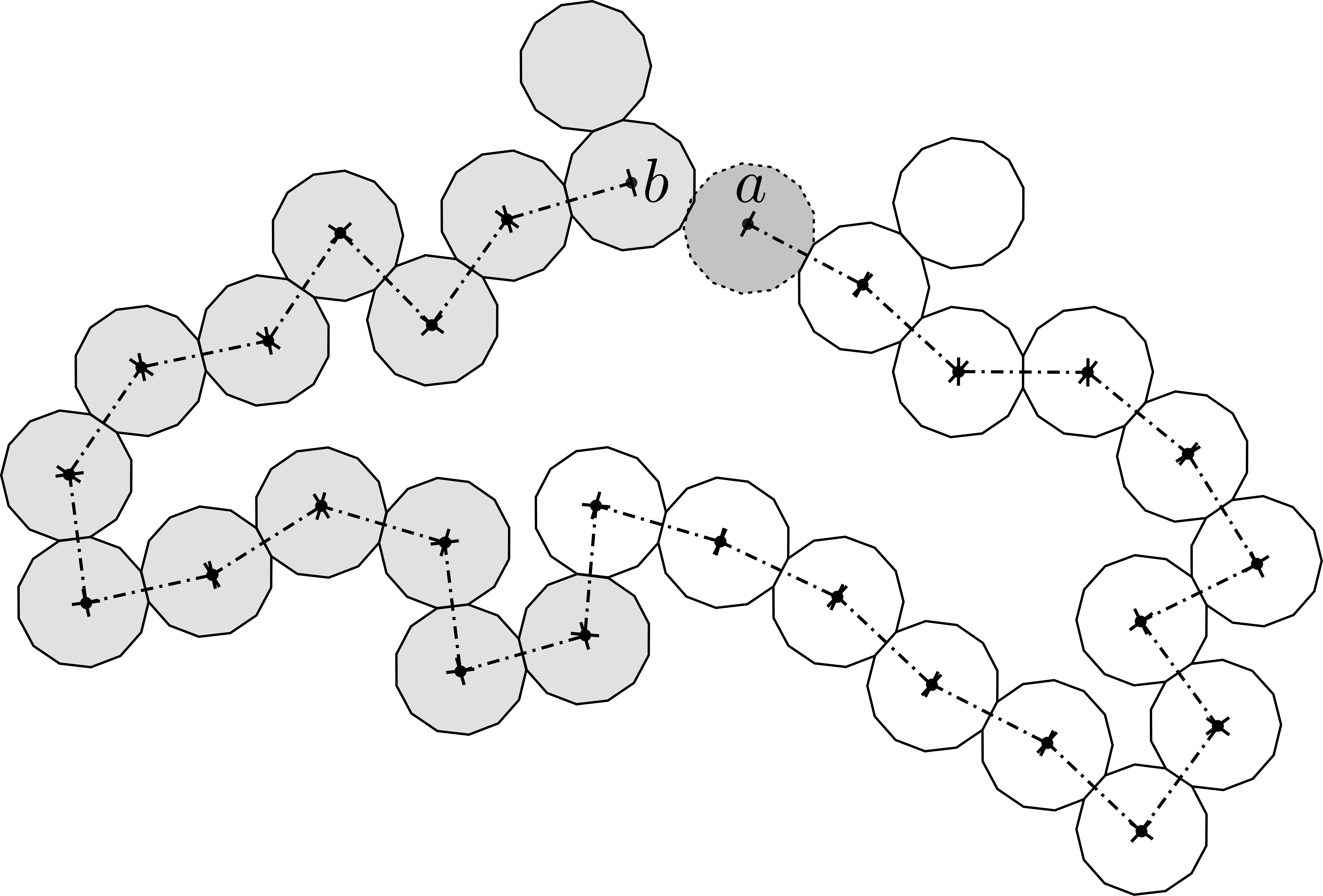}
	\caption{A configuration of the bit-reader given in Figure~\ref{fig:technical-tridecagonBitReader}. We must show that the tridecagonal tile centered at $a$ overlaps the tile centered at $b$.}
	\label{fig:tridecagonBitReader2}
\end{figure}

Now, referring to Figure~\ref{fig:tridecagonBitReader2}, it remains to be shown that a tridecagonal tile, which we will denote by $T_a$, centered at $a$ that is in standard orientation and a tridecagonal tile, which we will denote by $T_b$, centered at $b$ that is in standard orientation overlap. Relative to $b$,

\begin{align*}
a &= \omega^{7} - \omega^{2} + \omega^{5} - \omega^{2} + \omega^{7} - \omega^{2} + \omega^{10} - \omega^{7} + \omega - \omega^{6} + \omega^{10} - \omega^{7} + \omega^{3} - \omega^{6} + \omega^{12} - \omega^{5}   \\
  &\quad + \omega^{12} - \omega^{5}  + \omega^{2} - \omega^{11} + \omega - \omega^{11} + \omega^{5} - 1 + \omega^{5} - \omega^{12} \\
  &= -1 + 2\omega - 2\omega^{2} + \omega^{3} + \omega^{5} - 2\omega^{6}  + 2\omega^{10} - 2\omega^{11} + \omega^{12}
\end{align*}

\noindent Then we can approximate $|a|$ to see that $|a|< 1$. Therefore, $T_a$ and $T_b$ overlap.

\subsubsection{Tetradecagonal Tile Assembly}\label{sec:technical-tetradecagonal}

\begin{figure}[htp]
\centering
  \subfloat[][A $0$ is read, and a $1$ cannot be read by mistake since the tile $B$ prevents a tetradecagonal tile from attaching via the glue labeled $g_1$.]{%
        \label{fig:technical-tetradecagonBitReaderA}%
        \makebox[.4\textwidth]{
        \includegraphics[width=1.5in]{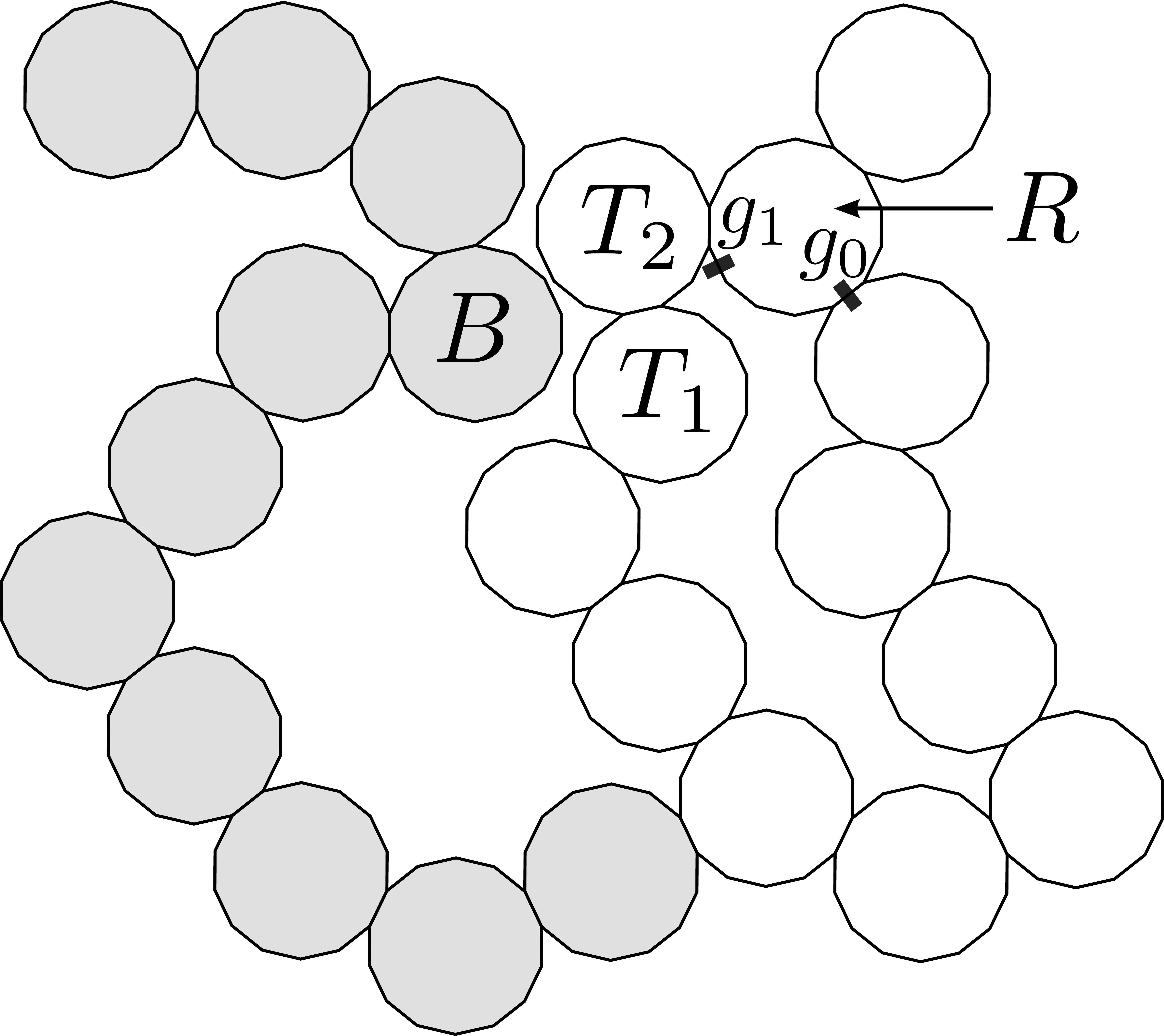}
        }}%
        \quad
  \subfloat[][A $1$ is read. This time a $0$ cannot be read by mistake since the tile $B$ prevents growth of a path of tetradecagonal tiles that attach via the glue labeled $g_0$. Note that some of this path may form, but $B$ prevents the entire path from assembling, and thus prevents a $0$ from being read.]{%
        \label{fig:technical-tetradecgonBitReaderB}%
        \makebox[.4\textwidth]{
        \includegraphics[width=1.5in]{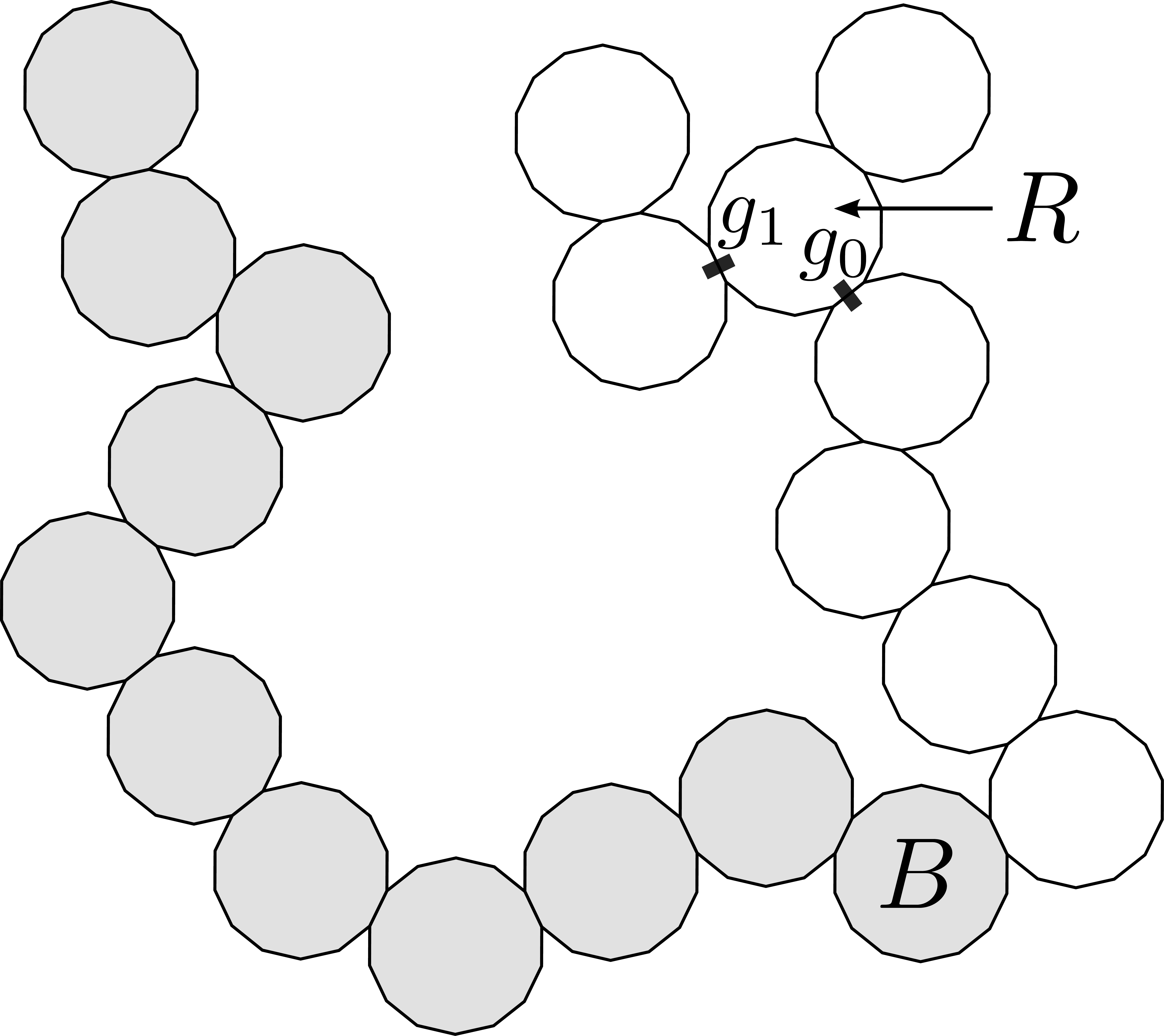}
        }}%
  \caption{The configurations for a bit-reading gadget consisting of tetradecagonal tiles.}
  \label{fig:technical-tetradecagonBitReader}
\end{figure}

Figure~\ref{fig:technical-tetradecagonBitReader} depicts two possible configurations of a bit-reading gadget construction for single-shaped systems with tetradecagonal tiles, the gray tiles represent a ``bit-writer'' tiles (representing either $0$ or $1$), while the white tiles are the ``bit-reader'' tiles. We ensure that the assembly sequence of a bit-gadget is such that all of the gray tiles bind before any white tiles. Referring to Figure~\ref{fig:technical-tetradecagonBitReaderA}, we will first show that the tiles labeled $R$ and $B$ do not prevent the binding of the tile labeled $T_1$ or the tile labeled $T_2$. Then we will show that the tile labeled $B$ prevents an octagonal tile from binding to the glue labeled $g_1$.

\begin{figure}[htp]
\centering
	\includegraphics[width=2.5in]{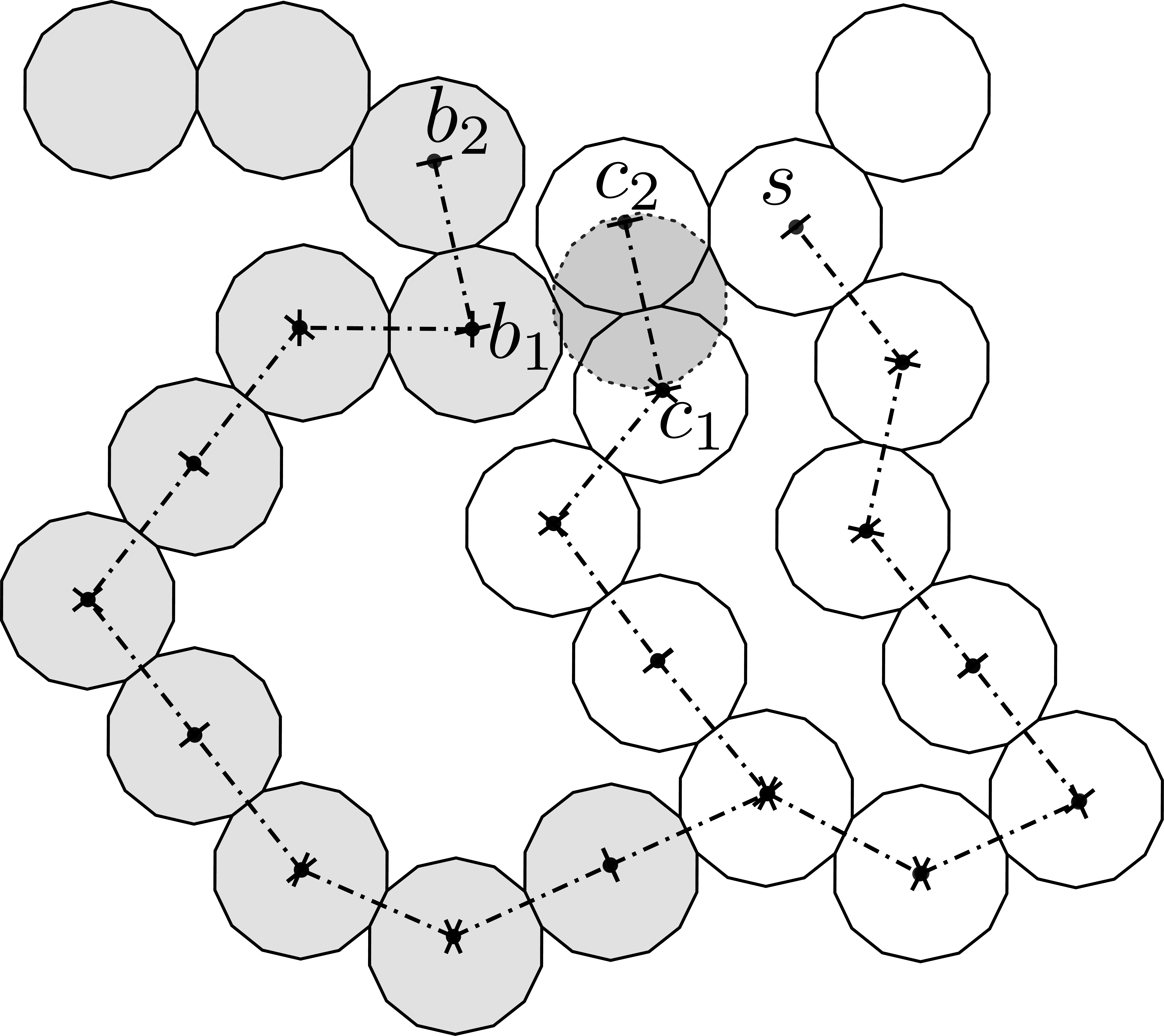}
	\caption{A possible configuration of the bit-reader given in Figure~\ref{fig:technical-tetradecagonBitReader}. We must show that the nonagonal tiles centered at $c_1$ and $c_2$ do not overlap those centered at $c$, $b_1$ and $b_2$.}
	\label{fig:tetradecagonBitReader1}
\end{figure}

We now refer to Figure~\ref{fig:tetradecagonBitReader1} and let $\omega$ be $e^{\frac{2\pi i}{13}}$.
Let $T_{c}$, $T_{c_1}$, $T_{c_2}$, $T_{b_1}$, and $T_{b_2}$ denote the tridecagonal tile with standard orientation centered at $c$, $c_1$, $c_2$, $b_1$, and $b_2$ respectively. Then, to show that $T_{c}$ and $T_{c_1}$ do not overlap, note that relative to $c$, $c_1$ is given by $c_1 = c_1 = 3\omega^{12} + \omega^{10}  + \omega^{8} + \omega^{6} + 2\omega^{5} + \omega^{2}$. Then by approximating $|c_1|$ we can see that $|c_1|>1.2>\frac{1}{\cos\left(\frac{\pi}{14}\right)}$. Therefore, $T_{c}$ and $T_{c_1}$ do not overlap. Moreover, $c_2 = c_1 + \omega^4$. Then, consider the following.

\begin{align*}
c_2 &= 3\omega^{12} + \omega^{10} + \omega^{8} + \omega^{6} + 2\omega^{5} + \omega^{4} + \omega^{2}\\
    &= (2\omega^{12} + 2\omega^{5}) + (\omega^{12} + \omega^{10} + \omega^{8} + \omega^{6} + \omega^{4} + \omega^{2}) \numberthis \label{eqn:line1}\\
    &= (2\omega^{12} - 2\omega^{12}) + (\omega^{12} + \omega^{10} + \omega^{8} + \omega^{6} + \omega^{4} + \omega^{2} + 1 - 1)\numberthis \label{eqn:line2}\\
    &= -1
\end{align*}

\noindent Equation~\eqref{eqn:line1} follows from the following equalities that $$\omega^5 = e^{\left(\frac{10\pi i}{14}\right)} = -e^{\left(\frac{10\pi i}{14} + \frac{14\pi i}{14}\right)} = -e^{\left(\frac{24\pi i}{14}\right)} = -\omega^{12}.$$ Equation~\eqref{eqn:line2} follows from the fact that $\omega^{12} + \omega^{10} + \omega^{8} + \omega^{6} + \omega^{4} + \omega^{2} + 1 = 0.$ To see this, note that $\omega^{12} + \omega^{10} + \omega^{8} + \omega^{6} + \omega^{4} + \omega^{2} + 1 = \omega^2\left(\omega^{12} + \omega^{10} + \omega^{8} + \omega^{6} + \omega^{4} + \omega^{2} + 1\right)$, and so, $$(\omega^2 - 1)\left(\omega^{12} + \omega^{10} + \omega^{8} + \omega^{6} + \omega^{4} + \omega^{2} + 1\right) = 0.$$ Then, since $\omega^2 - 1 \neq 0$, it follows that $\omega^{12} + \omega^{10} + \omega^{8} + \omega^{6} + \omega^{4} + \omega^{2} + 1 = 0$. Therefore, $T_{c}$ and $T_{c_2}$ do not overlap.

Now, to show that $T_{b_1}$ does not overlap $T_{c_1}$ or $T_{c_2}$, note that relative to $b_1$, $c_1 = -1 + 2\omega^9 + 2\omega^{12} + \omega^{13} + 2\omega + 2\omega^5 + \omega^2$. Simplifying $c_1$, we obtain $c_1 = -1 + \omega^9 + \omega^{13} + 2\omega$. Then we can approximate $|c_1|$ to see that $|c_1| > 1.1 > \frac{1}{\cos(\pi/14)}$. Similarly, relative to $b_1$, $|c_2| > 1.06 > \frac{1}{\cos(\pi/14)}$. Therefore, $T_{b_1}$ does not overlap $T_{c_1}$ or $T_{c_2}$. This also shows that $T_{b_2}$ and $T_{c_2}$ do not overlap since relative to $b_2$, $c_2 = -\omega^4 + c_1 + \omega^4$.

\begin{figure}[htp]
\centering
	\includegraphics[width=2.5in]{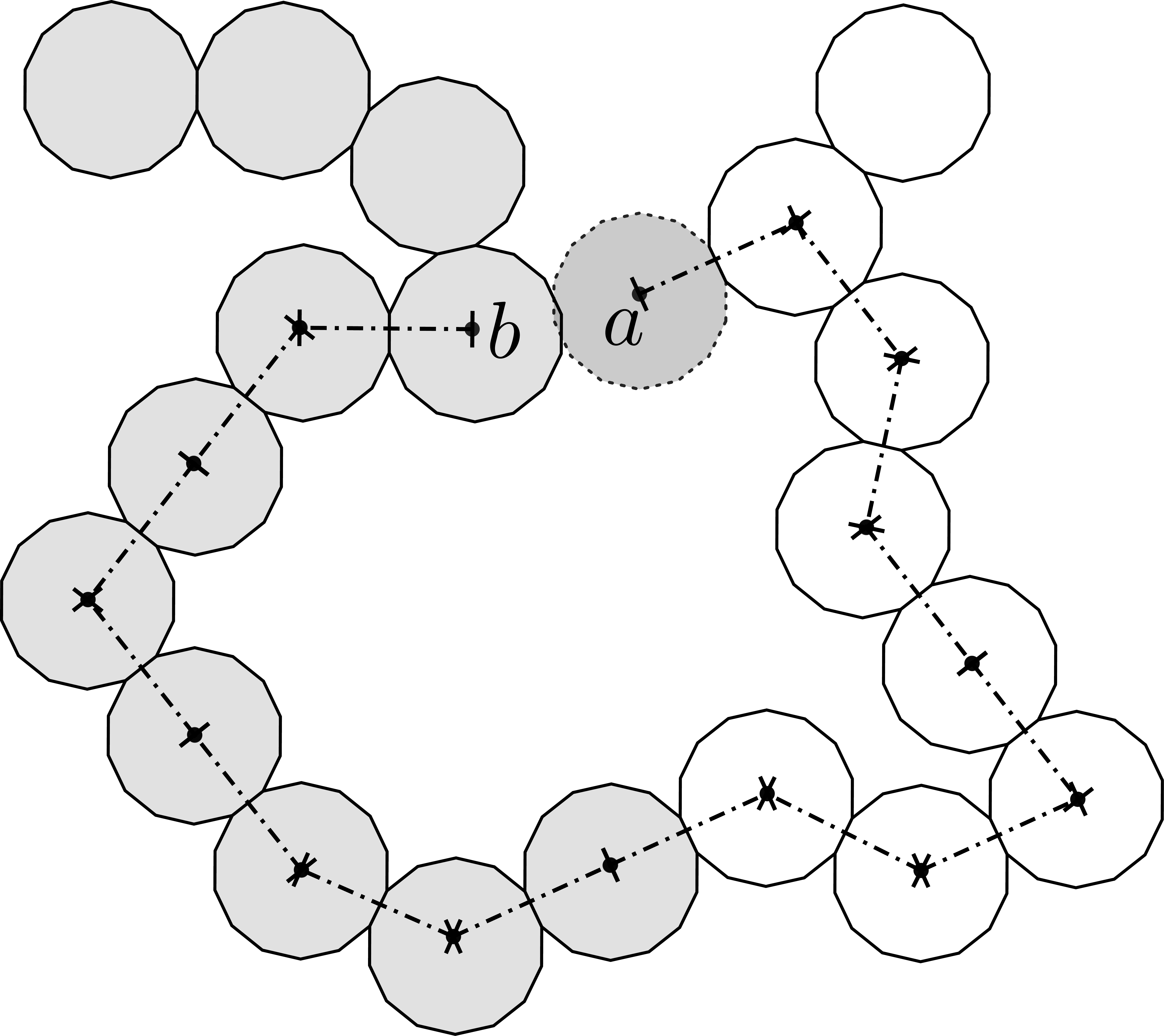}
	\caption{A configuration of the bit-reader given in Figure~\ref{fig:technical-tetradecagonBitReader}. We must show that the tetradecagonal tile centered at $a$ overlaps the tile centered at $b$.}
	\label{fig:tetradecagonBitReader2}
\end{figure}

Now, referring to Figure~\ref{fig:tetradecagonBitReader2}, it remains to be shown that a tetradecagonal tile, which we will denote by $T_a$, centered at $a$ that is in standard orientation and a tetradecagonal tile, which we will denote by $T_b$, centered at $b$ that is in standard orientation overlap. Relative to $b$,
$a = -1 + 2\omega^9 + 2\omega^{12} + \omega^{13} + 2\omega + \omega^{13} + \omega + 2\omega^{5} + \omega^{3} + \omega^{5} + \omega^{8}$. Simplifying $a$, we see $ a= -1 + 3\omega + \omega^{3} + \omega^{5} + \omega^{8} + 2\omega^9 + 2\omega^{13}$. Then we can approximate $|a|$ to see that $|a|< 1$. Therefore, $T_a$ and $T_b$ overlap.

\subsubsection{Polygonal Tile Assembly with Regular Polygonal Tiles with $15$ or More Sides}\label{sec:technical-15+sides}

In the cases where tiles consist of regular polygons with $15$ or more sides, we give a general scheme for obtaining bit-reading gadgets for each case. Figure~\ref{fig:technical-15+sides} depicts the bit-reading gadgets for each case. Note that since each polygonal tile of these bit-reading gadgets abuts another tile, we need only show that for each configuration depicted in Figure~\ref{fig:technical-15+sides}, of the two exposed glues, $g_0$ and $g_1$ of the tile $R$, a tile can only attach to one of these glues depending on the position of the tile $B$ in the figure. In other words, for each configuration depicted in Figure~\ref{fig:technical-15+sides}, we show that the intersection of the interiors of a polygon with the same shape, position and orientation as $B$ and a polygon with the same shape, position and orientation of the gray tile's position and orientation.

\begin{figure}[htp]
\centering
  \subfloat[][Bit-reading gadget configuration for pentadecagonal tiles.]{%
        \label{fig:technical-15+sidesA}%
    		\makebox[.25\textwidth]{
        \includegraphics[width=1.2in]{images/15+sidesA}
        }
        }%
        \quad\quad
  \subfloat[][Bit-reading gadget configuration for hexadecagonal tiles.]{%
        \label{fig:technical-15+sidesB}%
    		\makebox[.25\textwidth]{
        \includegraphics[width=1.2in]{images/15+sidesB}
        }
        }%
       \quad\quad
  \subfloat[][Bit-reading gadget configuration for heptadecagonal tiles.]{%
        \label{fig:technical-15+sidesC}%
    		\makebox[.25\textwidth]{
        \includegraphics[width=1.2in]{images/15+sidesC}
        }
        }%
  \caption{(a), (b) and (c) each depict two configurations of polygonal tiles which represents either a $0$ (bottom) or a $1$ (top).}
  \label{fig:technical-15+sides}
\end{figure}

Now, consider a polygon $P_n$ with $n\geq 15$ sides and let $\omega$ be the $n^{th}$ root of unity $e^{\frac{2\pi i}{n}}$. Then, the general scheme for constructing a bit-reading gadget falls into two cases. First, if $n$ is odd (the cases where $n$ is even are similar), relative to a tile with negated orientation (the polygon labeled $R$ in the configurations in Figure~\ref{fig:technical-15+sides}), the two configurations that give rise to the bit-reading gadget are as follows. Let $k$ be such that $n=2k+1$ ($n=2k$ if $n$ is even). To ``read'' a $1$, the configuration is obtained by centering a blocker tile with negated orientation, labeled $B$ in the top configurations of Figure~\ref{fig:technical-15+sides}, at $-\omega^{n-1} + \omega^{k+1}$ (whether $n$ is even or odd). Then $R$ exposes two glues $g_1$ and $g_0$ such that if a tile binds to $g_1$, it will have standard orientation and be centered at $-\omega^{n-1}$ (whether $n$ is even or odd) and if a tile that binds to $g_0$, it will have standard orientation and be centered at $-1$. We will show that $B$ will prevent this tile from binding. This gives the configuration depicted in the top figures of Figure~\ref{fig:technical-15+sides}. Similarly, to ``read'' a $0$, the configuration is obtained by centering a blocker tile with negated orientation, labeled $B$ in the bottom configuration of Figure~\ref{fig:technical-15+sidesA}, at $-1 + \omega^{k-1}$ ($-1 + \omega^{k-2}$ if $n$ is even) relative to $R$.  In this case, we will show that $B$ prevents a tile from binding to $g_1$. In addition, we place a glue on the tile that binds to $g_0$ that allows for another tile to bind to it so that its center is at $c_2 = -1 + \omega^{\lfloor \frac{k-1}{2} \rfloor}$ ($c_2 = -1 + \omega^{\frac{k-2}{2}}$ if $n$ is even).  This gives the configuration depicted in the bottom figures of Figure~\ref{fig:technical-15+sidesA} and Figure~\ref{fig:technical-15+sidesC}. Moreover, we show that neither $R$ nor $B$ prevent the binding of this tile.

In order to perform the calculations used to show the correctness of these bit-reading gadgets, we consider the cases where $n$ is even and where $n$ is odd.

\subsubsection*{Case 1: ($n$ is odd)}

Suppose that $n = 2k+1$ for some $k$. To show that a polygon centered at $c_1$ and a polygon centered at $c_2$ do not overlap, consider the case where $k$ is odd. Note that relative to $c_0$, $c_1 = 1$ and $c_2 = \omega^{\frac{k-1}{2}}$. Then the distance $d_n$ from $c_1$ to $c_2$ satisfies the following equation.
$$d_n^2 = \left(1-\cos\left(\frac{\left(k-1\right)\pi}{n}\right)\right)^2 + \sin^2\left(\frac{\left(k-1\right)\pi}{n}\right)$$
Substituting $k = \frac{n-1}{2}$ for $k$ and simplifying, we obtain $d_n^2 = 2+2\sin\left(\frac{3\pi}{2n}\right)$. Now to show that a polygon centered at $c_1$ and a polygon centered at $c_2$ do not overlap, we show that $d_{n}^2 > \frac{1}{\cos^2\left(\frac{\pi}{n}\right)}$ for $n \geq 15$. To see this, note that $\cos^2\left(\frac{\pi}{n}\right)d_n^2 = 2\cos^2\left(\frac{\pi}{n}\right)\left(1+\sin\left(\frac{3\pi}{2n}\right)\right)$. Then for $n\geq15$, $2\cos^2\left(\frac{\pi}{n}\right)\left(1+\sin\left(\frac{3\pi}{2n}\right)\right) > 2\cos^2\left(\frac{\pi}{4}\right) =1$.

It then follows that $d_{n} > \frac{1}{\cos\left(\frac{\pi}{n}\right)}$, and therefore $d_n$ is greater than twice the circumradius of our polygons. Hence, a polygon centered at $c_1$ and a polygon centered at $c_2$ do not overlap.

In the case where $k$ is even, let $m$ be such that $k = 2m$. Then relative to $c_0$, $c_1 = 1$ and $c_2 = \omega^{m-1}$. In this case, $d_n$ satisfies the following equation.

$$d_n^2 = \left(1-\cos\left(\frac{\left(2m-2\right)\pi}{n}\right)\right)^2 + \sin^2\left(\frac{\left(2m-2\right)\pi}{n}\right)$$

\noindent Substituting $m = \frac{k}{2}$ for $m$ and $k = \frac{n-1}{2}$ for $k$ we obtain $d_n^2 = 2+2\sin\left(\frac{5\pi}{2n}\right)$. Now to show that a polygon centered at $c_1$ and a polygon centered at $c_2$ do not overlap, we show that $d_{n}^2 > \frac{1}{\cos^2\left(\frac{\pi}{n}\right)}$ for $n \geq 15$. To see this, note that $\cos^2\left(\frac{\pi}{n}\right)d_n^2 = 2\cos^2\left(\frac{\pi}{n}\right)\left(1+\sin\left(\frac{5\pi}{2n}\right)\right)$. Then for $n\geq15$, $2\cos^2\left(\frac{\pi}{n}\right)\left(1+\sin\left(\frac{5\pi}{2n}\right)\right) > 2\cos^2\left(\frac{\pi}{4}\right) =1$.

It then follows in the case where $k$ is even, $d_{n} > \frac{1}{\cos\left(\frac{\pi}{n}\right)}$, and therefore $d_n$ is greater than twice the circumradius of our polygons. Hence, a polygon centered at $c_1$ and a polygon centered at $c_2$ do not overlap.

Now, to show that a polygon centered at $c_3$ and a polygon centered at $c_4$ overlap, note that relative to $c_1$, $c_3 = -1 + \omega^{k-1}$ and $c_4 = -\omega^{n-1}$. Therefore, the distance $d_n$ from $c_3$ to $c_4$ is satisfies the following equation.
\begin{align*}
d_n^2 &= \left( -1 + \cos\left(\frac{2(k-1)\pi}{n}\right) + \cos\left( \frac{2(n-1)\pi}{n} \right) \right)^2\\
      &\ \ \ \ + \left( \sin\left( \frac{2(k-1)\pi}{n} \right) + \sin\left( \frac{2(n-1)\pi}{n} \right) \right)^2
\end{align*}

\noindent Substituting $k = \frac{n-1}{2}$ for $k$ and simplifying, we obtain, $$d_n^2 = 1 + 2\left(2\sin^2\left(\frac{\pi}{n}\right)\left(1-2\cos\left(\frac{\pi}{n}\right)\right)\right)$$. Note that for each $n>2$, $d_n^2 < 1$. To see this, it suffices to show that $$2\sin^2\left(\frac{\pi}{n}\right)\left(1-2\cos\left(\frac{\pi}{n}\right)\right) < 0$$. This follows from the fact that $2\sin^2\left(\frac{\pi}{n}\right) > 0$ and $1-2\cos\left(\frac{\pi}{n}\right) < 0$ for $n >2$.

Now, since for each $n>2$, $d_n^2 < 1$, we see that $d_n < 1$. Since the length of the apothem for each tile is assumed to be $\frac{1}{2}$, we can conclude that a polygon centered at $c_3$ and a polygon centered at $c_4$ must overlap.

\subsubsection*{Case 2: ($n$ is even)}

Let $k$ be such that $n = 2k$. Then, relative to $c_0$, $c_1 = 1$ and $c_2 = \omega^{\lfloor\frac{k-2}{2}\rfloor}$. Then the distance, $d_n$ say, from $c_1$ to $c_2$ satisfies the following equation
$$d_n = \left(1-\cos\left(\frac{\left(k-2\right)\pi}{n}\right)\right)^2 + \sin^2\left(\frac{\left(k-2\right)\pi}{n}\right)$$.

\noindent Substituting $k = \frac{n}{2}$ for $k$ and simplifying, we obtain $d_n^2 = 2+2\sin\left(\frac{2\pi}{n}\right)$. To show that $c_0$ and $c_1$ do not overlap, it suffices to show that $d_n^2 > \frac{1}{\cos^2\left( \frac{\pi}{n} \right)}$.  To see this, note that $\cos^2\left(\frac{\pi}{n}\right)d_n^2 = 2\cos^2\left(\frac{\pi}{n}\right)\left(1+\sin\left(\frac{2\pi}{n}\right)\right)$. Then for $n\geq16$, $2\cos^2\left(\frac{\pi}{n}\right)\left(1+\sin\left(\frac{2\pi}{n}\right)\right) > 2\cos^2\left(\frac{\pi}{4}\right) = 1$.

As in the case where $n$ is odd, it then follows that in the case where $n$ is even, $d_{n} > \frac{1}{\cos\left(\frac{\pi}{n}\right)}$, and therefore $d_n$ is greater than twice the circumradius of our polygons. Hence, a polygon centered at $c_1$ and a polygon centered at $c_2$ do not overlap.

Now, to show that a polygon centered at $c_3$ and a polygon centered at $c_4$ overlap, note that relative to $c_1$, $c_3 = -1 + \omega^{k-2}$ and $c_4 = -\omega^{n-1}$. Therefore, the distance $d_n$ from $c_3$ to $c_4$ is satisfies the following equation.

\begin{align*}
d_n^2 &= \left( -1 + \cos\left(\frac{2(k-2)\pi}{n}\right) + \cos\left( \frac{2(n-1)\pi}{n} \right) \right)^2 \\
      &\ \ \ \ + \left( \sin\left( \frac{2(k-2)\pi}{n} \right) + \sin\left( \frac{2(n-1)\pi}{n} \right) \right)^2
\end{align*}

\noindent Substituting $k = \frac{n}{2}$ for $k$ and simplifying, we obtain,
$d_n^2 = 1-8\left(\sin^2\left(\frac{\pi}{n}\right)\cos\left(\frac{2\pi}{n}\right)\right)$.
Note that for each $n\geq 16$, $d_n^2 < 1$. To see this, it suffices to show that $-8\left(\sin^2\left(\frac{\pi}{n}\right)\cos\left(\frac{2\pi}{n}\right)\right) < 0$. This follows from the fact that $8\sin^2\left(\frac{\pi}{n}\right) > 0$ and $\cos\left(\frac{2\pi}{n}\right) > 0$ for $n > 16$.

Now, since for each $n\geq 16$, $d_n^2 < 1$, we see that $d_n < 1$. Since the length of the apothem for each tile is assumed to be $\frac{1}{2}$, we can conclude that a polygon centered at $c_3$ and a polygon centered at $c_4$ must overlap.

\subsection{2-shaped systems with regular polygonal tiles}\label{sec:technical-2shaped}

The following figures give configurations for normalized on-grid bit-reading gadgets that can be used to obtain bit-reading assemblies for 2-shaped systems where the tiles of the system have the shape of two different regular polygons. Note that the grid construction techniques from Section~\ref{sec:bit-grid-main} can be used to obtain the grids shown using dashed lines in the figures below. 

\newcolumntype{M}{>{\centering\arraybackslash}m{\dimexpr.45\linewidth-2\tabcolsep}}
\begin{table}[htp]
\centering
\begin{tabular}{| M | M |}
	\hline
    \includegraphics[width=1.6in]{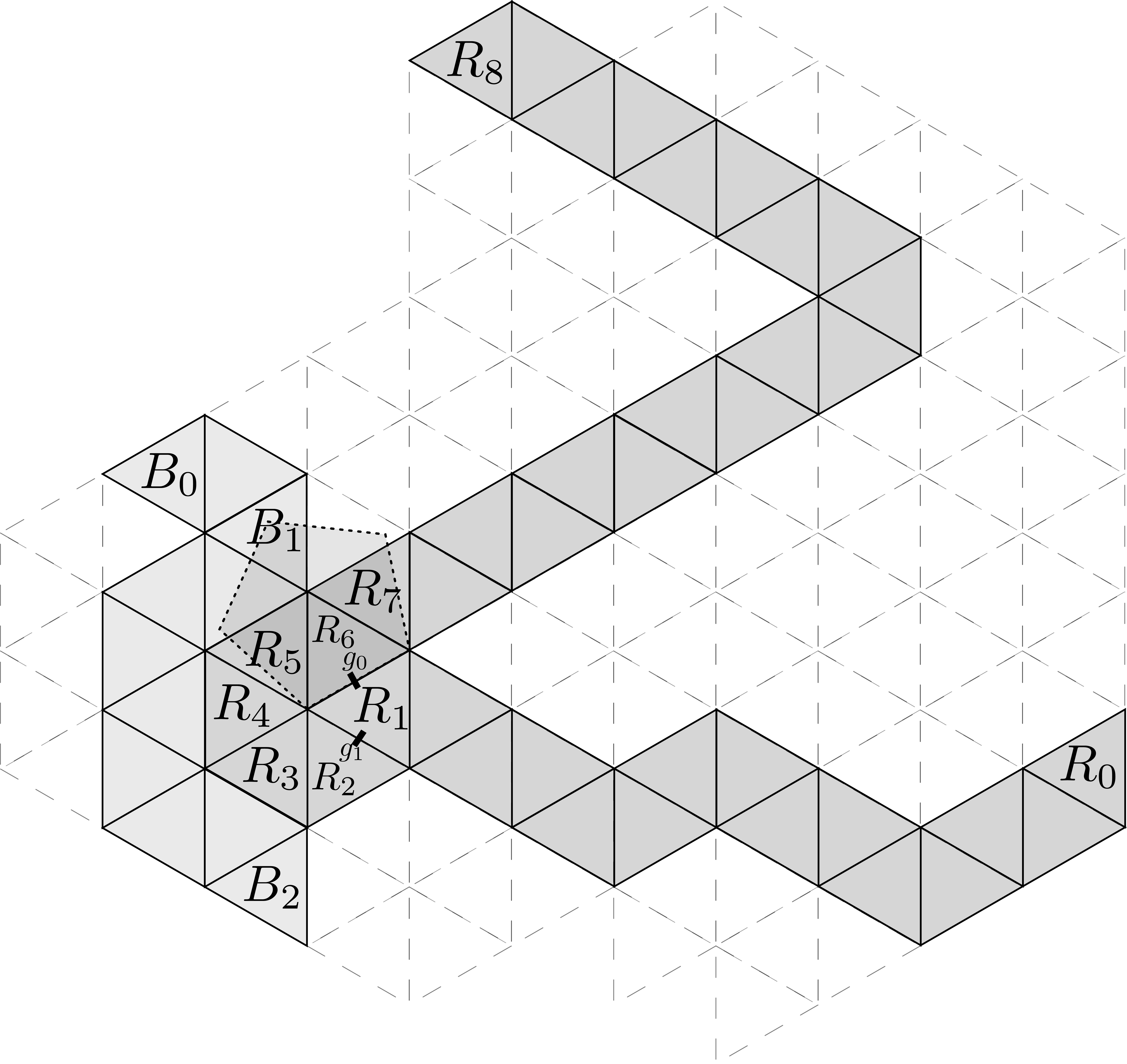} &
	\includegraphics[width=1.6in]{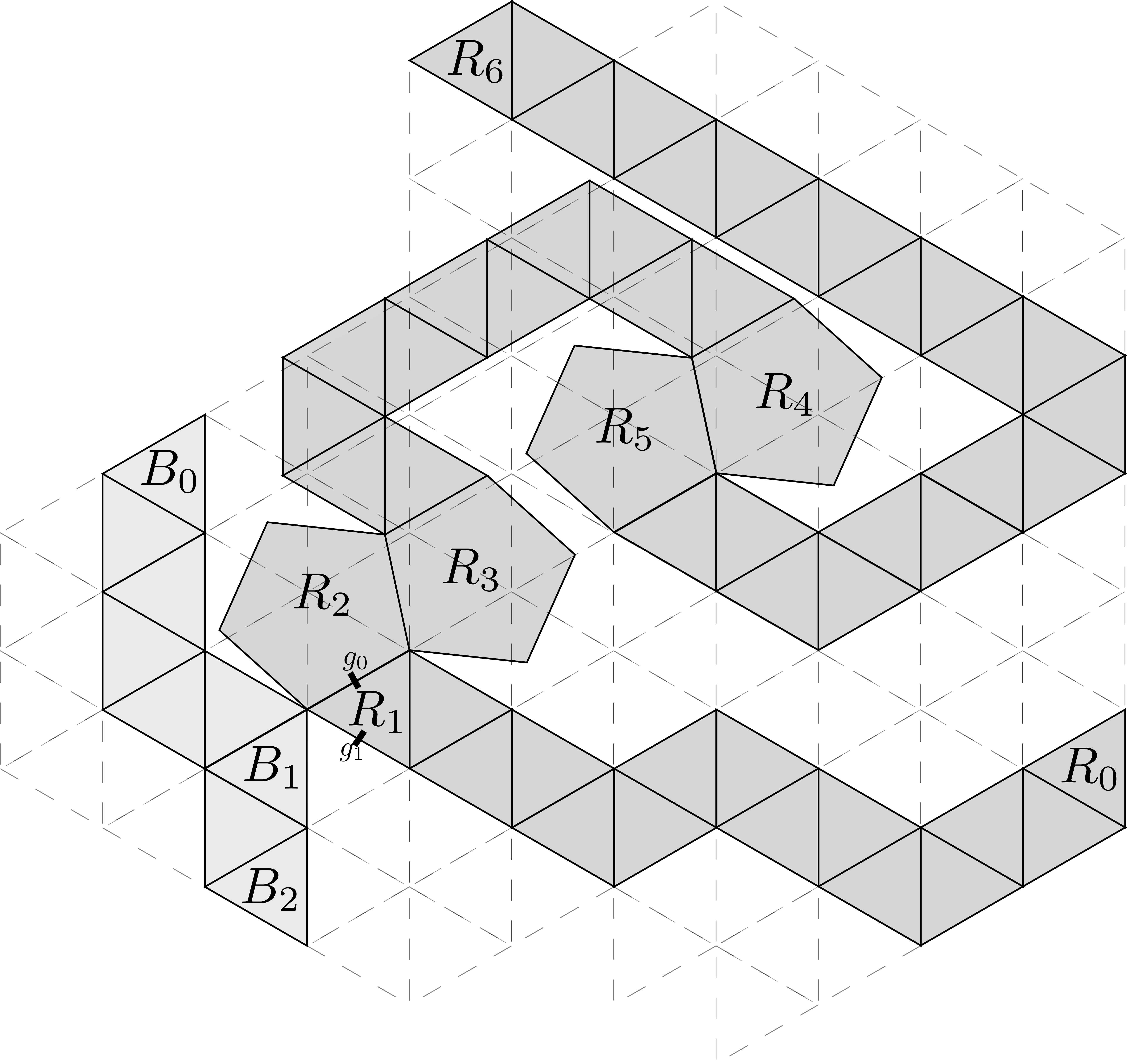}  \\
	(a) & (b) \\\hline
\end{tabular}\vspace{1ex}\caption{Configurations for normalized on-grid bit-reading gadgets that can be used for 2-shaped systems using whose tiles have the shape of a regular triangle and a regular pentagon. (a) represents a $0$, and (b) represents a $1$.}\label{tbl:mixed3s}\vspace{-20pt}
\end{table}

\newcolumntype{M}{>{\centering\arraybackslash}m{\dimexpr.45\linewidth-2\tabcolsep}}
\begin{table}[htp]
\centering
\begin{tabular}{| M | M |}
	\hline
    \includegraphics[width=1.2in]{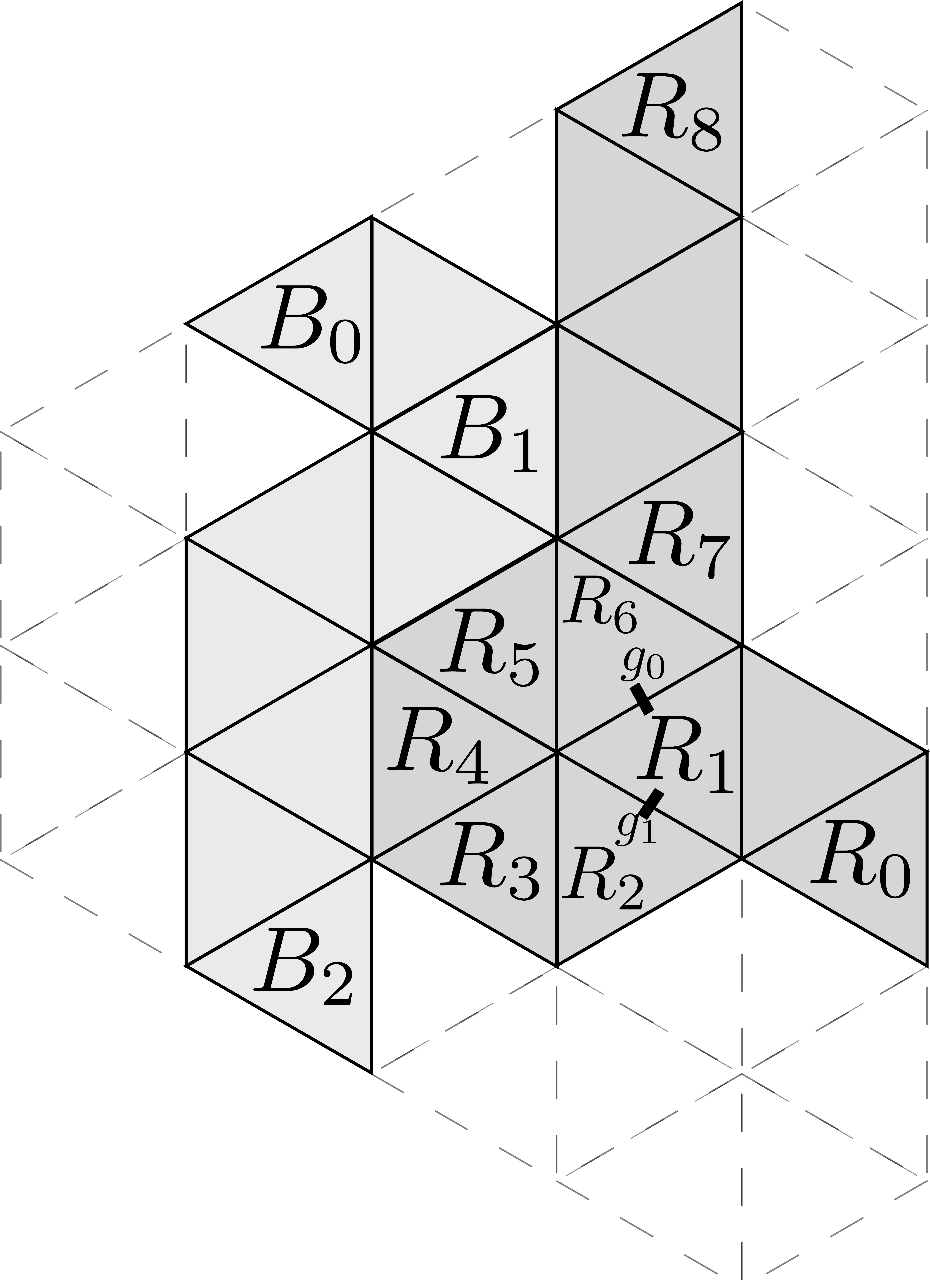} &
	\includegraphics[width=1.2in]{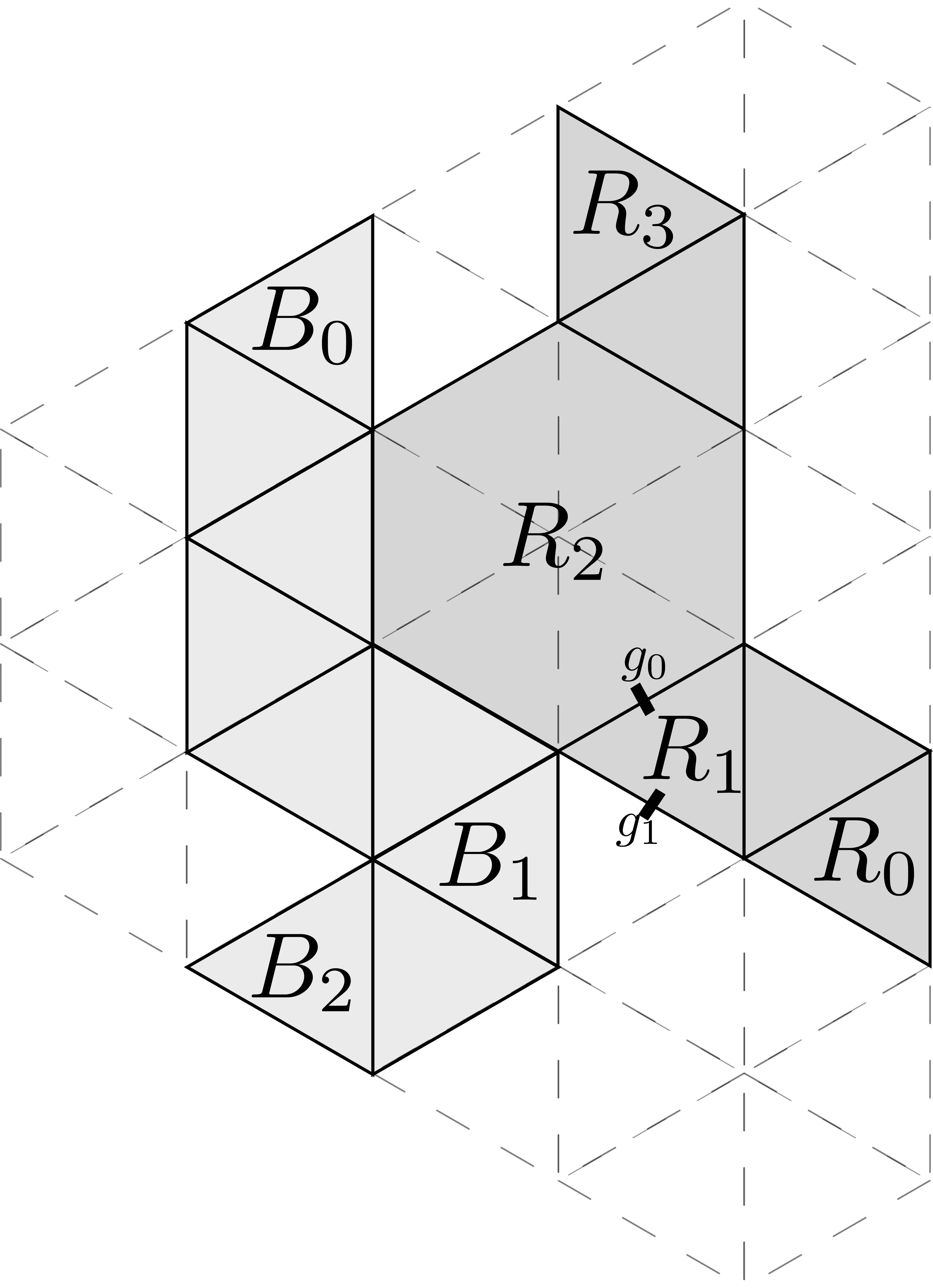}  \\
	(a) & (b) \\\hline
\end{tabular}\vspace{1ex}\caption{Configurations for normalized on-grid bit-reading gadgets that can be used for 2-shaped systems using whose tiles have the shape of a regular triangle and a regular hexagon. (a) represents a $0$, and (b) represents a $1$.}\label{tbl:mixed3s}\vspace{-20pt}
\end{table}

\bgroup
\newcolumntype{M}{>{\centering\arraybackslash}m{\dimexpr.5\linewidth-2\tabcolsep}}
\begin{table}[htp]
\centering
\begin{tabular}{| M | M |}
	\hline
	\includegraphics[trim=0 0 0 -150,width=2in]{images/mixed45A}  & \includegraphics[trim=0 0 0 -150,width=1.84in]{images/mixed45B} \\
	(a) & (b)  \\\hline
\end{tabular}\vspace{1ex}\caption{Configurations of for normalized on-grid bit-reading gadgets that can be used for 2-shaped systems using whose tiles have the shape of a square and a regular pentagon. (a) represents a $0$, and (b) represents a $1$.}\label{tbl:mixed4s}\vspace{-20pt}
\end{table}
\egroup

\bgroup
\newcolumntype{M}{>{\centering\arraybackslash}m{\dimexpr.5\linewidth-2\tabcolsep}}
\begin{table}[htp]
\centering
\begin{tabular}{| M | M |}
	\hline
	\includegraphics[trim=0 0 0 -150,width=2in]{images/mixed46A}  & \includegraphics[trim=0 0 0 -150,width=1.84in]{images/mixed46B} \\
	(d) & (e)  \\\hline
\end{tabular}\vspace{1ex}\caption{Configurations for normalized on-grid bit-reading gadgets that can be used for 2-shaped systems using whose tiles have the shape of a square and a regular hexagon. (a) represents a $0$, and (b) represents a $1$.}\label{tbl:mixed4s}\vspace{-20pt}
\end{table}
\egroup

\begin{figure}[htp]
\centering
        \includegraphics[width=3in]{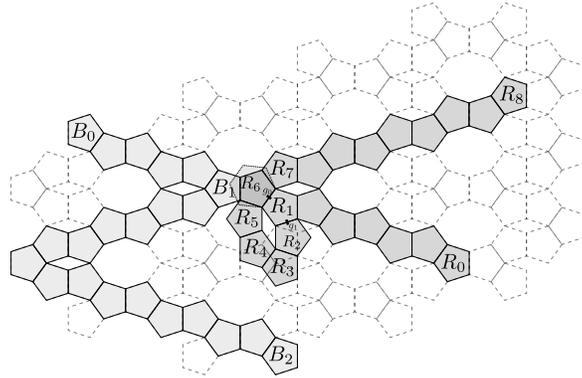}
  \caption{Bit-reading gadget configuration for tiles with the shape of either a pentagon or a hexagon. This figure depicts a configuration of polygonal tiles which represents a $0$, while Figure~\ref{fig:mixed56B} depicts a configuration of polygonal tiles which represents a $1$.}
  \label{fig:mixed56A}
\end{figure}

\begin{figure}[htp]
\centering
        \includegraphics[width=3in]{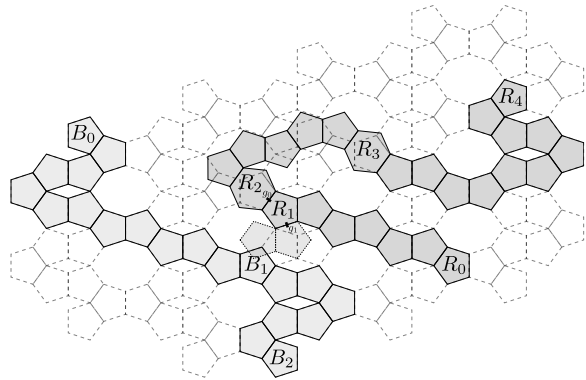}
  \caption{This figure depicts a configuration of polygonal tiles with the shape of either a pentagon or a hexagon which represents a $0$.}
  \label{fig:mixed56B}
\end{figure}

\bigskip\bigskip\bigskip\bigskip\bigskip\bigskip\bigskip\bigskip\bigskip %
\subsection{Single shaped systems with equilateral polygonal tiles}\label{sec:technical-equilaterals}

\newcolumntype{M}{>{\centering\arraybackslash}m{\dimexpr.45\linewidth-2\tabcolsep}}
\begin{table}[!htp]
\centering
\begin{tabular}{| M | M |}
	\hline
    \includegraphics[width=1.2in]{images/equilateral-pentagonA} &
	\includegraphics[width=1in]{}  \\
	(a) & (c) \\\hline
	\includegraphics[width=1.2in]{images/equilateral-pentagonB} &
	\includegraphics[width=1in]{}  \\
	(b) & (d) \\\hline
\end{tabular}\vspace{1ex}\caption{Configurations of for normalized on-grid bit-reading gadgets that can be used for 1-shaped systems using whose tiles have the shape of a particular equilateral pentagon ((a) and (b)) or a particular equilateral hexagon ((c) and (d)).}\label{tbl:equilaterals}\vspace{-20pt}
\end{table}

} %
\fi

\vspace{-15pt}
\bibliographystyle{amsplain}
\bibliography{tam,experimental_refs}

\ifabstract
\newpage
\appendix

\begin{center}
	\Huge\bfseries
	Appendix
\end{center}

\magicappendix
\fi

\end{document}